%% file: main-clean.tex
\documentclass[11pt]{article}
\usepackage[T1]{fontenc}
\usepackage{etoolbox}
\usepackage{graphicx}
\usepackage{latexsym}
\usepackage{amssymb}
\usepackage{amsmath}
\usepackage{amsthm}
\usepackage{thmtools}
\usepackage{forloop}
\usepackage[paper=letterpaper,margin=1in]{geometry}
\usepackage[ruled,vlined,linesnumbered]{algorithm2e}
\usepackage{mleftright}\mleftright
\usepackage{paralist}
\usepackage{graphics}
\usepackage{xcolor}
\usepackage{nicefrac}
\usepackage{scalerel}
\usepackage{threeparttable}
\usepackage{multirow}
\usepackage{mathtools}
\usepackage{bm}
\usepackage{array}
\usepackage{tikz}\usetikzlibrary{decorations} 
\usepackage{nicematrix}
\usepackage{caption}
\usepackage{hyperref}
\usepackage{cleveref}

\SetKw{Continue}{continue}

\makeatletter
\patchcmd{\@algocf@start}
  {-1.5em}
  {0pt}
  {}{}
\makeatother
\SetCommentSty{}

\definecolor{darkgreen}{rgb}{0,0.5,0}
\hypersetup{
    unicode=false,            
    colorlinks=true,          
    linkcolor=blue!70!black,  
    citecolor=darkgreen,      
    filecolor=magenta,        
    urlcolor=blue!70!black    
}

\newtheorem{theorem}{Theorem}[section]
\newtheorem{lemma}[theorem]{Lemma}

\newtheorem{corollary}[theorem]{Corollary}
\newtheorem{definition}{Definition}[section]
\newtheorem{proposition}[theorem]{Proposition}
\newtheorem{observation}[theorem]{Observation}

\newtheorem{claim}[theorem]{Claim}

\newcommand{\defcal}[1]{\expandafter\newcommand\csname c#1\endcsname{{\mathcal{#1}}}}
\newcommand{\defbb}[1]{\expandafter\newcommand\csname b#1\endcsname{{\mathbb{#1}}}}
\newcommand{\defvec}[1]{\expandafter\newcommand\csname v#1\endcsname{{\mathbf{#1}}}}
\newcounter{calBbCounter}
\forLoop{1}{26}{calBbCounter}{
    \edef\letter{\alph{calBbCounter}}
    \edef\Letter{\Alph{calBbCounter}}
    \expandafter\defcal\Letter
		\expandafter\defbb\Letter
		\expandafter\defvec\letter
}

\newcommand{\eps}{\varepsilon}
\newcommand{\ie}{{\it i.e.}}

\newcommand{\nnR}{{\bR_{\geq 0}}}
\newcommand{\email}[1]{{\href{mailto:#1}{#1}}}
\newcommand{\characteristic}{{\mathbf{1}}}

\newcommand{\vone}{{\mathbf{1}}}

\newcommand{\inner}[2]{\left<#1, #2\right>}

\newcommand{\poly}{{\mathtt{Poly}}}
\newcommand{\polylog}{{\mathtt{Polylog}}}

\let\oldnl\nl
\newcommand{\nonl}{\renewcommand{\nl}{\let\nl\oldnl}}

\title{Semi-Streaming Algorithms for Submodular\\Maximization under Random Arrival Order}

\author{Niv Buchbinder\thanks{Dept. of Statistics and Operations Research, Tel-Aviv University, Israel. E-mail: \email{niv.buchbinder@gmail.com}} \and
Moran Feldman\thanks{Department of Computer Science, University of Haifa, Israel. Work done while visiting School of Mathematics, Queen Mary University of London. E-mail: \email{moranfe@cs.haifa.ac.il}} \and
Siyue Liu\thanks{Tepper School of Business, Carnegie Mellon University (CMU). E-mail: \email{siyueliu@andrew.cmu.edu}} \and
Sherry Sarkar\thanks{Carnegie Mellon University (CMU). E-mail: \email{sherrys@andrew.cmu.edu}}}

\begin{document}

\maketitle

\input{abstract.tex}

\thispagestyle{empty}
\pagenumbering{Alph}
\clearpage
\setcounter{page}{1}
\pagenumbering{arabic}

\input{intro.tex}

\input{preliminaries}

\input{algorithms-matroid.tex}

\input{single-pass-alg.tex}

\input{stream-matroid-alg.tex}

\input{k-System-Hardness.tex}



\bibliography{bib}
\bibliographystyle{alpha}

\end{document}

%% file: abstract.tex
\begin{abstract}

We study random order semi-streaming algorithms for submodular maximization under a wide range of combinatorial constraint classes, including matroids, matroid $p$-parity, $p$-exchange systems and $p$-systems. For most of these classes of constraints, our results are the first improvement over what is known to be achievable for adversarial order. For matroids, matching and $p$-matchoids, previous random order results were known, and we improve over some of these as well. In the case of matroids, our improved results show a separation between adversarial and random order semi-streaming algorithms, and exponentially improve the number of passes necessary for getting $1 - 1/e - \eps$ approximation for maximizing a monotone submodular function subject to a matroid constraint. We also prove a new hardness result showing a similar separation for $p$-systems. Our results are based on two new technical tools. One tool provides a general way to translate offline algorithms for many classes of constraints into random order semi-streaming algorithms. The other tool is a semi-streaming variant of a recently proposed offline algorithm for matroid constraints.

\medskip

\textbf{Keywords:} submodular maximization, semi-streaming algorithms, random order, matroids, independence systems
\end{abstract}

%% file: intro.tex
\section{Introduction}
Submodular maximization is a central topic in combinatorial optimization, studied extensively over the past five decades since the seminal works of \cite{fisher1978analysis,nemhauser1978analysis,nemhauser1978best}. A typical problem in this field consists of a submodular objective function $f:2^{\cN}\to \bR$ over a ground set $\cN$ and a constraint specifying the set $\cI \subseteq 2^\cN$ of feasible solutions.\footnote{A set function $f\colon 2^\cN \to \bR$ is submodular if for every pair of sets $S\subseteq T \subseteq \cN$ and element $u\in \cN\setminus T$, it holds that $f(u \mid S) \geq f(u \mid T)$, where $f(u \mid S) \triangleq f(S\cup \{u\}) - f(S)$ is the \emph{marginal gain} of adding $u$ to $S$.} The goal of the problem is to find a set $S \in \cI$ that (approximately) maximizes $f$ among all the sets of $\cI$. Naturally, the approximability of this problem is significantly affected by the properties of the constraint, and accordingly, many results have been obtained for the above problem for particular classes of constraints (see \cite{nemhauser1978analysis,calinescu2011maximizing,chekuri2014submodular,filmus2014monotone,sviridenko2017optimal,buchbinder2024constrained,buchbinder2024deterministic,buchbinder2025extending} for a few examples of such results). To make this more concrete, let us introduce some details. It is customary to identify the constraint with the pair $(\cN, \cI)$. The constraint is called an \emph{independence system} if $\cI$ is non-empty and \emph{down-closed} (i.e., if $A\subseteq B$ and $B\in \cI$, then $A\in \cI$). Independence systems are a very broad class of constraints, and already capture constraints that are hard to approximate even for linear objectives (e.g., maximum independent set in graphs). Thus, much of the literature focuses on structured subclasses of independence systems that do admit good approximation ratios. A well-known example of such a subclass is the class of matroids, which captures a diverse set of scenarios such as linear independence of vectors, forests in graphs and laminar families of bounds. A more general subclass of independence systems is the class of \emph{$p$-systems} (parametrized by an integer $p \geq 1$), which admits weaker approximation guarantees, but captures additional kinds of constraints such as (hyper) matching, asymmetric traveling salesperson and packing problems with low column sparsity.

As mentioned above, submodular maximization subject to classes of independence systems has been studied extensively. At first, these studies concentrated on the standard offline computational model (see the works referenced above). However, motivated by various applications, later studies considered submodular maximization in a variety of other computational models, including online and secretary settings~\cite{bateni2013submodular,buchbinder2020online,buchbinder2019online,feldman2018submodular}, parallel computation~\cite{balkanski2022adaptive2,balkanski2018adaptive1,chekuri2019parallel1,chekuri2019parallel2} and distributed computing~\cite{mirzasoleiman2016distributed,pontebarbosa2016new}.
Our focus in this work is the \emph{data stream} model, which is motivated by modern applications in data science and machine learning involving massive ground sets. In this model, the elements of the ground set $\cN$ arrive one after the other in a stream. The goal is to design an algorithm that scans the elements of $\cN$ in one pass over the stream (or a small number of passes), and then outputs a good feasible solution. Since the motivation for this model is in massive ground sets, an algorithm for this model is considered interesting only if its memory is small compared to the size $n$ of $\cN$. Note, however, that even storing the output set might require $\Theta(r)$ memory, where $r$ is the maximum size of a feasible set (in the context of independence system constraints, $r$ is known as the \emph{rank} of the independence system). Thus, it is natural to aim for algorithms whose space complexity is $O(r\cdot \polylog(n))$. Such algorithms are called \emph{semi-streaming} algorithms.\footnote{When the class of constraints is parametrized by $p$, we say that a data stream algorithm is a semi-streaming algorithm if its space complexity simplifies to $O(r\cdot \polylog(n))$ when $p$ is treated as a constant.}

Semi-streaming algorithms have been developed for submodular maximization problems involving many kinds of constraints, including cardinality constraints \cite{badanidiyuru2014streaming,kazemi2019submodular,alaluf2022optimal,FNSZ23}, matroid and matroid intersection constraints~\cite{chakrabarti2015submodular,chekuri2015streaming,S20,harshaw2022power,rubinstein2022maximizing,FNSZ23,garg2023semistreaming,NVZ25,feldman2026streaming}, matching~\cite{levin2021streaming}, $p$-systems~\cite{haba2020streaming} and knapsack~\cite{huang2020streaming,huang2021improved}.
Running time is typically a secondary concern in this setting. Accordingly, the literature includes both polynomial-time algorithms with running time $\poly(n)$ and \emph{fixed-parameter tractable} (FPT) algorithms with running time $g(r)\cdot \poly(n)$, where the dependence $g$ on $r$ may be arbitrary, while the dependence on $n$ remains polynomial. The standard data stream model assumes an adversarial arrival order of the elements. However, in many applications it is more realistic to assume that the arrival order is not adversarial, but rather uniformly random. This \emph{random-order} assumption, originating in the classical secretary problem and its many extensions (see, e.g.,~\cite{kleinberg2005multiple,lachish2014competitive,babaioff2018matroid,feldman2018simple,feldman2018submodular,soto2021strong}), is well-motivated and has become central in practice (see, e.g., \cite[Chapter 11]{R2020}).

\subsection{Our Results}

We design improved semi-streaming approximation algorithms under a uniformly random arrival order for maximizing linear, submodular and monotone submodular objectives.\footnote{A set function $f\colon 2^\cN \to \bR$ is \emph{monotone} if $f(S)\leq f(T)$ for every two sets $S\subseteq T \subseteq \cN$.} Our algorithms apply to a variety of classes of independence systems from the literature. Readers less familiar with these constraint classes are referred to Section~\ref{sec:preliminaries} for formal definitions and the relationships between them.
Our first result concerns single-pass semi-streaming algorithms.

\begin{theorem}[Single-pass algorithms]\label{thm:main2}
Let $\cM=(\cN,\cI)$ be an independence system of rank $r$, and let $f\colon 2^\cN\to \nnR$ be a non-negative function. For any sufficiently small $\eps>0$, there are single-pass semi-streaming algorithms that under a uniformly random arrival order store $O\!\left(\nicefrac{r}{\eps^2}\log^2(r/\eps)\right)$ elements and achieve the approximation ratios appearing in bold on the two rightmost columns of Table~\ref{table2}---up to an error term of $O(\eps)$.
\end{theorem}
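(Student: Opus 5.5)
The plan is to prove Theorem~\ref{thm:main2} through a generic reduction that turns an offline algorithm for a constraint class into a single-pass random-order semi-streaming algorithm. The abstract says the paper provides such a tool. The idea is to exploit random arrival order so that a small prefix of the stream is a uniform sample of $\cN$, and to use this prefix to ``learn'' a good solution structure.

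Concretely, split the stream into $m = \Theta(1/\eps)$ consecutive windows. Because the order is random, each window is an independent-like uniform sample in which every element appears with probability about $\eps$. The algorithm maintains a partial solution $S_i$, updated window by window. In window $i$ it keeps only elements $u$ that are ``useful'' relative to the current state. Usefulness means that $u$ would be selected by the offline algorithm run on the stored set together with $u$ (a greedy/local-search oracle in the spirit of the secretary-to-streaming reductions). The standard random-order argument then shows that each element of $OPT$ lands in a window in which, with probability $1-O(\eps)$, it is judged against a state that is almost the final state. This lets us compare the final output with the offline algorithm run on all of $\cN$, up to an $O(\eps)$ loss.

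The memory bound $O(\nicefrac{r}{\eps^2}\log^2(r/\eps))$ will come from the following sampling argument. In each window, the number of useful elements that can be stored before the state stabilizes is bounded by $O(\frac{r}{\eps}\log(r/\eps))$. This mirrors the ``number of improving elements'' bound for the greedy sample in the secretary setting, via an exchange/spanning argument: an element is stored only if it is not spanned by a cheap set of $O(r)$ representatives. Summing over the windows, and over geometric thresholds on marginal values (which handle the $\log(r/\eps)$ guess of scale), gives the stated bound. For each row of Table~\ref{table2} (matroid, $p$-matchoid, $p$-exchange, $p$-system; linear, monotone, non-monotone), we would then plug in the corresponding known offline algorithm and its approximation ratio. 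For non-monotone $f$ we would additionally subsample elements with probability $1/2$ (or an appropriate constant) to control the damage done by $f$'s non-monotonicity, in the usual way.

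The main obstacle is the coupling step. We must show that the elements of $OPT$ arriving late are compared against a state whose value is concentrated around the offline algorithm's output on a random subset, while the algorithm is not allowed to store the whole prefix. I would first attempt the ``sample-then-verify'' analysis: condition on the set of elements in the first windows, argue via symmetry that an $OPT$ element is equally likely to be in any window, and bound the loss from the final $\eps$ fraction of windows by $O(\eps)\cdot f(OPT)$.
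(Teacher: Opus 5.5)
\textbf{There is a genuine gap.} It sits in the step you yourself flag as the main obstacle: the claim that the streaming output can be compared with the offline algorithm run on all of $\cN$, up to an $O(\eps)$ loss, for whatever offline algorithm you plug in. This is not just unproven; it is too strong to be true. Your memory analysis does not depend on which offline algorithm is used or on its running time. So instantiating the reduction with an exact exponential-time solver would give a $(1+O(\eps))$-approximation in one pass with semi-streaming memory. That contradicts the lower bounds recorded in the random-order columns of Table~\ref{table2} (footnotes b and d). Similarly, plugging in the offline FPT $\frac{e}{e-1}$-approximation for non-monotone $f$ under a matroid would beat the random-order FPT lower bound of $2$ listed there.

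The table itself signals that the mechanism is different. The bold entries are not offline ratios; they have the additive form $c(p)+\alpha$:
\begin{itemize}
\item $2p=p+p$ for linear $p$-systems;
\item $3.494=1+2.494$ for a matroid with non-monotone $f$;
\item $\frac{3p-1}{2}=(p-1)+\frac{p+1}{2}$ for linear $p$-exchange.
\end{itemize}

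Several of your ingredients are also ill-defined or would not transfer. ``$u$ would be selected by the offline algorithm on the stored set plus $u$'' has no meaning for continuous greedy or local search. A ``not spanned by $O(r)$ representatives'' argument has no analogue in a $p$-system. And $1/2$-subsampling for non-monotone $f$ would cost a constant factor, whereas the FPT non-monotone bold entries coincide with the monotone ones.

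\textbf{What is missing} is a tradeoff between two sets rather than an emulation (Proposition~\ref{prop:main2}).
\begin{itemize}
\item \emph{Construction.} Split the first $\delta n$ elements into $r$ blocks of size $\delta n/r$. Build a greedy set $S_\delta$ by taking the best feasible element with non-negative marginal in each block. Round each $f(s_j\mid S_{j-1})$ up to a threshold $w_j$ on a geometric grid of $O(\delta^{-1}\log(r/\delta))$ values. Then store in $H$ every later $u$ with $S_{j-1}+u\in\cI$ and $f(u\mid S_{j-1})>w_j$ for some $j$.
\item \emph{Space.} Block $j$ is a random sample, so with probability at least $1-\delta/r$ it contains one of the top $\frac{r}{\delta}\ln(r/\delta)$ candidates addable to $S_{j-1}$. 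Hence few later elements beat $w_j$. By submodularity, indices sharing a threshold produce nested sets. This gives $|H|=O(r\delta^{-2}\log^2(r/\delta))$ except with probability $\delta$, in which case the algorithm aborts.
\item \emph{Value.} Take an optimal element $o$ arriving late that is neither in $H$ nor addable to $S_\delta$. It is charged to an index $j$ with $S_{j-1}+o\in\cI$, at most $p$ per index, by Lemma~\ref{lem:p-system_property} (or via $p$-bilevel-swappability for $p$-exchange and matroid $p$-parity). So $f(o\mid S_{j-1})\le w_j\le(1+\delta)f(s_j\mid S_{j-1})+\frac{\delta}{r}f(S_\delta)$. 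Late optimal elements addable to $S_\delta$ have marginal at most $\frac{\delta}{r}f(S_\delta)$. Late optimal elements in $H$ form a feasible subset of $S_\delta\cup H$.
\item \emph{Tradeoff.} Together these yield $c(p)\,\bE[f(S_\delta)]+\bE[f(A^*)]\ge(1-O(\delta p))f(OPT)$.
\item \emph{Non-monotonicity.} This is absorbed without subsampling. The bound is stated against $f(O'\cup S_\delta)$, where $O'$ is the set of optimal elements arriving after the prefix. Elements outside $OPT$ enter $S_\delta$ with probability at most $\delta$, so the Lov\'asz extension gives $\bE[f(O'\cup S_\delta)]\ge(1-2\delta)f(OPT)$.
\item \emph{Final step.} Only then is the offline algorithm run, on the small set $S_\delta\cup H$. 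Returning the better of $S_\delta$ and its output gives $c(p)+\alpha+\eps$ (Corollary~\ref{cor:offlinetoStream}).
\end{itemize}

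Finally, the matroid/monotone entries $1.801$ and $1.972$ cannot come from plugging in an offline algorithm at all; that route gives $2$ and about $2.582$. They require the separate boosting algorithm of Proposition~\ref{prop:main}, run on the suffix with $A=S_\eps$ and combined with the above.
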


\begin{table}[p]
\begin{center}\small
\begin{threeparttable}
\begin{NiceTabular}{c|c|c|c|c|c}
\multirow{3}{*}{Constraint} & \multirow{3}{*}{\hspace{-2mm}Obj.\hspace{-2mm}} & \multirow{3}{*}{Offline} & \multicolumn{1}{m{32mm}}{\multirow{3}{32mm}{\centering Semi-Streaming Adversarial Order}} & \multicolumn{2}{c}{Semi-Streaming}\\
&&&& \multicolumn{2}{c}{Random Order}\\
\cline{5-6}
&&&& FPT & Poly\\
\hline 
\multirow{5}{*}{Matroid} & L & $1$~\cite{rado1957note} & $1$\tnote{a} & $-$ & $-$ \\[-1pt] \cdashedline{2-6} &&&&&\\[-10pt]
& \multirow{2}{*}{\hspace{-1mm}MS\hspace{-1mm}} & \multirow{2}{*}{$\frac{e}{e - 1} \sim 1.582$~\cite{calinescu2011maximizing}} & $3.147$~\cite{feldman2026streaming} & $\bm{1.801}$\tnote{b} & $\bm{1.972}$\tnote{b}  \\ 
& & & H: $2$~\cite{FNSZ23}\tnote{c} & \hspace{-1mm}($2$~\cite{NVZ25})\hspace{-1mm} & ($2.314$~\cite{S20}) \\ \cdashedline{2-6} &&&&&\\[-10pt]
& \multirow{2}{*}{\hspace{-1mm}NS\hspace{-1mm}} & $2.494$~\cite{buchbinder2024constrained} & \multirow{2}{*}{5.206~\cite{feldman2026streaming}} & 2~\cite{NVZ25} & \multirow{2}{*}{$\bm{3.494}$} \\ 
& & \hspace{-1mm}FPT: $\frac{e}{e - 1} \sim 1.582$ \cite{NVZ25}\hspace{-1mm} & & \hspace{-1mm}H: $2$~\cite{rubinstein2022maximizing}\tnote{d} &\\
\hline
\multicolumn{1}{m{17mm}}{\hspace*{-2.5mm}\multirow{3}{22mm}{\centering Matching / Matroid $2$\nolinebreak-Intersection}\hspace*{-3mm}} & L & $1$~\cite{cunningam1986improved,edmonds1965paths} & $2$~\cite{garg2023semistreaming,paz2017approximation} & \hspace*{-2mm}$1.5$~\cite{hashemi2024weighted}\hspace*{-2mm} & $1.5$~\cite{hashemi2024weighted} \\ 
& \hspace{-1mm}MS\hspace{-1mm} & $2$~\cite{feldman2011improved,lee10submodular} & \hspace{-1mm}$3 + 2\sqrt{2}$~\cite{garg2023semistreaming,levin2021streaming}\hspace*{-1mm} & $\bm{3}$ & $3.158$~\cite{S20} \\
& \hspace{-1mm}NS\hspace{-1mm} & $4$~\cite{feldman2011improved,lee10submodular} & \hspace{-1mm}$4 + 3\sqrt{2}$~\cite{garg2023semistreaming,levin2021streaming}\hspace*{-1mm} & $\bm{3}$ & $\bm{6}$ \\
\hline
\multicolumn{1}{m{17mm}}{\hspace*{-2.5mm}\multirow{6}{22mm}{\centering Matroid $p$-Parity / $p$-Matchoid / Matroid $p$-Intersection}\hspace*{-3mm}} & \multirow{2}{*}{L} & \multirow{2}{*}{$(p + 1) \ln 2$~\cite{feldman2026submodular}} & $4p - 1$~\cite{chakrabarti2015submodular}\tnote{e} & \multirow{2}{*}{$\bm{p}$} & $\bm{(1 + \ln 2)p}$\tnote{g}\\
&&& H: $p$~\cite{FNSZ22}\tnote{f} && H: $\frac{p}{12}$~\cite{lee2025asymptotically}\tnote{h}\hspace*{-2mm}\\  \cdashedline{2-6} &&&&&\\[-11pt]
& \multirow{2}{*}{\hspace{-1mm}MS\hspace{-1mm}} & \multirow{2}{*}{\hspace{-1mm}$\frac{2p\ln 2}{1 + \ln 2} + O(\sqrt{p})$~\cite{feldman2026submodular}\hspace*{-1mm}}& \multirow{2}{*}{$4p$~\cite{chakrabarti2015submodular,chekuri2015streaming}\tnote{i}} & \multirow{2}{*}{$\bm{p + 1}$} & \hspace{-2mm}$\bm{\frac{p + 3p\ln 2 + O(\sqrt{p})}{1 + \ln 2}}$\hspace*{-2mm}\\ 
&&&&& \hspace{-1mm}$\frac{p + 1}{1 - 1/e^{p + 1}}$~\cite{S20}\tnote{i}\hspace*{-1mm}\\\cdashedline{2-6} &&&&& \\[-11pt]
& \hspace{-1mm}NS\hspace{-1mm} & \hspace{-1mm}$\frac{2p\ln 2}{1 + \ln 2} + O(p^{\nicefrac{2}{3}})$~\cite{feldman2026submodular}\hspace*{-1mm}& $4p+2$~\cite{harshaw2022power}\tnote{i,j}& $\bm{p + 1}$ & \hspace{-1mm}$\bm{\frac{p + 3p\ln 2 + O(p^{\nicefrac{2}{3}})}{1 + \ln 2}}$\hspace*{-2mm}\\
\hline
\multirow{4}{*}{$p$-Exchange} & \multirow{2}{*}{L} & \multirow{2}{*}{$\frac{p + 1}{2}$~\cite{feldman2011improved}} & \multirow{4}{32mm}{\centering \hspace{-1mm}$O(p \log p)$~\cite{haba2020streaming}\hspace*{-1mm}} & \multirow{2}{*}{$\bm{p}$} & $\bm{\frac{3p-1}{2}}$ \\
&&&&& H: $\frac{p}{12}$~\cite{lee2025asymptotically}\tnote{h}\hspace*{-2mm}\\[1pt] \cdashedline{2-3} \cdashedline{5-6}&&&&&\\[-11pt]
& \hspace{-1mm}MS\hspace{-1mm} & $\frac{p + 3}{2}$~\cite{ward2012approximation} & & $\bm{p + 1}$ & $\bm{\frac{3p + 3}{2}}$ \\[1pt] \cdashedline{2-3} \cdashedline{5-6} &&&&&\\[-10pt]
& \hspace{-1mm}NS\hspace{-1mm} & $p + 1 + \frac{1}{p - 1}$~\cite{feldman2011improved} && $\bm{p + 1}$ & $\bm{2p + 1 + \frac{1}{p - 1}}$ \\
\hline
\multicolumn{1}{m{17mm}}{\hspace*{-2.5mm}\multirow{5}{22mm}{\centering $p$-Extendible / $p$-System}\hspace*{-3mm}} & \multirow{2}{*}{L} & \multirow{2}{*}{$p$~\cite{jenkyns1976efficacy}} & & \multirow{2}{*}{$\bm{p + 1}$} & $\bm{2p}$  \\
&&& \hspace{-1mm}$O(p \log p)$~\cite{haba2020streaming}\tnote{l}\hspace*{-1mm} && H: $p$~\cite{feldman2023how}\tnote{k}\\ \cdashedline{2-3} \cdashedline{5-6} &&&&&\\[-10pt]
& \hspace{-1mm}MS\hspace{-1mm} & $p + 1$~\cite{fisher1978analysis} & \hspace{-1mm}$O(p^2 \log p)$~\cite{haba2020streaming}\hspace{-1mm} & $\bm{p + 2}$ & $\bm{2p + 2}$ \\ \cdashedline{2-3} \cdashedline{5-6} &&&&&\\[-11pt]
& \multirow{2}{*}{\hspace{-1mm}NS\hspace{-1mm}} & $p + 2 + 1/p$~\cite{feldman2023how}\tnote{l} & H:  $\bm{\Omega(p^2)}$ & \multirow{2}{*}{$\bm{p + 2}$} & $\bm{2p + 3 + 1/p}$\tnote{l}\\
&& $p + O(\sqrt{p})$~\cite{feldman2023how} &&& $\bm{2p + O(\sqrt{p})}$\\
\end{NiceTabular}

\begin{tablenotes}\setlength{\itemsep}{1pt}
{\footnotesize
\item[a] Folklore. Maintain a local optimum solution via swaps.
\item[b] Hardness of $\frac{e}{e - 1}$ holds even for a cardinality constraint. For poly-time algorithms, this hardness applies even to offline algorithms~\cite{nemhauser1978best}. McGregor and Vu~\cite{MV19} showed the same hardness for data stream algorithms using $o(n/r^2)$ memory, even if they are allowed arbitrary amount of computation. Technically, the hardness of~\cite{MV19} was for the related $k$-Coverage problem, but it was adapted by \cite{FNSZ22} to the current problem.
\item[c] Applies even for a cardinality constraint and algorithms that use $o(n/r^3)$ memory. Tight in the sense that an approximation ratio of $2$ can be obtained using space that is exponential in $r$ but independent of $n$~\cite{feldman2026streaming}.
 \item[d] Applies even for a cardinality constraint and algorithms that use $o(n/r^2)$ memory.
\item[e] Applies only to matroid $p$-intersection. The approximation ratio proved by~\cite{chakrabarti2015submodular} was $2p + 2\sqrt{p(p - 1)} - 1$.
 \item[f] Applies even for a cardinality objective function and algorithms using $o(\frac{\eps n}{p^5 \log p})$ memory, but assumes that one is given access only to one independence oracle for the entire constraint (rather than one oracle per matroid). If an independence oracle is available for each matroid, there is a weaker inapproximability excluding $o(\frac{p}{\log p})$-approximation by a semi-streaming algorithm, but $1$-approximation can be obtained in this case for $p$-matchoid constraints using space that is independent of $n$ but exponential in $r$.~\cite{huang2023fpt}. A similar result was proved by~\cite{FNSZ22} for (not necessarily monotone) submodular functions, but it guarantees a slightly weaker approximation ratio of $2$ and applies only to matroid $p$-intersection constraints. 
 \item[g] More accurately, the ratio is $(1 + \ln 2)p - (1 - \ln 2)$.
 \item[h] Applies even for offline algorithms and the special case of $p$-dimensional matching.
 \item[i] Does not apply to matroid $p$-parity. 
\item[j] The approximation ratio proved by~\cite{harshaw2022power} for non-monotone functions was $2p + 2\sqrt{p(p + 1)} + 1$.
\item[k] Applies even to offline algorithms. For submodular functions, the hardness slightly improves to $(1 - e^{-1/p})^{-1}$.
\item[l] Applies only to $p$-extendible systems.
}
\end{tablenotes}
\end{threeparttable}
\end{center}
\vspace{-1.2mm}
\caption{Results for a single-pass semi-streaming algorithms (ignoring dependencies on $\eps$). Our new results appear in bold. Results we improve over appear in parenthesis. Shorthands: 'L' (Linear), 'MS' (Monotone Submodular), 'NS' (Non-monotone Submodular), 'H' (Hardness), 'FPT' (fixed parameter tractable).}\label{table2}
\end{table}

Let us highlight the consequences of several of the results we report in Table~\ref{table2}.
\paragraph{Matroid Constraints and Monotone Submodular Objective.}
We obtain $1.972$-approx\-imation in polynomial time and $1.801$-approximation in FPT time, improving over the $2.315$-approximation of \cite{S20} and the $2$-approximation of \cite{NVZ25}, respectively. Our results are the first to improve over $2$-approximation, which shows a separation between random-order and adversarial-order semi-streaming algorithms because \cite{FNSZ23} proved a lower bound of $2$ on the approximation ratio of any adversarial-order semi-streaming algorithm, even when one allows FPT time and even for the simpler class of cardinality constraints. We note that $2$-approximation is also a natural barrier for many natural algorithmic techniques. In fact, the state-of-the-art offline approximation ratio was $2$ for $30$ years since the work of~\cite{fisher1978analysis}, until it was finally improved over by~\cite{calinescu2011maximizing}. 

\vspace{-4mm}
\paragraph{Matching / Intersection of Two Matroids Constraint.} 

We obtain $3$-approximation in FPT time for both monotone and non-monotone submodular objectives. For monotone submodular functions, this matches the guarantee of the offline (polynomial time) greedy algorithm and slightly improves over the polynomial-time semi-streaming $3.158$-approximation of \cite{S20}. For non-monotone submodular functions, it outperforms the current state-of-the-art polynomial-time offline algorithm, which achieves 4-approximation~\cite{feldman2011improved,lee10submodular}. We also get for non-monotone submodular objectives $6$-approximation in polynomial time, substantially improving over the best known adversarial order bound of $4+3\sqrt{2}\approx 8.24$~\cite{garg2023semistreaming,levin2021streaming}. 


\vspace{-4mm}
\paragraph{$p$-Systems Constraints.}
We obtain $2p$-approximation (up to lower-order terms) in polynomial time and $p$-approximation (up to an additive term of $1$ or $2$) in FPT time. These guarantees nearly match the lower bounds known for polynomial time offline algorithms~\cite{badanidiyuru2014fast,feldman2023how} and improve quadratically over the $O(p^2\log p)$-approximation known for adversarial-order streaming. In addition, we prove below (Theorem~\ref{thm:hardness}) a new lower bound of $\Omega(p^2)$ for adversarial-order semi-streaming algorithms, providing another separation between random-order and adversarial-order.

Next, we study multi-pass semi-streaming algorithms. A natural question here is how many passes over the stream are needed to match the best offline approximation. Our result in this context is for maximizing a monotone submodular function under a matroid constraint, which is one of the most central problems in submodular maximization. The optimal offline approximation factor for this problem is $\frac{e}{e-1}\approx 1.582$ (optimal already for a cardinality constraint) \cite{nemhauser1978best,nemhauser1978analysis,calinescu2011maximizing}. In the data stream model, \cite{feldman2026streaming} showed that this approximation ratio can be achieved by a semi-streaming algorithm using $O(\eps^{-3})$ passes under adversarial order, and $O(\eps^{-2}\log(1/\eps))$ passes under random order. We improve exponentially over the latter, showing that $O(\log(1/\eps))$ passes suffice in the random-order model.

\begin{restatable}[Multi-pass Algorithm]{theorem}{thmMain}\label{thm:main}
    Let $\cM = (\cN, \cI)$ be a matroid of rank $r$ and $f\colon 2^\cN \rightarrow \nnR$ be a non-negative monotone submodular function. There is a semi-streaming algorithm in the random-order data stream model that given a small enough value $\eps > 0$, makes $O(\log(1/\eps))$ passes, stores $O(\nicefrac{r}{\eps})$ elements, and achieves $(\frac{e}{e-1}+\eps)$-approximation.  
\end{restatable}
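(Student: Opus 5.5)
The plan is to simulate, in the random-order stream, an offline algorithm that reaches $\frac{e}{e-1}$ through a short sequence of improvement phases. Each phase contracts the gap to the target geometrically, so $O(\log(1/\eps))$ phases (one pass each) are enough. The base solution is the single-pass random-order algorithm behind Theorem~\ref{thm:main2}, which already gives a constant approximation (better than $2$) in one pass. The offline template is the "recently proposed offline algorithm for matroid constraints" mentioned in the abstract. It keeps a current solution $S$ of size $r$, split into $1/\eps$ blocks. It improves $S$ by repeatedly using the guarantee of continuous-greedy/local-search-type steps on the function $g(T)=f(T\cup S)$ or on a distorted objective. The key property I would establish is a progress inequality: if $f(S)\le (1-1/e-\delta)f(OPT)$, then one improvement round yields $S'$ with
\[
1-\tfrac1e-\tfrac{f(S')}{f(OPT)} \le c\Big(1-\tfrac1e-\tfrac{f(S)}{f(OPT)}\Big)+O(\eps)
\]
for an absolute constant $c<1$. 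Iterating this $O(\log(1/\eps))$ times then gives the claimed ratio.

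To implement one round in one pass, I would use random order as follows. Split the stream into $1/\eps$ consecutive windows; with high concentration, each window is a uniformly random $\eps$-fraction sample of $\cN$ and contains about $\eps|OPT|$ elements of $OPT$. In window $i$ we greedily add to a new block $B_i$ the elements with the largest marginal gain with respect to the current partial solution. Feasibility is maintained by swapping out elements of the old solution via matroid exchange, and only $O(r/\eps)$ elements are ever stored: the current solution plus candidate blocks. The analysis compares the gain in window $i$ against $\eps\cdot$ the marginal of $OPT$. Because the window is a random sample, $OPT\cap$window behaves like a random $\eps$-subset of $OPT$, so by submodularity its expected marginal is at least $\eps(f(OPT\cup S_{i})-f(S_{i}))$. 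This reproduces the discretized continuous-greedy recurrence $f(S_{i+1})-f(S_i)\ge \eps(f(OPT)-f(S_i))-(\text{loss from removed elements})$. Summing over the windows gives the progress inequality.

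The main obstacle is handling the elements removed to maintain independence. Unlike in the cardinality case, inserting $B_i$ requires evicting elements of $S$, and the evicted value must be charged. I would use the standard bijective exchange between $S$ and a base containing $OPT$ (Brualdi), which pairs each arriving $OPT$ element with an element of the current solution. I would evict blocks "oldest first" so that each evicted block contributed only an $\eps$-fraction. The loss per window is then at most $\eps f(S)$-ish, which is what converts the recurrence into the $(1-1/e)$ form rather than $1/2$. The delicate point is that the exchange pairing depends on the random sample. I would fix the pairing with respect to the full $OPT$ before the pass and argue via linearity of expectation over the random window membership, plus a concentration bound (Chernoff over window sizes), which accounts for the additive $O(\eps)$ error.
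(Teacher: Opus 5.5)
There is a genuine gap at the core of your plan. You assert a per-pass contraction of the gap to $1-1/e$, but the mechanism you describe cannot deliver it. That mechanism is window-wise greedy insertion into an integral solution, with exchange-based, oldest-first eviction. Your own accounting already shows the problem. Write $x_i = f(S_i)/f(OPT)$. A gain of $\eps(1-x_i)$ and a loss of $\eps x_i$ per window give $x_{i+1}\geq(1-2\eps)x_i+\eps$, whose fixed point is $1/2$. If instead the loss is $\eps$ times the value of the phase's starting set $S$, then after $1/\eps$ windows you get roughly $f(S')/f(OPT)\geq 1-1/e-(1-2/e)f(S)/f(OPT)$. This bound decreases in $f(S)$ and again has fixed point $1/2$. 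This is the familiar reason greedy-plus-exchange is stuck at $1/2$ for matroids, and evicting the oldest blocks first does not change it. More fundamentally, a potential that depends only on the integral value $f(S)$ discards too much. Even the paper's one-pass booster (Proposition~\ref{prop:main} with $h=1/e$) only gives $f(S')\geq 0.468\,f(OPT)+0.195\,f(S)$, whose fixed point is about $0.58\,f(OPT)$.

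Your random-order argument has a second gap. The exchange pairing must be taken with respect to the current, evolving solution. That solution is correlated with which elements remain for later windows: an $OPT$ element rejected earlier cannot reappear. So fixing a pairing in advance and applying linearity of expectation plus Chernoff bounds does not justify the per-window gain.

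The missing idea is a fractional height. The paper views the integral solution $A$ as $h\cdot\characteristic_A$ and performs the swap maximizing $F(h\cdot\characteristic_{A+v})+F(h\cdot\characteristic_{A-u})$. After averaging over a Brualdi bijection, the eviction loss is $\Delta h_i\inner{\nabla F(h_i\cdot\characteristic_{A_{i-1}})}{\characteristic_{A_{i-1}}}$. This loss is offset by the gain $F(h_i\cdot\characteristic_{A_{i-1}})-F(h_{i-1}\cdot\characteristic_{A_{i-1}})$ from raising the height, which is at least the same quantity by concavity along non-negative directions. What remains is the loss-free continuous-greedy recurrence.

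The number of passes is governed by the height, not by an $f$-gap. All samples of a pass must come from one scan ($p'\ell\leq 1$), so one pass raises the height by a factor of about $e$. Starting from $h_0=e^{-\ell}\leq\eps/3$ therefore takes $\ell=\lceil\ln(3/\eps)\rceil$ passes. The potential $e^{h}F(h\cdot\characteristic_{A})$ then telescopes to $e\,\bE[f(A_\ell)]\geq(e-e^{\eps/3})(1-\eps/6)f(OPT)$.

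The paper handles the correlation issue in two steps. It uses the equivalence of Processes~1 and~2, and it adds all previously inserted elements $H_{i-1}$ to the candidate set. Together these guarantee that, conditioned on $A_0,\dots,A_{i-1}$, the candidates contain an independent $p'$-sample of $\cN$.
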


\subsection*{Techniques:}

Our main results rely on two new algorithmic components. The first component is a single-pass semi-streaming algorithm with the guarantee stated by the next proposition. Intuitively, this algorithm generates two sets $S_\delta$ and $H$ such that either $S_\delta$ is a high value feasible set, or there exists a high value feasible subset in $S_\delta \cup H$.

\begin{restatable}{proposition}{PropMainH}\label{prop:main2}
Let $f\colon 2^\cN \rightarrow \nnR$ be a non-negative submodular function and let $\cM = (\cN, \cI)$ be a $p$-system of rank $r$. There exists a single-pass semi-streaming algorithm in the random-order model that, given $\delta \in (0, 1)$,
\begin{itemize}
\item after seeing only the first $\lfloor \delta \cdot n \rfloor$ elements of the stream, outputs a set $S_{\delta}\in \cI$, and
\item after scanning the entire stream, outputs a second set $H$. The size of $H$ is $O(\nicefrac{r}{\delta^2}\log^2(r/\delta))$, and this expression also bounds the space complexity of the algorithm.
\end{itemize}
Furthermore,
\[
(1+p)\cdot \bE[f(S_{\delta})]+\bE[f(A^*)] \geq (1 - O(\delta p)) \cdot f(OPT) \enspace,
\]
where $A^* \triangleq \arg \max\{f(A)\,:\, A\subseteq(S_{\delta}\cup H), A \in\cI\}$ and $OPT$ is an optimal independent subset of $\cN$. The inequality above can be strengthened in special cases. In particular, the coefficient of $\bE[f(S_{\delta})]$ can be reduced to $p$ if one of the following conditions holds, and to $p-1$ if both hold.
\begin{itemize}\setlength\itemsep{0pt}
	\item Condition~1: $(\cN, \cI)$ is a $p$-exchange system or a matroid $p$-parity.
	\item Condition~2: $f$ is linear.
\end{itemize}
\end{restatable}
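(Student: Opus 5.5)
We run an offline greedy-style algorithm on a random sample (the first $\delta n$ elements) and use a "verification" pass over the remainder, in the spirit of sample-based secretary techniques. Concretely, on the prefix of length $\lfloor \delta n\rfloor$ we run the standard greedy for $p$-systems, split into $O(\log(r/\delta)/\delta)$ rounds of geometrically growing prefix lengths (or, equivalently, we simulate greedy on the prefix while storing it implicitly via thresholds). The output is $S_\delta$, which is feasible by construction. We then classify each later element $u$ by whether it is \emph{rejected} by $S_\delta$. Rejection means either that $S_\delta$ plus $u$ is infeasible with $u$ blocked by greedy elements picked before the point where $u$ would have had enough marginal value, or that $f(u\mid\cdot)$ is below the greedy threshold. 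Every non-rejected element is stored in $H$.

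\textbf{Step 1 (few elements survive).} We bound $|H|$ by a standard random-order argument. Because the stream is uniformly random, the prefix is a uniform $\delta$-sample. An element of the suffix is kept only if it would have "beaten" greedy at some stage. For each of the $O(\log(r/\delta)/\delta)$ greedy stages, a counting/union-bound argument over the at most $r$ selected elements shows that only $O(r\log(r/\delta)/\delta)$ suffix elements can be non-rejected. The argument runs as follows. If many suffix elements beat the stage's threshold, then in expectation about a $\delta$-fraction of them would have appeared in the sample and been picked. This would contradict the stage length, up to a logarithmic factor from a Chernoff bound. Summing over the stages gives $O(\tfrac{r}{\delta^2}\log^2(r/\delta))$.

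\textbf{Step 2 (value guarantee).} We split $OPT$ into $O_1=OPT\cap(S_\delta\cup H)$ and $O_2=OPT\setminus(S_\delta\cup H)$, the elements rejected by $S_\delta$. By submodularity and non-negativity, $f(OPT)\le f(O_1)+f(O_2\cup S_\delta)$ up to terms handled with $S_\delta$. We have $f(A^*)\ge f(O_1)$ since $O_1\subseteq S_\delta\cup H$ is independent. For $O_2$ we use the standard greedy charging for $p$-systems. Each rejected element is charged to the greedy elements that blocked it or exceeded its marginal. The $p$-system property ensures each greedy element absorbs at most $p$ blocked elements, while the threshold rejections contribute one more copy of $f(S_\delta)$. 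This yields $f(O_2\cup S_\delta)\le(1+p)f(S_\delta)$. For elements of $OPT$ lying in the prefix we use that the prefix is a $\delta$-fraction in expectation, and the discretization losses of the stages account for the $O(\delta p)f(OPT)$ error. For non-monotone $f$, the interplay of $O_2\cup S_\delta$ versus $OPT$ needs the standard random-sampling trick: each element lands in the prefix with probability $\approx\delta$, so $\bE f(OPT\cup S_\delta)\ge(1-O(\delta))f(OPT)$, again absorbed in the error term.

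\textbf{Step 3 (strengthenings).} In the two special cases we reduce the coefficient. Under Condition~1, $p$-exchange systems and matroid $p$-parity admit an exchange map in which each greedy element is charged by at most $p$ elements in total, including threshold rejections, removing the extra $+1$. Under Condition~2, linearity makes $f(O_2\cup S_\delta)=f(O_2)+f(S_\delta)$ minus the overlap, and elements of $OPT\cap S_\delta$ are counted once rather than twice, saving another $f(S_\delta)$. The main obstacle is Step 1 combined with keeping rejection "monotone": it must be defined relative to the final $S_\delta$, which is built only from the prefix, so that rejected elements are genuinely chargeable while survivors stay few. I would first try defining rejection against the greedy prefix states stored at the geometric checkpoints.
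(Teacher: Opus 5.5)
\textbf{Step~1 has a genuine gap.} Your outline has the right shape: run a greedy on the prefix, store the suffix elements that could have beaten it, and charge the rest of $OPT$ via the $p$-system property. But Step~1 rests on a procedure you never specify. A standard or threshold greedy over $\lfloor\delta n\rfloor$ elements must store those elements or revisit them. You also leave ``rejection'' undefined, as you concede.

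The counting heuristic fails as stated. The suffix elements that beat the greedy at step $j$ are defined relative to $S_{j-1}$, which itself depends on the prefix. So the prefix is not an independent $\delta$-sample of them, and an element that appears in the prefix is not thereby picked.

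The missing idea is to cut the prefix into $r$ blocks of $\delta n/r$ elements. From block $j$, add only the feasible element $s_j$ of largest non-negative marginal with respect to $S_{j-1}$.
\begin{itemize}
\item Conditioned on blocks $1,\dots,j-1$, the set $S_{j-1}$ is fixed and block $j$ is a uniform sample of all unseen elements.
\item So with probability at least $1-\delta/r$, block $j$ hits one of the $\lceil (r/\delta)\ln(r/\delta)\rceil$ unseen elements with the largest marginals among those feasible for $S_{j-1}$.
\item Hence fewer than that many suffix elements $u$ have $S_{j-1}+u\in\cI$ and $f(u\mid S_{j-1})>f(s_j\mid S_{j-1})$.
\end{itemize}
Then round each $f(s_j\mid S_{j-1})$ up to a threshold $w_j$ on a geometric grid of ratio $1+\delta$. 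The grid has $O(\delta^{-1}\log(r/\delta))$ levels, and its bottom level is at most $\frac{\delta}{r}f(S_\delta)$. So the $\log/\delta$ factor comes from threshold levels, not from stages of the stream. Store $u$ iff $S_{j-1}+u\in\cI$ and $f(u\mid S_{j-1})>w_j$ for some $j$.

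To avoid an $r^2$ count, you need that indices $j$ sharing a level give nested sets, by down-closedness and submodularity. Then only one set per level matters.

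Finally, the bound holds only with probability $1-\delta$. You must abort when $H$ overflows to get a deterministic space bound. You must also absorb this failure event in the expectation, using non-negativity and $f(O'\cup S_\delta)\le 2f(OPT)$, where $O'$ is the part of $OPT$ in the suffix.

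\textbf{Your accounting of the ``$+1$'' is wrong, and this breaks Condition~1 in Step~3.} Let $O'_L$ be the unstored suffix elements of $OPT$, and $O''_L\subseteq O'_L$ those still feasible with $S_\delta$. Elements of $O''_L$ cost only $O(\delta)f(S_\delta)$. If one exists, some block added nothing, so that block's threshold is the bottom grid level, and each such element has marginal at most $\frac{\delta}{r}f(S_\delta)$.

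The $+1$ is actually the term $f(S_\delta)$ in $f(S_\delta\cup O'_L)=f(S_\delta)+f(O'_L\mid S_\delta)$. If threshold rejections really cost a full copy of $f(S_\delta)$, your own bound would read $2+p$. Moreover, the $p$-system partition lemma already charges each $s_j$ at most $p$ times. So your explanation for Condition~1 removes nothing.

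What is needed is that $f(A^*)$ pays for the greedy elements that absorb a full $p$ charges. For $p$-exchange systems, and for matroid $p$-parity via the Greene--Woodall and Laso\'n exchange theorems in the underlying matroid, you can choose a set $Y_o\subseteq S_\delta$ for every $o\in O'_L\setminus O''_L$ such that:
\begin{itemize}
\item $S_\delta\setminus Y_o+o\in\cI$;
\item every element of $S_\delta$ lies in at most $p$ of these sets;
\item the set $S_L$ of elements lying in exactly $p$ of them satisfies $S_L\cup(OPT\cap H)\cup O''_L\in\cI$.
\end{itemize}
Charge each $o$ to the lowest-indexed element of $Y_o$. This gives $f(S_L)+(p-1+O(p\delta))f(S_\delta)\ge f(O'_L\mid S_\delta)$. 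Then $f(A^*)\ge f(S_L\cup(OPT\cap H))$ supplies the $f(S_L)$ term.

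Your linear case is essentially right once rephrased. Since $f(O'_L\mid S_\delta)=f(O'_L)$, you bound $f(O')$ rather than $f(O'\cup S_\delta)$, and the $f(S_\delta)$ term disappears. It is not a matter of counting $OPT\cap S_\delta$ once instead of twice.
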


Since $|S_{\delta}\cup H| = O(\nicefrac{r}{\delta^2}\log^2(r/\delta))$ in Proposition~\ref{prop:main2}, it is possible to compute the set $A^*$ in FPT time via exhaustive search. Then, by returning the better set among $S_\delta$ and $A^*$, one can get $p + O(1)$ approximation. In the polynomial time case, one cannot in general obtain $A^*$. However, it is still possible to use an offline algorithm to compute a feasible subset $A$ of $S_{\delta}\cup H$ that approximates $A^*$. Outputting the better set among $S_\delta$ and $A$ then yields some approximation guarantee that depends on the approximation ratio $\alpha$ of the offline algorithm.
Corollary~\ref{cor:offlinetoStream} formally states the approximation ratio that can be obtained in this way. Most of our new algorithmic results stated in Table~\ref{table2} follow by plugging offline algorithms referenced in the table into this corollary.\footnote{One can verify that all these offline algorithms can be implemented to use only nearly-linear space, and thus, keep the space complexity of the data stream algorithm they are part of within the allowed limit for a semi-streaming algorithm.} The only exception are the results about maximizing a monotone submodular function subject to a matroid constraint, which require an additional tool, and are further discussed below.

\begin{corollary}\label{cor:offlinetoStream}
\begin{samepage}
Assume that
\begin{itemize}
	\item in Proposition~\ref{prop:main2} we have $c(p)\cdot \bE[f(S_{\delta})]+\bE[f(A^*)] \geq (1 - \beta \delta p) \cdot f(OPT)$---where $\beta > 0$ is the constant hidden by the big $O$ notation; and
	\item given the available time and space, we can compute a set $A\subseteq S_{\delta}\cup H$ such that $f(A)\geq \nicefrac{1}{\alpha}\cdot f(A^*)$ for some value $\alpha \geq 1$.
\end{itemize}
\end{samepage}
Then, by setting $\delta = \frac{\eps}{2\beta p(c(p) + \alpha)}$ and outputting the better set among $S_\delta$ and $A$, one can get an approximation ratio of
$
	\alpha + c(p) + \eps 
$.
\end{corollary}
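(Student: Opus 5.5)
The plan is to compare the expected value of the output with a convex combination of $\bE[f(S_\delta)]$ and $\bE[f(A^*)]$, chosen so that the first bullet of the corollary applies directly. Let $T$ be the output, i.e., the better of $S_\delta$ and $A$. Both candidates are feasible: $S_\delta\in\cI$ by Proposition~\ref{prop:main2}, and $A\in\cI$ by assumption. Pointwise, for every realization of the random order, the maximum of two numbers is at least any convex combination of them. Using the weights $\frac{c(p)}{c(p)+\alpha}$ and $\frac{\alpha}{c(p)+\alpha}$ gives
\[
f(T) \ge \frac{c(p)}{c(p)+\alpha}\, f(S_\delta) + \frac{\alpha}{c(p)+\alpha}\, f(A) \ge \frac{c(p)\, f(S_\delta) + f(A^*)}{c(p)+\alpha}\enspace,
\]
where the second inequality uses $f(A)\ge f(A^*)/\alpha$.

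Taking expectations and plugging in the first hypothesis yields
\[
\bE[f(T)] \ge \frac{1-\beta\delta p}{c(p)+\alpha}\, f(OPT)\enspace.
\]

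Next, substitute $\delta = \frac{\eps}{2\beta p(c(p)+\alpha)}$. Then $\beta\delta p = \frac{\eps}{2(c(p)+\alpha)}$, so the approximation ratio is at most
\[
\frac{c(p)+\alpha}{1-\frac{\eps}{2(c(p)+\alpha)}}\enspace.
\]
Write $x = \frac{\eps}{2(c(p)+\alpha)}$. This $x$ is at most $1/2$ because $c(p)+\alpha\ge1$ and $\eps$ is small, so we may use $\frac{1}{1-x}\le 1+2x$. This bounds the ratio by $(c(p)+\alpha)(1+2x) = c(p)+\alpha+\eps$, as claimed.

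The only subtle point is a bookkeeping one: the space and time claims must be inherited. Proposition~\ref{prop:main2} with this $\delta$ gives $|H| = O(\nicefrac{r}{\delta^2}\log^2(r/\delta))$, which is semi-streaming for constant $p$ and $\eps$. Computing $A$ within the available resources is assumed by the second hypothesis. No real obstacle is expected; the main care is choosing the convex weights so that the coefficient structure of Proposition~\ref{prop:main2} is matched exactly.
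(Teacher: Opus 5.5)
Your proposal is correct and matches the paper's proof: the same pointwise convex combination with weights $\frac{c(p)}{c(p)+\alpha}$ and $\frac{\alpha}{c(p)+\alpha}$, then the bound $f(A)\geq f(A^*)/\alpha$, the first hypothesis, and the substitution of $\delta$. You take expectations explicitly and verify the final arithmetic via $\frac{1}{1-x}\le 1+2x$ for $x\le 1/2$; these are details the paper's one-line chain leaves implicit.
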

\begin{proof}
Note that
\begin{align*}
  \max\{f(S_{\delta}), f(A)\}
  &\geq \frac{c(p)}{c(p)+\alpha}\cdot f(S_{\delta})+\frac{\alpha}{c(p)+\alpha}\cdot f(A)\\
  &\geq \frac{c(p)\cdot f(S_{\delta})+ f(A^*)}{c(p)+\alpha}
  \geq \frac{1-\beta\delta p}{c(p)+\alpha} \cdot f(OPT)
	\geq
	\frac{f(OPT)}{c(p) + \alpha + \eps}
	\enspace.
	\qedhere
\end{align*}
\end{proof}

The proof of Proposition~\ref{prop:main2} appears in Section~\ref{sec:single-pass}. It builds on algorithmic ideas introduced recently by \cite{NVZ25} for non-monotone submodular maximization under a matroid constraint, but we substantially simplify and extend them.
At a high level, the first phase of the algorithm scans the initial $\delta n$ elements and greedily constructs a feasible set $S_\delta\in\cI$ by selecting the most valuable element from each ``window'' of $\nicefrac{\delta n}{r}$ elements. 
In its second phase, the algorithm scans the remaining $(1-\delta)n$ elements and adds each element $u$ to $H$ whenever $u$ can ``replace'' some $s\in S_\delta$ and has a marginal contribution that is strictly larger than the one of $s$ at the time $s$ entered $S_\delta$. With a small probability this process might collect too many elements, which forces the algorithm to ``abort'' to preserve its space complexity guarantee. However, whenever this algorithm does not abort, it collects enough valuable elements into $H$ to guarantee that $S_\delta \cup H$ contains a good solution. Note that since the decision whether to store in $H$ each element of the second phase is done independently for each element, this algorithm is not affected by the order of the $(1-\delta)n$ elements in this second phase. Thus, the algorithm works correctly even when they arrive in an adversarial, rather than uniformly random, order.

To get our improved guarantees for monotone submodular maximization under a matroid constraint, we need a second component, which is motivated by the following general idea. The set $S_{\delta}$ is computed after observing only $\delta n$ elements from the stream, so in general we cannot expect $f(S_\delta)$ to be large. Fortunately, however, when $f(S_{\delta})$ is indeed small, Proposition~\ref{prop:main2} already implies that the best solution inside $S_{\delta}\cup H$ (namely $A^*$) is close to optimal. Therefore, the more problematic case from the point of view of Proposition~\ref{prop:main2} is when $f(S_{\delta})$ happens to be substantial. To handle this case, we take advantage of the fact that $f(S_\delta)$ is substantial, and treat it as a starting point for a process that \emph{boosts} its value further while processing the remaining $(1-\delta)n$ elements. Note that typically most of the elements of the optimal solution are among these $(1-\delta)n$ elements, and thus, the boosting step indeed encounters valuable elements that can be used to increase the value of the solution. The next proposition formalizes this boosting step (see Section~\ref{sec:preliminaries} for the definition of the multilinear extension).

\begin{restatable}[Single-pass random-order semi-streaming with boosting]{proposition}{PropMainBoost}
\label{prop:main}
Let $f\colon 2^\cN \rightarrow \nnR$ be a non-negative monotone submodular function, and let $\cM = (\cN \cup A, \cI)$ be a matroid such that $A \in \cI$ (possibly with $A \cap \cN \neq \varnothing$). There exists a single-pass semi-streaming algorithm that gets the elements of $A$ upfront, and the elements of $\cN$ in the form of a random order data stream. Given $\delta>0$, this algorithm stores $O(\nicefrac{r}{\delta})$ elements and outputs a set $S\in \cI$ such that
\[
    \bE[f(S)] \geq
    (1 - e^{-1+1/e} - \delta) \cdot f(OPT) + e^{-2+1/e} \cdot f(A)
    \geq 0.468 \cdot f(OPT) + 0.195 \cdot f(A)\enspace,
\]
where $OPT$ is an optimal independent subset of $\cN$. More generally, given $h\in(0,1/e]$ (the above guarantee corresponds to $h=\nicefrac{1}{e}$), the algorithm outputs a set $S\in \cI$ such that
\[
\bE[F(e\cdot h \cdot \characteristic_S )]\ge (1-\lambda)(1 - \delta)\cdot f(OPT)+\lambda\cdot F(h\cdot \characteristic_{A}) \enspace,
\]
where $\lambda=e^{-h(e-1)} \in[0.53,1]$ and $F$ is the multilinear extension of $f$.
\end{restatable}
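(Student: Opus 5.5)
We plan to design a streaming, random-order analogue of a continuous greedy / measured process that starts from the fractional point $h\cdot\characteristic_A$ and runs for "time" from $h$ up to $e\cdot h$, with the matroid constraint handled by a swap-based (Frank--Wolfe style) update. The algorithm keeps a base $B \in \cI$, initialized as $A$ (extended arbitrarily if needed). It partitions the stream into $m = \Theta(1/\delta)$ consecutive windows, each of about $\delta n$ elements. In each window it stores the incoming elements together with $B$ (so $O(r/\delta)$ space if we only keep, per window, at most $r$ candidate elements, e.g. the elements that are spanned-exchangeable and beat the current gradient). At the end of window $i$ it computes a max-weight independent set $I_i$ with respect to weights $\partial_u F(\vy_i)$ among stored elements. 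It then updates the fractional point $\vy_{i+1} = \vy_i + \frac{\Delta t}{1}\,(\characteristic_{I_i}-\text{(decay)})$, where the step size is chosen so that after all windows the total time grows from $h$ to $e h$ in a multiplicative (exponential) schedule. The final $\vy$ is dominated by $e h\cdot\characteristic_S$ for a single set $S\in\cI$. To get this, we represent it as a convex combination of independent sets and use swap rounding or pipage, or more simply make the update integral: we replace elements of $B$ by those of $I_i$ via an exchange bijection, so that $S=B_{\text{final}}$ accumulates weight.

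The key analytic step uses random order. Because the windows are random, each window contains each element of $OPT$ independently with probability about $\delta$. Hence $\bE[\sum_{u\in I_i}\partial_u F(\vy_i)] \geq \delta\cdot\bE[\sum_{u\in OPT}\partial_u F(\vy_i)] \geq \delta(f(OPT)-F(\vy_i))$ by monotonicity and submodularity. Elements of $OPT$ seen in earlier windows are handled by losing a $(1-\delta)$ factor, which is where $(1-\delta)f(OPT)$ comes from. This yields a differential inequality of the form $\frac{d}{dt}F(\vy(t)) \geq (1-\delta)f(OPT)-F(\vy(t))$, plus a term accounting for the loss from swapping out parts of the previous solution, in the style of the measured continuous greedy. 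Together with $\vy(h)=h\characteristic_A$, integrating from $t=h$ with a coordinate growth rate making the final point $e h\characteristic_S$ gives
\[
F(\vy_{\text{end}}) \geq (1-\lambda)(1-\delta) f(OPT) + \lambda F(h\characteristic_A), \qquad \lambda = e^{-h(e-1)} .
\]
We will verify the exponent by solving the ODE $g' = c - g$ over a time interval of length $h(e-1)$. Plugging $h=1/e$ gives $\lambda = e^{-1+1/e}$. We then use $F(\characteristic_S/1)=f(S)$ on the left. On the right we use $F(\frac1e\characteristic_A)\geq \frac1e f(A)$ by concavity along the direction $\characteristic_A$ (with $f\ge0$), which yields the coefficient $e^{-2+1/e}$. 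The numerics $0.468$ and $0.195$ follow.

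The main obstacle is making the fractional process consistent with a single integral set $S$ and bounded memory, so that the final fractional point is dominated coordinatewise by $e h\characteristic_S$ (the factor $e$ being the maximum growth of each coordinate under multiplicative updates). We would first try to maintain $S$ as the support of $\vy$ and to argue that each element added to $S$ enters at weight at most $h$ scaled by growth at most $e^{h(e-1)\cdot}$, bounded by $e h$ since $h\le 1/e$. We would also argue that swaps are realized through matroid exchange bijections, so that $S$ stays independent.
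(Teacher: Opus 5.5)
There is a genuine gap at what you call the key analytic step. The bound $\bE[\sum_{u\in I_i}\partial_u F(\vy_i)]\geq \delta\cdot\bE[\sum_{u\in OPT}\partial_u F(\vy_i)]$ needs window $i$ to contain each element of $OPT$ with probability $\delta$ \emph{conditioned on the current state} $\vy_i$. That fails: $\vy_i$ is determined by windows $1,\dots,i-1$, and window $i$ is disjoint from them. Given the history, window $i$ contains only unseen elements and none of the elements of $OPT$ that already streamed by. Those elements are lost unless they entered the solution. Halfway through the stream they are a constant fraction of $OPT$, so a $(1-\delta)$ factor does not cover them. Nor can you average them away, because which elements of $OPT$ are still unseen is correlated with $\vy_i$. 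The paper flags exactly this correlation and resolves it with the short-list-secretary device. The windows come from Process~2 (balls into bins, so jointly they are distributed like disjoint $p'$-samples). The candidate pool in iteration $i$ is $R_i\cup H_{i-1}$, where $H_{i-1}$ holds every element that ever entered the solution, and ties are broken by arrival order. Lemma~\ref{lem:independent_sigma} and Corollary~\ref{cor:independent} then show that, conditioned on $A_0,\dots,A_{i-1}$, this pool contains an independent $p'$-sample of $\cN$, which is what the per-iteration analysis needs. Nothing in your proposal plays this role.

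The second problem is the one you yourself call the main obstacle, and it is left open. Adding $\Delta t\cdot\characteristic_{I_i}$ for a different independent set in each of $\Theta(1/\delta)$ windows does not keep $\vy/(eh)$ inside the matroid polytope. Each window sees only a $\delta$-fraction of $OPT$, so to progress at rate $f(OPT)-F$ your steps must be about $1/\delta$ times larger than the time they account for. Neither rounding nor taking the support of $\vy$ then yields $eh\cdot\characteristic_S$ for one independent $S$. Replacing $B$ by $I_i$ ``via an exchange bijection'' is also invalid: a Brualdi bijection certifies only single exchanges $B-g(o)+o\in\cI$, not simultaneous ones. Your ``loss from swapping out parts of the previous solution'' is never bounded, and bounding it is the crux.

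The paper never goes fractional beyond a uniform height. It keeps one integral $A_i$ evaluated at $h_i\characteristic_{A_i}$ and uses $\ell\approx r/\delta'$ tiny windows, each hitting $OPT$ with probability $p\approx\delta'$. It makes at most one swap per window, accepted iff $F(h_i\characteristic_{A_{i-1}+v_i})+F(h_i\characteristic_{A_{i-1}-u_i})>2F(h_i\characteristic_{A_{i-1}})$. The Brualdi bijection, symmetry of a uniform element of $OPT\cap R_i$, and multilinearity give an expected swap gain of at least $\frac{ph_i}{r}[f(OPT)-F(h_i\characteristic_{A_{i-1}})-\inner{\nabla F(h_i\characteristic_{A_{i-1}})}{\characteristic_{A_{i-1}}}]$. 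Raising the height gains at least $\Delta h_i\inner{\nabla F(h_i\characteristic_{A_{i-1}})}{\characteristic_{A_{i-1}}}$ by concavity along non-negative directions. Choosing $\Delta h_i=h_ip/r$, which is why the schedule is multiplicative, cancels the removal term and yields Lemma~\ref{lem:main-iid}. The constraint $\ell p'\leq 1$ then caps $h_\ell\leq eh$ and gives $\prod_i(1+\Delta h_i)^{-1}\leq e^{-h(e-1)}+O(\delta)$. Your closing steps are fine: solving $g'=c-g$ over a height interval of length $h(e-1)$, then using $F(\frac1e\characteristic_A)\geq\frac1e f(A)$. But the recursion they rest on is not established by your argument.
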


The algorithm of Proposition~\ref{prop:main} is inspired by an offline algorithm due to Ganz et al.~\cite{ganz2026poisson}. Both algorithms maintain an integral solution $S$, but consider its value to be $F(h \cdot \characteristic_S)$ for some $h \in (0, 1]$ that grows over time. Both algorithms also occasionally improve the solution $S$ via local search like swaps. However, they differ regarding the way the times of these swaps are chosen. The algorithm of~\cite{ganz2026poisson} uses a Poisson process for this purpose, while the algorithm of Proposition~\ref{prop:main} partitions the stream into blocks, or windows, and makes a swap after inspecting each block. One complication in this algorithm is that the locations of elements in the stream are correlated with the solution that the algorithm has at every time point, which requires us to adopt tools from the literature about short list secretary algorithms~\cite{agrawal2019submodular,S20,feldman2026streaming}. We prove Proposition~\ref{prop:main} in Section~\ref{sec:mainAlg}. Along the way, we present in this section also a simple version of the offline algorithm of~\cite{ganz2026poisson} that avoids the use of a Poisson process, and might be of independent interest.

In Section~\ref{sec:matroid}, we show that the results of Theorem~\ref{thm:main2} for monotone submodular maximization subject to matroid constraint follow by combining Propositions~\ref{prop:main2} and~\ref{prop:main}. In the same section, we also show that Theorems~\ref{thm:main} follows by repeated applications of Propositions~\ref{prop:main}.

\paragraph{Additional result:}

Our final result is an inapproximability result. In the offline setting, one can get $p$-approximation for maximizing a linear function subject to a $p$-system constraint via a standard greedy algorithm~\cite{jenkyns1976efficacy}, and slightly worse guarantees can be obtained also for submodular functions~\cite{calinescu2011maximizing,feldman2023how,fisher1978analysis}. However, the best known semi-streaming algorithm only obtains $O(p^2 \log p)$-approximation for this problem under an adversarial arrival order~\cite{haba2020streaming}. Theorem~\ref{thm:main2} above shows that one can improve over that by assuming a uniformly random arrival order, and get an approximation ratio of $O(p)$, like in the offline case. Theorem~\ref{thm:hardness} below shows that a similar improvement cannot be obtained for adversarial arrival order, and thus, proves a separation between adversarial and uniformly random arrival orders.

To be more specific, Theorem~\ref{thm:hardness} shows that any data stream algorithm obtaining an approximation ratio of $o(p^2)$ for maximizing a linear function subject to a $p$-system must use a memory that is linear in the size $n$ of the ground set. Furthermore, this is true even when the rank the $p$-system is arbitrarily small compared to $n$, which excludes the possibility of a semi-streaming algorithm for the above problem with an approximation ratio of $o(p^2)$. Similarly to the hardness result of~\cite{alaluf2022optimal}, Theorem~\ref{thm:hardness} applies to what we call \emph{non-encoding algorithms}, i.e., algorithms that store elements in their memory in an explicit form rather than via some encoding or sketching. This natural class of algorithms captures every algorithm that we are aware of for submodular maximization. See Section~\ref{sec:hardness} for the formal definition of non-encoding algorithms and for the proof of Theorem~\ref{thm:hardness}.

\begin{restatable}{theorem}{thmHardness} \label{thm:hardness}
Under an adversarial arrival order, every $o(p^2)$-approximation non-encoding data stream algorithm for maximizing a linear function subject to a $p$-system constraint must store at least $\Omega(n / p^{2p + 5})$ elements. Moreover, this is true even when we are guaranteed that (i) the rank of the $p$-system is $O(p^{2p})$ and (ii) $n$ is larger than any given function of $p$.
\end{restatable}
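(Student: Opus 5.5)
We will argue through a hard distribution of inputs combined with an indistinguishability argument, in the spirit of the lower bound of~\cite{alaluf2022optimal} for non-encoding algorithms. The plan is to build a $p$-system out of a few small "gadgets" hidden among $n$ decoy elements, in such a way that the optimal value is $\Theta(p^2)$ times larger than anything one can obtain from elements the algorithm is able to remember.

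\textbf{Structure of an instance.} Each instance has a small core: a "bad" feasible set $B$ of weight about $1$ each, and a family of $p^2$-times-better hidden sets $O$. The core looks as follows.
\begin{itemize}
\item A hidden set $T$ of roughly $p$ "trigger" elements arrives early, mixed among $n$ candidate elements that all look identical up to that point.
\item Which candidates form $T$ is revealed only later in the stream, through the independence structure that the arriving elements impose.
\item Every element that arrives later is independent with exactly the elements of $T$, and each element of $T$ blocks about $p$ elements of $O$.
\end{itemize}
The $p$-system property comes from taking the independent sets to be unions over a two-level (nested) construction. First, one level of $p$-extendible-like conflicts yields ratio $p$: a single early element can exclude $p$ late elements of weight comparable to it. Second, nesting this conflict structure twice multiplies the ratio to $p^2$, giving rank $p^{O(p)}$. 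This nesting is where the bound $O(p^{2p})$ on the rank, and the factor $p^{2p+5}$ in the memory bound, should come from. Concretely, I would define the bases explicitly and verify the $p$-system condition directly: every maximal independent subset of any $Y$ has size within a factor $p$ of the others. I would check this by a charging argument, showing that any element of a large base is "blocked" by at most $p$ elements of a small one.

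\textbf{Order of steps.}
\begin{enumerate}
\item Fix the weights, with early candidates of weight $1$ and late elements of weight $p^{-1}$ (or suitably scaled), and define the independence system. Then verify that it is a $p$-system and compute its rank.
\item Show that if the algorithm retains a trigger element of $T$ at the end, the instance is such that the optimum, which uses the late elements while excluding the triggers, is $\Omega(p^2)$ times more valuable. The point of this step is that retaining triggers lets the algorithm build only the cheap solution.
\item Show the converse: if the algorithm does not retain the triggers, a matched instance in which the late elements instead conflict with the dropped elements makes its output poor. Since the algorithm is non-encoding, its output can only consist of elements it stored.
\item Use a Yao-type argument. Since the $n$ early candidates are indistinguishable when they arrive, an algorithm storing $s$ elements keeps any particular candidate with probability at most about $s/n$. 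Hence with memory $o(n/p^{2p+5})$ it retains the relevant (about $p^{2p}$ times $\mathrm{poly}(p)$) hidden elements with probability $o(1)$. Comparing the two cases then forces the ratio to be $\Omega(p^2)$.
\end{enumerate}
Finally, $n$ can be made larger than any function of $p$ simply by adding decoys.

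\textbf{Main obstacle.} I expect the main obstacle to be designing the independence system so that three requirements hold simultaneously: (a) it is genuinely a $p$-system on every subset, not only on the intended ones; (b) the gap is $p^2$ rather than $p$; and (c) the information about which early elements matter arrives only after they have passed. My first attempt would be a recursive gadget. At depth $1$, one early element conflicts with $p$ late ones, giving a $p$-extendible structure. At depth $2$, each late element is itself replaced by a block of depth $1$. I would then check, by induction on depth, that the exchange property degrades only to $p$ and not to $p^2$, since a $p^2$ degradation would leave only the trivial gap.
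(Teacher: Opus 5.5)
\textbf{The construction is missing, and the sketched mechanism would not work as stated.} Your overall framework (a hard input distribution, indistinguishability of candidates, a counting bound on what gets stored, then Yao) matches the paper's. But what you call the main obstacle is the entire content of the theorem: no independence system is ever defined, so Steps~1--3 cannot be checked.

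Steps~2--3 also rest on a false premise. An algorithm is never penalized for \emph{retaining} an element, because it can simply leave that element out of its output. So ``if it keeps the triggers it can only build the cheap solution'' gives no lower bound. The hardness has to come from the algorithm being unable to retain enough \emph{valuable} elements.

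Your size accounting is reverse-engineered from the statement. Nesting a fan-out-$p$ gadget twice gives rank $\mathrm{poly}(p)$, not $p^{O(p)}$. In the paper, $p^{2p}$ comes from $\ell=k=p-1$ geometric scales with ratio $\alpha=k^2+1$:
\begin{itemize}
\item Level $i$ has $k^2\alpha^i$ valuable elements $A_i$ hidden among $r$ times as many decoys $B_i$, all of weight $\alpha^{-i}$.
\item A level may hold $k^2\alpha^i$ elements, only $k\alpha^i$ if it contains a decoy, and none if some higher level $i'$ holds more than $\alpha^{i'}$ decoys.
\item Taking all the $A_i$ gives value $\ell k^2$. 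A set with at most $\alpha^i$ elements of each $A_i$ is worth at most $k+2\ell$.
\item One factor $k$ in the gap comes from the cap drop caused by decoys. The other factor $\ell=k$ comes from the zero-cap rule, which lets decoys pay off at only one level.
\end{itemize}
Your worry that the exchange ratio degrades to $p^2$ is resolved by a geometric-sum argument, not by induction on depth. The levels killed by an overflow at level $i'$ have total capacity below $k^2\alpha^{i'}/(\alpha-1)=\alpha^{i'}$. The $\alpha^{i'}$ decoys that the base must contain already pay for them, giving $p=k+1$ (Lemma~\ref{lem:k-system}). Levels are streamed bottom-up, each in random order. The rank is $\Theta(k^2\alpha^\ell)=O(p^{2p})$ and $n=\Theta(k^2\alpha^\ell r)$.

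\textbf{Step~4 is the real crux, and it is subtler than ``the candidates look identical when they arrive.''} Every independence query on stored elements, including queries that mix elements from different stages, must have an answer that does not depend on which candidates are special, as long as few special ones are stored. The proof of Lemma~\ref{lem:single_level_good} asserts exactly this: the true oracle agrees with one that treats $A_i$ as $B_i$. But because the zero-cap rule is keyed to $|S\cap B_{i'}|$, the assertion fails:
\begin{itemize}
\item Keep one level-$1$ element $x$. For $i\ge 2$ and $Y\subseteq A_i\cup B_i$ with $|Y|=\alpha^i+1$, the set $\{x\}\cup Y$ is dependent iff $Y\subseteq B_i$. This certifies a set $Y_0$ of $\alpha^i$ decoys.
\item From then on, $\{x\}\cup Y_0\cup\{u\}$ is independent iff the arriving element $u$ lies in $A_i$ (using $\alpha^i+1\le k\alpha^i$).
\item This collects all of $\bigcup_{i\ge 2}A_i$ with $O(k^2\alpha^\ell)$ memory, independent of $r$, and achieves ratio $\ell/(\ell-1)\le 2$.
\end{itemize}
The obvious repair is to trigger the zero cap by $|S\cap(A_{i'}\cup B_{i'})|>\alpha^{i'}$ together with $S\cap B_{i'}\neq\varnothing$. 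This restores indistinguishability but breaks the $p$-system property. One decoy, $\alpha^i-1$ elements of $A_i$, and one lower-level element form a base of their union with $A_i$, yet $A_i$ alone is independent of size $k^2\alpha^i$.

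So a correct proof must reconcile two requirements: the oracle's answers on small stored sets must not reveal which candidates are special, and the $p$-system exchange bound must still hold. Your sketch leaves exactly this open.
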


%% file: preliminaries.tex
\section{Preliminaries} \label{sec:preliminaries}

In this section, we present some notation that is used in this paper, as well as known definitions and results. 

\subsection*{Submodular Functions}

A set function $f$ over a ground set $\cN$ is a function of the form $f\colon 2^\cN \to \bR$. Recall that given a set $S \subseteq \cN$ and an element $u \in \cN$, we denote by $f(u \mid S) \triangleq f(S \cup \{u\}) - f(S)$ the marginal contribution of $u$ to $S$. Similarly, given two sets $S, T \subseteq \cN$, we use $f(T \mid S) \triangleq f(S \cup T) - f(S)$ to denote the marginal contribution of $T$ to $S$. For ease of the reading, we often use $S + u$ and $S - u$ as shorthands for $S \cup \{u\}$ and $S \setminus \{u\}$, respectively. 

Above we have given one definition of submodularity. Specifically, we stated that a set function $f\colon 2^\cN \to \bR$ is submodular if $f(u \mid S) \geq f(u \mid T)$ for every two sets $S \subseteq T \subseteq \cN$ and element $u \in \cN \setminus T$. Another well-known equivalent definition of submodularity is that $f$ is submodular if $f(S) + f(T) \geq f(S \cap T) + f(S \cup T)$ for every two sets $S, T \subseteq \cN$. Following is one property of submodular functions that we employ in this paper.
\begin{lemma}[Lemma~2.2 of~\cite{FMV11}] \label{lem:sampling}
Let $f\colon 2^\cN \to \bR$ be a submodular function, and let $A$ be a subset of $\cN$. If $A(p)$ is a random set that contains every element of $A$ with probability $p$, not necessarily independently, then $\bE[f(A(p))] \geq (1 - p) \cdot f(\varnothing) + p \cdot f(A)$.
\end{lemma}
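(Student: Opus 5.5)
The statement to prove is Lemma~\ref{lem:sampling}: if $A(p)$ contains each element of $A$ with probability $p$, then $\bE[f(A(p))] \geq (1-p)\cdot f(\varnothing) + p\cdot f(A)$. The plan is to fix an ordering $u_1, \dots, u_k$ of the elements of $A$, write $A_i = \{u_1,\dots,u_i\}$, and expand $f(A(p))$ as a telescoping sum of marginal contributions. The key point is that submodularity lets each term of this expansion be bounded from below by a deterministic quantity, multiplied by an indicator. Because only linearity of expectation is used, no independence between elements is needed.

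Concretely, for every realization we write
\[
f(A(p)) = f(\varnothing) + \sum_{i=1}^{k} \characteristic[u_i \in A(p)] \cdot f\bigl(u_i \mid A(p) \cap A_{i-1}\bigr) \enspace,
\]
which telescopes by adding the elements of $A(p)$ in the fixed order. Since $A(p) \cap A_{i-1} \subseteq A_{i-1}$ and $u_i \notin A_{i-1}$, submodularity gives $f(u_i \mid A(p)\cap A_{i-1}) \geq f(u_i \mid A_{i-1})$. Hence, pointwise,
\[
f(A(p)) \geq f(\varnothing) + \sum_{i=1}^{k} \characteristic[u_i \in A(p)] \cdot f(u_i \mid A_{i-1}) \enspace.
\]

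Taking expectations and using $\Pr[u_i \in A(p)] = p$ gives
\[
\bE[f(A(p))] \geq f(\varnothing) + p \sum_{i=1}^{k} f(u_i \mid A_{i-1}) = f(\varnothing) + p\bigl(f(A) - f(\varnothing)\bigr) \enspace,
\]
which equals $(1-p)\cdot f(\varnothing) + p\cdot f(A)$. The step most likely to go wrong is the pointwise inequality: it must be applied only to terms where the indicator is $1$, and it must not require the marginals to have a sign. Since the indicator is nonnegative, this holds even when the marginals are negative, because neither monotonicity nor nonnegativity of $f$ is assumed.
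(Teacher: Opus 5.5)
Your argument is correct and is essentially the standard proof from Feige, Mirrokni and Vondr\'ak, which the paper cites without reproving. That proof orders $A$, telescopes $f(A(p))$ along the prefixes, lower-bounds each marginal against the full prefix $A_{i-1}$ by submodularity, and finishes with linearity of expectation, which is why no independence is needed. Your telescoping identity implicitly uses $A(p)\subseteq A$ (so that $A(p)\cap A_k = A(p)$), and that is the intended reading of the statement.
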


\subsection*{Extensions and their properties}

The sets of $2^{\cN}$ can be naturally identified with vertices of the $[0, 1]^\cN$ cube by identifying  each set $A \subseteq \cN$ with its characteristic vector (which we denote here by $\characteristic_A$). For linear functions, there is a natural way to extend any function from $2^\cN$ to the cube $[0, 1]^\cN$. For submodular functions, the situation is more involved, and several useful extensions have been proposed. We use two such extensions. The first is the Lov\'{a}sz extension, defined as follows.
\begin{definition}\label{def:lov}{(The Lov\'{a}sz extension)}
Given a vector $\vx \in [0, 1]^\cN$ and a scalar $\lambda \in [0, 1]$, let $T_\lambda(\vx)=\left\{u \in \cN \mid x_u\geq \lambda \right\}$ be the set of elements in $\cN$ whose coordinate in $\vx$ is at least $\lambda$. Then,
\[
    \hat{f}(\vx) = \int_{\lambda=0}^1 f(T_\lambda(\vx)) d\lambda
    \enspace.
\]
\end{definition}

It is known that the Lov\'{a}sz extension of a submodular function $f$ is equivalent to the convex closure of $f$, which yields the following lemma.

\begin{lemma}[Due to \cite{L83}] \label{lem:lovasz_lower_bound}
Let $f\colon 2^\cN \to \bR$ be a submodular function, and let $\hat{f}$ be its Lov\'{a}sz extension. For every vector $\vx \in [0, 1]^\cN$, let $D_\vx$ denote an arbitrary distribution over $\{0,1\}^\cN$ such that $\Pr[u \in D_\vx] = x_u$ for every $u \in \cN$ (\ie, the marginals of $D_\vx$ agree with $\vx$). Then, $\hat{f}(\vx) \leq \bE_{A \sim D_\vx}[f(A)]$.
\end{lemma}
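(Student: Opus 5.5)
The plan is to show that among all distributions with marginals $\vx$, the expected value $\bE_{A\sim D_\vx}[f(A)]$ is minimized by a distribution supported on a chain. Then we observe that the unique chain distribution with marginals $\vx$ is exactly the law of $T_\lambda(\vx)$ for $\lambda$ uniform in $[0,1]$. Its expectation is $\hat f(\vx)$ by Definition~\ref{def:lov}.

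\emph{Step 1 (existence of a good minimizer).} Identify a distribution over $\{0,1\}^\cN$ with a vector $(q_S)_{S\subseteq\cN}$ in the simplex. The marginal constraints $\sum_{S\ni u} q_S = x_u$ are linear, so the feasible set $P_\vx$ is a nonempty compact polytope. The map $q\mapsto \sum_S q_S f(S)$ is linear, so its set of minimizers is a nonempty compact face $P^*$. On $P^*$ we take a point maximizing the linear potential $\Phi(q)=\sum_S q_S |S|^2$. It suffices to show this $q$ has chain support and that $\sum_S q_S f(S) = \hat f(\vx)$: since $q$ minimizes the expectation over $P_\vx$, this gives $\hat f(\vx)\le \bE_{A\sim D_\vx}[f(A)]$ for every $D_\vx$.

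\emph{Step 2 (uncrossing).} Suppose the support of $q$ contains two incomparable sets $S,T$. Fix $\eta=\min\{q_S,q_T\}>0$. Move mass $\eta$ from each of $S$ and $T$ to $S\cap T$ and $S\cup T$.
\begin{itemize}
\item Marginals are unchanged, because $\characteristic_S+\characteristic_T=\characteristic_{S\cap T}+\characteristic_{S\cup T}$.
\item By submodularity, $f(S)+f(T)\ge f(S\cap T)+f(S\cup T)$, so the expectation does not increase and we stay in $P^*$.
\item Writing $a=|S\cap T|$, $b=|S\setminus T|\ge1$ and $c=|T\setminus S|\ge 1$, we get $|S\cap T|^2+|S\cup T|^2-|S|^2-|T|^2 = 2bc>0$, so $\Phi$ strictly increases.
\end{itemize}
This contradicts the choice of $q$. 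Hence the support of $q$ is a chain $C_1\subsetneq C_2\subsetneq\dots\subsetneq C_k$.

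\emph{Step 3 (identifying the chain distribution).} For a chain, $u\in C_j$ implies $u\in C_{j'}$ for all $j'\ge j$. Therefore $x_u=\sum_{j\ge j(u)} q_{C_j}$, where $j(u)$ is the first index with $u\in C_j$. Define thresholds $\theta_j=\sum_{i\ge j} q_{C_i}$, which are decreasing in $j$. For $\lambda\in(\theta_{j+1},\theta_j]$ the set $T_\lambda(\vx)=\{u: x_u\ge\lambda\}$ equals $C_j$. When $\theta_1<1$, the range $\lambda\in(\theta_1,1]$ gives $T_\lambda(\vx)=\varnothing$, and this matches the leftover mass $1-\theta_1$, which must then be on $\varnothing$ (count $\varnothing$ as $C_1$ if it is in the support). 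Hence $\int_0^1 f(T_\lambda(\vx))\,d\lambda=\sum_j q_{C_j} f(C_j)$, since the boundary points form a null set.

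The main care point is Step~3's bookkeeping: the $\ge$ versus $>$ convention at thresholds, and the possible mass on $\varnothing$. Steps~1--2 are a routine compactness-plus-potential argument. That argument avoids having to show termination of an iterative uncrossing with real weights.
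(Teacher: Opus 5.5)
Your proof is correct and follows the paper's route in substance. The paper only invokes Lov\'{a}sz's result that, for submodular $f$, the Lov\'{a}sz extension coincides with the convex closure $\vx \mapsto \min\{\bE_{A\sim D}[f(A)] : D \text{ has marginals } \vx\}$. Your Steps~1--3 are a self-contained proof of exactly that identity via the standard uncrossing argument, and the potential $\sum_S q_S|S|^2$ cleanly avoids any termination issue.

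The only blemish is cosmetic. Once every support set, including $\varnothing$ if present, is placed in the chain, $\theta_1 = 1$ automatically. So the case $\theta_1 < 1$ in Step~3 arises only if $\varnothing$ is left out of the chain, and both conventions give the same integral.
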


The second extension is the \emph{multilinear extension}, defined by \cite{calinescu2011maximizing}. 
\begin{definition}\label{def:multilinear}
    The \emph{multilinear extension} $F\colon[0,1]^\cN\to \mathbb{R}$ of a set function $f\colon 2^\cN\to \mathbb{R}$ is given by \[F(\vx)=\sum_{S\subseteq \cN} \bigg(\prod_{u\in S} x_u \prod_{u\in\cN \setminus S} \mspace{-9mu} (1-x_u) \cdot f(S)\bigg) \enspace.\]
\end{definition}
Notice that the multilinear extension is, as is implied by its name, a multlinear function of the coordinates of the input vector $\vx$. We explore additional properties of the multilinear extension that we need for our analysis in Section~\ref{sec:mainAlg}.

\subsection*{Independence Systems}
Recall that an independence system is a pair $(\cN, \cI)$, where $\cN$ is a ground set and $\cI \subseteq 2^\cN$ is a non-empty collection of subsets of $\cN$ that is down-closed in the sense that $T \in \cI$ implies that every set $S \subseteq T$ also belongs to $\cI$. Following the terminology used for linear spaces, it is customary to refer to the sets of $\cI$ as \emph{independent} sets. Given a set $S\subseteq \cN$, every inclusion-wise maximal subset of $T$ is called a \emph{base} of $T$, and the rank of $S$, denoted by $r(S)$, is defined as the size of the largest base of $S$. It is customary to refer to the rank and bases of $\cN$ as the rank and bases of the independence system $(\cN, \cI)$ itself. In this paper, we denote this rank by $r$.

Many classes of independence system have been studied. Following is a partial lists of these classes, and their definitions. Notice that many of the definitions in the list are parametrized by a positive integer value $p$, and the classes they define become larger when the value of $p$ grows. In general, each one of the classes in the list strictly contains the ones that appear before it (for the same value of $p$). The sole exception is the class of $p$-exchange, which is contained in the class of $p$-extendible, but has a non-trivial intersection with each one of the classes preceding it in the list. See Figure~\ref{fig:classes} for a graphical representation of containment relationship between the classes.

\begin{itemize}
    \item {\bf Matroid:} For every set $S \subseteq \cN$, all the bases of $S$ are of the same size. Equivalently, for every two sets $S, T \in \cI$ such that $|S| < |T|$, there must exist an element $u \in T \setminus S$ such that $S + u \in \cI$.
		\item {\bf Matroid $p$-Intersection:} There exist $p$ matroids $(\cN, \cI_1), (\cN, \cI_2), \dotsc, (\cN, \cI_p)$ such that $\cI = \cap_{i = 1}^p \cI_i$. It is customary to refer to the constraint $(\cN, \cI)$ as the intersection of these $p$ matroids.
		\item {\bf $p$-Matchoid:} There exist matroids $(\cN_1, \cI_1), (\cN_2, \cI_2), \dotsc, (\cN_h, \cI_h)$ such that every element $u \in \cN$ appears in the ground sets of at most $p$ of these matroids, and a set $S \subseteq \cN$ is independent if $S \cap \cN_i \in \cI_i$ for every $i \in [h]$.
    \item {\bf Matroid $p$-Parity:} There exists a matroid $(E, \cJ)$ and a function $v \colon \cN \to 2^E$ mapping every element of $\cN$ to a disjoint subset of $E$ of size at most $p$ such that $S \subseteq \cN$ is independent if $\cup_{u \in S} v(u) \in \cJ$.
    \item {\bf $p$-Exchange:} For every two independent sets $S, T \subseteq \cI$, there is a collection of sets $\{Y_u \subseteq S \setminus T \mid u \in T \setminus S\}$ such that (i) every set in the collection is of size at most $p$, (ii) every element of $S \setminus T$ appears in at most $p$ collection sets, and (iii) for every subset $T' \subseteq T \setminus S$, $(S \setminus \cup_{u \in T'} Y_u) \cup T' \in \cI$.
    \item {\bf $p$-Extendible:} For every two independent sets $S, T \subseteq \cI$ and element $u \in T \setminus S$, there exists a set $Y \subseteq S \setminus T$ of size at most $p$ such that $S \setminus Y + u \in \cI$.
    \item {\bf $p$-System:} For every set $S \subseteq \cN$, the ratio between the sizes of the largest and smallest bases of $S$ is at most $p$.
\end{itemize}

\begin{figure}[t!]
\begin{center}
\input{class-diagram.tikz}
\end{center}
\caption{Containment relationship between classes of independence systems.} \label{fig:classes}
\end{figure}

\subsection*{Oracles}

Explicit descriptions of set functions and independence system constraints over a ground set $\cN$ might require space that is exponential in $n = |\cN|$. However, algorithms processing such functions and constraints are usually only useful for applications when they run in time $\poly(n)$. Accordingly, it is standard in such algorithms to assume that the description of the function or constraint is not part of the input. Instead, the algorithms are given access to these objects through an appropriate oracle.
\begin{itemize}
	\item If the algorithm needs access to a set function $f$, then it gets a \emph{value oracle} that given a set $S$ returns $f(S)$.
	\item If the algorithm needs access to an independence system $(\cN, \cI)$, then it gets an \emph{independence oracle} that given a set $S$ answers whether $S \in \cI$.
\end{itemize}

It is sometimes convenient to assume that the algorithm also has value oracle access to the multilinear extension $F$ of $f$, i.e., an oracle that given a vector $\vx \in [0, 1]^\cN$, returns $F(\vx)$. As was shown by C{u{a}}linescu et al.~\cite{calinescu2011maximizing}, an approximate version of this oracle can be efficiently implemented by viewing $F(\vx)$ as the expected value of $f$ over sets drawn from an appropriate distribution, and then averaging the value of $f$ over multiple samples from this distribution. For simplicity, we ignore in this paper the need to approximate the oracle via sampling in this way, and just assume when necessary that we have value oracle access to $F$.

%% file: class-diagram.tikz
\begin{tikzpicture}[font=\small,scale=0.8]
    
    \draw[thick] (0, 0) ellipse (7cm and 3.3cm);
    \node[anchor=north] at (0, 3.4) {\textbf{$p$-system}};
    
    \draw[thick] (0, -0.2) ellipse (6.5cm and 2.9cm);
    \node[anchor=north] at (4.5, 1) {\textbf{$p$-extendible}};

    \draw[thick] (-1.3, -0.25) ellipse (4.5cm and 2.5cm);
    \node[anchor=north] at (-1.3, 2.15) {\textbf{matroid $p$-parity}};

    \draw[thick] (-1.3, -0.45) ellipse (3.9cm and 1.9cm);
    \node[anchor=north] at (-1.3, 1.4) {\textbf{$p$-matchoid}};

    \draw[thick] (-1.3, -0.65) ellipse (3.4cm and 1.3cm);
    \node[anchor=north] at (-1.3, 0.4) {\textbf{matroid $p$-intersection}};

    \draw[thick] (-1.3, -1.05) ellipse (2cm and 0.7cm);
    \node at (-1.3, -1) {\textbf{matroid}};

    \draw[fill=black, fill opacity=0.2, thick] (2.8, -1) ellipse (2.9 and 0.8);
    \node at (4.4, -1) {\textbf{$p$-exchange}};

\end{tikzpicture}

%% file: algorithms-matroid.tex
\section{Algorithms for a Matroid Constraint}
\label{sec:matroid}

In this section, we design single-pass and multi-pass semi-streaming algorithms for maximizing a monotone submodular function subject to a matroid constraint, proving the results stated by Theorems~\ref{thm:main2} and~\ref{thm:main} for this combination of objective function and constraint.
The multi-pass algorithm appears in Section~\ref{sssc:multi-pass_proof}, and the single-pass algorithm appears in Section~\ref{sssc:single-pass_proof}.

\subsection{\texorpdfstring{$(1-1/e-\eps)$}{1-1/e-ε}-approximation in \texorpdfstring{$O(\log(1/\eps))$}{O(log(1/ε))} Passes} \label{sssc:multi-pass_proof}

In this section, we prove Theorem~\ref{thm:main}, which we repeat here for convenience.

\thmMain*

\begin{algorithm}[t!]
\caption{\textsc{Multi-Pass Algorithm (Monotone Submodular/Matroid)}} \label{alg:multi-pass}
Let $\ell \gets \lceil \ln (3\eps^{-1}) \rceil$, and let $A_0 \gets \varnothing$.\\
\For{$i = 1$ \KwTo $\ell$}
{
    Execute Algorithm~\ref{alg:streaming2} of Proposition~\ref{prop:main} with $A = A_{i - 1}$, $h = e^{i - \ell - 1}$ and $\delta = \eps / 6$.\\
    Let $A_i$ be the output set of the algorithm of Proposition~\ref{prop:main}.
}
\Return $A_\ell$.
\end{algorithm}

The algorithm that we use to prove Theorem~\ref{thm:main} is formally stated as Algorithm~\ref{alg:multi-pass}. In essence, this algorithm executes the algorithm of Proposition~\ref{prop:main} roughly $\ln (\nicefrac{3}{\eps})$ times, and feeds the output of each execution as the input set $A$ for the next execution. Since each execution of the algorithm of Proposition~\ref{prop:main} requires a single pass and stores $O(\nicefrac{r}{\delta}) = O(\nicefrac{r}{\eps})$ elements, it is clear that Algorithm~\ref{alg:multi-pass} requires $\ell = O(\log (\nicefrac{1}{\eps}))$ passes and stores $O(\nicefrac{r}{\eps})$ elements. Thus, to complete the proof of Theorem~\ref{thm:main}, it only remains to show that the approximation ratio of Algorithm~\ref{alg:multi-pass} is at most $\frac{e}{e - 1} + \eps$.

Let us denote $h_i \triangleq e^{i - \ell}$. By plugging in $h=h_{i-1}$ and $A=A_{i-1}$ into the guarantee of Proposition~\ref{prop:main} and rearranging, one obtains
\[
 e^{h_i} \cdot \bE[F(h_i \cdot \characteristic_{A_{i}})] - e^{h_{i-1}} \cdot \bE[F(h_{i-1}\cdot \characteristic_{A_{i-1}})] \geq (e^{h_i}-e^{h_{i-1}})(1 - \delta) \cdot F(OPT) 
 \quad
 \forall\; i \in [\ell]
 \enspace.
\]
Combining this inequality for all $i \in [\ell]$ yields
\begin{align*} 
	  e \cdot \bE[f(A_\ell)]
		\geq{}&
    e^{h_\ell} \cdot \bE[f(A_\ell)] - e^{h_0} \cdot \bE[F(h_0\cdot \characteristic_{A_0})]\\
    ={}&\sum_{i=1}^\ell \{e^{h_i}\cdot \bE[F(h_i \cdot \characteristic_{A_{i}})]- e^{h_{i-1}} \cdot \bE[F(h_{i-1}\cdot \characteristic_{A_{i-1}})]\}\\
    \ge{}& \sum_{i=1}^\ell (e^{h_i}-e^{h_{i-1}})(1 - \delta) \cdot F(OPT) \\
    ={}&(e^{h_\ell}-e^{h_0})(1-\delta)\cdot F(OPT)\\
    ={}&(e-e^{e^{-\ell}})(1-\delta)\cdot F(OPT)\\
    \geq{}&(e-e^{\eps/3})(1-\eps/6)\cdot F(OPT)
		\enspace,
\end{align*}
where the first inequality holds by the non-negativity of $f$ and the observation that $h_\ell = 1$. Rearranging the last inequality now gives, for a small enough $\eps$,
\[
	\frac{f(OPT)}{\bE[f(A_\ell)]}
	\leq
	\frac{1}{(1 - e^{-\eps/3 - 1})(1 - \eps/6)}
	\leq
	\frac{1}{(1 - e^{-1} - \eps/6)(1 - \eps/6)}
	\leq
	\frac{1}{1 - e^{-1} - \eps/3}
	\leq
	\frac{e}{e - 1} + \eps
	\enspace,
\]
which proves the promised approximation ratio for Algorithm~\ref{alg:multi-pass} because  $A_\ell$ is the output set of this algorithm.

\subsection{Single-pass Algorithm}\label{sssc:single-pass_proof}

In this section, we prove the guarantees of Theorem~\ref{thm:main2} for maximizing a monotone submodular function subject to a matroid constraint. For coniviniece, the following theorem restates these guarantees.
\begin{theorem}\label{thm:main2_matroid_only}
Let $\cM=(\cN,\cI)$ be a matroid of rank $r$, and let $f\colon 2^\cN\to \nnR$ be a non-negative monotone submodular function. For any sufficiently small $\eps>0$, there are single-pass semi-streaming algorithms that under a uniformly random arrival order store $O\!\left(\nicefrac{r}{\eps^2}\log^2(r/\eps)\right)$ elements and achieve approximation ratios of 
\[
	\frac{1+\left(\nicefrac{e}{e-1}\right)e^{-2+1/e}}{e^{-2+1/e}+1-e^{-1+1/e}} + O(\eps) \leq 1.972
	\qquad
	\text{and}
	\qquad
	\frac{1+e^{-2+1/e}}{e^{-2+1/e}+1-e^{-1+1/e}} + O(\eps) \leq 1.801
\]
in polynomial and FPT time, respectively.
\end{theorem}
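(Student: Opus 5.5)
We run, in a single pass, the algorithm of Proposition~\ref{prop:main2} (with $p=1$, since a matroid is a $1$-system, and with parameter $\delta$) and, in parallel, the boosting algorithm of Proposition~\ref{prop:main} with $A=S_\delta$ and $h=1/e$, fed by the last $(1-\delta)n$ elements of the stream. This is possible because $S_\delta$ is available once the first $\lfloor\delta n\rfloor$ elements have been read. Conditioned on the prefix, the suffix is a uniformly random order of the remaining elements $\cN'$, and $S_\delta\in\cI$, so Proposition~\ref{prop:main} applies to the ground set $\cN'$. We then output the best of three candidates: $S_\delta$, the output $S$ of the boosting algorithm, and $A$, a set computed from $S_\delta\cup H$. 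In FPT time $A=A^*$ is found by exhaustive search. In polynomial time we take an $\frac{e}{e-1}$-approximate feasible subset of $S_\delta\cup H$ computed by the offline algorithm of~\cite{calinescu2011maximizing}, which uses nearly linear space. The space used is dominated by $H$, and setting $\delta=\Theta(\eps)$ gives the claimed $O(\nicefrac{r}{\eps^2}\log^2(r/\eps))$ bound.

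There is one technical point: Proposition~\ref{prop:main} compares against $OPT'$, the optimum over $\cN'$, not over $\cN$. Each element of $OPT$ lies in the suffix with probability roughly $1-\delta$. By Lemma~\ref{lem:sampling}, together with monotonicity and feasibility of $OPT\cap\cN'$, this gives $\bE[f(OPT')]\ge\bE[f(OPT\cap\cN')]\ge(1-\delta-O(1/n))f(OPT)$. Taking expectations over the prefix, we get
\[
\bE[f(S)]\ge(1-e^{-1+1/e}-O(\delta))f(OPT)+e^{-2+1/e}\,\bE[f(S_\delta)].
\]
Proposition~\ref{prop:main2} with $p=1$ gives
\[
2\,\bE[f(S_\delta)]+\bE[f(A^*)]\ge(1-O(\delta))f(OPT).
\]

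We eliminate $\bE[f(S_\delta)]$ from these two inequalities using nonnegative weights. Write $a=e^{-2+1/e}$, $b=1-e^{-1+1/e}$, let $x=\bE[f(S_\delta)]$, and let $\alpha$ be $1$ (FPT) or $\frac{e}{e-1}$ (polynomial time), so that $f(A^*)\le\alpha f(A)$. The output value is at least $\max\{f(S),f(A)\}$, and any convex combination of the two is a lower bound. Multiplying the boosting inequality by $2$ and the Proposition~\ref{prop:main2} inequality by $a$ gives
\[
2\bE[f(S)]+a\alpha\bE[f(A)]\ \ge\ 2\bE[f(S)]+a\bE[f(A^*)]\ \ge\ (2b+a)f(OPT)+2ax-2ax-O(\delta)f(OPT).
\]
The weights sum to $2+a\alpha$, so
\[
\max\ \ge\ \frac{a+2b-O(\delta)}{2+a\alpha}\,f(OPT).
\]
This gives the ratio $\frac{2+a\alpha}{a+2b}$. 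It does not match the claimed expression $\frac{1+\alpha a}{a+b}$, which suggests the paper balances differently. Using the stated form $(1+p)$ with $p=1$ (coefficient $2$) does not reproduce it either. The target $\frac{1+\alpha a}{a+b}$ corresponds to a coefficient of $1$ on $f(S_\delta)$: weights $1$ on $S$ and $a$ on the Proposition~\ref{prop:main2} inequality with coefficient $1$. The coefficient should indeed be $p-1=0$ for linear matroids, or $p=1$ for matroids, since a matroid is $1$-parity and so satisfies Condition~1. Thus we use Condition~1 (matroid $=$ matroid $1$-parity) to get
\[
\bE[f(S_\delta)]+\bE[f(A^*)]\ge(1-O(\delta))f(OPT).
\]
Adding $a$ times this to the boosting inequality gives
\[
\bE[f(S)]+a\alpha\bE[f(A)]\ge(a+b-O(\delta))f(OPT),
\]
and hence the ratio $\frac{1+\alpha a}{a+b}+O(\eps)$. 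Checking numerically, $a\approx0.195$ and $b\approx0.469$, giving $1.195/0.664\approx1.80$ and $1.309/0.664\approx1.97$, consistent with the claimed bounds.

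The main obstacle is the coupling step. We must argue that conditioning on the prefix (and hence on $S_\delta$) leaves the suffix uniformly random, so Proposition~\ref{prop:main} can be applied with $A=S_\delta$ as a random set, and we must control the loss from $OPT'$ versus $OPT$. A related issue is that $n$ in Proposition~\ref{prop:main} becomes $n-\lfloor\delta n\rfloor$, which is harmless. The final combination is routine algebra.
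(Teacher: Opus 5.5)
Your final argument is correct and matches the paper's proof: run Proposition~\ref{prop:main2} alongside the boosting algorithm of Proposition~\ref{prop:main} on the suffix with $A=S_\delta$, pass from $OPT'$ to $OPT$ via Lemma~\ref{lem:sampling}, use Condition~1 (a matroid is a matroid $1$-parity) to get $\bE[f(S_\delta)]+\bE[f(A^*)]\ge(1-O(\eps))f(OPT)$, and combine the two bounds with weights proportional to $1$ and $\alpha e^{-2+1/e}$. In the write-up, drop the coefficient-$2$ detour, which is valid but only gives the weaker ratio $\frac{2+a\alpha}{a+2b}$, and correct the aside about ``$p-1=0$ for linear matroids'': that reduction concerns linear \emph{objectives} and plays no role here.
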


Recall that Nematollahi et al.~\cite{NVZ25} proved that one can obtain $2$-approximation (up to an error term) for this problem in FPT time, and this is captured also by Corollary~\ref{cor:offlinetoStream}. Furthermore, combining Corollary~\ref{cor:offlinetoStream} with the optimal offline $\frac{e}{e-1}$-approximation algorithm for maximizing a monotone submodular function under a matroid constraint (due to \cite{calinescu2011maximizing}) yields $(1+\frac{e}{e-1})\approx 2.582$-approximation in polynomial time. A better polynomial time result can be obtained by executing the algorithm of Proposition~\ref{prop:main} with $A=\varnothing$, which produces a set $S$ satisfying
\[
  \bE[f(S)] \geq
  (1 - e^{-1+1/e} - \delta)\cdot f(OPT) + (e^{-2+1/e} + \delta)\cdot f(A)
  \geq 0.468\cdot f(OPT)
	\enspace,
\]
and hence, gives $\nicefrac{1}{0.468}\approx 2.137$-approximation, which already improves over the state-of-the-art polynomial time $2.315$-approximation due to Shadravan~\cite{S20}. We manage to further improve the approximation ratios for both polynomial and FPT times by combining the two above techniques (recall that Proposition~\ref{prop:main2} is the technical result underlying Corollary~\ref{cor:offlinetoStream}). The algorithm obtained by this combination is formally stated as Algorithm~\ref{alg:single-pass-matroid}.

\begin{algorithm}[t!]
\caption{\textsc{Single-Pass Algorithm (Monotone submodular/Matroid)}}\label{alg:single-pass-matroid}
\DontPrintSemicolon
\KwIn{Matroid $(\cN,\cI)$, non-negative monotone submodular function $f$, parameters $\eps\in(0,1)$, rank $r\in Z_+$.}
\smallskip\hrule\smallskip
Run the algorithm of Proposition~\ref{prop:main2} with $\delta=\eps$, and let $S_{\eps}$ be its output after scanning the first $\lfloor \eps n \rfloor$ elements of the stream.\\
On the remaining $n - \lfloor \eps n \rfloor$ elements, continue running algorithm of Proposition~\ref{prop:main2}, and also in parallel execute the boosting algorithm of Proposition~\ref{prop:main} with $A=S_{\eps}$ and $\delta = \eps$.\\
Let $H$ be the second set returned by the algorithm of Proposition~\ref{prop:main2}, and let $T$ be the set returned by the algorithm of Proposition~\ref{prop:main}.\\
Define $A^*$ to be $\arg\max\{f(S): S\subseteq S_{\eps}\cup H,\ S\in\cI\}$.\\
{\bf FPT time:} \Return $\arg\max\{f(T),f(A^*)\}$.\\
{\bf Polynomial time:} Compute $A\in\cI$ such that $f(A)\geq (1-1/e - O(\eps))\cdot f(A^*)$---for example, via the algorithm of~\cite{calinescu2011maximizing}, and \Return $\arg\max\{f(T),f(A)\}$.
\end{algorithm}

Algorithm~\ref{alg:single-pass-matroid} clearly makes a single pass and stores
$O(\nicefrac{r}{\eps^2}\log^2(r/\eps)) + O(\nicefrac{r}{\eps})
= O(\nicefrac{r}{\eps^2}\log^2(r/\eps))$ elements.
Therefore, we concentrate on analyzing its approximation guarantees.
Notice that, when invoked by Algorithm~\ref{alg:single-pass-matroid}, the algorithm of Proposition~\ref{prop:main} processes only the elements of the stream arriving after the first $\lfloor \eps n \rfloor$ positions. Thus, if we denote by $OPT'$ the intersection between the optimal solution and these elements, then by Proposition~\ref{prop:main},
\begin{align}\label{ineq11}
    \bE[f(T)] \geq{} & (1-e^{-1+1/e}-O(\eps))\cdot \bE[f(OPT')] + e^{-2+1/e}\cdot \bE[f(S_{\eps})] \\\nonumber
    \geq{} & (1-e^{-1+1/e}-O(\eps))\cdot f(OPT) + e^{-2+1/e}\cdot \bE[f(S_{\eps})]
    \enspace,
\end{align}
where the second inequality holds because $OPT'$ contains each element of $OPT$ with probability $p = \frac{n - \lfloor \eps n \rfloor}{n} \geq 1 - \eps$, which by Lemma~\ref{lem:sampling}, guarantees that $\bE[f(OPT')]\geq p \cdot f(OPT) \geq (1 - \eps) \cdot f(OPT)$.

Next, suppose we compute $A\in\cI$ with $f(A)\geq f(A^*)/\alpha$. Then
{\allowdisplaybreaks\begin{align} \label{eq:approximation_alpha}
\bE[\max\{f(A),f(T)\}]
&\geq \frac{\alpha e^{-2+1/e}}{1+\alpha e^{-2+1/e}}\cdot \bE[f(A)]
     +\frac{1}{1+\alpha e^{-2+1/e}}\cdot \bE[f(T)]\\\nonumber
&\geq \frac{e^{-2+1/e}}{1+\alpha e^{-2+1/e}}\cdot \bE[f(A^*)]
     +\frac{1}{1+\alpha e^{-2+1/e}}\cdot \bE[f(T)]\\\nonumber
&\geq \frac{e^{-2+1/e}}{1+\alpha e^{-2+1/e}}
     \cdot\Big((1-O(\eps))\cdot f(OPT)-\bE[f(S_{\eps})]\Big)\\\nonumber
&\qquad
+\frac{1}{1+\alpha e^{-2+1/e}}
     \cdot\Big((1-e^{-1+1/e}-O(\eps))\cdot f(OPT)+e^{-2+1/e}\cdot \bE[f(S_{\eps})]\Big)\\\nonumber
&= \bigg(\frac{e^{-2+1/e}+1-e^{-1+1/e}}{1+\alpha e^{-2+1/e}}-O(\eps)\bigg)\cdot f(OPT)\enspace,
\end{align}}%
where the third inequality follows from Inequality~\eqref{ineq11} and the fact that Proposition~\ref{prop:main2} guarantees that
$\bE[f(A^*)]+\bE[f(S_{\eps})]\geq (1-O(\eps))\cdot f(OPT)$. The FPT version of Algorithm~\ref{alg:single-pass-matroid} outputs the better set among $T$ and $A^*$, which corresponds to $\alpha = 1$, while the polynomial-time version of Algorithm~\ref{alg:single-pass-matroid} has $\alpha=\frac{e}{e-1} + O(\eps)$. Plugging these values of $\alpha$ into Inequality~\eqref{eq:approximation_alpha} yields the approximation ratios stated in Theorem~\ref{thm:main2_matroid_only}.

%% file: single-pass-alg.tex
\section{Offline Algorithm to a Single-pass Semi-streaming Algorithm}\label{sec:single-pass}

Our goal in this section is to prove Proposition~\ref{prop:main2}, which we repeat here for convenience. 

\PropMainH*

The algorithm that we use to prove Proposition~\ref{prop:main} is given as Algorithm~\ref{alg:filter}. This algorithm gets as parameters the rank $r$ of the $p$-system and the error control parameter $\delta$ from the statement of the proposition. We assume, without loss of generality, that $\delta \leq 1/2$ and $\delta n/r$ is an integer. If the first assumption is violated, we simply replace $\delta$ with $1/2$. If the second assumption is violated, this can always be fixed in one of two ways. If $\delta n/r > 1$, then $\delta$ can be replaced with some value $\delta'\in[\delta/2, \delta]$ such that $\delta' n/r$ is an integer. Otherwise, if $\delta n/r < 1$, then $n\leq r/\delta$, and thus, one can obtain the guarantee of the proposition by outputting the empty set as $S_\delta$ and the entire stream as $H$. 

Algorithm~\ref{alg:filter} treats differently the first $\delta n$ elements of the random order stream. These elements are logically further partitioned into $r$ consecutive blocks of size $\delta n/r$ each. The pseudocode of Algorithm~\ref{alg:filter} refers to these blocks as $\cN_1, \cN_2, \dotsc, \cN_r$. In other words, $\cN_1$ is the set of the first $\delta n / r$ elements in the stream, $\cN_2$ is the set of the next $\delta n / r$ elements of the stream, and so on, until $\cN_r$. We denote the remaining $(1 - \delta)n$ elements of the stream by $\tilde{\cN}$, i.e., $\tilde{\cN} = \cN \setminus \cup_{j = 1}^r \cN_i$. When processing the first $\delta n$ elements, Algorithm~\ref{alg:filter} greedily constructs a set $S_r$. To do so, it starts by setting $S_0$ to be the empty set, and then when processing each block $\cN_j$, it sets $S_j$ to $S_{j - 1}$ plus at most one element from this block. More specifically, the algorithm finds the element $s_j$ that is most valuable with respect to $S_{j - 1}$ among the elements of $\cN_j$ that (i) can be added to $S_{j - 1}$ without violating feasibility and (ii) have a non-negative value with respect to $S_{j - 1}$. If any element of $\cN_j$ obeys these two conditions, then $S_j$ is set to $S_{j - 1} + s_j$. Otherwise, the element $s_j$ is not well-defined. In this case, we set $S_j$ to be $S_{j - 1}$, and we treat $f(s_j \mid S_{j-1})$ as zero in the rest of the algorithm and in its analysis. After inspecting the first $\delta n$ elements, Algorithm~\ref{alg:filter} outputs $S_r$ as $S_{\delta}$. 

Before inspecting any additional elements, Algorithm~\ref{alg:filter} defines a set of $k+1 \triangleq 1+\lceil2\delta^{-1}\ln (r/\delta)\rceil$ values $v_0, v_1, \dotsc, v_k$ and chooses a threshold value $w_j$ for each $j \in [r]$. The value $v_0$ is simply the largest marginal contribution of any element $s_j$, and the other values complement it to an exponentially decreasing series with a ratio of $1 + \delta$. Each threshold value $w_j$ is set to be the lowest value $v_i$ that is at least the marginal contribution of $s_j$. Once these values and thresholds are determined, Algorithm~\ref{alg:filter} starts scaning the remaining $(1-\delta)n$ elements of the stream, i.e., the elements of $\tilde{\cN}$. Each such element $u$ is added to the set $H$ if there exists $j \in [r]$ such that $u$ can be added to $S_{j-1}$ and its marginal contribution to $S_{j-1}$ is strictly above the threshold value $w_j$. If, at any point of time, the number of elements added to $H$ exceeds $4r\delta^{-2}\cdot \log^2(\nicefrac{r}{\delta})$, the algorithm declares failure and returns the empty set; otherwise, it outputs the set $H$.

\begin{algorithm}
\caption{\textsc{Filtering Streaming}} \label{alg:filter}
\DontPrintSemicolon
\KwIn{Independence system $(\cN, \cI)$, non-negative submodular function $f$, parameters $\delta \in(0,1), r\in Z_{\geq 1}$. 
}
\smallskip\hrule\smallskip
\nonl\textbf{Inspecting the first $\delta n$ elements of the stream:} \\
Let $S_0 \gets \varnothing$.\\
\For{$j = 1$ \KwTo $r$}
{
Let $s_j \gets \arg \max_{u\in \cN_j, S_{j-1}+u\in \cI, f(u \mid S_{j-1})\geq 0}f(u \mid S_{j-1})$.\label{line:choose} \tcp*{There may be no such element.}
\lIf{there is an element $s_j$}{Let $S_j \gets S_{j-1} + s_j$,}
\lElse{Let $S_j \gets S_{j - 1}$ and treat $f(s_j \mid S_{j - 1})$ as $0$.}
}
\Return $S_r$.  \tcp*{Returned after inspecting $\delta n$ elements.}
\smallskip\hrule\smallskip
\nonl\textbf{Inspecting the remaining $(1-\delta)n$ elements of the stream:}\\
Let $k \gets \lceil 2\delta^{-1} \ln (r/\delta)\rceil$, and set
\begin{align*}
 v_i& \gets\frac{\max_{j=1}^{r}f(s_j \mid S_{j-1})}{(1+\delta)^i} &&  \forall\; i \in \{0, 1, \ldots, k\} \enspace, \quad\text{and} \\ 
 w_j & \gets \min\{v_i \mid v_i\geq f(s_j \mid S_{j-1})\} && \forall\; j \in \{1, 2, \ldots, r\} \enspace.
\end{align*}\\
Let $H \gets \cup_{j=1}^{r}\{u\in \tilde{\cN}, S_{j-1}+u\in \cI, f(u \mid S_{j-1})> w_j\}$. \tcp*{Note the strict inequality.}
\lIf{$|H|\leq 4r\delta^{-2} \ln^2 (\nicefrac{r}{\delta})$}{\Return $H$, }\lElse{\Return $\varnothing$ and indicate failure.}
\end{algorithm}

We begin the analysis of Algorithm~\ref{alg:filter} with the following observation. This observation holds since, as explained above, when constructing the set $H$, if this set becomes too large, then the algorithm can immediately abort this construction, declare failure and return the empty set.

\begin{observation}
Algorithm~\ref{alg:filter} can be implemented in the data stream model using $O(r\delta^{-2}\cdot \log^2(\nicefrac{r}{\delta}))$ memory.
\end{observation}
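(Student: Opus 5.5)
The plan is to bound the two phases of Algorithm~\ref{alg:filter} separately, since they are the only consumers of memory, and then add the bounds.

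\emph{First phase.} When processing block $\cN_j$, the algorithm only needs to remember the current set $S_{j-1}$, the values $f(s_i \mid S_{i-1})$ recorded for $i<j$, and a single running candidate for $s_j$. The candidate is the best element of $\cN_j$ seen so far that satisfies $S_{j-1}+u\in\cI$ and $f(u\mid S_{j-1})\geq 0$. Each arriving element of $\cN_j$ is compared against this candidate and either replaces it or is discarded. At the end of the block, the candidate (if one exists) is added to form $S_j$. Thus the first phase stores $O(r)$ elements and $O(r)$ numbers. It also stores all the prefix sets $S_0\subseteq S_1\subseteq\dots\subseteq S_r$, but these are implicitly represented by $S_r$ together with the insertion order.

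\emph{Computing the thresholds.} After the first $\delta n$ elements have been read, the algorithm computes $v_0,\dots,v_k$ and $w_1,\dots,w_r$ directly from the stored marginals. This costs $O(k+r)=O(\delta^{-1}\log(r/\delta)+r)$ numbers, which is within the claimed bound.

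\emph{Second phase.} This is the only delicate step. Membership of an element $u\in\tilde{\cN}$ in $H$ depends only on $u$ and on the stored data, namely the sets $S_{j-1}$ and the thresholds $w_j$. It never depends on other elements of $\tilde{\cN}$. Hence the algorithm can decide on the spot, for each arriving $u$, whether $u\in H$, and append $u$ to $H$ if so. The algorithm also keeps a counter of $|H|$. The moment this counter exceeds $4r\delta^{-2}\ln^2(r/\delta)$, the final output is guaranteed to be the failure case $\varnothing$, regardless of the rest of the stream. At that point the algorithm discards $H$ and ignores the remaining elements. Consequently, at every moment $|H|\leq 4r\delta^{-2}\ln^2(r/\delta)+1$.

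Summing the three parts gives $O(r)+O(r+\delta^{-1}\log(r/\delta))+O(r\delta^{-2}\log^2(r/\delta))=O(r\delta^{-2}\log^2(r/\delta))$ memory. I expect no real obstacle here. The only point that needs to be stated explicitly is that the early abort produces exactly the same output as the offline description of the algorithm, which holds because $H$ only grows as the stream proceeds.
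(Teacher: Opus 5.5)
Your proposal is correct and follows the paper's own argument. The paper justifies the observation in one line: the construction of $H$ can be aborted, with failure declared, as soon as $|H|$ exceeds $4r\delta^{-2}\ln^2(r/\delta)$. Your write-up spells out the same reasoning in more detail: the $O(r)$ bookkeeping of the first phase, the per-element membership test for $H$, and why early abort does not change the output.
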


Our next goal is to show that Algorithm~\ref{alg:filter} is unlikely to declare failure, which is shown by Lemma~\ref{lem:failure_prob} below. However, before proving this lemma, we need the next observation, which follows by the algorithm's choice of the values $v_i$ and thresholds $w_j$. We note that the proof of this observation works even when $f$ is a non-negative function that is not necessarily submodular.

\begin{observation} \label{obs:w_bounds}
For any $j \in [r]$, 
\[
  f(s_j \mid S_{j-1}) \leq w_j  \leq \max\bigg\{(1+\delta)f(s_j \mid S_{j-1}), \frac{\delta}{r} \cdot f(S_r)\bigg\}
	\enspace.
\]
Furthermore, $v_k \leq \frac{\delta}{r}\cdot f(S_r)$.
\end{observation}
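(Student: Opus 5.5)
The plan is to unpack the definitions of $v_i$ and $w_j$ and compare them with $f(S_r)$, which is the telescoping sum of the marginals. Write $M \triangleq \max_{j \in [r]} f(s_j \mid S_{j-1})$ and $a_j \triangleq f(s_j \mid S_{j-1})$. By construction every $a_j \geq 0$: either $s_j$ was chosen subject to the constraint $f(s_j \mid S_{j-1}) \geq 0$, or $a_j$ is treated as $0$. Since $S_0 = \varnothing$ and each $S_j$ equals $S_{j-1}$ plus $s_j$ (or equals $S_{j-1}$ when $s_j$ does not exist), telescoping gives
\[
f(S_r) = f(\varnothing) + \sum_{j=1}^r a_j \geq \sum_{j=1}^r a_j \geq M ,
\]
using only the non-negativity of $f(\varnothing)$ and of the $a_j$'s. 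Submodularity is never invoked, which matches the remark preceding the observation.

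The bound on $v_k$ comes first. By definition $v_k = M/(1+\delta)^k$ with $k = \lceil 2\delta^{-1}\ln(r/\delta)\rceil$. I need $(1+\delta)^k \geq r/\delta$. Since $\delta \leq 1/2$, the inequality $\ln(1+\delta) \geq \delta/2$ holds (indeed $\ln(1+x) \geq x - x^2/2 \geq x/2$ on $[0,1]$). Hence $k\ln(1+\delta) \geq \ln(r/\delta)$, so $v_k \leq \frac{\delta}{r} M \leq \frac{\delta}{r} f(S_r)$. If $M = 0$ everything is $0$ and the claim is trivial.

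Next the lower bound $a_j \leq w_j$. This holds because $w_j$ is a minimum over $v_i$'s satisfying $v_i \geq a_j$, and that set is nonempty since $v_0 = M \geq a_j$.

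For the upper bound I split into two cases.

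\emph{Case 1: $a_j \leq v_k$.} Then $v_k$ is feasible in the minimum, and since the $v_i$ are decreasing, $w_j = v_k \leq \frac{\delta}{r} f(S_r)$.

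\emph{Case 2: $a_j > v_k$.} Let $i$ be the largest index with $v_i \geq a_j$. Then $i < k$ and $v_{i+1} < a_j$, so $w_j = v_i = (1+\delta)v_{i+1} < (1+\delta)a_j$.

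In both cases $w_j$ is at most the stated maximum. There is no real obstacle here; the only step needing care is the elementary inequality $(1+\delta)^k \geq r/\delta$, which relies on the standing assumption $\delta \leq 1/2$.
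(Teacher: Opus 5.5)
Your proof is correct and follows the paper's argument essentially step for step: the lower bound comes from $v_0 = M \geq a_j$, the bound on $v_k$ comes from $v_k \leq \frac{\delta}{r} M \leq \frac{\delta}{r} f(S_r)$ using non-negativity of $f$ and of the marginals, and your split on $a_j \leq v_k$ versus $a_j > v_k$ is equivalent to the paper's split on whether $w_j = v_k$. You also spell out two steps the paper leaves implicit, namely the telescoping identity for $f(S_r)$ and the inequality $(1+\delta)^k \geq r/\delta$ (via $\ln(1+\delta) \geq \delta/2$).
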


\begin{proof}
Since $v_0=\max_{j=1}^{r}f(s_j \mid S_{j-1})$, for every $j \in [r]$ there exists some $v_i\geq f(s_j \mid S_{j-1})$. Hence, by definition, $w_j =\min\{v_i \mid v_i\geq f(s_j \mid S_{j-1})\} \geq f(s_j \mid S_{j-1})$.
Next, observe that 
\[
    v_k = \frac{\max_{j=1}^{r}f(s_j \mid S_{j-1})}{(1+\delta)^{\lceil2\delta^{-1}\ln (r/\delta)\rceil}} \leq \frac{\delta}{r}\cdot \max\nolimits_{j=1}^{r}f(s_j \mid S_{j-1}) \leq \frac{\delta}{r}\cdot f(S_r) \enspace,
\]
where the second inequality follows from the non-negativity of $f$ because $f(s_j \mid S_{j-1}) \geq 0$ for all $j \in [r]$.
 If $w_j=v_k$, then 
$w_j=v_k \leq \frac{\delta}{r}\cdot f(S_r)$.
Otherwise, there must be some integer $0 \leq i < k$ such that $w_j=v_i$ and $v_{i+1}< f(s_j \mid S_{j-1})$, which implies
$w_j = v_i= (1+\delta) v_{i+1} < (1+\delta) f(s_j \mid S_{j-1})$.
\end{proof}

\begin{lemma} \label{lem:failure_prob}
With probability at least $1-\delta$, for every index $j \in [r]$,
\begin{equation} \label{const-main}
    \big| \{u\in \tilde{\cN}, S_{j-1}+u\in \cI, f(u \mid S_{j-1}) > w_j\}\big|\leq \frac{r}{\delta}\cdot \ln(\nicefrac{r}{\delta})
		\enspace.
\end{equation}
Furthermore, if this event happens, then $|H| \leq 4r\delta^{-2} \ln^2 (\nicefrac{r}{\delta})$, which guarantees that Algorithm~\ref{alg:filter} does not declare failure. When $f$ is non-negative but not submodular, the above event still implies the weaker inequality $|H| \leq r^2\delta^{-1} \cdot \ln(\nicefrac{r}{ \delta})$. 
\end{lemma}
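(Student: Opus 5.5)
The plan has two parts. The first is a probabilistic bound on each set in \eqref{const-main}, using the random order and the greedy choice of $s_j$. The second is a counting argument that turns these per-$j$ bounds into the bound on $|H|$, exploiting the fact that there are only $k+1$ distinct thresholds.

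\textbf{Per-$j$ bound.} Fix $j$ and condition on the prefix of the stream up to the end of block $\cN_{j-1}$. This fixes $S_{j-1}$. Let $U_j$ be the set of not-yet-seen elements $u$ with $S_{j-1}+u\in\cI$ and $f(u\mid S_{j-1})\geq 0$, ordered by decreasing marginal $f(u\mid S_{j-1})$, with ties broken consistently with the $\arg\max$. Let $T_j$ be the top $m \triangleq \lceil (r/\delta)\ln(r/\delta)\rceil$ elements of $U_j$.

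Conditioned on the prefix, the remaining elements are in uniformly random order. Block $\cN_j$ is therefore a uniformly random subset of size $\delta n/r$ of the remaining $n-(j-1)\delta n/r$ elements. Consequently, each element of $T_j$ lands in $\cN_j$ with probability at least $\delta/r$, roughly independently (negatively correlated via sampling without replacement). Hence
\[
\Pr[\cN_j\cap T_j=\varnothing]\leq (1-\delta/r)^{m}\leq e^{-\ln(r/\delta)}=\delta/r \enspace.
\]
If $\cN_j$ meets $T_j$, then $s_j$ is at least as good as the worst element of $T_j$. Since $w_j\geq f(s_j\mid S_{j-1})$ by Observation~\ref{obs:w_bounds}, every element $u$ of $\tilde{\cN}$ with $S_{j-1}+u\in\cI$ and $f(u\mid S_{j-1})>w_j$ must lie in $T_j$, so there are at most $m$ of them. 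The edge case is $|U_j|<m$, where the bound is trivial. Rounding issues in $m$ are absorbed, possibly by adjusting constants, i.e., using $\lceil\cdot\rceil$ or a slightly smaller $m$. A union bound over the $r$ indices $j$ then gives the first claim with probability at least $1-\delta$. This per-$j$ step is the main obstacle: one has to be careful that conditioning on the prefix leaves the block uniformly random and that $s_j$ is chosen from exactly the set $U_j$. Negative dependence, or simply sampling without replacement bounded by $(1-\delta/r)^m$, handles the probability.

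\textbf{Bounding $|H|$.} Assume the event holds and $f$ is submodular. Group the indices $j$ by their threshold value $w_j = v_i$, which gives $k+1$ groups. For a fixed $i$, let $j_i$ be the largest index with $w_{j}=v_i$. For any $j\leq j_i$ in that group we have $S_{j-1}\subseteq S_{j_i-1}$. Monotonicity of independence fails in this direction, so the argument instead uses the \emph{smallest} index $j_i^{\min}$ in the group. Since $S_{j-1}\supseteq S_{j_i^{\min}-1}$, down-closedness gives $S_{j_i^{\min}-1}+u\in\cI$ whenever $S_{j-1}+u\in\cI$. Submodularity gives $f(u\mid S_{j_i^{\min}-1})\geq f(u\mid S_{j-1})>v_i$. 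Therefore the union over the group is contained in the single set of \eqref{const-main} for $j_i^{\min}$, which has size at most $(r/\delta)\ln(r/\delta)$. Summing over the $k+1\leq 2\delta^{-1}\ln(r/\delta)+2$ groups yields
\[
|H|\leq 4r\delta^{-2}\ln^2(r/\delta)
\]
for small $\delta$ (using $\delta\le 1/2$ and $r\ge1$, so that $\ln(r/\delta)\geq\ln 2$ absorbs the additive constant).

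\textbf{Non-submodular case.} Here the grouping fails, so we just sum the $r$ per-$j$ bounds, getting $|H|\leq r^2\delta^{-1}\ln(r/\delta)$. Note that the per-$j$ probabilistic argument never used submodularity, so it holds for any non-negative $f$.
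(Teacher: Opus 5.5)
Your proposal is correct and follows the paper's proof essentially step for step: you condition on the first $j-1$ blocks, show that block $\cN_j$ hits the top-$\lceil (r/\delta)\ln(r/\delta)\rceil$ feasible elements with probability at least $1-\delta/r$, take a union bound over $j$, and then group indices by threshold value, using the minimal index in each group together with down-closedness and submodularity. The one rounding point you hedge on resolves exactly as in the paper without adjusting any constants: the counted elements have marginal strictly greater than $w_j \geq v_m$, the $m$-th largest marginal among your candidates, so at most $m-1 < (r/\delta)\ln(r/\delta)$ of them exist.
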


\begin{proof}
       Fix the randomness of the sets $\cN_1, \ldots, \cN_{j-1}$ (that is inherited from the uniformly random order of the stream), and let $T_j \triangleq \{u\in \tilde{\cN} \cup_{i= j}^{r} \cN_i, S_{j-1}+u\in \cI\}$ be the set of elements that can be added to $S_{j-1}$ and do not belong to the first $j - 1$ blocks. Let $\ell\triangleq \lceil r\delta^{-1}\cdot \ln(\nicefrac{r}{\delta})\rceil$. If $|T_j| < \ell$, then Inequality~\eqref{const-main} is always satisfied for $j$. Otherwise, define $m\triangleq|T_j|\geq \ell$, and let us denote by $v_1, v_2, \dotsc, v_m$ the marginal contributions of the elements of $T_j$ with respect to $S_{j - 1}$ in a non-increasing contribution order.\footnote{Note that there might be multiple non-increasing orders if there are ties between the marginal contributions of different elements of $T_j$. When this happens, we arbitrarily fix one such order.} We are interested in the set $T'_j= \{u\in T_j, f(u \mid S_{j-1})\geq v_{\ell}\}$. Notice that the size of this set is always at least $\ell$, but it can be strictly larger when there are multiple elements in $T_j$ whose marginal contribution to $S_{j - 1}$ is $v_\ell$.
			
       Recall that we fixed the randomness of $\cN_1, \ldots, \cN_{j-1}$, and hence, the set $\cN_j$ is a uniformly random subset of size $\delta n/r$ elements out of the set $\tilde{\cN} \cup_{i= j}^{r} \cN_i$. Since the last set contains $(1-\delta) n+ (r-j+1) \cdot \delta n/r$ elements,
\begin{align*}
	\Pr[T'_j \cap \cN_{j}= \varnothing]
	={}	&
	\prod_{i=1}^{|T'_j|}\left(1- \frac{(\delta n)/r}{\left[(1-\delta)n + (r-j+1) \cdot \delta n/r \right]- i+1}\right)\\
	\leq{} &
	\bigg(1-\frac{\delta}{r}\bigg)^{\!|T'_j|}
	\leq
	\bigg(1-\frac{\delta}{r}\bigg)^{\!\ell}
	\leq
	\bigg(1-\frac{\delta}{r}\bigg)^{\!r\delta^{-1}\cdot \ln(\nicefrac{r}{\delta})}
	\leq
	\frac{\delta}{r}
	\enspace.
\end{align*}
Notice also that if $T'_j \cap \cN_{j} \neq \varnothing$, then Algorithm~\ref{alg:filter} picks an element $s_j$ such that $f(s_j \mid S_{j-1})\geq v_{\ell}$, and thus, 
by Observation~\ref{obs:w_bounds}, $w_j \geq f(s_j \mid S_{j-1})\geq v_{\ell}$, which implies
\begin{align*}
| \{u\in \tilde{\cN}, S_{j-1}+u\in \cI, f(u \mid S_{j-1}) > w_j\}|
\leq{} &
|\{u\in \tilde{\cN}, S_{j-1}+u\in \cI, f(u \mid S_{j-1}) > v_{\ell}\}|\\
\leq{} &
|\{v_1, v_2, \dotsc, v_{\ell - 1}\}|
=
\ell - 1
<
\frac{r}{\delta}\cdot \log(\nicefrac{r}{\delta})
\enspace.
\end{align*}
Therefore, we have proved that, after fixing $\cN_1, \ldots, \cN_{j-1}$, the probability that Inequality~\eqref{const-main} holds for $j$ is at least $\Pr[T'_j \cap \cN_{j} \neq \varnothing] \geq 1 - \delta/r$. Since this is true regardless of the values fixed for $\cN_1, \ldots, \cN_{j-1}$, by the law of total probability, this holds also when these sets are not fixed. Taking a union bound over all $j \in [r]$, we get that Inequality~\eqref{const-main} holds for all these $j$ values (at the same time) with probability at least $1 - \delta$.

Next, let us prove the second part of the lemma. In other words, we assume that Inequality~\eqref{const-main} holds for all $j \in [r]$, and our goal is to upper bound the size of $H$. The bound $|H|\leq r^2\delta^{-1}\cdot \log(\nicefrac{r}{\delta})$ follows immediately from Inequality~\eqref{const-main} since every element of $H$ is contained in the set on the left hand side of this inequality for some $j \in [r]$. When $f$ is submodular, we can get a stronger bound on the size of $H$ by observing that for every two integers $1 \leq j \leq j' \leq r$, if $w_j = w_{j'}$ then
\[
	\{u\in \tilde{\cN}, S_{j-1}+u\in \cI, f(u \mid S_{j-1}) > w_j\}
	\supseteq
	\{u\in \tilde{\cN}, S_{j'-1}+u\in \cI, f(u \mid S_{j'-1}) > w_{j'}\}
\]
since $S_{j - 1} \subseteq S_{j' - 1}$. This implies that for every integer $0 \leq i \leq k$,
\begin{multline*}
	|\cup_{j : w_j=v_i}\{u\in \tilde{\cN}, S_{j-1}+u\in \cI, f(u \mid S_{j-1}) > w_j\}|\\
	=
	|\{u\in \tilde{\cN}, S_{j^*-1}+u\in \cI, f(u \mid S_{j^*-1}) > w_{j^*}\}|
	\leq
	\frac{r}{\delta}\cdot \ln(\nicefrac{r}{\delta})
	\enspace,
\end{multline*}
where $j^* \triangleq \min\{j \mid w_j=v_i\}$. Every element of $H$ is contained in the set on the leftmost side of the last inequality for some $i$, and since there are only $k + 1$ different values for $i$,
\[
	|H|
	\leq
	(k+1) \cdot \frac{r}{\delta}\cdot \ln(\nicefrac{r}{\delta})= (\lceil2\delta^{-1}\ln (r/\delta)\rceil + 1) \cdot \frac{r}{\delta}\cdot \ln(\nicefrac{r}{\delta}) \leq \frac{4r}{\delta^2}\cdot \ln^2(\nicefrac{r}{\delta})
	\enspace.
	\qedhere
\]
%
\end{proof}

The following lemma considers any arrival order of the elements that does not cause Algorithm~\ref{alg:filter} to declare failure. The lemma proves that under such an order the algorithm achieves a good value compared with the elements of $OPT$ that appear as part of the final $(1-\delta)n$ elements of the stream. Since the proof of this lemma is quite complex, it is deferred to Section~\ref{ssc:deterministic_guarantee}.

\begin{restatable}{lemma}{LemDeterministicLemma} \label{lem:deterministic_lemma}
Recall that $A^* \triangleq \arg \max_{A\subseteq(S_{r}\cup H), A \in\cI}f(A)$. Consider any order of the elements in which  Algorithm~\ref{alg:filter} does not fail, and let $O'=OPT \cap \tilde{\cN}$.
If $(\cN, \cI)$ is a $p$-system, then
\[
    (p + 1 + O(\delta p)) \cdot f(S_r) + f(A^*) \geq f(O' \cup S_r) \enspace.
\]
    If $(\cN, \cI)$ is a $p$-exchange system or a matroid $p$-parity, then the above inequality improves to
\[
    (p + O(\delta p)) \cdot f(S_r) + f(A^*) \geq f(O' \cup S_r) \enspace.
\]
When $f$ is linear, the above inequalities still hold if we decrease the coefficient of $f(S_r)$ on the left hand side by $1$ and replace the right hand side with $f(O')$.
\end{restatable}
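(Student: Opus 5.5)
The plan is a charging argument. Split $O' = OPT \cap \tilde{\cN}$ into $O_H \triangleq O' \cap H$ and $O_L \triangleq O' \setminus H$. Since the algorithm does not fail, $O_H \subseteq H$, and $O_H \in \cI$ because $\cI$ is down-closed. Hence $f(A^*) \geq f(O_H)$. Submodularity and non-negativity then give
\[
f(O' \cup S_r) \leq f(S_r) + f(O_H \mid S_r) + f(O_L \mid S_r) \leq f(S_r) + f(A^*) + f(O_L \mid S_r).
\]
So for a $p$-system it suffices to show $f(O_L \mid S_r) \leq (p + O(\delta p)) f(S_r)$. This explains the coefficient $p+1$: one copy of $f(S_r)$ comes from the decomposition, and $p$ copies come from charging $O_L$.

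To bound $f(O_L \mid S_r) \leq \sum_{u \in O_L} f(u \mid S_r)$, I will assign each $u \in O_L$ to a slot $j \in [r]$ such that $S_{j-1} + u \in \cI$, with every slot receiving at most $p$ elements. For an element assigned to slot $j$, the fact $u \notin H$ gives $f(u \mid S_{j-1}) \leq w_j$. Submodularity ($S_{j-1} \subseteq S_r$) then gives $f(u \mid S_r) \leq w_j$. By Observation~\ref{obs:w_bounds}, $w_j \leq (1+\delta) f(s_j \mid S_{j-1}) + \frac{\delta}{r} f(S_r)$; this also covers empty slots, where $f(s_j \mid S_{j-1})$ is treated as $0$. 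Summing over slots with at most $p$ elements each yields
\[
f(O_L \mid S_r) \leq p(1+\delta)\sum_j f(s_j \mid S_{j-1}) + p\delta f(S_r) \leq p(1+2\delta) f(S_r),
\]
where the last step telescopes and uses $f(\varnothing) \geq 0$.

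The assignment exists by a Hall-type counting argument. Let $j(u)$ be the least $j$ with $S_j + u \notin \cI$, and $j(u) = r$ if there is none. Then $u$ may go to any slot $\leq j(u)$. Let $X_j = \{u \in O_L : S_j + u \notin \cI\}$. The set $S_j$ is a base of $S_j \cup X_j$, while $X_j \subseteq O_L$ is independent. The $p$-system property therefore gives $|X_j| \leq p|S_j| \leq pj$. Moreover, $|O_L| \leq r \leq pr$. These prefix bounds are exactly Hall's condition for capacity-$p$ slots with prefix-closed allowed sets, so a greedy assignment in increasing order of $j(u)$ succeeds. The linear case follows from the same argument: there $f(O') = f(O_H) + f(O_L) \leq f(A^*) + f(O_L)$. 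The same slot charging bounds $f(O_L)$ by $p(1+2\delta) f(S_r)$, since for linear $f$ we have $f(u) = f(u \mid S_{j-1})$. The extra $+1$ disappears, and the right hand side becomes $f(O')$.

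The main obstacle is the improvement to coefficient $p$ for $p$-exchange systems and matroid $p$-parity. The $+1$ came from the crude step $f(O_H \mid S_r) \leq f(A^*)$, which forgets $S_r$. I would instead compare against the feasible set $(S_r \setminus Y) \cup O_H \subseteq S_r \cup H$, where $Y = \cup_{u \in O_H} Y_u$ comes from the exchange property applied to $S_r$ and $O'$. Submodularity then gives
\[
f(O' \cup S_r) \leq f(A^*) + f(Y \mid \cdot) + f(O_L \mid S_r).
\]
The charge for $Y$ must be merged with the charge for $O_L$ into a single budget of $p$ per slot. The idea is that each $s_j$ lies in at most $p$ sets $Y_u$, and the exchange sets for $O_L$ account for its slot usage. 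For matroid $p$-parity I would carry out the analogous counting on the underlying matroid. I expect to need the refined exchange structure rather than the plain $p$-system counting here.
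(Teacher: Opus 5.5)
Your argument for general $p$-systems, including the linear refinement, is correct. The Hall-type assignment over prefix-closed slot sets (with $|X_j|\le p|S_j|$ for $j<r$ and $|O_L|\le r\le pr$ at the last slot) does the job of the paper's Lemma~\ref{lem:p-system_property}, and it also absorbs the elements addable to $S_r$, which the paper treats separately via Observation~\ref{obs:O_double_prime}.

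The gap is the second half of the statement: the coefficient $p$ for $p$-exchange systems and matroid $p$-parity (and $p-1$ for linear $f$) is only sketched. For $p$-exchange systems your idea can be completed, and it is a genuinely different route from the paper's. The paper places $S_L\cup O_H$ in $A^*$, where $S_L$ consists of the elements of $S_r$ lying in $p$ exchange sets of $O_L$-elements; you place $(S_r\setminus Y)\cup O_H$ there. However, the Hall assignment cannot be reused. It carries no information about $Y$, so it cannot give the slots with $s_j\in Y$ the reduced capacity $p-1$ that your single budget needs. To complete the argument you would need three steps:
\begin{itemize}
\item Charge each $o\in O_L$ with $Y_o\neq\varnothing$ to the least $j$ with $s_j\in Y_o$. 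This is valid since $S_{j-1}+o\subseteq S_r\setminus Y_o+o\in\cI$, and slot $j$ then receives at most $p-1$ elements when $s_j\in Y$ and at most $p$ otherwise.
\item Bound $f(Y\mid (S_r\setminus Y)\cup O_H)\le\sum_{s_j\in Y}f(s_j\mid S_{j-1})\le\sum_{s_j\in Y}w_j$ by adding $Y$ in index order.
\item Bound the $o$ with $Y_o=\varnothing$ (which satisfy $S_r+o\in\cI$) by $f(o\mid S_r)\le v_k$.
\end{itemize}
The total is then at most $p\sum_j w_j+r v_k\le (p+O(p\delta))f(S_r)$.

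For matroid $p$-parity an idea is genuinely missing. These systems are not $p$-exchange in general: already for $p=1$, a $1$-exchange matroid must be strongly base orderable, and not every matroid is. So there is no multi-exchange family to apply to $S_r$ and $O'$, and ``analogous counting on the underlying matroid'' does not produce one. Single-element exchanges in $\cQ$ (e.g., Corollary~\ref{cor:partition} applied to $v(S_r)$ and $v(O')$) respect the load bound $p$, but do not make $(S_r\setminus Y)\cup O_H$ feasible. Conversely, choosing the part of $v(S_r)$ removed for $v(O_H)$ independently of the $O_L$-exchanges only bounds the load of $s_j$ by $p+1$. The exchanges for $O_L$ must avoid exactly the part of $v(S_r)$ sacrificed for $v(O_H)$.

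The paper achieves this in Lemma~\ref{lem:matroid_partiy_swappable}. Corollary~\ref{cor:matroid1} splits $v(S)$ into $V_L\cup V_H$ with $V_L\cup v(T_H)$ and $V_H\cup v(T\setminus T_H)$ both independent. Corollary~\ref{cor:partition}, applied in the contraction $\cQ/V_H$, then yields disjoint $Y'_t\subseteq V_L$ with $(v(S)\setminus Y'_t)\cup v(t)\in\cJ$. These tools would also complete your variant. Take $T_H=O_H$, $Y=\{s\in S_r: v(s)\cap V_H\neq\varnothing\}$ and $Y_o=\{s: v(s)\cap Y'_o\neq\varnothing\}$; then each $s_j$ is charged at most $|v(s_j)|\le p$ times in total.

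Without these tools, the matroid $p$-parity case, and hence the lemma, is not proved. The linear refinements for both classes also remain to be written out, although they follow from the same charging.
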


We are now ready to complete the proof of Proposition \ref{prop:main2}.

\begin{proof}[Proof of Proposition~\ref{prop:main2}]
Let $E$ be the event that the algorithm fails, and $\bar{E}$ be the complementing event. Then, by the non-negativity of $f$ and the law of total expectation,
\begin{align} \label{eq:total_probability}
    (1+p + O(p\delta)) \cdot \bE[f(S_r)] + \bE[f(A^*)] & \geq \bE[(1+p + O(p\delta)) \cdot f(S_r) + f(A^*) \mid \bar{E}]\cdot \Pr[\bar{E}]\\\nonumber  
    & \geq \bE[f(O' \cup S_r) \mid \bar{E}]\cdot \Pr[\bar{E}]\\\nonumber
    & = \bE[f(O' \cup S_r) ] - \bE[f(O' \cup S_r)]\cdot \Pr[E]
		\enspace,
\end{align}
where the second inequality follows by Lemma~\ref{lem:deterministic_lemma}. Let us now bound the two terms on the rightmost side of the above inequality. First, as both $O'$ and $S_r$ are feasible subsets, we get by the non-negativity and submodularity of $f$ that it deterministically holds that $f(O'\cup S_r) \leq f(O')+f(S_r) \leq 2 f(OPT)$. Second, if we denote by $\vy\in [0,1]^\cN$ a vector whose value $y_u$ for every element $u \in \cN$ is the probability that this element belongs to $S_r \cup O'$, then by Lemma~\ref{lem:lovasz_lower_bound},
\begin{align*}
	\bE[f(O' \cup S_r)] \geq{} & \hat{f}(\vy) = \int_{t = 0}^{1} f(T_\lambda(\vy)) \cdot dt\\
	\geq{}& \int_{t = \delta}^{1-\delta} f(T_\lambda(\vy)) \cdot dt = \int_{t=\delta}^{1-\delta}f(OPT) \cdot dt = (1-2\delta)\cdot f(OPT) \enspace,
\end{align*} 
where the second inequality uses the non-negativity of $f$, and the penultimate equality holds since every element $o \in OPT$ belongs to $O'$ whenever it is not one of the first $\delta n$ elements, which happens with probability $1 - \delta$, and every element $u \in \cN \setminus OPT$ can belong to $S_r$ only when it is one of the first $\delta n$ elements, which happens with probability $\delta$. 

Plugging the above bounds into Inequality~\eqref{eq:total_probability} and rearranging yields
\begin{align*}
	(1+p)\cdot \bE[f(S_r)] + \bE[f(A^*)]
	\geq{} &
	(1-2\delta) \cdot f(OPT) - 2\delta \cdot f(OPT) - O(p\delta) \cdot \bE[f(S_r)]\\
	={} &
	(1-4\delta) \cdot f(OPT) - O(p\delta) \cdot \bE[f(S_r)]
	\geq
	(1-O(\delta p))\cdot f(OPT)
	\enspace,
\end{align*}
where the second inequality holds since the inequality $f(S_r) \leq f(OPT)$ holds deterministically because $S_r$ is a feasible solution.

This complete the proof of the general case of the proposition. To get the improved guarantees of the proposition for special cases, one needs to use the guarantee of Lemma~\ref{lem:deterministic_lemma} for these special cases instead of the guarantee of Lemma~\ref{lem:deterministic_lemma} for the general case, and then adapt the rest of the proof in the natural way.
\end{proof}

Before concluding this section, let us state the following observation, which shows that our analysis of Algorithm~\ref{alg:filter} is close to tight. A simplified partition matroid is a matroid $(\cN, \cI)$ defined by a partition of the ground set $\cN$ into $r$ sets $A_1, A_2, \dotsc, A_r$. A set $S$ is independent in this matroid if $|S \cap A_i| \leq 1$ for every $i \in [r]$.

\begin{observation}
    There is an an instance in which $f$ is a non-negative monotone submodular function and $\cM$ is the intersection of $p$ simplified partition matroids such that it deterministicly holds that the output sets $S_{\delta}=S_r$ and $H$ of Algorithm~\ref{alg:filter} satisfy 
    \[p  \cdot f(S_\delta) + f(A^*) \leq f(OPT)
		\enspace,
		\]
		where $A^* \triangleq \arg \max_{A\subseteq(S_{\delta}\cup H), A \in\cI}f(A)$, and $f(A^*)=f(S_{\delta}) = \frac{1}{p+1} \cdot f(OPT)$.
\end{observation}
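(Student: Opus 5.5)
The plan is to build an explicit instance in which the greedy phase of Algorithm~\ref{alg:filter} is forced into a bad local choice regardless of the arrival order, and in which ties make the strict inequality in the definition of $H$ exclude every useful element. The guiding picture is the classical tight example for greedy over $p$-systems: each element that greedy selects should block $p$ elements of $OPT$, one in each of the $p$ simplified partition matroids. At the same time, all of these elements should have exactly the same marginal value, so the greedy choice is a tie.

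\textbf{Construction.} I would take a budget-additive/coverage-type monotone submodular $f$ in which every ``useful'' element has marginal value $1$ with respect to every set it can legally extend. The optimum consists of $p+1$ pairwise non-conflicting elements, so $f(OPT)=p+1$. The partition parts would be arranged, using a proper edge colouring of the conflict structure with $p$ colours (one colour class per matroid), so that any element that can arrive and be chosen conflicts with all remaining useful elements. After one useful pick, every other useful element is then either infeasible with respect to $S_{j-1}$ or has marginal $0$. All other elements, which are dummies padding $n$, have marginal $0$. Consequently, whichever order the stream has, $S_\delta$ contains exactly one element of value $1$ together with zero-value elements, so $f(S_\delta)=1=\frac{1}{p+1}f(OPT)$.

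\textbf{Showing $H$ contributes nothing.} The key claim is $v_0=\max_j f(s_j\mid S_{j-1})=1$. It follows that $w_j\ge f(s_j\mid S_{j-1})$ for every $j$ (Observation~\ref{obs:w_bounds}), and in particular $w_j=1$ for the index $j$ where the useful element was picked. Then for every $j$ and every $u\in\tilde{\cN}$ with $S_{j-1}+u\in\cI$, I would check $f(u\mid S_{j-1})\le w_j$. For $j$ after the useful pick this holds because the marginal is $0$. For $j$ up to and including the useful pick, the marginal is at most $1$, and the strict inequality in the definition of $H$ fails as long as the relevant thresholds equal $1$. This forces $H=\varnothing$. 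Hence $A^*\subseteq S_\delta$, so $f(A^*)=f(S_\delta)$, and $p f(S_\delta)+f(A^*)=(p+1)\cdot\frac{f(OPT)}{p+1}=f(OPT)$.

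\textbf{Main obstacle.} The hard part is making the bound hold \emph{deterministically}, for every arrival order. First, the conflict pattern must guarantee that \emph{every} possible first useful pick blocks all the other useful elements. Second, the blocks before the first useful pick contain only dummies, so they have $w_j=v_k$, which is small. Elements arriving later whose marginal to $S_{j-1}=\varnothing$ exceeds $v_k$ could then enter $H$, and this must be ruled out. I would first try to handle both issues with $n$ small relative to $r/\delta$, for instance $\delta n/r\ge n$ blocks, so that the whole relevant part of the stream falls in the first block and the first block always contains a useful element. If that degenerates, I would instead give $OPT$ many interchangeable copies, so that the first block contains a useful element with certainty. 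I would also check that the $p$ simplified partition matroids realise the required conflicts while keeping the $p+1$ optimal elements mutually independent.
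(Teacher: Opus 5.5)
Your intuition is the right one: the classical tight example for greedy over $p$ matroids, with all singleton values tied at $1$ so that the strict inequality keeps everything out of $H$. However, the proposal never fixes an instance. The step you call the main obstacle is the actual content of the proof, and you leave it open. Neither of your suggested repairs works.
\begin{itemize}
\item Taking $\delta n/r\ge n$ is impossible, since $\delta<1\le r$; blocks have size at most $\delta n\le n/2$. In the degenerate regime $\delta n/r<1$, the algorithm outputs $S_\delta=\varnothing$ and $H=\cN$, so $A^*=OPT$.
\item Padding $n$ with zero-value dummies is precisely what creates the failure. If the first block contains only dummies, then $w_1=v_k$, and every $o\in OPT\cap\tilde\cN$ has $f(o\mid S_0)=1>v_k$ and enters $H$. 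Note also that such dummies really are selected on Line~\ref{line:choose}, which only asks for a non-negative marginal.
\item Replicating $OPT$ does not help, because an optimal element picked first blocks nothing.
\end{itemize}
You also never fix the tie-breaking rule, yet it is indispensable. The optimal elements have the same singleton value as the blockers and are pairwise compatible. So your requirement that every element that can be chosen conflicts with all remaining useful elements cannot hold for them. If $s_1\in OPT$, the remaining optimal elements stay feasible with marginal $1$, and the bound breaks. ``Deterministically'' can only mean: for every arrival order, with ties on Line~\ref{line:choose} broken in favour of the blockers. The paper's proof makes exactly this assumption.

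The missing idea is to replicate the blocker rather than $OPT$, and to use no dummies at all. Take $\cN=O\cup T$ with $O=\{o_1,\dots,o_{p+1}\}$ and $f(S)=\min\{1,|S\cap(T+o_{p+1})|\}+|S\cap\{o_1,\dots,o_p\}|$. For each $j\in[p]$, use the simplified partition matroid whose only non-singleton part is $\{o_j\}\cup T$. In such a matroid conflicts form disjoint cliques, not the matchings an edge colouring suggests. The rank is $p+1$, so with $|T|>(p+1)^2/\delta$ the first block has more than $p+1=|\cN\setminus T|$ elements. By pigeonhole it always contains some $u\in T$, whatever the order. Breaking ties toward $T$ gives $s_1=u$, after which only $o_{p+1}$ can ever be added, and its marginal is $0$. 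Now $v_0=w_1=1$, and no singleton has value exceeding $1$. For $j\ge 2$ the only feasible element, $o_{p+1}$, has marginal $0<w_j$, since all $v_i>0$. Hence $H=\varnothing$, and $S_\delta$ and $A^*$ both lie in $\{\{u\},\{u,o_{p+1}\}\}$ with value $1=f(OPT)/(p+1)$. From that point your argument about $H$ goes through as you wrote it.
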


\begin{proof}
The ground set is $\cN = O\cup T$, where $O=o_1, o_2, \dotsc, o_{p + 1}$ are the elements in the optimal solution and $T$ are additional elements.
The objective function is given by
\[
	f(S)
	=
	\min\{1, |S \cap (T+o_{p+1})|\} + |S \cap\{o_1, \ldots, o_p\}|
	\qquad
	\forall\; S \subseteq \cN
	\enspace.
\]
One can verify that this is a non-negative monotone submodular function (and even a coverage function).
Next, we define $p$ simplified partition matroids $(\cN, \cI_1), (\cN, \cI_2), \dotsc, (\cN, \cI_p)$. The partition defining each matroid $(\cN, \cI_j)$ has $p+1$ parts 
$A^{j}_1, \ldots, A^{j}_{p+1}$ given by
\[A^j_{i} = \left\{\begin{array}{ll} \{o_i\} & \text{if $i\neq j$} \enspace,\\
\{o_{j}\} \cup T & \text{if $i = j$} \enspace.
\end{array}\right.\]
The independence system $\cM$ is the intersection of these $p$ simplified partition matroids, i.e., $\cM = (\cN,\cap_{j=1}^{p}\cI_j)$.
Clearly, $O\in \cap_{j=1}^{p}\cI_j$ and $f(O)=p+1$.
In addition, for any $u\in T$, we have $\{u, o_{p+1}\}\in \cap_{j=1}^{p}\cI_j$, but for any $j \in [p]$, $\{u,o_j\}\not \in I_j$, and hence, $\{u,o_j\}\not \in\cap_{j=1}^{p}\cI_j$.

We note that these function $f$ and independence system $\cM$ form a canonical example for showing that the greedy algorithm and the standard local search algorithm both have approximation ratios no better than $p+1$. To see that this example indeed proves this, notice that if some element $u\in T$ is chosen first to the solution (which can be guaranteed by breaking ties appropriately), then only $o_{p+1}$ can be added after this point to the solution, which leads to a greedy solution of value $f(\{o_{p+1}, u\})=f(\{u\})=1$. Furthermore, no swap of a single element can improve the solution $\{u\}$, and thus, it becomes the output of the standard local search algorithm.

When the set $T$ is large enough, the function $f$ and the independence system $\cM$ form a bad example also for Algorithm~\ref{alg:filter} for similar reasons. Specifically, if we choose the set $T$ to be large enough so that $|T| > (p + 1)^2/\delta$, then
\[
	\frac{\delta n}{p}
	\geq
	\frac{\delta |T|}{p}
	>
	p + 1
	=
	|O|
	\enspace,
\]
which implies that there must always be at least one element of $T$ among the first $\frac{\delta n}{p}$ elements of the stream. Given appropriate tie-breaking, this guarantees that Algorithm~\ref{alg:filter} selects some element $u \in T$ as the first element of $S_{\delta}$, and the only element that may be added to $S_{\delta}$ after $u$ is $o_{p+1}$. Additionally, the set $H$ cannot contain any element $v \in O\cup T\setminus \{o_{p+1}, u\}$ because for every such element it holds that $f(v \mid \varnothing) = f(u \mid \varnothing)$ and also $\{u,v\} \not\in \cI$. Combining these observations, we get that both $S_\delta$ and $A^*$ must be either $\{u\}$ or $\{u, o_{p + 1}\}$, and thus, have a value of $1$.
\end{proof}

\subsection{Proof of Lemma~\texorpdfstring{\ref{lem:deterministic_lemma}}{\ref*{lem:deterministic_lemma}}} \label{ssc:deterministic_guarantee}

In this section, we prove Lemma~\ref{lem:deterministic_lemma}, which we repeat here for convinience.

\LemDeterministicLemma*

Let $O_H = O \cap H$, $O_L = O \setminus H$ and $O'_L = O_L \cap \tilde{\cN}$ (notice that since $H \subseteq \tilde{\cN}$, there is no need to define also an analogous set $O'_H$). Let $O''_L$ denote the set of elements in $O'_L$ that can be added to $S_r$ without violating independence. In other words, $O''_L = \{u \in O'_L \mid S_r + u \in \cI\}$.

\begin{observation} \label{obs:O_double_prime}
For every $u \in O''_L$, $f(u \mid S_r) \leq v_k$.
\end{observation}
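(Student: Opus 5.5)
We need to prove: for every $u \in O''_L$, $f(u \mid S_r) \le v_k$. Here $u \in \tilde{\cN}$, $S_r + u \in \cI$, and $u \notin H$.

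The plan rests on down-closedness and submodularity. Since $S_{j-1} \subseteq S_r$ and $S_r + u \in \cI$, down-closedness gives $S_{j-1} + u \in \cI$ for every $j \in [r]$. Because $u \notin H$ even though it lies in $\tilde{\cN}$ and is addable to every $S_{j-1}$, the definition of $H$ forces
\[
f(u \mid S_{j-1}) \le w_j \qquad \text{for every } j \in [r].
\]
Submodularity and $S_{j-1} \subseteq S_r$ then give $f(u \mid S_r) \le f(u \mid S_{j-1}) \le w_j$ for all $j$. It therefore suffices to exhibit one index $j$ with $w_j = v_k$, or some equivalent argument. This is the main obstacle, because $w_j$ may exceed $v_k$ for every $j$.

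To get around it, I would use the last block $\cN_r$ together with the greedy choice in the first phase. One could try to argue that $u$ was available when $\cN_r$ was processed, but $u \in \tilde{\cN}$ is not in any block, so that route fails. The cleaner approach is a counting argument. If $|S_r| < r$, then some step $j$ selected no element, so $f(s_j \mid S_{j-1}) = 0$ and hence $w_j = v_k$, which is the smallest $v_i$ that is at least $0$. That gives $f(u \mid S_r) \le w_j = v_k$.

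The remaining case is $|S_r| = r$. Then $S_r$ has size $r$, which is the rank of the whole system. Since $r$ is the maximum size of an independent set, $S_r + u \in \cI$ of size $r+1$ is impossible. So $O''_L$ is empty and the claim is vacuous. (If $u \in S_r$ the marginal is $0 \le v_k$ trivially, but $u \in \tilde{\cN}$ excludes this anyway.)

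Thus the whole proof reduces to the dichotomy between $S_r$ having full rank and some step having been empty. In the latter case, the empty step has threshold $w_j = v_k$ by Observation~\ref{obs:w_bounds}, since $f(s_j \mid S_{j-1})$ is treated as $0$. Submodularity then transfers this bound from $S_{j-1}$ to $S_r$.
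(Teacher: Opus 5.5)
Your proof is correct and follows the paper's argument: a non-empty $O''_L$ forces $|S_r|<r$, so some step $i$ selected no element and has $w_i=v_k$, and then $u\notin H$, down-closedness and submodularity give $f(u\mid S_r)\le f(u\mid S_{i-1})\le w_i=v_k$. One small correction: $w_i=v_k$ follows directly from the definition of $w_i$ as the smallest $v_j$ that is at least $0$, as you first say, and not from Observation~\ref{obs:w_bounds}, whose upper bound only gives $w_i\le \frac{\delta}{r}f(S_r)$.
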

\begin{proof}
If $O''_L$ is empty, this is trivial. Thus, assume that $O''_L$ is non-empty. This assumption implies that $S_r$ is not a base, and thus, has a size smaller than $r$. Hence, there exists an index $i \in [r]$ such that $s_i$ is not a real element, and for every element $u \in O''_L$, the fact that $u \not \in H$ despite $S_{i-1} + u \subseteq S_r + u \in \cI$ implies that
\[
	f(u \mid S_r)
	\leq
	f(u \mid S_{i-1})
	\leq
	w_{i}
	=
	v_k
	\enspace.
	\qedhere
\]
\end{proof}

We now need to state the following well-known property of $p$-systems. This property is implied, for example, in~\cite{calinescu2011maximizing}. However, since we are not aware of any place where this property is explicitly stated, we reprove it.
\begin{lemma} \label{lem:p-system_property}
Let $S = \{s_1, s_2, \dotsc, s_t\}$ and $T$ be two independent sets in a $p$-system $(\cN, \cI)$ such that $S$ is a base of $S \cup T$. Then, there exists a partition $T_1, T_2, \dotsc, T_t$ of $T$ such that
\[
    \{s_1, s_2, \dotsc, s_{i - 1}, u\} \in \cI
    \qquad
    \forall\; i \in [t] \text{ and } u \in T_i
    \enspace,
\]
and each set $T_i$ in the partition is of size at most $p$.
\end{lemma}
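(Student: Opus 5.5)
We will prove the lemma by a counting argument followed by a bipartite matching (Hall's theorem) step. This is the standard way the property is extracted from the greedy analysis of~\cite{calinescu2011maximizing}.

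For every $i \in \{0,1,\dots,t\}$, let $S^{(i)} = \{s_1,\dots,s_i\}$. Define
\[
	B_i = \{u \in T \mid S^{(i-1)} + u \in \cI\}
\]
for $i \in [t]$; these are the elements of $T$ that may be assigned to index $i$. Since $\cI$ is down-closed and $S^{(i-2)} \subseteq S^{(i-1)}$, we have $B_1 \supseteq B_2 \supseteq \dots \supseteq B_t$. Moreover, every $u \in T$ lies in $B_1$, because $\{u\} \subseteq T \in \cI$ (we may assume $t \geq 1$; if $t = 0$, then $S = \varnothing$ is a base of $T$, so $T = \varnothing$). For each $u \in T$, let $i(u)$ be the largest index with $u \in B_{i(u)}$. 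By monotonicity, $u$ may legally be placed in any $T_i$ with $i \leq i(u)$. The goal is therefore to assign each $u$ to an index at most $i(u)$ so that every index receives at most $p$ elements.

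The key counting claim is that, for every $i \in [t]$,
\[
	|\{u \in T \mid i(u) \leq i\}| \leq p \cdot i.
\]
To prove it, let $U_i = \{u \in T \mid i(u) \leq i\} = T \setminus B_{i+1}$, with $B_{t+1} = \{u \in T \mid S + u \in \cI\} = T \cap S$ because $S$ is a base of $S \cup T$. Consider the set $X = S^{(i)} \cup U_i$. The set $S^{(i)}$ is a base of $X$. For $u \in U_i \setminus S^{(i)}$, we have $u \notin B_{i+1}$, so $S^{(i)} + u \notin \cI$. For $u \in U_i \cap S$, we have $u \in S^{(i)}$: it is not in $B_{t+1}$, and if $u = s_j$ with $j > i$, then $S^{(j-1)} + s_j \in \cI$, contradicting $i(u) \leq i < j$. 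On the other hand, $U_i \subseteq T$ is independent, so it extends to a base of $X$ of size at least $|U_i|$. The $p$-system property then gives $|U_i| \leq p \cdot |S^{(i)}| = p i$.

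Given the counting claim, the assignment is constructed greedily. Process the indices $i = 1, \dots, t$ in order. At step $i$, put into $T_i$ up to $p$ unassigned elements, choosing those with the smallest $i(u)$ first, and using only elements with $i(u) \geq i$. An exchange argument, or equivalently Hall's condition for the bipartite graph where $u$ is adjacent to $\{1,\dots,i(u)\}$ and each index has capacity $p$, shows that nothing is left over. Hall's condition for deadline-type neighborhoods reduces exactly to the counting claim: any set of elements whose deadlines are all at most $i$ has at most $p i$ elements.

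Every $u \in T$ has $i(u) \geq 1$, so every element gets assigned. Each assigned $u$ lands in some $T_j$ with $j \leq i(u)$, hence $S^{(j-1)} + u \in \cI$, and each part has size at most $p$.

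The main obstacle is the careful verification that $S^{(i)}$ is a base of $X$, in particular for elements of $T \cap S$. That part is handled above by the observation that each $s_j$ belongs to $B_j$. Everything else is standard.
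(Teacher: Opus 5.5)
Your proof is correct, but it is organized differently from the paper's. The paper builds the partition directly by a downward recursion. For $i = t, t-1, \dotsc, 1$ it lets $T_i$ consist of (up to) $p$ arbitrary not-yet-assigned elements $u \in T$ with $\{s_1, \dotsc, s_{i-1}, u\} \in \cI$. It then proves by downward induction that $|T \setminus \cup_{j \geq i} T_j| \leq p(i-1)$. The $p$-system property is invoked only when $T_i$ is not full: in that case no remaining element extends $\{s_1, \dotsc, s_{i-1}\}$, so this set is a base of itself together with the remaining elements.

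You instead isolate the $p$-system property into the single counting claim $|U_i| \leq p i$, and then finish with Hall's theorem for deadline-type neighborhoods. This claim is exactly the bound the paper uses in its non-full case, since the remaining elements there form a subset of $U_{i-1}$. Your version is more modular, and it makes transparent that the counting claim is the only place where the $p$-system structure enters; the claim is also clearly necessary for such a partition to exist. The paper's version is self-contained and needs neither Hall's theorem nor an exchange argument. Filling indices from the back makes every choice safe, because an element addable at index $i$ is addable at every smaller index. Your forward greedy, by contrast, must prioritize the smallest $i(u)$.

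One small slip: since $i(u) \leq t$ by definition, $U_t = T$. So the identity $U_i = T \setminus B_{i+1}$ fails at $i = t$ when $T \cap S \neq \varnothing$, where it gives $T \setminus S$ instead. This is harmless: the case $i = t$ of your claim, $|T| \leq pt$, follows directly from $S$ being a base of $S \cup T$ and $T$ being independent.
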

\begin{proof}
For every $i \in [t]$, we define the set $T_i$ via a downward recursion on $i$. To define $T_i$, assume that $T_{j}$ was already defined for every $i < j \leq t$, and let $B_i \triangleq \{u \in T \setminus \cup_{j = i + 1}^t T_j \mid \{s_1, s_2, \dotsc, s_{i - 1}, u\} \in \cI\}$. If $|B_i| \leq p$, then we define $T_i = B_i$. Otherwise, we set $T_i$ to be an arbitrary subset of $B_i$ of size $p$.

By definition, $T_i$ is of size at most $p$, and furthermore, $\{s_1, s_2, \dotsc, s_{i - 1}, u\} \in \cI$ for every $u \in T_i \subseteq B_i$. Additionally, for every two indexes $1 \leq i < j \leq t$, the sets $T_i$ and $T_{j}$ are disjoint because $T_i \subseteq B_i \subseteq T \setminus T_{j}$. Therefore, to complete the proof of the lemma, it only remains to show that the union of the sets $T_1, T_2, \dotsc, T_t$ is the entire set $T$. We do that by showing via downward induction that for every integer $i \in [t + 1]$, $|T \setminus \cup_{j = i}^t T_j| \leq p(i - 1)$. Notice that, by setting $i = 1$, this inequality indeed proves that the union of the sets $T_1, T_2, \dotsc, T_t$ is the entire set $T$.

To prove the base of the induction, observe that since $S$ is a base of $S \cup T$ and $T$ is an independent subset of this set, it must hold that $p \cdot |S| \geq |T|$. Thus,
\[
    |T \setminus \cup_{j = t + 1}^t T_i|
    =
    |T|
    \leq
    p \cdot |S|
    =
    p \cdot ((t + 1) - 1)
    \enspace.
\]
Assume now that the claim we want to prove by induction holds for $i + 1$ for some $i \in [t]$, and let us prove it for $i$. There are two cases to consider. If $|B_i| \geq p$, then $|T_i| = p$, and thus,
\[
    |T \setminus \cup_{j = i}^t T_i|
    =
    |T \setminus \cup_{j = i + 1}^t T_{i}| - |T_i|
    \leq
    pi - p
    =
    p(i - 1)
    \enspace,
\]
where the first equality uses the disjointness of the sets $T_1, T_2, \dotsc, T_t$, and the inequality employs the induction hypothesis. Consider now the case that $B_i < p$. In this case, $T_i = B_i$, which implies that no element $u \in T \setminus \cup_{j = i}^t T_i$ can be added to $s_1, s_2, \dotsc, s_{i - 1}$. In other words, $\{s_1, s_2, \dotsc, s_{i - 1}\}$ is a base of the set $E_i = \{s_1, s_2, \dotsc, s_{i - 1}\} \cup (T \setminus \cup_{j = i}^t T_i)$. Since $T \setminus \cup_{j = i}^t T_i \subseteq T$ is an independent subset of $E_i$, this implies
\[
    p(i - 1)
    =
    p \cdot |\{s_1, s_2, \dotsc, s_{i - 1}\}|
    \geq
    |T \setminus \cup_{j = i}^t T_i|
    \enspace.
    \qedhere
\]
\end{proof}

We now have all the tools necessary for proving Lemma~\ref{lem:deterministic_lemma} for $p$-systems.
\begin{proof}[Proof of Lemma~\ref{lem:deterministic_lemma} for $p$-systems]
By the definition of $O''_L$, $S_r$ is a base of $S_r \cup (O'_L \setminus O''_L)$, and thus, by Lemma~\ref{lem:p-system_property}, there exist a partition $O_1, O_2, \dotsc, O_r$ of $O'_L \setminus O''_L$ such that $S_{i - 1} + u \in \cI$ for every $i \in [r]$ and $u \in O_i$, and each set $O_i$ in the partition is of size at most $p$. Using this partition and the submodularity and non-negativity of $f$, we get
{\allowdisplaybreaks\begin{align*}
    (1+2\delta) \cdot f(S_r)
    &\geq (1+\delta) \sum_{j\in [r]}f(s_j \mid S_{j - 1}) +\delta \cdot f(S_r)\\
    & = \sum_{j\in [r]}\left((1+\delta)\cdot f(s_j \mid S_{j - 1}) + \frac{\delta}{r} \cdot f(S_r)\right) \\
    & \geq \sum_{j \in [r]}w_j \\
    & \geq p^{-1} \cdot \sum_{j \in [r]} \sum_{o \in O_j} w_j \\
    & \geq p^{-1} \cdot \sum_{j \in [r]}\sum_{o \in O_j} f(o \mid S_{j - 1})\\
    & \geq
    p^{-1} \cdot \sum_{o \in O'_L \setminus O''_L} f(o \mid S_r)\\
		& =
		p^{-1} \cdot \sum_{o \in O'_L} f(o \mid S_r) - p^{-1} \cdot \sum_{o \in O''_L} f(o \mid S_r)\\
    &\geq p^{-1} \cdot f(O'_L \mid S_r) - \frac{r}{p} \cdot v_k\\
		&\geq p^{-1} \cdot f(O'_L \mid S_r) - \frac{\delta}{p} \cdot f(S_r)
    \enspace,
\end{align*}}%
where the second inequality holds by Observation~\ref{obs:w_bounds}; the fourth inequality holds since any element $o \in O_j$ can be added to $S_{j - 1}$, and thus, the fact that it does not belong to $H$ implies that its marginal contribution with respect to $S_{j - 1}$ is at most $w_j$; the penultimate inequality holds by Observation~\ref{obs:O_double_prime} and the fact that $O_L''$, as a subset of the feasible set $O$, is of size at most $r$; and the last inequality follows from the second part Observation~\ref{obs:w_bounds}.

Using the last inequality and the fact that $f(A^*) \geq f(O_H)$ because $O_H = O \cap H$ is a feasible solution (recall that $A^*$ is an independent subset of $S_r \cup H$ maximizing $f$), we now get
\begin{align*}
		[p(1 + 2\delta) + 1 + \delta] \cdot f(S_r) + f(A^*)
    &\geq f(S_r) + f(O'_L \mid S_r) + f(O_H)  \\
    & = f(S_r \cup O'_L) +   f(O_H) \\
    & \geq f(S_r \cup O')
    \enspace,
\end{align*}
where the second inequality uses the submodularity and non-negativity of $f$. When $f$ is linear, $f(O'_L \mid S_r) = f(O'_L)$ since $O'_L \subseteq \tilde{\cN}$ is disjoint from $S_r$. This allows us to get also the inequality
\[
		[p(1 + 2\delta) + \delta] \cdot f(S_r) + f(A^*)
    \geq f(O'_L \mid S_r) + f(O_H)
		= f(O'_L) +   f(O_H) \\
    = f(O')
    \enspace.
		\qedhere
\]
\end{proof}

Lemma~\ref{lem:deterministic_lemma} gives an improved guarantee for $p$-exchange system and matroid $p$-parity constraints. To prove this improved guarantee, we show that these two kinds of systems have the following property.
\begin{definition} \label{def:bilevel}
We say that an independence system $(\cN, \cI)$ is $p$-\emph{bilevel-swappable} if for every two disjoint independent sets $S$ and $T$ and a subset $T_H$ of $T$, there exists a collection $\cC = \{Y_t \subseteq S \mid t \in T \setminus T_H\}$ obeying the three following properties.
\begin{enumerate}[1]\renewcommand{\theenumi}{(\alph{enumi})}
	\item $S \setminus Y_t + t \in \cI$ for every $t \in T \setminus T_H$,\label{item:single_exchange}
	\item every element of $s \in S$ appears in at most $p$ sets in the collection $\cC$, and\label{item:frequency}
	\item $S_L \cup T_H \in \cI$, where $S_L \triangleq \{s \in S : |\{Y_t \in \cC \mid s \in Y_t\}| = p\}$.\label{item:large_exchange}
\end{enumerate}
\end{definition}

Before proving that $p$-exchange systems and matroid $p$-parity are indeed $p$-bilevel-swappable, let us show that this property is strong enough to yield the guarantee of Lemma~\ref{lem:deterministic_lemma} for these kinds of systems.
\begin{lemma} \label{lem:bilevel-swappable_enough}
If $(\cN, \cI)$ is a $p$-bilevel-swappable, then
\[
    (p + O(p\delta)) \cdot f(S_r) + f(A^*) \geq f(O' \cup S_r) \enspace.
\]
When $f$ is linear, we also have
\[
	(p - 1 + O(p\delta)) \cdot f(S_r) + f(A^*) \geq f(O') \enspace.
\]
\end{lemma}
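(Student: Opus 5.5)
The plan is to apply Definition~\ref{def:bilevel} with $S = S_r$, $T = O'$ and $T_H = O_H = O \cap H$. These sets qualify: $S_r$ and $O'$ are independent, and they are disjoint because $S_r$ lies in the first $\delta n$ elements while $O' \subseteq \tilde{\cN}$. Also $O_H \subseteq O'$ since $H \subseteq \tilde{\cN}$, and $T \setminus T_H = O'_L$. This gives sets $Y_t \subseteq S_r$ for $t \in O'_L$ satisfying properties \ref{item:single_exchange}--\ref{item:large_exchange}. By \ref{item:large_exchange}, $S_L \cup O_H \in \cI$, and this set lies inside $S_r \cup H$, so $f(A^*) \geq f(S_L \cup O_H)$. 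This is where the $f(A^*)$ term will come from.

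\textbf{Step 1: charge each $t \in O'_L$ to a threshold.} If $Y_t = \varnothing$, then $S_r + t \in \cI$, so $t \in O''_L$ and Observation~\ref{obs:O_double_prime} gives $f(t \mid S_r) \leq v_k$. Otherwise, let $j(t)$ be the smallest index $j$ with $s_j \in Y_t$. Then $S_{j(t)-1} \subseteq S_r \setminus Y_t$, so by \ref{item:single_exchange} and down-closedness $S_{j(t)-1} + t \in \cI$. Since $t \notin H$, we get
\[
  f(t \mid S_r) \leq f(t \mid S_{j(t)-1}) \leq w_{j(t)} \leq \sum_{j : s_j \in Y_t} w_j \enspace.
\]
Let $c_j$ be the number of sets $Y_t$ containing $s_j$, so $c_j \leq p$ by \ref{item:frequency}, and $c_j = p$ exactly when $s_j \in S_L$. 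Summing over $t$ and using $|O''_L| \leq r$ together with $r v_k \leq \delta f(S_r)$ (Observation~\ref{obs:w_bounds}), we get
\[
  \sum_{t \in O'_L} f(t \mid S_r) \leq \sum_j c_j w_j + \delta f(S_r) \leq (p-1)\sum_{j \in [r]} w_j + \sum_{j : s_j \in S_L} w_j + \delta f(S_r) \enspace.
\]

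\textbf{Step 2: bound the threshold sums.} As in the $p$-system proof, Observation~\ref{obs:w_bounds} gives $\sum_{j \in [r]} w_j \leq (1+2\delta) f(S_r)$. For the $S_L$ part, each $w_j \leq (1+\delta) f(s_j \mid S_{j-1}) + \frac{\delta}{r} f(S_r)$. By submodularity, $f(s_j \mid S_{j-1}) \leq f(s_j \mid S_{j-1} \cap S_L)$, and these terms telescope over $s_j \in S_L$ to $f(S_L) - f(\varnothing) \leq f(S_L)$. Hence $\sum_{s_j \in S_L} w_j \leq f(S_L) + O(\delta) f(S_r)$, using $f(S_L) \leq f(S_r)$, which holds because $S_L \subseteq S_r$ and the telescoping bound gives $f(S_L) \le$ the sum of marginals of $S_r$; if that fails, I would instead bound $\delta f(S_L) \le \delta f(\mathrm{OPT})$-style terms via feasibility. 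Together with Step~1 and submodularity,
\[
  f(O'_L \mid S_r) \leq \big(p-1+O(p\delta)\big) f(S_r) + f(S_L) \enspace.
\]

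\textbf{Step 3: combine, letting $f(S_L)$ cancel.} Submodularity gives
\[
  f(O' \cup S_r) \leq f(S_r) + f(O'_L \mid S_r) + f(O_H \mid S_r)
\]
and
\[
  f(O_H \mid S_r) \leq f(O_H \mid S_L) = f(S_L \cup O_H) - f(S_L) \leq f(A^*) - f(S_L) \enspace.
\]
Adding these to the bound from Step~2, the $\pm f(S_L)$ terms cancel, leaving $(p + O(p\delta)) f(S_r) + f(A^*)$. For linear $f$, $f(O') = f(O'_L) + f(O_H)$ with $f(O'_L) = \sum_t f(t \mid S_r)$ and $f(O_H) = f(S_L \cup O_H) - f(S_L)$ (the sets are disjoint). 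The same cancellation then gives the coefficient $p - 1 + O(p\delta)$.

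The main obstacle I expect is Step~1: choosing the single index $j(t)$ so that $S_{j(t)-1} + t$ is independent and so that each $s_j$ is charged no more often than it appears among the $Y_t$'s. Taking the minimal-index element of $Y_t$ handles both. A secondary check is that indices $j$ with no real $s_j$ only contribute $w_j = v_k$, which is already absorbed in the $O(\delta)$ terms.
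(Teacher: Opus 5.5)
Your proposal follows the paper's proof. Like the paper, you apply the bilevel-swappable property with $S=S_r$ and $T=O'$, charge each $t$ to the minimal index $j$ with $s_j\in Y_t$, and telescope $\sum_{s_j\in S_L} f(s_j\mid S_{j-1})\le f(S_L)$. You then cancel $f(S_L)$ against $f(O_H\mid S_L)\le f(A^*)-f(S_L)$. The one structural difference is harmless. The paper takes $T_H=O_H\cup O''_L$, so every remaining $Y_o$ is automatically non-empty. You take $T_H=O_H$ and handle the case $Y_t=\varnothing$ separately through Observation~\ref{obs:O_double_prime}; both work.

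There is, however, a wrong justification in Step~2: the inequality $f(S_L)\le f(S_r)$ is false in general. The lemma must cover non-monotone $f$, and a subset of $S_r$ can be worth more than $S_r$. For example, take $f(\varnothing)=0$, $f(\{s_1\})=1$, $f(\{s_2\})=10$, $f(\{s_1,s_2\})=1$. This $f$ is submodular, both greedy marginals are non-negative, and with $S_L=\{s_2\}$ you get $f(S_L)=10>1=f(S_r)$. Your telescoping argument actually runs the other way: it shows $\sum_{s_j\in S_L} f(s_j\mid S_{j-1})\le f(S_L)$, not that $f(S_L)$ is bounded by marginals of $S_r$. Your fallback does not rescue the exact statement either. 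Bounding $\delta f(S_L)$ by $\delta f(OPT)$, or by $\delta f(A^*)$, introduces a term the lemma does not allow.

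The repair is to never multiply $f(S_L)$ by $1+\delta$:
\[
(1+\delta)\sum_{s_j\in S_L} f(s_j\mid S_{j-1}) = \sum_{s_j\in S_L} f(s_j\mid S_{j-1}) + \delta\sum_{s_j\in S_L} f(s_j\mid S_{j-1}) \le f(S_L) + \delta\, f(S_r)\enspace.
\]
Your telescoping bound handles the first sum. For the second, every marginal $f(s_j\mid S_{j-1})$ is non-negative by the choice on Line~\ref{line:choose}. So the partial sum over $S_L$ is at most $\sum_{j\in[r]} f(s_j\mid S_{j-1})=f(S_r)-f(\varnothing)\le f(S_r)$. This is exactly what the paper's remark that only elements with non-negative marginal contributions are chosen is doing. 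With this fix, Step~2 gives $\sum_{s_j\in S_L} w_j\le f(S_L)+2\delta f(S_r)$. The rest of your argument then goes through, in both the submodular and linear cases, with the same constant $p+2p\delta+\delta$ as the paper.
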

\begin{proof}
Let $\{Y_o \subseteq S_r \mid o \in O'_L \setminus O''_L\}$ be the collection $\cC$ whose existence is guaranteed by the definition of $p$-bilevel-swappable systems for $S = S_r$, $T = O'$ and $T_H = O_H \cup O''_L$ (notice that $S_r$ and $O' \subseteq \tilde{\cN}$ are indeed disjoint). For every element $o \in O'_L \setminus O''_L$, let us denote by $i(o)$ the minimal index in $[r]$ such that $s_{i(o)} \in Y_o$. To see that $i(o)$ is well-defined, notice that $S_r + o \not \in \cI$ by the definition of $O''_L$, but $S_r \setminus Y_o + o \in \cI$, and thus, the set $Y_o$ must be non-empty. By the non-negativity and submodularity of $f$, we now get
\begin{align*}
		f(S_L) + (p - 1 + 2p\delta) \cdot f(S_r)
		&\geq p\cdot \sum_{s_j\in S_L} \Big((1+\delta) \cdot f(s_j \mid S_{j - 1}) + \frac{\delta}{r}\cdot f(S_r)\Big) \\&\mspace{120mu}+ (p - 1)\cdot \sum_{s_j\in S_r \setminus S_L} \mspace{-9mu} \Big((1+\delta) \cdot f(s_j \mid S_{j - 1}) +\frac{\delta}{r}\cdot f(S_r)\Big)\\
    & \geq p\cdot \sum_{s_j\in S_L} w_j + (p - 1)\cdot \sum_{s_j\in S_r \setminus S_L} \mspace{-9mu} w_j \\
		& \geq \sum_{j \in [r]} |\{o \in O'_L \setminus O''_L \mid i(o) = j\}| \cdot w_j\\
    & \geq \sum_{j \in [r]} \sum_{\substack{o \in O'_L \setminus O''_L \\ i(o) = j}} \mspace{-9mu} f(o \mid S_{j-1})\\
    & \geq
    \sum_{o \in O'_L \setminus O''_L} f(o \mid S_r)\\
		& =
		\sum_{o \in O'_L} f(o \mid S_r) - \sum_{o \in O''_L} f(o \mid S_r)\\
    &\geq f(O'_L \mid S_r) - r \cdot v_k\\
		&\geq f(O'_L \mid S_r) - \delta \cdot f(S_r)
    \enspace,
\end{align*}
where the first inequality uses the fact that only elements with a non-negative marginal contributions are chosen on Line~\ref{line:choose} of Algorithm~\ref{alg:filter}; the second inequality holds by Observation~\ref{obs:w_bounds}; the third inequality holds because $|\{o \in O'_L \setminus O''_L \mid i(o) = j\}| \leq |\{Y_o \in \cC \mid s_j \in Y_o\}|$, and the right hand side of this inequality is $p$ for $s_j \in S_L$ and at most $p - 1$ for $s_j \in S_r \setminus S_L$; the fourth inequality holds since for every element $o \in O'_L \setminus O''_L$, the definition of $i(o)$ guarantees that $\{s_1, s_2, \dotsc, s_{i(o) - 1}\} + o \subseteq S_r \setminus Y_o + o \in \cI$, and thus, the fact that $o \not \in H$ implies that $f(o \mid s_1, s_2, \dotsc, s_{i(o) - 1}) \leq w_{i(o)}$; the penultimate inequality holds by Observation~\ref{obs:O_double_prime}; and the last inequality follows from the second part of Observation~\ref{obs:w_bounds}.

Using the last inequality and the fact that $f(A^*) \geq f(S_L \cup O_H)$ because $S_L \cup O_H \subseteq S_L \cup O_H \cup O''_L$ is a feasible solution (recall that we have chosen $T_H = O_H \cup O''_L$ when we invoked the definition of a $p$-bilevel-swapable system), we now get
\begin{align*}
		(p + 2p\delta + \delta) \cdot f(S_r) + f(A^*)
    &\geq f(S_r) + f(O'_L \mid S_r) + f(O_H \mid S_L)  \\
    & \geq f(S_r \cup O'_L) + f(O_H \mid S_r \cup O'_L) \\
    & = f(S_r \cup O')
    \enspace,
\end{align*}
where the second inequality uses the submodularity of $f$. When $f$ is a linear function, since $O'_L, O_H \subseteq \tilde{\cN}$ are both disjoint from $S_L$ and $S_r$, we have $f(O'_L \mid S_r) = f(O'_L)$ and $f(O_H \mid S_L) = f(O_H)$, which allows us to get
\[
		(p - 1 + 2p\delta + \delta) \cdot f(S_r) + f(A^*)
    \geq f(O'_L \mid S_r) + f(O_H \mid S_L)
		= f(O'_L) + f(O_H)
    = f(O')
    \enspace.
		\qedhere
\]
\end{proof}

Given Lemma~\ref{lem:bilevel-swappable_enough}, to complete the proof of Lemma~\ref{lem:deterministic_lemma}, it only remains to show the both $p$-extendible systems and matroid $p$-parity are $p$-bilevel-swappable systems. Lemmata~\ref{lem:extendible_swappable} and~\ref{lem:matroid_partiy_swappable} below prove this. 

\begin{lemma} \label{lem:extendible_swappable}
Every $p$-extendible system $(\cN, \cI)$ is a $p$-bilevel swappable system.
\end{lemma}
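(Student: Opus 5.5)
The approach is a direct construction. Fix disjoint independent sets $S$ and $T$ and a subset $T_H \subseteq T$. I would build the collection $\cC$ in two stages, so that property~\ref{item:large_exchange} holds by construction and properties~\ref{item:single_exchange} and~\ref{item:frequency} come from the $p$-extendibility axiom. I am not certain the plan closes: the frequency bound in the last step may be the point where $p$-extendibility alone is too weak.

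\emph{Stage 1: accommodate $T_H$.} Enumerate $T_H = \{h_1, \dotsc, h_q\}$ and set $X_0 = S$. For $i = 1, \dotsc, q$, apply the definition of a $p$-extendible system to the independent sets $X_{i-1}$ and $S \cup \{h_1, \dotsc, h_q\}$... the second set need not be independent, so instead use the independent set $T$, which contains $h_i$. This gives $Z_i \subseteq X_{i-1} \setminus T$ with $|Z_i| \leq p$ and $X_i \triangleq X_{i-1} \setminus Z_i + h_i \in \cI$. At the end, $S' \triangleq X_q \cap S$ satisfies $S' \cup T_H = X_q \in \cI$. Every element of $S \setminus S'$ was deleted exactly once. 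Since $S_L \cup T_H \in \cI$ only needs $S_L \subseteq S'$ (by down-closedness), I would aim to make $S_L \subseteq S'$, that is, to ensure that no element of $S \setminus S'$ reaches frequency exactly $p$.

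\emph{Stage 2: exchange sets for $T \setminus T_H$.} For each $t \in T \setminus T_H$, apply $p$-extendibility to the independent set $S$, the independent set $T$ and the element $t \in T \setminus S$. This yields $Y_t \subseteq S \setminus T = S$ with $|Y_t| \leq p$ and $S \setminus Y_t + t \in \cI$, which is property~\ref{item:single_exchange}. Any superset of $Y_t$ inside $S$ also works, by down-closedness. So the $Y_t$ may be enlarged freely, and the only tension is with the frequency cap of property~\ref{item:frequency}.

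\emph{Main obstacle: the frequency bound.} The hard step is to make every $s \in S$ lie in at most $p$ sets $Y_t$, and to arrange the elements of frequency exactly $p$ to lie in $S'$. In a $p$-exchange system this bound is built into the axiom. For a general $p$-extendible system, independent single-element exchanges give no control. My first attempt is a sequential chain: process $t_1, t_2, \dotsc$ in order, maintain the current independent set $A_i = A_{i-1} \setminus Y'_i + t_i$, and charge each $t_i$ to the elements of $S$ that the chain removes. The hope is to bound how often any $s$ is charged by $p$ using the size bound $|Y'_i| \leq p$, perhaps via a counting argument in the spirit of the proof of Lemma~\ref{lem:p-system_property}. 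I would then translate the chain's exchange sets back into sets $Y_t \subseteq S$ satisfying property~\ref{item:single_exchange} with respect to $S$ itself. If this translation fails, which is plausible because $S \setminus Y_t + t$ and $A_{i-1} \setminus Y'_i + t_i$ differ, that is where the argument would need either a finer exchange lemma for $p$-extendible systems or a restriction to the $p$-exchange subclass.
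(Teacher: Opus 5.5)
\textbf{Verdict: genuine gap.} The gap is at the step you flagged, and it cannot be closed. Property~\ref{item:frequency} does not follow from the single-element $p$-extendibility axiom, because the statement fails for general $p$-extendible systems.

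\textbf{Counterexample.} Take $p=2$, ground set $\{s_1,s_2,t_1,t_2,t_3\}$, and let $\cI$ be the down-closure of $\{s_1,s_2\}$, $\{t_1,t_2,t_3\}$, $\{s_1,t_3\}$ and $\{s_2,t_3\}$. This system is $2$-extendible. For independent $A,B$ and $u\in B\setminus A$ with $|A|\le 2$, the set $Y=A\setminus B$ works, since $(A\setminus Y)+u=(A\cap B)+u\subseteq B$. The only independent set of size $3$ is $A=\{t_1,t_2,t_3\}$; there $u\in\{s_1,s_2\}$, every independent $B\ni u$ avoids $t_1,t_2$, and $Y=\{t_1,t_2\}$ works because $\{t_3,u\}\in\cI$. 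Now take $S=\{s_1,s_2\}$, $T=\{t_1,t_2,t_3\}$ and $T_H=\varnothing$. Since $\{s_i,t_1\},\{s_i,t_2\}\notin\cI$, property~\ref{item:single_exchange} forces $Y_{t_1}=Y_{t_2}=S$. Since $S+t_3\notin\cI$, it also forces $Y_{t_3}\neq\varnothing$. So some $s_i$ lies in three sets, which violates property~\ref{item:frequency}. Hence no chain or charging argument built on single exchanges can give the frequency bound.

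Separately, your two stages are not coupled. Nothing relates the set $S'$ from Stage~1 to the elements that reach frequency exactly $p$ in Stage~2, so property~\ref{item:large_exchange} is not secured either.

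\textbf{What the paper's proof uses.} It takes a simultaneous exchange collection $\{Y_t\subseteq S\mid t\in T\}$ such that (E1) $(S\setminus\cup_{t\in T'}Y_t)\cup T'\in\cI$ for every $T'\subseteq T$, and (E2) every $s\in S$ lies in at most $p$ of the sets. This is the $p$-exchange axiom from the preliminaries, not $p$-extendibility. The word ``extendible'' in the statement and proof is a slip; the lemma is only needed for $p$-exchange systems in Lemma~\ref{lem:deterministic_lemma}.

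Your fallback of restricting to the $p$-exchange subclass is the right move, and then no sequential construction is needed. Keep the subcollection indexed by $T\setminus T_H$.
\begin{itemize}
\item Property~\ref{item:single_exchange} is (E1) with $T'=\{t\}$.
\item Property~\ref{item:frequency} is (E2).
\item For property~\ref{item:large_exchange}: every $s\in S_L$ already uses all $p$ of its allowed appearances on sets $Y_t$ with $t\in T\setminus T_H$. By (E2) it therefore lies in no $Y_t$ with $t\in T_H$. Hence $S_L\cup T_H\subseteq(S\setminus\cup_{t\in T_H}Y_t)\cup T_H$, which is independent by (E1) with $T'=T_H$.
\end{itemize}
Property~\ref{item:large_exchange} becomes automatic because one collection serves both the single exchanges and the exchange with $T_H$.
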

\begin{proof}
Let $S$, $T$ and $T_H$ be sets as in the definition of $p$-bilevel swappable. We need to show the existence of a collection $\cC = \{Y_t \subseteq S \mid t \in T \setminus T_H\}$ having all the properties stated in this definition. Since $S$ and $T$ are disjoint, by the definition of a $p$-extendible system, there must exist a collection of sets $\{Y_t \subseteq S \mid t \in T\}$ such that
{\begin{enumerate}[1]\renewcommand{\theenumi}{(E\arabic{enumi})}
	\item $(S \setminus \cup_{t \in T'} Y_t) \cup T' \in \cI$ for every set $T' \subseteq T$, and\label{item:multi-exchange}
	\item every element $s \in S$ appears in at most $p$ sets in the collection $\{Y_t \subseteq S \mid t \in T\}$.\label{item:frequency-extendible}
\end{enumerate}}
Our goal is to show that the sub-collection $\{Y_t \mid t \in T \setminus T_H\}$ has all the properties that $\cC$ should have. We do that by considering each one of these properties separately.

\begin{itemize}
	\item \textbf{Property~\ref{item:single_exchange}.} We need to show that $S \setminus Y_t + t \in \cI$ for every $t \in T \setminus T_H$. This follows by plugging $T' = \{t\}$ into \ref{item:multi-exchange}.
	\item \textbf{Property~\ref{item:frequency}.} We need to show that every element $s \in S$ appears in at most $p$ out of the sets in the collection $\{Y_t \mid t \in T \setminus T_H\}$. This immediately follows from \ref{item:frequency-extendible}.
	\item \textbf{Property~\ref{item:large_exchange}.} We need to show that $S_L \cup T_H \in \cI$. By definition, every element of $S_L$ appears $p$ times in sets of $\{Y_t \subseteq S \mid t \in T \setminus T_H\}$, and thus by \ref{item:frequency-extendible}, cannot appear in any set of $\{Y_t \mid t \in T_H\}$. Using~\ref{item:multi-exchange}, this implies
\[
	S_L \cup T_H
	\subseteq
	(S_r \setminus \cup_{t \in T_H} Y_t) \cup T_H
	\in
	\cI
	\enspace.
	\qedhere
\]
\end{itemize}
\end{proof}

We still need to prove the same result for matrod $p$-parity (Lemma~\ref{lem:matroid_partiy_swappable} below). However, before doing so, we need to introduce some tools.

\begin{lemma}[Due to~\cite{greene1973multiple,woodall1974exchange}] \label{lem:matroid1}
Let $S$ and $T$ be bases of a matroid $\cM$. Then, for every partition $S_1 \cup S_2$ of $S$, there exists a partition $T_1 \cup T_2$ of $T$ such that $S_1 \cup T_2, S_2 \cup T_1\in \cI$.
\end{lemma}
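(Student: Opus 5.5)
The plan is to reduce the statement to the matroid intersection theorem (Edmonds' min--max formula) applied to two auxiliary matroids on the ground set $T$. Write $r(\cdot)$ for the rank function of $\cM$ and $r$ for its rank, so that $|S| = |T| = r$. It suffices to find a set $T_2 \subseteq T$ with $|T_2| = |S_2|$ such that $S_1 \cup T_2$ and $S_2 \cup (T \setminus T_2)$ are both independent; then $T_1 \triangleq T \setminus T_2$ gives the required partition.

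\textbf{Step 1 (two matroids on $T$).} Let $\cM_1 \triangleq (\cM / S_1)|_T$. A set $X \subseteq T$ is independent in $\cM_1$ iff $S_1 \cup X \in \cI$, and $\cM_1$ has rank function $r_1(X) = r(X \cup S_1) - |S_1|$. Since $T$ spans $\cM$, the rank of $\cM_1$ is $r - |S_1| = |S_2|$. Similarly, let $\cN \triangleq (\cM/S_2)|_T$, which has rank $|S_1|$, and let $\cM_2 \triangleq \cN^*$ be its dual. The rank of $\cM_2$ is $|T| - |S_1| = |S_2|$. A set $X \subseteq T$ of size $|S_2|$ is a base of $\cM_2$ iff $T \setminus X$ is a base of $\cN$, i.e., iff $S_2 \cup (T \setminus X) \in \cI$. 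Hence it is enough to show that $\cM_1$ and $\cM_2$ have a common independent set of size $|S_2|$. Such a set is a common base, and taking it as $T_2$ finishes the proof.

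\textbf{Step 2 (apply matroid intersection).} By Edmonds' theorem, the maximum size of a common independent set equals $\min_{A \subseteq T}\{r_1(A) + r_2(T \setminus A)\}$, where $r_2$ is the rank function of $\cM_2$. Using the dual rank formula $r_{\cN^*}(X) = |X| + r_\cN(T\setminus X) - r_\cN(T)$ with $X = T \setminus A$, we get
\[
	r_2(T\setminus A) = |T \setminus A| + r(A \cup S_2) - |S_2| - |S_1| \enspace.
\]
Therefore
\[
	r_1(A) + r_2(T\setminus A) = r(A\cup S_1) + r(A \cup S_2) + |T\setminus A| - 2|S_1| - |S_2| \enspace.
\]

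\textbf{Step 3 (lower bound via submodularity).} Submodularity of $r$, applied to $A \cup S_1$ and $A \cup S_2$ (whose union is $A \cup S$ and whose intersection is $A$, because $S_1 \cap S_2 = \varnothing$), gives
\[
	r(A\cup S_1) + r(A\cup S_2) \geq r(A \cup S) + r(A) = r + |A| \enspace.
\]
Here we used that $S$ is a base and that $A \subseteq T$ is independent. Plugging this into the expression from Step~2, the bracketed quantity is at least $r + |A| + (r - |A|) - 2|S_1| - |S_2| = |S_2|$, which is exactly the bound needed.

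\textbf{Main obstacle.} The only delicate point is the bookkeeping in Steps~1--2: correctly identifying the ranks of the contractions and of the dual, and verifying that a common base of size $|S_2|$ translates back into the two independence conditions. Once these are set up, the submodularity step is a one-liner. An alternative route I would keep in mind is induction using the symmetric single-element exchange property, but the matroid-intersection argument seems cleanest.
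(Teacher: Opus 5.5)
Your proof is correct. The paper does not prove this lemma; it cites Greene and Woodall and uses the statement, through Corollary~\ref{cor:matroid1}, as a black box.

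Your argument is the standard derivation from Edmonds' matroid intersection theorem, and the computations check out:
\begin{itemize}
\item Both auxiliary matroids on $T$ have rank $|S_2|$, because $T$ spans $\cM$.
\item The dual rank formula gives your expression for $r_2(T\setminus A)$.
\item Submodularity applied to $A\cup S_1$ and $A\cup S_2$ (union $A\cup S$, intersection $A$) yields $\min_{A\subseteq T}\{r_1(A)+r_2(T\setminus A)\}\geq |S_2|$.
\end{itemize}
So a common base $T_2$ exists, and $T_1=T\setminus T_2$ works. Compared with the citation, your route makes the paper self-contained, at the cost of relying on the matroid intersection theorem, a much heavier tool than the statement itself. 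The citation buys brevity.

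Your suggested fallback, induction on single symmetric exchanges, is less innocent than it looks. After the first exchange, the inductive call may return a set containing the element you just moved into $T$, which is not in the original $T$. The bookkeeping then does not obviously close.

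One detail needs tightening. $S$ and $T$ need not be disjoint; in the proof of Corollary~\ref{cor:matroid1}, the extended bases $S'$ and $T'$ can share elements. But $\cM/S_1$ lives on the complement of $S_1$, so $(\cM/S_1)|_T$ is not literally defined when $T\cap S_1\neq\varnothing$. The fix:
\begin{itemize}
\item Define $\cM_1$ directly on $T$ by the rank function $r_1(X)=r(X\cup S_1)-|S_1|$, and likewise your auxiliary matroid for $S_2$. This makes the elements of $T\cap S_1$ loops.
\item Then ``$X$ is independent in $\cM_1$ iff $S_1\cup X\in\cI$'' becomes only an implication, since independence also forces $X\cap S_1=\varnothing$. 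The forward direction is all you use at the end.
\item Steps~2--3 rely only on the rank functions, on $S_1\cap S_2=\varnothing$, on $S$ being a base, and on $A\subseteq T$ being independent. They go through unchanged.
\end{itemize}
Also rename your auxiliary matroid, since $\cN$ is the paper's ground set.
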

\begin{corollary} \label{cor:matroid1}
Let $S$ and $T$ be independent sets of a matroid $\cM$. Then, for every partition $S_1 \cup S_2$ of $S$, there exists a partition $T_1 \cup T_2$ of $T$ such that $S_1 \cup T_2, S_2 \cup T_1\in \cI$.
\end{corollary}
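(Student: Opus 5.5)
The plan is to reduce to Lemma~\ref{lem:matroid1} by extending both sets to bases of a common restriction. Let $S, T \in \cI$ and fix a partition $S_1 \cup S_2$ of $S$.

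First, pass to the restriction of $\cM$ to $S \cup T$. This is a matroid whose independent sets are exactly the independent sets of $\cM$ contained in $S \cup T$, and the desired conclusion only concerns such sets. So we may assume that the ground set is $S \cup T$.

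Second, use the augmentation property to extend $S$ to a base $B_S \supseteq S$ of this restricted matroid, so $B_S \setminus S \subseteq T$. Symmetrically, extend $T$ to a base $B_T \supseteq T$, so $B_T \setminus T \subseteq S$. Now apply Lemma~\ref{lem:matroid1} to $B_S$ and $B_T$ with the partition $B_S = S_1 \cup (S_2 \cup (B_S \setminus S))$. This yields a partition $B_T = P_1 \cup P_2$ with $S_1 \cup P_2 \in \cI$ and $S_2 \cup (B_S \setminus S) \cup P_1 \in \cI$.

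Third, restrict back to $T$ by setting $T_1 = P_1 \cap T$ and $T_2 = P_2 \cap T$. These sets partition $T$ because $T \subseteq B_T$. By down-closedness, $S_1 \cup T_2 \subseteq S_1 \cup P_2 \in \cI$ and $S_2 \cup T_1 \subseteq S_2 \cup (B_S\setminus S) \cup P_1 \in \cI$, which are the required conditions.

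The only subtle point is the one this construction is designed to handle. The padding elements $B_S \setminus S$ may lie in $T$, and elements of $B_T \setminus T$ may lie in $S$, so these sets need not be disjoint. Grouping the padding of $B_S$ with $S_2$ before invoking Lemma~\ref{lem:matroid1}, and then discarding the extra elements only through down-closedness, means that no disjointness is ever needed. Lemma~\ref{lem:matroid1} itself does not require disjoint bases.
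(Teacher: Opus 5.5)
Your proof is correct and is essentially the paper's argument. You extend $S$ and $T$ to bases and enlarge the partition of $S$ to a partition of its base (you place all padding on the $S_2$ side, whereas the paper allows any assignment). You then apply Lemma~\ref{lem:matroid1}, intersect the resulting partition with $T$, and finish by down-closedness; the preliminary restriction to $S \cup T$ is harmless but unnecessary, since the paper works directly with bases of $\cM$.
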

\begin{proof}
Let $S'$ and $T'$ be any two bases of $\cM$ containing $S$ and $T$, respectively, and let $S'_1 \cup S'_2$ be a partition of $S'$ such that $S_1 \subseteq S'_1$ and $S_2 \subseteq S'_2$. By Lemma~\ref{lem:matroid1}, there exists a partition $T'_1 \cup T'_2$ of $T'$ such that $S'_1 \cup T'_2$ and $S'_2 \cup T'_1$ are both independent. Let us now define $T_1 = T'_1 \cap T$ and $T_2 = T'_2 \cap T$. Note that $T_1$ and $T_2$ form a partition of $T$ because $T_1 \cup T_2 = T' \cap T = T$. Furthermore,
\[
	S_1 \cup T_2
	\subseteq
	S'_1 \cup T'_2
	\in
	\cI
	\qquad
	\text{and}
	\qquad
	S_2 \cup T_1
	\subseteq
	S'_2 \cup T'_1
	\in
	\cI
	\enspace.
	\qedhere
\]
\end{proof}

\begin{lemma}[Proposition~6 of~\cite{lason2015list}] \label{lem:partition}
Let $S$ and $T$ be bases of a matroid $\cM$. Then, for every partition $T_1, T_2, \dotsc, T_\ell$ of $T$, there exists a partition $S_1, S_2, \dotsc, S_\ell$ of $S$ such that $(S \setminus S_i) \cup T_i$ is a base of $\cM$ for every $i \in [\ell]$.
\end{lemma}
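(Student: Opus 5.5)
The plan is to prove the lemma by induction on $\ell$, using Lemma~\ref{lem:matroid1} (with the roles of $S$ and $T$ exchanged) to split off one part at a time, and then passing to a contraction to recurse. For $\ell = 1$ there is nothing to prove: take $S_1 = S$, so that $(S \setminus S_1) \cup T_1 = T$ is a base.

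For $\ell \geq 2$, let $T' \triangleq T_2 \cup \dotsb \cup T_\ell$, so that $T_1, T'$ is a partition of the base $T$. Apply Lemma~\ref{lem:matroid1} to the bases $T$ and $S$ with this partition of $T$. It gives a partition $A \cup B$ of $S$ such that $T_1 \cup B \in \cI$ and $T' \cup A \in \cI$.

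\begin{itemize}
\item Both sets are in fact bases. Let $r$ be the rank. Each set has size at most $r$, while $|T_1 \cup B| + |T' \cup A| \geq |T_1| + |T'| - |T' \cap A| + |A| + |B| - |T_1 \cap B|$, which is too large unless the intersections are empty.
\item Concretely, independence gives $|T_1 \cup B| \leq r$ and $|T' \cup A| \leq r$.
\item On the other hand, $|T_1| + |B| + |T'| + |A| = 2r$.
\item Hence both unions are disjoint unions of size exactly $r$, i.e., bases, and moreover $T' \cap A = \varnothing$ and $T_1 \cap B = \varnothing$.
\end{itemize}

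We set $S_1 \triangleq A$. Then $(S \setminus S_1) \cup T_1 = B \cup T_1$ is a base, as required for $i = 1$.

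It remains to partition $B$ into $S_2, \dotsc, S_\ell$. Consider the contraction $\cM / A$, whose ground set is the original ground set minus $A$ and whose bases are exactly the sets $X$ disjoint from $A$ with $X \cup A$ a base of $\cM$. The set $B$ is a base of $\cM / A$ since $A \cup B = S$, and $T'$ is a base of $\cM / A$ since $T' \cap A = \varnothing$ and $T' \cup A$ is a base of $\cM$. Applying the induction hypothesis in $\cM / A$ to the bases $B$ and $T'$ and the partition $T_2, \dotsc, T_\ell$ of $T'$ yields a partition $S_2, \dotsc, S_\ell$ of $B$ such that $(B \setminus S_i) \cup T_i$ is a base of $\cM / A$ for every $i \geq 2$. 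Adding $A$ back, $(B \setminus S_i) \cup T_i \cup A = (S \setminus S_i) \cup T_i$ is a base of $\cM$, using that $S_i \subseteq B$ is disjoint from $A$. Together with $S_1 = A$, this gives the desired partition of $S$.

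The only delicate step is the bookkeeping when $S$ and $T$ share elements. One must ensure that $T'$ avoids $A$, so that $T'$ is a legitimate base of the contraction, and that the union identities hold exactly. The cardinality argument in the list above settles both points, since it forces the relevant intersections to be empty.
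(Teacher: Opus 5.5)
There is a genuine gap, and it sits in the step you flagged as delicate. Independence only bounds the sizes of the \emph{unions}: $|T_1\cup B|\le r$ and $|T'\cup A|\le r$. Your displayed ``$\geq$'' is actually an identity, $|T_1\cup B|+|T'\cup A| = 2r-|T_1\cap B|-|T'\cap A|$. This quantity is always at most $2r$, so nonempty intersections produce no contradiction. The inference ``the four sizes sum to $2r$, hence the unions are disjoint of size $r$'' presupposes $|T_1|+|B|\le r$, and that is exactly the disjointness you are trying to derive.

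The conclusion can really fail. Take $S=T$ and a partition with $\varnothing\neq T_1\subsetneq T$. Every subset of $S$ is independent, so Lemma~\ref{lem:matroid1} (as stated, with independence only) is satisfied by $A=S$, $B=\varnothing$. Your construction then sets $S_1=S$, so $(S\setminus S_1)\cup T_1=T_1$, which is not a base. Moreover $\varnothing\neq T'\subseteq A$, so $T'$ does not even lie in the ground set of $\cM/A$.

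The repair is short. When $S\cap T=\varnothing$, your counting is correct: the unions are automatically disjoint, so each has size exactly $r$ and is a base. The recursive instance ($B$ and $T'$ in $\cM/A$) is again disjoint, so the whole induction can be run in the disjoint setting. To reduce to that setting:
\begin{itemize}
\item Let $C=S\cap T$.
\item Apply the disjoint case in $\cM/C$ to the bases $S\setminus C$ and $T\setminus C$ with the partition $\{T_i\setminus C\}_{i}$. This gives a partition $S'_1,\dots,S'_\ell$ of $S\setminus C$.
\item Set $S_i=S'_i\cup(T_i\cap C)$. These sets partition $S$, and $(S\setminus S_i)\cup T_i=((S\setminus C)\setminus S'_i)\cup(T_i\setminus C)\cup C$, which is a base of $\cM$.
\end{itemize}

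Alternatively, invoke the original base form of the multiple exchange theorem behind Lemma~\ref{lem:matroid1}, in which $T_1\cup B$ and $T'\cup A$ are guaranteed to be bases. Then the same size count, now with equalities, does force $T_1\cap B=T'\cap A=\varnothing$.

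The paper gives no proof of this lemma and only cites Proposition~6 of~\cite{lason2015list}. With this repair, your induction-plus-contraction argument is a valid self-contained derivation from Lemma~\ref{lem:matroid1}.
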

\begin{corollary} \label{cor:partition}
Let $S$ and $T$ be independent sets of a matroid $\cM$. Then, for every partition $T_1, T_2, \dotsc, T_\ell$ of $T$, there exist a partition $S_1, S_2, \dotsc, S_\ell$ of $S$ such that $(S \setminus S_i) \cup T_i$ is independent in $\cM$ for every $i \in [\ell]$.
\end{corollary}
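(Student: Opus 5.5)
Mirroring the proof of Corollary~\ref{cor:matroid1}, I will extend $S$ and $T$ to bases and then restrict the conclusion of Lemma~\ref{lem:partition} back to them. Let $S'$ be a base of $\cM$ containing $S$, and $T'$ a base containing $T$. Extend the given partition $T_1, \dotsc, T_\ell$ of $T$ to a partition of $T'$ by adding the elements of $T' \setminus T$ to one of the parts, say $T'_1 = T_1 \cup (T' \setminus T)$ and $T'_i = T_i$ for $i \geq 2$. (If $\ell$ could be zero, $T$ is empty and the claim is trivial, so assume $\ell \geq 1$.)

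Apply Lemma~\ref{lem:partition} to the bases $S'$ and $T'$ with the partition $T'_1, \dotsc, T'_\ell$. This gives a partition $S'_1, \dotsc, S'_\ell$ of $S'$ such that $(S' \setminus S'_i) \cup T'_i$ is a base for every $i \in [\ell]$. Set $S_i = S'_i \cap S$. These sets partition $S$ because $S \subseteq S'$.

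It remains to verify independence for each $i$:
\[
	(S \setminus S_i) \cup T_i \subseteq (S' \setminus S'_i) \cup T'_i \in \cI .
\]
The inclusion holds because an element of $S \setminus S_i$ lies in $S'$ but not in $S'_i$, since $S_i = S'_i \cap S$; and $T_i \subseteq T'_i$. Down-closedness then gives $(S \setminus S_i) \cup T_i \in \cI$.

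One subtlety is that $S$ and $T$ may intersect, and then Lemma~\ref{lem:partition} may be applied to bases that share elements. The lemma as stated places no disjointness requirement, so this needs no special care. The only real step is choosing how to extend the partition of $T$ to $T'$, and any assignment of the extra elements works because only the inclusion above is used.
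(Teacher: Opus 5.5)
Your proof is correct and is essentially the paper's own argument: extend $S$ and $T$ to bases $S'$ and $T'$, extend the partition so that $T_i \subseteq T'_i$, apply Lemma~\ref{lem:partition}, set $S_i = S'_i \cap S$, and conclude from $(S \setminus S_i) \cup T_i \subseteq (S' \setminus S'_i) \cup T'_i$ and down-closedness. One minor correction: the case $\ell = 0$ is not trivial. It would require partitioning $S$ into zero parts, which fails unless $S = \varnothing$. So the statement implicitly assumes $\ell \geq 1$, which is what you in effect do.
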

\begin{proof}
Let $S'$ and $T'$ be any two bases of $\cM$ containing $S$ and $T$, respectively, and let $T'_1, T'_2, \dotsc, T'_\ell$ be a partition of $T'$ such that $T_i \subseteq T'_i$ for every $i \in [\ell]$. By Lemma~\ref{lem:partition}, there exists a partition $S'_1, S'_2, \dotsc, S'_\ell$ of $S'$ such that $(S' \setminus S'_i) \cup T'_i \in \cI$ for every $i \in [\ell]$. Let us now define $S_i = S'_i \cap S$ for every $i \in [k]$. Note that $S_1, S_2, \dotsc, S_\ell$ is a partition of $S$ because $\cup_{i = 1}^\ell S_i = \cup_{i = 1}^\ell (S'_i \cap S) = S' \cap S = S$. Furthermore, for every $i \in [\ell]$,
\[
	(S \setminus S_i) \cup T_i
	=
	(S \setminus S'_i) \cup T_i
	\subseteq
	(S' \setminus S'_i) \cup T'_i
	\in
	\cI
	\enspace.
	\qedhere
\]
\end{proof}

The final tool we need to introduce is a definition. Given a matroid $\cM = (E, \cJ)$ and a set $S \in \cJ$, the contracted matroid $\cM / S$ is a matroid over the ground set $E \setminus S$ such that a set $T \subseteq E \setminus S$ is independent in $\cM / S$ if $S \cup T \in \cJ$. It it is well known that the contracted matroid is indeed a matroid. As promised, using the this tool and the previous ones, we now prove Lemma~\ref{lem:matroid_partiy_swappable}.

\begin{lemma} \label{lem:matroid_partiy_swappable}
Every matorid $p$-parity $(\cN, \cI)$ is a $p$-bilevel swappable system.
\end{lemma}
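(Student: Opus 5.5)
The plan is to build the collection $\cC$ through a two-stage exchange argument in the underlying matroid, using Corollaries~\ref{cor:matroid1} and~\ref{cor:partition}. Let $(E, \cJ)$ and $v$ be the matroid and mapping defining the matroid $p$-parity, and write $V(X) \triangleq \cup_{u \in X} v(u)$ for $X \subseteq \cN$. Since the sets $v(u)$ are disjoint, $V$ commutes with unions, differences and intersections. Let $S$, $T$ and $T_H$ be as in Definition~\ref{def:bilevel}. Then $V(S), V(T) \in \cJ$, and $V(S) \cap V(T) = \varnothing$ because $S \cap T = \varnothing$.

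\textbf{Step 1 (split $V(S)$ according to $T_H$).} Apply Corollary~\ref{cor:matroid1} to the independent sets $V(T)$ and $V(S)$, using the partition $V(T_H) \cup V(T \setminus T_H)$ of $V(T)$. This gives a partition $P_1 \cup P_2$ of $V(S)$ such that
\[
V(T_H) \cup P_2 \in \cJ
\qquad\text{and}\qquad
V(T \setminus T_H) \cup P_1 \in \cJ \enspace.
\]
The part $P_2$ is the portion of $S$'s image that can coexist with $T_H$. The goal is to arrange that every element of $S_L$ has its entire image inside $P_2$.

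\textbf{Step 2 (exchange inside a contraction).} Work in the contracted matroid $\cM/P_1$. The set $P_2$ is independent there because $V(S) \in \cJ$. The set $V(T \setminus T_H)$ is independent there by Step~1. Apply Corollary~\ref{cor:partition} in $\cM / P_1$ with $S$-side $P_2$ and $T$-side $V(T \setminus T_H)$, partitioned into the blocks $v(t)$ for $t \in T \setminus T_H$. (Empty blocks are harmless; alternatively, drop them and set the corresponding $Q_t = \varnothing$.) This gives a partition $\{Q_t\}_{t \in T \setminus T_H}$ of $P_2$ with $(P_2 \setminus Q_t) \cup v(t)$ independent in $\cM/P_1$. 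Equivalently,
\[
(V(S) \setminus Q_t) \cup v(t) \in \cJ \qquad \forall\; t \in T \setminus T_H \enspace.
\]
Now define $Y_t \triangleq \{s \in S : v(s) \cap Q_t \neq \varnothing\}$.

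\textbf{Step 3 (verify the three properties).}
\begin{itemize}
\item[(a)] Since $V(S \setminus Y_t) \subseteq V(S) \setminus Q_t$, down-closedness gives $V(S \setminus Y_t + t) \in \cJ$, that is, $S \setminus Y_t + t \in \cI$.
\item[(b)] The sets $Q_t$ are disjoint. Hence an element $s$ lies in $Y_t$ only if one of its at most $p$ matroid elements lies in $Q_t$, and each such element lies in at most one $Q_t$. So $s$ belongs to at most $p$ sets of $\cC$.
\item[(c)] If $s \in S_L$, then equality must hold throughout the counting in (b). So $|v(s)| = p$ and every element of $v(s)$ lies in some $Q_t \subseteq P_2$, which means $v(s) \subseteq P_2$. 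Hence $V(S_L \cup T_H) \subseteq P_2 \cup V(T_H) \in \cJ$, and therefore $S_L \cup T_H \in \cI$.
\end{itemize}

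The main obstacle is property~(c). A single application of Corollary~\ref{cor:partition} to $V(S)$ and $V(T \setminus T_H)$ controls only individual exchanges; it does not ensure that the ``saturated'' elements of $S$ are compatible with $T_H$. The fix is to first isolate $P_2$ via Corollary~\ref{cor:matroid1} and then force all exchange sets into $P_2$ by contracting $P_1$. The points to check carefully are these: contraction preserves independence in the form used above; the partition $\{Q_t\}$ covers all of $P_2$ but none of $P_1$; and saturation of $s$ forces $v(s) \subseteq P_2$.
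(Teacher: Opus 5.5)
Your proposal is correct and follows the paper's proof essentially step for step. Your $P_1$, $P_2$ and $Q_t$ are exactly the paper's $V_H$, $V_L$ and $Y'_t$, obtained the same way: Corollary~\ref{cor:matroid1} splits the image of $S$, and Corollary~\ref{cor:partition} is then applied in the contraction by $V_H$. The definition of $Y_t$ and the counting argument for properties (a)--(c) are also the same as in the paper.

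One small remark: property (c) uses only $Q_t \subseteq P_2$ and the disjointness of the $Q_t$, not that they cover $P_2$. The degenerate case $T = T_H$ (where Corollary~\ref{cor:partition} would need a partition into zero parts) is trivial anyway, since then $S_L = \varnothing$ and (c) reduces to $T_H \in \cI$.
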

\begin{proof}
By the definition of a matroid $p$-parity, there exists a matroid $\cQ = (E, \cJ)$ and a function $v\colon \cN \to 2^E$ mapping every element $u \in \cN$ to a disjoint subset of $E$ of size at most $p$ such that $S \in \cI$ if and only if $v(S) \triangleq \cup_{u \in S} v(u) \in \cJ$. Let $S$, $T$ and $T_H$ be sets as in the definition of $p$-bilevel swappable. Since $S$ and $T$ are independent in $\cM$, the sets $v(S)$ and $v(T)$ are independent in $\cQ$. Combining this observation with the fact that $v(T_H) \subseteq v(T)$, Corollary~\ref{cor:matroid1} implies that there exists a partition of $v(S)$ into two sets $V_L, V_H$ such that both $V_L \cup v(T_H)$ and $V_H \cup v(T \setminus T_H)$ are independent in $\cQ$. In particular, we get that both $V_L$ and $v(T \setminus T_H)$ are independent in the contracted matroid $\cQ/V_H$. Thus, by Corollary~\ref{cor:partition}, there exist disjoint subsets $\{Y'_t \subseteq V_L \mid t \in T \setminus T_H\}$ such that, for every $t \in T \setminus T_H$, $(V_L \setminus Y'_t) \cup v(t)$ is independent in $\cQ/V_H$, or equivalently, $(v(S) \setminus Y'_t) \cup v(t) \in \cJ$.

Let us now define, for every $t \in T \setminus T_H$, the set $Y_t \triangleq \{s \in S \mid v(s) \cap Y'_t \neq \varnothing\}$. To prove that $(\cN, \cI)$ is a $p$-bilevel-swappable system, we need to show that the collection $\{Y_t \subseteq S \mid t \in T \setminus T_H\}$ has all the properties that $\cC$ should have according to Definition~\ref{def:bilevel}. We do that by considering each one of these properties separately.

\begin{itemize}
	\item \textbf{Property~\ref{item:single_exchange}.} We need to show that $S \setminus Y_t + t \in \cI$ for every $t \in T \setminus T_H$. This holds because
	\[
		v(S \setminus Y_t + t)
		=
		\bigg(v(S) \setminus \sum_{\mathclap{\substack{s \in S \\ v(s) \cap Y'_t \neq \varnothing}}} v(s)\bigg) \cup v(t)
		\subseteq
		(v(S) \setminus Y'_t) \cup v(t)
		\in
		\cJ
		\enspace,
	\]
	where the inclusion holds because the sets $\{v(s) \mid s \in S\}$ form a partition of $v(S)$.
	\item \textbf{Property~\ref{item:frequency}.} We need to show that every element $s \in S$ appears in at most $p$ out of the sets in the collection $\{Y_t \mid t \in T \setminus T_H\}$. Recall that $s$ appears in $Y_t$ only if $Y'_t \cap v(s) \neq \varnothing$. Since the sets $Y'_t$ are disjoint, this implies that the number of sets $Y_t$ that contain $s$ is
	\[
		\sum_{t \in T \setminus T_H} \mspace{-9mu} \characteristic[Y'_t \cap v(s) \neq \varnothing]
		\leq
		\sum_{t \in T \setminus T_H} \mspace{-9mu} |Y'_t \cap v(s)|
		\leq
		|v(s)|
		\leq
		p
		\enspace.
	\]

	\item \textbf{Property~\ref{item:large_exchange}.} We need to show that $S_L \cup T_H \in \cI$. By definition, every element $s \in S_L$ appears $p$ times in sets of $\{Y_t \mid t \in T \setminus T_H\}$, which means that there are $p$ sets in $\{Y'_t \mid t \in T \setminus T_H\}$ that have a non-zero intersection with $v(s)$. Since the sets $Y'_t$ are disjoint, this implies that at least $p$ different elements of $v(s)$ must appear in some set $Y'_t$. However, $v(s)$ contains at most $p$ elements, and thus, all the elements of $v(s)$ appear in some set $Y'_t$. Hence, $v(s) \subseteq \cup_{t \in T \setminus T_H} Y'_t \subseteq V_L$, and since this is true for every $s \in S_L$,
	\[
		v(S_L) \cup v(T_H)
		\subseteq
		V_L \cup v(T_H)
		\in
		\cJ
		\enspace.
		\qedhere
	\]
\end{itemize}
\end{proof}

%% file: stream-matroid-alg.tex
\section{Single Pass Semi-streaming Algorithm with Boosting}\label{sec:mainAlg}

In this section, we prove Proposition \ref{prop:main}, which we repeat here for convenience.
\PropMainBoost*

In Section~\ref{sec:offline iid}, we design an offline algorithm for maximizing a monotone submodular function subject to a matroid constraint, which 
is inspired by a recent algorithm due to Ganz et al.~\cite{ganz2026poisson}. Then, in Section \ref{sec:stream-matroid1}, we show that a close variant of this offline algorithm can be implemented in a single-pass under the data stream model, which yields our semi-streaming algorithm, and proves Proposition~\ref{prop:main}.

Both our algorithm and the algorithm of Ganz et al.~\cite{ganz2026poisson} maintain an integral solution $A_{i}\in \cI$, but treat it as a fractional solution $h \cdot \characteristic_{A_i}$ for a height $h$ that grows as the algorithm progresses. The algorithm of~\cite{ganz2026poisson} uses a Poisson clock to determine the number and associated heights of the iterations. In contrast, our algorithm performs a deterministic number $\ell$ of iterations and changes the height deterministically with the progress of these iterations.
Informally, our algorithm gets as input an initial height $h_0 \in (0,1]$ and a value $p'\in (0,1]$.
Each one of the $\ell$ iterations starts with a subset $A_{i-1}$ and a height $h_{i-1}$. During the iteration, the algorithm first updates the height to $h_i$ in a {\bf multiplicative way}, and then
samples a random subset of elements from $\cN$, where the probability of each element to be sampled is $p'$. The algorithm uses the sample to perform at most a single swap replacing a single element $u\in A_{i-1}$ with one of the sampled elements. The replacement chosen to be performed is roughly the one that increases $F(h_i \cdot \characteristic_{A_{i - 1}})$ by the most---recall that $F$ is the multilinear extension of $f$. 

We explore two sets of parameters for our offline algorithm. The first set of parameters results in an offline algorithm that obtains $(\frac{e}{e-1}+\delta)$-approximation, evaluates the multilinear extension $F$ $O((r^2+ r/\delta +n)\cdot \log(1/\delta) )$ times and invoked the independence oracle of the matroid constraint $O(n \log r \log(1/\delta))$ times.
The second set of parameters is attractive for the data stream model and achieves the bounds of Proposition~\ref{prop:main}. Under this set of parameters, the number of elements sampled throughout the execution is no larger than $n$ (in expectation), which intuitively is the reason that it is possible to implement (a close variant of) the algorithm with these parameters as a single-pass semi-streaming algorithm. 

\subsection{Offline Algorithm }\label{sec:offline iid}

Our offline algorithm is given as Algorithm~\ref{alg:iid2}. Notice that the algorithm defines, for every $i \in [\ell]$, the height at the end of iteration $i$ as $h_i \triangleq h \cdot (1+ \frac{p}{r-p})^i$, where $ p\triangleq 1-(1-p')^r$. In the analysis, we also use this definition for other values of $i$. One can observe that, as explained in the intuition above, the height increases in a multiplicative way between iterations because the ratio $h_i / h_{i - 1}$ is $1+ \frac{p}{r-p}$ for all $i$ values. We also would like to note two additional things about the pseudocode of Algorithm~\ref{alg:iid2}. First, the code uses the symbol $\bot$ to denote a dummy element that does not exist in the ground set $\cN$. When the algorithm chooses $\bot$ as $u_i$, it does not remove any element from its solution in iteration $i$. Second, the elements $v_i$ and $u_i$ are not well-defined when $R_i$ is empty. However, this is not an issue because their values do not affect the behavior of the algorithm when $R_i = \varnothing$.

\begin{algorithm}\DontPrintSemicolon
\caption{\textsc{Matroid \textbf{\textsc{Offline}} Algorithm}} \label{alg:iid2}
\KwIn{Matroid $(\cN, \cI)$ of rank $r$, non-negative monotone submodular function $f$, set $A_0\in \cI$.\\
\textbf{Parameters:} $h\in(0,1], p' \in (0, 1], p\triangleq 1-(1-p')^r$, and non-negative integer $\ell \leq  \ln(h^{-1})/\ln(1+\nicefrac{p}{(r-p)})$. \\
}
\smallskip\hrule\smallskip
\For{$i = 1$ \KwTo $\ell$}
{ 
Let $h_i \gets h \cdot (1+ \frac{p}{r-p})^i$.\\
Sample a set $R_i$ containing each element $u\in \cN$, independently, with probability $p'$.\\
Let $(u_i,v_i) \in \arg \max_{u\in A_{i-1} \cup \{\bot\}, v\in R_i, A_{i-1}-u+v\in \cI} \{F(h_i \cdot \characteristic_{A_{i-1}+v})+F(h_i \cdot \characteristic_{A_{i-1}-u})\}$.\label{line:best_swap}\\
\If{$R_i \neq \varnothing$ and $F(h_i \cdot \characteristic_{A_{i-1}+v_i})+F(h_i \cdot \characteristic_{A_{i-1}-u_i})> 2 \cdot F(h_i \cdot \characteristic_{A_{i-1}})$}
    {
     $A_i\gets A_{i-1}-u_i+v_i$.
    }\lElse{$A_i \gets A_{i-1}$.} 
}
\Return $A_{\ell}$.
\end{algorithm}

Algorithm~\ref{alg:iid2} is well-defined only when the heights $h_i$ are values in $[0, 1]$ for every $i$. The following observation confirms that this is indeed the case.

\begin{observation}
By definition, $\ell \leq \ln(h^{-1})/\ln(1+\nicefrac{p}{(r-p)})$. Thus, for every integer $0 \leq i \leq \ell$,
\[
	h_i = h \cdot \bigg(1+\frac{p}{r-p}\bigg)^{i} \leq h \cdot \bigg(1+\frac{p}{r-p}\bigg)^{\ell} \leq 1 \enspace.
\]
\end{observation}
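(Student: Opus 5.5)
The plan is to verify the two inequalities in the displayed chain directly, together with the (implicit) lower bound $h_i \geq 0$, so that every height used by Algorithm~\ref{alg:iid2} lies in $[0,1]$.

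\textbf{Step 1: the ratio is well defined and at least $1$.} I set $q \triangleq \frac{p}{r-p}$. Since $p' \in (0,1]$, we have $p = 1-(1-p')^r \in (0,1]$. I will assume $p < r$; this holds whenever $r \geq 2$, and also when $r = 1$ and $p' < 1$. It is needed in any case for the parameter bound on $\ell$ in Algorithm~\ref{alg:iid2} to make sense. Under this assumption $q > 0$, so $1+q > 1$ and $\ln(1+q) > 0$.

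\textbf{Step 2: lower bound and first inequality.} Because $h > 0$ and $1+q > 1$, the map $i \mapsto h(1+q)^i$ is positive and non-decreasing in $i$. Hence $h_i \geq h > 0$, and for every integer $0 \leq i \leq \ell$ we get $h_i \leq h(1+q)^\ell$. This is exactly the first inequality of the chain.

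\textbf{Step 3: second inequality.} Since $\ln(1+q) > 0$, the assumption $\ell \leq \ln(h^{-1})/\ln(1+q)$ can be multiplied through to give $\ell \ln(1+q) \leq \ln(h^{-1})$. Exponentiating yields $(1+q)^\ell \leq h^{-1}$, and therefore $h(1+q)^\ell \leq 1$.

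Combining Steps 2 and 3 gives $h_i \in [0,1]$ for all $0 \leq i \leq \ell$, as claimed. There is no real obstacle in this argument. The only point needing care is the degenerate case $p = r$ (namely $r = 1$ and $p' = 1$), where $q$ is undefined; I would exclude it explicitly as a standing assumption on the parameters.
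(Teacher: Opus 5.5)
Your proof is correct and follows the same route as the paper. Monotonicity of $i \mapsto h(1+\frac{p}{r-p})^i$ gives the first inequality, exponentiating $\ell \ln(1+\frac{p}{r-p}) \leq \ln(h^{-1})$ gives the second, and your exclusion of the degenerate case $r=1$, $p'=1$ together with the remark $h_i \geq h > 0$ are harmless details the paper leaves implicit.
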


The properties of Algorithm~\ref{alg:iid2} are summarizes by Proposition~\ref{thm:iid-value} below. In this proposition, it is convenient to use the following notation $\Delta h_{i} \triangleq h_i - h_{i-1}$.
\begin{observation}\label{obs:dh}
For every integer $i$,
$\Delta h_i = h_i- h_{i-1} = h_{i} - \frac{h_i}{(1+ \frac{p}{r-p})} = h_i \cdot \frac{p}{r}$.
\end{observation}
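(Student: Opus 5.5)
The plan is a direct algebraic verification from the definition $h_i \triangleq h \cdot (1+\frac{p}{r-p})^i$, which the paper uses for every integer $i$, not only $i \in [\ell]$.

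First, I would simplify the common ratio:
\[
1+\frac{p}{r-p}=\frac{r}{r-p}.
\]
This requires $r-p\neq 0$. It holds because $p = 1-(1-p')^r \le 1 \le r$, and the ratio in the definition of $h_i$ is only meaningful when $r > p$. That is the implicit standing assumption of Algorithm~\ref{alg:iid2}; the degenerate case $r=p=1$, i.e.\ $p'=1$ with $r=1$, makes $h_i$ undefined.

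Second, the definition gives $h_i = h_{i-1}\cdot \frac{r}{r-p}$. Equivalently, $h_{i-1} = h_i/(1+\frac{p}{r-p}) = h_i \cdot \frac{r-p}{r}$, which is the middle expression in the statement.

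Third, substituting into $\Delta h_i = h_i - h_{i-1}$ gives
\[
\Delta h_i = h_i - h_i\cdot\frac{r-p}{r} = h_i \cdot \frac{p}{r}.
\]
There is no real obstacle here. The only point needing care is the nonvanishing of $r-p$, which I would address in one sentence as above.
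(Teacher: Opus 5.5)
Your proposal is correct and matches the paper's argument, which is just the chain of equalities in the statement: $h_{i-1} = h_i/(1+\frac{p}{r-p})$ follows from the definition of $h_i$, and $1+\frac{p}{r-p}=\frac{r}{r-p}$ gives $\Delta h_i = h_i \cdot \frac{p}{r}$. Your remark that $r-p>0$ except in the degenerate case $r=p'=1$, where $h_i$ itself is undefined, is correct; the paper leaves this implicit.
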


\begin{proposition}\label{thm:iid-value}
Algorithm~\ref{alg:iid2} returns a set $A_{\ell}$ such that
\begin{align*}
\bE[F(h_{\ell} \cdot \characteristic_{A_{\ell}})] & \geq  \bigg(1- \frac{1}{\prod_{i=1}^{\ell}(1+\Delta h_i)} \bigg) \cdot f(OPT)+ \frac{1}{\prod_{i=1}^{\ell}(1+\Delta h_i)} \cdot F(h \cdot \characteristic_{A_{0}})\\
& \geq \big(1-\lambda \big)\cdot f(OPT) + \lambda \cdot F(h \cdot \characteristic_{A_{0}})
\end{align*}
for any 
\[
    1 \geq \lambda \geq \exp(h - h_{\ell - 1}) =
	\exp\bigg(h\cdot \bigg(1 - \bigg(1+ \frac{p}{r-p}\bigg)^{\!\ell - 1}\bigg)\bigg) \enspace.
\]
Algorithm~\ref{alg:iid2} performs in expectation at most $O(\ell \cdot (n p'+r))$ queries to the value oracle of the multilinear extension $F$ and $O(\ell n p'\log r)$ queries to the independence oracle of the matroid constraint.
\end{proposition}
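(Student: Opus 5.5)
The plan is to establish a one-step recurrence of the form
\[
\bE[F(h_i\characteristic_{A_i})] \geq \bE[F(h_{i-1}\characteristic_{A_{i-1}})] + \Delta h_i\cdot\big(f(OPT) - \bE[F(h_i\characteristic_{A_i})]\big)
\]
(or an equivalent variant), which rearranges to
\[
(1+\Delta h_i)\big(f(OPT)-\bE[F(h_i\characteristic_{A_i})]\big)\le f(OPT)-\bE[F(h_{i-1}\characteristic_{A_{i-1}})].
\]
Unrolling this over $i=1,\dots,\ell$ gives the first inequality. For the second, we use $\prod(1+\Delta h_i)\ge \exp(\sum \Delta h_i - \sum\Delta h_i^2/2)$ together with $\sum_{i\le\ell}\Delta h_i = h_\ell-h$. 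The slack $h_\ell-h_{\ell-1}$ in the stated bound $\exp(h-h_{\ell-1})$ then absorbs the second-order terms $\sum \Delta h_i^2 \le \max_i\Delta h_i\cdot\sum_i\Delta h_i\le \Delta h_\ell$ via $1+x\ge e^{x-x^2}$ or a similar elementary estimate. Since $\lambda$ only needs to be an upper bound on $1/\prod(1+\Delta h_i)$, any $\lambda$ in the stated range works, because the right-hand side is a convex combination that decreases in $\lambda$ (as $F(h\characteristic_{A_0})\le f(OPT)$).

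To get the recurrence, split the change into two parts. The first is the height increase from $h_{i-1}$ to $h_i$ on the fixed set $A_{i-1}$: by multilinearity, $F(h_i\characteristic_{A_{i-1}})-F(h_{i-1}\characteristic_{A_{i-1}}) = \int$ of the gradient along $\characteristic_{A_{i-1}}$, which by concavity along nonnegative directions is at least $\Delta h_i\sum_{u\in A_{i-1}}\partial_uF(h_i\characteristic_{A_{i-1}})$. The second is the swap gain. Let $g(v)=F(h_i\characteristic_{A_{i-1}+v})-F(h_i\characteristic_{A_{i-1}})$ and $\ell(u)=F(h_i\characteristic_{A_{i-1}})-F(h_i\characteristic_{A_{i-1}-u})$, so the accepted swap gains at least $g(v_i)-\ell(u_i)$. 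Multilinearity gives $g(v)=(1-h_i)\partial_v F$ and $\ell(u)=h_i\partial_uF$, where $\partial_u F$ is evaluated at $h_i\characteristic_{A_{i-1}}$; the relevant quantity is $F(\ldots)$ at $h_i$ for sets changing by one element. We also need $F(h_i\characteristic_{A_{i-1}-u_i+v_i})\ge F(h_i\characteristic_{A_{i-1}+v_i})+F(h_i\characteristic_{A_{i-1}-u_i})-F(h_i\characteristic_{A_{i-1}})$, which follows from submodularity of $F$ along those coordinates (cross second derivative $\le0$).

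Next, bound the expected best swap. Take a bijection $\pi$ between $OPT$ and a subset of $A_{i-1}\cup\{\bot\}$ with $A_{i-1}-\pi(o)+o\in\cI$, using the standard matroid exchange after extending to bases. Sampling each element with probability $p'$, the probability that $R_i\cap OPT\neq\varnothing$ is at least about $p$ when $|OPT|=r$. Conditioned on nonemptiness, a uniformly random sampled element of $OPT$ is uniform over $OPT$ by symmetry, so the expected best swap gain is at least $\frac{p}{r}\sum_{o\in OPT}\big(g(o)-\ell(\pi(o))\big)$. Monotone submodularity gives $\sum_o g(o)\ge (1-h_i)\big(f(OPT)-F(h_i\characteristic_{A_{i-1}})\big)$, or rather $\sum_o\partial_oF\ge f(OPT)-F$. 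The loss term $\frac{p}{r}h_i\sum_{u\in A}\partial_uF=\Delta h_i\sum_u\partial_uF$ by Observation~\ref{obs:dh}, and this exactly cancels the height-increase gain. That cancellation is the reason for the height schedule. I expect the main obstacle to be the remaining $(1-h_i)$ factor and obtaining exactly $\Delta h_i(f(OPT)-F(h_i\cdot))$. I would instead write the gain as $\frac{p}{r}\sum_o[F(h_i\characteristic_{A+o})-F(h_i\characteristic_A)]$, note that $F(h_i\characteristic_{A+o})\ge$ the value with coordinate $o$ at $1$, and absorb the $h_i$ versus $h_{i-1}$ mismatch into the $i$-th term. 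Handling $|OPT|<r$ by padding with $\bot$ and the nonuniform conditional sampling also needs care.

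Query counts: per iteration $\bE|R_i|=np'$. For each $v\in R_i$ we find the best removable $u$ by sorting $A_{i-1}$ by $\ell(u)$ ($O(r)$ $F$-queries per iteration) and binary searching the exchangeable prefix, using $O(\log r)$ independence queries and $O(1)$ value queries per $v$. Summing over $\ell$ iterations yields the stated bounds.
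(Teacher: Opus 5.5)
Your plan has the same architecture as the paper's proof. It uses the recurrence of Lemma~\ref{lem:main-iid}, built from a height-increase term and a swap-gain term. It also shares the exchange bijection with the symmetry argument, the cancellation of $\Delta h_i\langle\nabla F(h_i\characteristic_{A_{i-1}}),\characteristic_{A_{i-1}}\rangle$, the $\ln(1+x)\ge x-x^2/2$ estimate for $\lambda$, and the query count. However, the central swap-gain inequality is not correctly derived, and the obstacle you anticipate comes from a computational error.

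For $v\notin A_{i-1}$, the vector $h_i\characteristic_{A_{i-1}+v}$ has $v$-coordinate $h_i$, while $h_i\characteristic_{A_{i-1}}$ has $v$-coordinate $0$. Multilinearity therefore gives $g(v)=h_i\,\partial_vF(h_i\characteristic_{A_{i-1}})$, not $(1-h_i)\,\partial_vF$. A factor $1-h_i$ appears only when a coordinate already equal to $h_i$ is raised to $1$. Your proposed repair is also wrong: by monotonicity $F(h_i\characteristic_{A_{i-1}+o})\le F(h_i\characteristic_{A_{i-1}}+\characteristic_{\{o\}})$, not $\ge$.

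With the correct identity, no $h_i$-versus-$h_{i-1}$ absorption is needed, because gain and loss carry the same factor $h_i$. Write $\partial_\bot F\triangleq 0$. The per-element bound you need is $\max\{0,F(h_i\characteristic_{A_{i-1}+o})+F(h_i\characteristic_{A_{i-1}-\pi(o)})-2F(h_i\characteristic_{A_{i-1}})\}\ge h_i[F(h_i\characteristic_{A_{i-1}-o}+\characteristic_{\{o\}})-F(h_i\characteristic_{A_{i-1}})-\partial_{\pi(o)}F(h_i\characteristic_{A_{i-1}})]$. Multiplying by $p/r$ turns $h_i$ into $\Delta h_i$ (Observation~\ref{obs:dh}). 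The first part then sums to at least $f(OPT)-F(h_i\characteristic_{A_{i-1}})$ by Inequality~\eqref{prop:ML2} and monotonicity. The second part sums to at most $\langle\nabla F(h_i\characteristic_{A_{i-1}}),\characteristic_{A_{i-1}}\rangle$ by injectivity of $\pi$ and nonnegativity of the gradient, which is exactly what the height increase pays for.

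For $o\notin A_{i-1}$, the expression inside the maximum already equals the right-hand side of that bound. For $o\in OPT\cap A_{i-1}$ paired with itself (as the standard extension of the exchange bijection does), your raw difference $g(o)-\ell(o)=-h_i\partial_oF$ is strictly weaker than the needed $-h_i^2\partial_oF$. For example, if $A_{i-1}=OPT$, your swap bound plus the height increase yields only $0$ instead of $\Delta h_i[f(OPT)-F(h_i\characteristic_{OPT})]$.

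The paper closes this by noting that the realized gain is always nonnegative, since a swap is made only if it is improving. Hence the expected gain is at least $\frac{p}{r}\sum_{o}\max\{0,\cdot\}$, and $0\ge-h_i^2\partial_oF$. The same nonnegativity is what lets you ignore the event $R_i\cap OPT=\varnothing$ and replace $F(h_i\characteristic_{A_{i-1}})$ by $F(h_i\characteristic_{A_i})$ in your recurrence.

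Finally, $|OPT|<r$ needs no padding. By monotonicity one may take $OPT$ to be a base, so $\Pr[R_i\cap OPT\neq\varnothing]=1-(1-p')^r=p$ exactly; ``at least about $p$'' would not deliver the factor $p/r=\Delta h_i/h_i$. The conditioning on $R_i\cap OPT\neq\varnothing$ is fully handled by the symmetry argument you already give.
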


Section~\ref{sssc:offline_analysis} below proves Proposition~\ref{thm:iid-value}. Section~\ref{sssc:parameters} then studies the guarantees implied by this proposition for two sets of possible parameters.

\subsubsection{Proof of Proposition~\texorpdfstring{\ref{thm:iid-value}}{\ref*{thm:iid-value}}} \label{sssc:offline_analysis}

In this section, we prove Proposition~\ref{thm:iid-value}. Our first goal is to bound the number of oracle queries necessary for implementing Algorithm~\ref{alg:iid2}. This bound is given by Lemma~\ref{lem:queries} below. The proof of Lemma~\ref{lem:queries} uses the following procedure, which is based on the binary search technique suggested by~\cite{chakrabarty2019faster,nguyen2019note}.
\begin{lemma}[Follows from Lemma 4.2 of~\cite{buchbinder2024deterministic}]\label{lem:independence}
There is an algorithm that gets as input a matroid $\cM=(\cN, \cI)$, an independent set $A\in \cI$ in this matroid, an element $v\in \cN\setminus A$ such that $A + v \not \in \cI$, and a weight $w_u$ for every element $u\in A$. The algorithms finds an element $u\in \arg \max_{u\in A, A+v-u\in\cI}\{w_u\}$
using $O(\log r)$ independence oracle queries. 
\end{lemma}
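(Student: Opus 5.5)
The plan is to reduce the task to locating a single element of the unique circuit of $A+v$, and then to find that element by binary search over prefixes of $A$ sorted by weight.

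\textbf{Step 1: reduce to circuit membership.} Since $A\in\cI$ and $A+v\notin\cI$, the set $A+v$ contains a unique circuit $C$, and $v\in C$. For $u\in A$, the set $A+v-u$ is independent if and only if $u\in C$. Indeed, if $u\notin C$ then $C\subseteq A+v-u$, so this set is dependent. Conversely, if $u\in C$, any circuit inside $A+v-u\subseteq A+v$ would have to equal $C$ by uniqueness, which is impossible because $u\notin A+v-u$. Hence the task is to find an element of $C\cap A$ of maximum weight.

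\textbf{Step 2: binary search over prefixes.} Order $A=\{a_1,\dots,a_k\}$ with $k\le r$ in \emph{non-decreasing} order of weight, breaking ties arbitrarily. This sorting uses no oracle queries. For $j\in\{0,\dots,k\}$, let $P_j=\{a_1,\dots,a_j\}+v$. The predicate ``$P_j\notin\cI$'' is monotone in $j$ because independence is down-closed. Moreover, $P_0=\{v\}$ is independent, since $v$ lies in the circuit $C$ and is therefore not a loop, while $P_k=A+v$ is dependent. Binary search therefore finds the smallest index $j$ with $P_j\notin\cI$ using $O(\log k)=O(\log r)$ independence queries. The algorithm outputs $u=a_j$.

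\textbf{Step 3: correctness.} Since $P_j$ is dependent, it contains a circuit $C'$. Because $C'\subseteq A+v$, uniqueness gives $C'=C$, so $C\subseteq P_j$. Since $P_{j-1}$ is independent, $C\not\subseteq P_{j-1}$, and therefore $a_j\in C$. Thus $A+v-a_j\in\cI$ by Step 1. Every other element $a_i\in C\cap A$ satisfies $i\le j$, so $w_{a_i}\le w_{a_j}$ by the chosen ordering. Hence $a_j$ is a maximizer.

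The only delicate point is choosing the sort direction (ascending) so that the first dependent prefix exposes the heaviest circuit element rather than the lightest. The rest is routine.
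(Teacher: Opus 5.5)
Your argument is correct and is essentially the binary-search technique the paper relies on. The paper gives no proof of its own; it defers to Lemma~4.2 of~\cite{buchbinder2024deterministic}, which rests on the binary search of~\cite{chakrabarty2019faster,nguyen2019note}.

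One justification is wrong, however. Membership of $v$ in the circuit $C$ does not rule out $v$ being a loop, because a loop forms the circuit $\{v\}$ by itself. In that case $C=\{v\}$ and no $u\in A$ makes $A+v-u$ independent. The argmax in the statement is then empty, and your search, which presumes $P_0\in\cI$, would return $a_1$ incorrectly.

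There are two easy fixes. You can note that the lemma implicitly assumes a feasible $u$ exists, and any such $u$ gives $\{v\}\subseteq A+v-u\in\cI$. Alternatively, spend one extra oracle query on $\{v\}$ to detect the loop case and report that no valid $u$ exists.
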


\begin{lemma} \label{lem:queries}
Algorithm~\ref{alg:iid2} can be implemented so that it performs, in expectation, at most $O(\ell \cdot (n p'+r))$ queries to the value oracle of the multilinear extension $F$ and $O(\ell n p'\log r)$ queries to the independence oracle of the matroid constraint.
\end{lemma}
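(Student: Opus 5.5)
The plan is to bound the cost of a single iteration and then multiply by $\ell$ using linearity of expectation. In iteration $i$, the set $R_i$ is sampled independently, so $\bE[|R_i|] = np'$. The only nontrivial work in the iteration is Line~\ref{line:best_swap}, which finds the pair $(u_i, v_i)$, plus a constant number of additional evaluations of $F$ for the acceptance test. The objective on Line~\ref{line:best_swap} is separable: $F(h_i \cdot \characteristic_{A_{i-1}+v}) + F(h_i \cdot \characteristic_{A_{i-1}-u})$ is a sum of a term depending only on $v$ and a term depending only on $u$. The search can therefore be organized as follows.

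The first step is to precompute, once per iteration, the weights $w_u \triangleq F(h_i \cdot \characteristic_{A_{i-1}-u})$ for every $u \in A_{i-1}$, together with $w_\bot \triangleq F(h_i \cdot \characteristic_{A_{i-1}})$. This costs $|A_{i-1}|+1 \le r+1$ value-oracle queries. The second step is to process each $v \in R_i$ separately, computing $F(h_i \cdot \characteristic_{A_{i-1}+v})$ with one value query. If $v \in A_{i-1}$, then $A_{i-1}+v = A_{i-1}$, so the best partner $u$ is found among the stored weights with no further oracle calls. Otherwise, one independence query tests whether $A_{i-1}+v \in \cI$. If it is, $u = \bot$ is feasible. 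Since $F$ is monotone (because $f$ is), $w_\bot \ge w_u$ for every $u$, so $\bot$ is the best partner. If $A_{i-1}+v \notin \cI$, then $\bot$ is infeasible, and we invoke the procedure of Lemma~\ref{lem:independence} with weights $w_u$ to find $u \in \arg\max_{u \in A_{i-1},\, A_{i-1}+v-u \in \cI} w_u$ using $O(\log r)$ independence queries. In both cases we obtain the best partner for $v$. Taking the best pair over all $v \in R_i$ gives $(u_i, v_i)$ exactly as required by Line~\ref{line:best_swap}.

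Summing up, iteration $i$ uses $O(r + |R_i|)$ value queries and $O(|R_i| \log r)$ independence queries. Taking expectations, using $\bE[|R_i|] = np'$, and summing over the $\ell$ iterations gives $O(\ell(np'+r))$ and $O(\ell np' \log r)$, respectively. (If $r = 1$, $\log r$ should be read as $O(1)$.)

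The only delicate point is correctly handling the option $u = \bot$, and the case $v \in A_{i-1}$, where Lemma~\ref{lem:independence} does not apply because that lemma requires $v \notin A$ and $A+v \notin \cI$. These cases are resolved by the single extra independence query and the monotonicity argument above. Both checks cost $O(1)$ per sampled element, so they do not affect the bounds.
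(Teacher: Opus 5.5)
Your proposal is correct and follows essentially the same argument as the paper. You use $1+|A_{i-1}|+|R_i|$ value queries per iteration to evaluate $F$ at $h_i\cdot\characteristic_{A_{i-1}}$, $h_i\cdot\characteristic_{A_{i-1}-u}$ and $h_i\cdot\characteristic_{A_{i-1}+v}$, and for each $v\in R_i$ you spend one independence query on $A_{i-1}+v\in\cI$ (choosing $\bot$ by monotonicity) before falling back on Lemma~\ref{lem:independence}. Your explicit handling of $v\in A_{i-1}$ and your count of $O(|R_i|\log r)$ independence queries per iteration, without the paper's extra additive $1$ per iteration, are minor refinements that make the final bound slightly cleaner.
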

\begin{proof}
In iteration number $i \in [\ell]$ of Algorithm~\ref{alg:iid2}, the algorithm needs to evaluate the multilinear extension on the vectors $h_i \cdot \characteristic_{A_i - 1}$, $h_i \cdot \characteristic_{A_i - 1} - u$ and $h_i \cdot \characteristic_{A_i - 1} + v$ for every $u \in A_{i - 1}$ and $v \in R_i$. The number of such vectors is $1 + |R_i| + |A_{i - 1}|$, and therefore, the expected number of value oracle queries to $F$ used by Algorithm~\ref{alg:iid2} in all its iterations is
\[
	\bE\bigg[\sum_{i = 1}^\ell \{1 + |R_i| + |A_{i - 1}|\}\bigg]
	\leq
	\ell + \sum_{i = 1}^\ell \bE[|R_i|] + r\ell
	=
	\ell + \ell np' + r\ell
	=
	O(\ell \cdot (np' + r))
	\enspace,
\]
where the inequality holds since $A_{i - 1}$ is an independent set, and thus, of size at most $r$.

Independence oracle queries are necessary only for implementing Line~\ref{line:best_swap} of Algorithm~\ref{alg:iid2}. To implement this line, for every element $v\in R_i$, we need to find an element $u$ maximizing $F(h_i \cdot \characteristic_{A_{i-1}-u})$ among all the elements $u \in A_{i - 1} \cup \{\bot\}$ that obey $A_{i - 1} - u + v \in \cI$. The first step towards finding such a $u$ is a single oracle query used to check whether $A_{i - 1} + v \in \cI$. If that is the case, then $u$ can be chosen as $\bot$ because of the monotonicity of $f$. Otherwise, we can use Lemma~\ref{lem:independence} to find $u$ using $O(\log r)$ independence oracle queries by setting $w_u = F(h_i \cdot \characteristic_{A_{i-1}-u})$ for every $u \in A_{i - 1}$. Therefore, the number of independence oracle queries used by a single iteration of Algorithm~\ref{alg:iid2} is at most $1 + |R_i| \cdot O(\log r)$. Adding this up over all iterations, we get that the expected number of independence oracle queries used by the entire algorithm is at most
\[
	\bE\bigg[\sum_{i = 1}^\ell \{1 + |R_i| \cdot O(\log r)\}\bigg]
	=
	\ell + O(\log r) \cdot \sum_{i = 1}^\ell \bE[|R_i|]
	=
	\ell + O(\log r) \cdot \ell np'
	=
	O(\ell n p'\log r)
	\enspace.
	\qedhere
\]
\end{proof}

Our next goal is to analyze the approximation guarantee of Algorithm~\ref{alg:iid2}. To that end, we need the known properties of the multilinear extension given by the next lemma. One simple way to formally prove these properties is via the work~\cite{bian2017guaranteed}. Bian et al.~\cite{bian2017guaranteed} proved that the multilinear extension of a submodular function has a property known as DR-submodularity. Inequality~\eqref{prop:ML2} then follows immediately from the definition of this property (Definition~1 of~\cite{bian2017guaranteed}), and Inequality~\eqref{prop:ML1} holds since DR-submodular functions are concave along non-negative directions (Proposition~4 of~\cite{bian2017guaranteed}).

\begin{lemma} \label{lem:Mulitilinear_properties}
Let $F\colon [0, 1]^\cN \to \nnR$ be the multilinear extension of a non-negative submodular function $f$. Then,
\begin{align}
F(\vx) - F(\vx-\vy) & \geq \inner{\nabla F(\vx)}{\vy} && \forall\;  \vx, \vy\in [0, 1]^\cN, \vx \geq \vy \label{prop:ML1}\\
 F(\vx + \vz) - F(\vx) & \geq F(\vx + \vy + \vz) -F(\vx + \vy) && \forall\; \vx, \vy, \vz \in [0, 1]^\cN, \vx + \vy + \vz \leq \vone \label{prop:ML2}
\end{align}
\end{lemma}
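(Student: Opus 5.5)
Both inequalities reduce to standard facts about the multilinear extension of a submodular function, and I would prove them directly from Definition~\ref{def:multilinear}, without relying on the DR-submodularity framework.

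\textbf{Second-order cross derivatives.} Since $F$ is multilinear, $\partial F/\partial x_u(\vx) = F(\vx\,|\,x_u{=}1) - F(\vx\,|\,x_u{=}0)$. This equals $\bE[f(u \mid R_{\vx}\setminus\{u\})]$, where $R_{\vx}$ contains every element $w$ independently with probability $x_w$. Differentiating once more, for $u\neq w$,
\[
\frac{\partial^2 F}{\partial x_u\partial x_w}(\vx)=\bE\big[f(u\mid R\cup\{w\})-f(u\mid R)\big]\le 0 ,
\]
where $R$ is $R_{\vx}$ with $u$ and $w$ removed, and the sign comes from submodularity. Moreover, $\partial^2 F/\partial x_u^2 = 0$ by multilinearity. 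So every entry of the Hessian of $F$ is non-positive.

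\textbf{Inequality~\eqref{prop:ML2}.} I would define $g(t)=F(\vx+t\vy+\vz)-F(\vx+t\vy)$ for $t\in[0,1]$. Then $g'(t)=\langle\nabla F(\vx+t\vy+\vz)-\nabla F(\vx+t\vy),\vy\rangle$. Each gradient difference can be written as an integral of the Hessian applied to $\vz\ge 0$. Since the Hessian entries are non-positive and $\vy,\vz\ge 0$, this gives $g'(t)\le 0$. Hence $g(0)\ge g(1)$, which is exactly~\eqref{prop:ML2}. All evaluation points stay inside $[0,1]^\cN$ because $\vx+\vy+\vz\le\vone$.

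\textbf{Inequality~\eqref{prop:ML1}.} I would let $\phi(t)=F(\vx-\vy+t\vy)$ for $t\in[0,1]$. Then $\phi''(t)=\vy^{\top}\nabla^2F\,\vy\le 0$, again because all Hessian entries are non-positive and $\vy\ge 0$. So $\phi$ is concave, and the tangent-line inequality at $t=1$ gives
\[
\phi(1)-\phi(0)\ge\phi'(1)=\langle\nabla F(\vx),\vy\rangle ,
\]
which is~\eqref{prop:ML1}. The condition $\vx\ge\vy$ ensures that the segment stays in the cube.

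\textbf{Main obstacle.} The only delicate step is the cross-derivative computation. One must condition on the coordinates $u$ and $w$ and expand $F$ as a bilinear function of $(x_u,x_w)$, with coefficients given by expectations of $f$ over the remaining coordinates. Once that is done, the mixed coefficient is an expectation of $f(R+u+w)-f(R+w)-f(R+u)+f(R)$, which is $\le 0$ by submodularity, and the rest is routine calculus. Alternatively, as the paper notes, both inequalities can be cited directly from~\cite{bian2017guaranteed}: the definition of DR-submodularity gives~\eqref{prop:ML2}, and concavity along non-negative directions gives~\eqref{prop:ML1}.
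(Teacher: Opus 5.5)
Your proof is correct. It differs from the paper's only in being self-contained. The paper gives no argument of its own: it cites \cite{bian2017guaranteed}, reading \eqref{prop:ML2} off the definition of DR-submodularity and \eqref{prop:ML1} off the fact that DR-submodular functions are concave along non-negative directions.

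You instead establish the single fact underlying both cited results: every entry of the Hessian of $F$ is non-positive at points of $[0,1]^\cN$. On the diagonal it is zero by multilinearity. Off the diagonal it is an expectation of $f(R+u+w)-f(R+w)-f(R+u)+f(R)\le 0$. You then obtain \eqref{prop:ML2} from the monotonicity of $g(t)$, and \eqref{prop:ML1} from the concavity of $\phi$ together with the tangent-line inequality at $t=1$. Both one-dimensional arguments are sound.

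You also correctly check that every evaluation point stays in the cube, including the intermediate points $\vx+t\vy+s\vz$ used when integrating the Hessian. This matters because the cube is where the probabilistic formula for the mixed partials, and hence their sign, is valid.

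Your argument is essentially the mechanism behind the cited results for twice-differentiable functions. It buys independence from the reference and makes transparent that only submodularity is used, not non-negativity or monotonicity. The cost is a few lines of calculus, whereas the paper's citation is shorter.
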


We also need the following known lemma about matroids.


\begin{lemma}[Proved by~\cite{brualdi1969comments} and can also be found as Corollary~39.12a in~\cite{schrijver2003combinatorial}] \label{le:perfect_matching_two_bases}
Let $A$ and $B$ be two bases of a matroid $\cM = (\cN, \cI)$. Then, there exists a bijection $h\colon A \setminus B \rightarrow B \setminus A$ such that for every $u \in A \setminus B$, $(B - h(u)) + u \in \cI$.
\end{lemma}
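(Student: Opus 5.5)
The plan is to derive the bijection from Hall's marriage theorem applied to an auxiliary bipartite exchange graph. Since $A$ and $B$ are bases, $|A| = |B|$, and hence $|A \setminus B| = |B \setminus A|$. So a matching that saturates $A \setminus B$ is automatically a bijection onto $B \setminus A$.

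First I define the graph. For every $u \in A \setminus B$, the set $B + u$ is dependent because $B$ is a base. It therefore contains a unique circuit $C(u)$, and $u \in C(u)$. The standard circuit property gives $(B - b) + u \in \cI$ for every $b \in C(u) \setminus \{u\}$: removing any element of the unique circuit restores independence. Let $G$ be the bipartite graph with sides $A \setminus B$ and $B \setminus A$. Put an edge between $u$ and $b$ whenever $b \in C(u) \cap (B \setminus A)$. Any perfect matching of $G$, read as a function $h$, has the required property.

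Second, I verify Hall's condition for the side $A \setminus B$. Fix $X \subseteq A \setminus B$ and let $N(X) = \bigcup_{u \in X} C(u) \cap (B \setminus A)$ be its neighbourhood. Every $u \in X$ has $C(u) - u \subseteq B = (B \setminus A) \cup (A \cap B)$. The elements of $C(u) - u$ lying in $B \setminus A$ belong to $N(X)$ by definition. Hence $C(u) - u \subseteq N(X) \cup (A \cap B)$, so $u$ lies in the span of $N(X) \cup (A \cap B)$. It follows that the independent set $X \cup (A \cap B) \subseteq A$ is contained in the span of $N(X) \cup (A \cap B)$. Comparing ranks gives
\[
	|X| + |A \cap B| \leq r\big(N(X) \cup (A \cap B)\big) \leq |N(X)| + |A \cap B| \enspace,
\]
which yields $|N(X)| \geq |X|$.

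Finally, Hall's theorem gives a matching of $G$ that saturates $A \setminus B$. By the equal cardinalities noted above, this matching is perfect, and it defines the bijection $h$. The only delicate step is the rank comparison in the second step. Elements of $C(u)$ lying in $A \cap B$ are not neighbours in $G$, so they must be absorbed into the spanning set on both sides. This is exactly why $A \cap B$ is added to $X$ and to $N(X)$ before the ranks are compared.
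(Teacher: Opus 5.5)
Your proof is correct. The unique-circuit exchange property, the check of Hall's condition by comparing the rank of the independent set $X \cup (A \cap B)$ with that of $N(X) \cup (A \cap B)$, and the use of $|A \setminus B| = |B \setminus A|$ to turn a saturating matching into a bijection are all sound. The paper gives no proof of its own here; it defers to \cite{brualdi1969comments} and Corollary~39.12a of \cite{schrijver2003combinatorial}, and your argument is essentially the standard Hall's-theorem proof found in those sources.
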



Given the above tools, we are now ready to lower bound the increase in the value of the solution maintained by Algorithm~\ref{alg:iid2} in a single iteration.
\begin{lemma}\label{lem:main-iid}
For every integer $0 \leq i \leq \ell - 1$,
\[
\bE[F(h_i \cdot \characteristic_{A_i})]  
  \geq  \frac{\Delta h_{i}}{1+ \Delta h_i} \cdot f(OPT) + \frac{1}{1+ \Delta h_i} \cdot \bE[F(h_{i-1} \cdot \characteristic_{A_{i-1}})]
\]
\end{lemma}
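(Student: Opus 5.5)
The plan is to prove the bound for a single iteration conditioned on $A_{i-1}$; the expectations then follow by the law of total expectation. I read the index range as the iterations $i \in [\ell]$ of Algorithm~\ref{alg:iid2}, since the statement refers to $h_{i-1}$ and $A_{i-1}$. Fix $A \triangleq A_{i-1}$ and write $\vx \triangleq h_i \cdot \characteristic_A$. For $u \in A \cup \{\bot\}$ and $v \in \cN$, define
\[
g(u,v) \triangleq [F(h_i \cdot \characteristic_{A+v}) - F(\vx)] + [F(h_i \cdot \characteristic_{A-u}) - F(\vx)] \enspace.
\]
The multilinear extension of a submodular function has non-positive mixed second derivatives, and $F$ is multilinear. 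Hence $F(h_i\cdot\characteristic_{A-u+v}) - F(\vx) \geq g(u,v)$. Since the algorithm swaps only when the chosen pair has $g > 0$, we get deterministically
\[
F(h_i\cdot \characteristic_{A_i}) \geq F(\vx) + \max\Big\{0, \max_{v \in R_i,\, A-u+v\in\cI} g(u,v)\Big\} \enspace.
\]
In particular $F(h_i \cdot \characteristic_{A_i}) \geq F(\vx)$, which I will need at the very end.

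\textbf{Step 1 (exchange structure).} I would extend $A$ and $OPT$ to bases, padding with dummy elements as needed, and apply Lemma~\ref{le:perfect_matching_two_bases}. This should yield a map $\sigma\colon OPT\setminus A \to A\cup\{\bot\}$ such that $A - \sigma(o) + o \in \cI$ and every $u \in A$ is the image of at most one $o$; I use $\bot$ when $A+o \in \cI$. Elements of $OPT\cap A$ contribute nothing and can be dropped. This gives at most $r$ candidate pairs $(\sigma(o),o)$, with values $g_o \triangleq g(\sigma(o),o)$.

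\textbf{Step 2 (random sample).} Each $o$ lies in $R_i$ independently with probability $p'$. Sort the $g_o^+$ in non-increasing order and pad with zeros to exactly $r$ terms. Then
\[
\bE\Big[\max_{o\in R_i} g_o^+\Big] \geq \sum_{j=1}^r p'(1-p')^{j-1} g^+_{(j)} \enspace.
\]
The weights here are non-increasing and sum to $p = 1-(1-p')^r$, so a Chebyshev-sum argument gives at least $\frac{p}{r}\sum_o g_o$.

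\textbf{Step 3 (bounding $\sum_o g_o$).} By multilinearity, $F(h_i\cdot\characteristic_{A+o}) - F(\vx) = h_i\,\partial_o F(\vx)$ for $o \notin A$. By submodularity and monotonicity, $\sum_o h_i\,\partial_oF(\vx) \geq h_i\,(f(OPT) - F(\vx))$. Similarly, $F(\vx) - F(h_i\cdot\characteristic_{A-u}) = h_i\,\partial_u F(\vx)$, and each $u\in A$ is used at most once, so the total loss is at most $h_i\inner{\nabla F(\vx)}{\characteristic_A}$. Multiplying by $p/r$ and using Observation~\ref{obs:dh} ($\Delta h_i = h_i p/r$) gives
\[
\bE[F(h_i\cdot\characteristic_{A_i})] \geq F(\vx) + \Delta h_i\,(f(OPT) - F(\vx)) - \Delta h_i \inner{\nabla F(\vx)}{\characteristic_A} \enspace.
\]
Apply Inequality~\eqref{prop:ML1} with $\vy = \Delta h_i \cdot \characteristic_A$; this is valid since $h_{i-1} = h_i - \Delta h_i$. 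It gives $\Delta h_i\inner{\nabla F(\vx)}{\characteristic_A} \leq F(\vx) - F(h_{i-1}\cdot\characteristic_A)$. Hence
\[
\bE[F(h_i\cdot\characteristic_{A_i})] \geq F(h_{i-1}\cdot\characteristic_{A}) + \Delta h_i\,(f(OPT) - F(\vx)) \enspace.
\]

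\textbf{Finish.} Since $F(\vx) \leq F(h_i\cdot\characteristic_{A_i})$ deterministically, I can replace $F(\vx)$ by $F(h_i\cdot\characteristic_{A_i})$. Taking expectations over $A_{i-1}$ and moving the $\Delta h_i\,\bE[F(h_i\cdot\characteristic_{A_i})]$ term to the left gives $(1+\Delta h_i)\,\bE[F(h_i \cdot \characteristic_{A_i})] \geq \Delta h_i f(OPT) + \bE[F(h_{i-1}\cdot\characteristic_{A_{i-1}})]$, which is the claim.

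The main obstacle I expect is Steps 1--2. The exchange map must be injective into $A$ while $|OPT\setminus A|$ may differ from $|A\setminus OPT|$, which requires careful padding. Step 2 must also correctly handle that the maximum is taken over all feasible swaps, not only the pairs $(\sigma(o),o)$; this only helps, but should be stated explicitly.
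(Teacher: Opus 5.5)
Your overall route matches the paper's proof. The paper splits the increment into the height change, handled by Inequality~\eqref{prop:ML1}, and the swap, handled by a deterministic bound via Inequality~\eqref{prop:ML2}. It then applies Brualdi's bijection after extending $A$ to a base $B$. Partners lying in $B\setminus A$ are sent to $\bot$, which is exactly how the paper resolves your padding worry; no dummy elements are needed. Finally it passes to gradients by multilinearity and absorbs $F(\vx)\le F(h_i\cdot\characteristic_{A_i})$ at the end. Your Step~2 differs from the paper's. The paper assumes $OPT$ is a base, so that $\Pr[OPT\cap R_i\neq\varnothing]=p$ and a uniformly random element of $OPT\cap R_i$ is uniform over $OPT$. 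Your Chebyshev-sum bound with weights $p'(1-p')^{j-1}$ is also valid, and it does not need $|OPT|=r$.

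\textbf{The gap is in Step~3.} Because you dropped $OPT\cap A$ in Step~1, the sum $\sum_o h_i\,\partial_oF(\vx)$ ranges only over $OPT\setminus A$. The claimed bound $\sum_{o\in OPT\setminus A}\partial_oF(\vx)\ge f(OPT)-F(\vx)$ is then false. For example, take $f(S)=|S|$, $A=OPT$ a base, and $h_i<1$: the left side is $0$, while the right side is $r(1-h_i)>0$. Elements of $OPT\cap A$ contribute nothing to the swap value, but their gradient mass is needed to reach $f(OPT)-F(\vx)$.

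The combined inequality you want is still true, but proving it requires one more observation. The images of $\sigma$ lie in $(A\setminus OPT)\cup\{\bot\}$, because Brualdi's bijection maps into $B\setminus OPT$. So the loss is at most $h_i\inner{\nabla F(\vx)}{\characteristic_{A\setminus OPT}}$, not merely $h_i\inner{\nabla F(\vx)}{\characteristic_{A}}$. Then
\[
\sum_{o\in OPT\setminus A}\partial_oF(\vx)-\inner{\nabla F(\vx)}{\characteristic_{A\setminus OPT}}
=\sum_{o\in OPT}\partial_oF(\vx)-\inner{\nabla F(\vx)}{\characteristic_{A}}
\ge f(OPT)-F(\vx)-\inner{\nabla F(\vx)}{\characteristic_{A}} \enspace,
\]
where the last step uses monotonicity and submodularity over all of $OPT$.

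The paper avoids the issue differently: it keeps each $o\in OPT\cap A$ paired with itself ($g'(o)=o$). That term equals $h_i[F(h_i\cdot\characteristic_{A-o}+\characteristic_{\{o\}})-F(\vx)-\partial_oF(\vx)]=-h_i^2\,\partial_oF(\vx)\le 0$, which is dominated by the $\max\{0,\cdot\}$, so the sum can legitimately run over all of $OPT$. With either fix, the rest of your argument goes through.
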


\begin{proof}
By the linearity of expectation,
\begin{multline*}
	\bE[F(h_i \cdot \characteristic_{A_i}) - F(h_{i-1} \cdot \characteristic_{A_{i-1}})]
	=
	\bE\big[F(h_i \cdot \characteristic_{A_i}) - F(h_i \cdot \characteristic_{A_{i-1}})\big] \\ \bE\big[F(h_i \cdot \characteristic_{A_{i-1}}) - F(h_{i-1} \cdot \characteristic_{A_{i-1}})\big] \enspace.
\end{multline*}

Below, we lower bound separately each one of the expectations on right hand side of this inequality. Adding together the two lower bounds then proves the lemma. In the proofs of these lower bounds, we implicitly condition on an arbitrary fixed choice for the randomness used in the first $i - 1$ iterations of Algorithm~\ref{alg:iid2}. By the law of total expectation, lower bounds proved conditioned on an arbitrary such randomness apply also unconditionally.
\paragraph{Analysis of the expectation $\bE\big[F(h_i \cdot \characteristic_{A_{i-1}}) - F(h_{i-1} \cdot \characteristic_{A_{i-1}})\big]$:}
Since we condition on a fixed value for the randomness of the first $i - 1$ iterations, the set $A_{i - 1}$ is deterministic. Thus,
\begin{align*}
	\bE\big[F(h_i \cdot \characteristic_{A_{i-1}}) - F(h_{i-1} \cdot \characteristic_{A_{i-1}})\big]
	&=
    F(h_i \cdot \characteristic_{A_{i-1}})-F(h_{i-1} \cdot \characteristic_{A_{i-1}})\\ & = F(h_i \cdot \characteristic_{A_{i-1}})-F(h_i\cdot \characteristic_{A_{i-1}} -\Delta h_i \cdot \characteristic_{A_{i-1}}) \\
    & \geq \Delta h_i \cdot \inner{\nabla F(h_i \cdot \characteristic_{A_{i-1}})}{ \characteristic_{A_{i-1}}}
\end{align*}
where the inequality follows from Inequality~\eqref{prop:ML1} of Lemma~\ref{lem:Mulitilinear_properties}.
\paragraph{Analysis of the expectation $\bE\big[F(h_i \cdot \characteristic_{A_i}) - F(h_i \cdot \characteristic_{A_{i-1}})\big]$:}
Let us show that it deterministically holds that
\begin{equation} \label{eq:increase1}
	F(h_i \cdot \characteristic_{A_{i}})-F(h_i \cdot \characteristic_{A_{i-1}})
	\geq
	0
	\enspace,
\end{equation}
and furthermore, if $R_i \neq \varnothing$ then,
\begin{equation} \label{eq:increase2}
	F(h_i \cdot \characteristic_{A_{i}})-F(h_i \cdot \characteristic_{A_{i-1}})
	\geq
	F(h_i \cdot \characteristic_{A_{i-1}+v_i}) + F(h_i \cdot \characteristic_{A_{i-1}-u_i}) - 2 F(h_i \cdot \characteristic_{A_{i-1}})
	\enspace.
\end{equation}
When $R_i = \varnothing$, $A_i = A_{i - 1}$, and we are done. Therefore, we may assume that $R_i \neq \varnothing$. Let us consider two cases. If Algorithm~\ref{alg:iid2} sets $A_i$ to $A_{i - 1}$, then Inequality~\eqref{eq:increase1} is trivial, and Inequality~\eqref{eq:increase2} holds because in this case $F(h_i \cdot \characteristic_{A_{i-1}+v_i}) + F(h_i \cdot \characteristic_{A_{i-1}-u_i}) - 2 F(h_i \cdot \characteristic_{A_{i-1}}) \leq 0$. The other option is that $A_i$ is set to $A_{i - 1} - u_i + v_i$. In this case, $F(h_i \cdot \characteristic_{A_{i-1}+v_i}) + F(h_i \cdot \characteristic_{A_{i-1}-u_i}) - 2 F(h_i \cdot \characteristic_{A_{i-1}})$ is positive, and therefore, it is enough to prove Inequality~\eqref{eq:increase2}, as this will imply that Inequality~\eqref{eq:increase1} holds as well.
When $u_i \not \in A_{i - 1}$, Inequality~\eqref{eq:increase2} holds since both its sides are equal, and when $v_i \in A_{i - 1}$, Inequality~\eqref{eq:increase2} follows from the monotonicity of $f$. In the remaining case, i.e., when $v_i \not \in A_{i - 1}$ and $u_i \in A_{i - 1}$, Inequality~\eqref{eq:increase2} follows from Inequality~\eqref{prop:ML2} of Lemma~\ref{lem:Mulitilinear_properties}.

Let $B$ be an arbitrary base of the matroid containing the set $A_{i - 1}$. By the monotonicity of $f$, we may assume that $OPT$ is also a base, and thus, by Lemma~\ref{le:perfect_matching_two_bases}, there exists a bijection  $g\colon OPT \setminus B \rightarrow B \setminus OPT$ such that for every $o \in OPT \setminus B$, $(B - g(o)) + o \in \cI$. We extend $g$ to elements of $OPT \cap B$ by defining $g(o) = o$ for every such element $o$. Notice that this extension preserves the property that $(B - g(o)) + o \in \cI$ for every $o \in OPT$, and thus, if we denote
\[
	g'(o)
	\triangleq
	\begin{cases}
		g(o) & \text{if $g(o) \in A_{i - 1}$} \enspace,\\
		\bot & \text{if $g(o) \not \in A_{i - 1}$} \enspace,
	\end{cases}
\]
then we get also $(A_{i - 1} - g'(o)) + o \in \cI$. Let us now define $\characteristic_{\{\bot\}} \triangleq \varnothing$. Using these definitions and Inequalities~\eqref{eq:increase1} and~\eqref{eq:increase2}, we get
\begin{align}
 \bE[F&(h_i \cdot \characteristic_{A_{i}})-F(h_i \cdot \characteristic_{A_{i-1}})]\nonumber\\
	&\geq \Pr[OPT \cap R_i \neq \varnothing] \cdot \bE[F(h_i \cdot \characteristic_{A_{i}})-F(h_i \cdot \characteristic_{A_{i-1}}) \mid OPT \cap R_i \neq \varnothing]\nonumber\\
 &\geq \Pr[OPT \cap R_i \neq \varnothing] \cdot \bE[\max\{0, F(h_i \cdot \characteristic_{A_{i-1}+v_i}) + F(h_i \cdot \characteristic_{A_{i-1}-u_i}) \nonumber\\&\mspace{350mu} - 2 F(h_i \cdot \characteristic_{A_{i-1}})\} \mid OPT \cap R_i \neq \varnothing] \nonumber\\
 & \geq \frac{p}{r}\cdot \sum_{o\in OPT} \mspace{-9mu}\max\{0, F(h_i \cdot \characteristic_{A_{i-1}+o}) + F(h_i \cdot \characteristic_{A_{i-1}-g'(o)}) - 2 F(h_i \cdot \characteristic_{A_{i-1}})\} \label{ineq-02} \\
    & \geq \frac{ph_i}{r}\cdot \sum_{o\in OPT}\bigg[F(h_i \cdot \characteristic_{A_{i-1} - o} + \characteristic_{o})- F(h_i \cdot \characteristic_{A_{i-1}}) - \inner{\nabla F(h_i \cdot \characteristic_{A_{i-1}})}{ \characteristic_{\{g'(o)\}}}\bigg] \label{ineq-11}\\
    & \geq \frac{ph_i}{r}\cdot \left[f(OPT)- F(h_i \cdot \characteristic_{A_{i-1}})\right] -  \frac{p h_i}{r}\cdot \inner{\nabla F(h_i \cdot \characteristic_{A_{i-1}})}{ \characteristic_{A_{i-1}}} \label{ineq-13}\\
    & = \Delta h_i \cdot \left[f(OPT)- F(h_i \cdot \characteristic_{A_{i-1}})\right] -  \Delta h_i \cdot \inner{\nabla F(h_i \cdot \characteristic_{A_{i-1}})}{ \characteristic_{A_{i-1}}} \label{ineq-12}\\
		&\geq{}
		\Delta h_i \cdot \left[f(OPT)- F(h_i \cdot \characteristic_{A_{i}})\right] -  \Delta h_i \cdot \inner{\nabla F(h_i \cdot \characteristic_{A_{i-1}})}{ \characteristic_{A_{i-1}}} \label{ineq-14}
		\enspace,
\end{align}
where Inequality~\eqref{ineq-13} follows from the monotonicity and submodularity of $f$, Equality~\eqref{ineq-12} holds since $\Delta h_i  = h_i \cdot \frac{p}{r}$ by Observation~\ref{obs:dh}, and Inequality~\eqref{ineq-14} is another application of Inequality~\eqref{eq:increase1}.

The above inequality is the promised lower bound on the expectation $\bE[F(h_i \cdot \characteristic_{A_{i}})-F(h_i \cdot \characteristic_{A_{i-1}})]$. The rest of the proof is devoted to justifying Inequalities~\eqref{ineq-02} and~\eqref{ineq-11} in it. To see why Inequality~\eqref{ineq-02} holds, we need to make two observations. The first observation is that since we assume that $OPT$ is a base, $\Pr[R_i \cap OPT \neq \varnothing]= 1-(1-p')^r= p$. Recall now that Algorithm~\ref{alg:iid2} selects $v_i\in R_i$ and $u_i\in A_{i-1}$ that can be swapped and maximize $F(h_i \cdot \characteristic_{A_{i-1}+v_i}) + F(h_i \cdot \characteristic_{A_{i-1}-u_i})$. This implies that for every $o \in OPT \cap R_i$,
\[
	F(h_i \cdot \characteristic_{A_{i-1}+v_i}) + F(h_i \cdot \characteristic_{A_{i-1}-u_i})
	\geq
	F(h_i \cdot \characteristic_{A_{i-1}+o}) + F(h_i \cdot \characteristic_{A_{i-1}-g'(o)})
	\enspace.
\]
If we now define $o'$ to be a uniformly random element of $OPT \cap R_i$, then by symmetry, when we condition on $OPT \cap R_i \neq \varnothing$, $o'$ has an equal probability to be each one of the elements of $OPT$. Thus,
\begin{align*}
	\bE\big[\max\{0&, F(h_i \cdot \characteristic_{A_{i-1}+v_i}) + F(h_i \cdot \characteristic_{A_{i-1}-u_i}) - 2F(h_i \cdot \characteristic_{A_{i-1}})\} \mid OPT \cap R_i \neq \varnothing\big]\\
	\geq{} &
	\bE\big[\max\{0, F(h_i \cdot \characteristic_{A_{i-1}+o'}) + F(h_i \cdot \characteristic_{A_{i-1}-g'(o')}) - 2F(h_i \cdot \characteristic_{A_{i-1}})\} \mid OPT \cap R_i \neq \varnothing\big]\\
	={} &
	\frac{1}{r} \cdot \sum_{o \in OPT} \mspace{-9mu} \big[\max\{0, F(h_i \cdot \characteristic_{A_{i-1}+o}) + F(h_i \cdot \characteristic_{A_{i-1}-g'(o)}) - 2F(h_i \cdot \characteristic_{A_{i-1}})\}\big]
	\enspace.
\end{align*}

Let us now justify Inequality~\eqref{ineq-11}. If $o \not \in A_i$, then the multilinearity of $F$ implies that
\begin{multline*}
	F(h_i \cdot \characteristic_{A_{i-1}+o}) + F(h_i \cdot \characteristic_{A_{i-1}-g'(o)}) - 2 F(h_i \cdot \characteristic_{A_{i-1}})\\
  =
	h_i \cdot [F(h_i \cdot \characteristic_{A_{i-1} - o} + \characteristic_{o})- F(h_i \cdot \characteristic_{A_{i-1}}) - \inner{\nabla F(h_i \cdot \characteristic_{A_{i-1}})}{ \characteristic_{\{g'(o)\}}}]
	\enspace.
\end{multline*}
If $o \in A_i$, then $g'(o) = g(o) = o$, which implies
\begin{align*}
	h_i \cdot [F(h_i \cdot \characteristic_{A_{i-1} - o} +{}& \characteristic_{o})- F(h_i \cdot \characteristic_{A_{i-1}}) - \inner{\nabla F(h_i \cdot \characteristic_{A_{i-1}})}{ \characteristic_{\{g'(o)\}}}]\\
	={} &
	h_i \cdot [(1 - h_i) \cdot \inner{\nabla F(h_i \cdot \characteristic_{A_{i-1}})}{ \characteristic_{\{o\}}} - \inner{\nabla F(h_i \cdot \characteristic_{A_{i-1}})}{ \characteristic_{\{o\}}}]\\
	={} &
	-h^2_i \cdot \inner{\nabla F(h_i \cdot \characteristic_{A_{i-1}})}{ \characteristic_{\{o\}}}
	\leq
	0
	\enspace,
\end{align*}
where the inequality follows from the monotonicity of $f$. 
\end{proof}

We are now ready to prove Proposition~\ref{thm:iid-value}.

\begin{proof}[Proof of Proposition~\ref{thm:iid-value}]
Lemma~\ref{lem:queries} already shows that Algorithm~\ref{alg:iid2} obeys the bounds stated in the proposition on the expected number of oracle queries. Therefore, we concentrate in this proof on bounding the approximation guarantee of Algorithm~\ref{alg:iid2}. For every $i \in [\ell]$, rearranging the guarantee of Lemma~\ref{lem:main-iid} yields
\[F(OPT)- \bE[F(h_i \cdot \characteristic_{A_i})] \leq \frac{1}{1+\Delta h_i}\cdot \left(f(OPT) - \bE[F(h_{i-1} \cdot \characteristic_{A_{i-1}})]\right)\]
Combining these inequalities for all $i$, we get
\begin{align*}
	F(OPT)- \bE[F(h_\ell \cdot \characteristic_{A_\ell})]
	\leq{} &
	\big(f(OPT) - F(h_0 \cdot \characteristic_{A_{0}})\big)\cdot \frac{1}{\prod_{i=1}^{\ell}(1+\Delta h_i)}\\
	={} &
	\big(f(OPT) - F(h \cdot \characteristic_{A_{0}})\big)\cdot \frac{1}{\prod_{i=1}^{\ell}(1+\Delta h_i)}
	\enspace,
\end{align*}
which is equivalent to the first inequality of Proposition~\ref{thm:iid-value}. Furthermore, as $f(OPT) \geq f(A_0) \geq F(h \cdot \characteristic_{A_{0}})$, we get
\[
	\bE[F(h_\ell \cdot \characteristic_{A_\ell})]
	\geq
	(1 - \lambda) \cdot f(OPT) + \lambda \cdot F(h_0 \cdot \characteristic_{A_{0}})
\]
for any $\lambda$ between $\frac{1}{\prod_{i=1}^{\ell}(1+\Delta h_i)}$ and $1$.

Notice now that since the inequality $\ln(1+x)\geq x-x^2/2$ holds for every $x \geq 0$,
\begin{multline*}
	\frac{1}{\prod_{i=1}^{\ell}(1+\Delta h_i)}
	=
	\exp\bigg(-\sum_{i = 1}^\ell \ln(1 + \Delta h_i)\bigg)
	\leq
	\exp\bigg(-\sum_{i=1}^{\ell}\Delta h_i+\frac{1}{2}\sum_{i=1}^{\ell}(\Delta h_i)^2\bigg)\\
	\leq
	\exp\bigg(h_0 - h_{\ell - 1} - \frac{1}{2} \cdot \Delta h_\ell\bigg)
	\leq
	\exp(h_0 - h_{\ell - 1})
	=
	\exp\bigg(h \cdot \bigg(1 - \bigg(1 + \frac{p}{r - p}\bigg)^{\!\ell - 1}\bigg)\bigg)
	\enspace,
\end{multline*}
where the second inequality holds because the fact that $\sum_{i = 1}^\ell \Delta h_i = h_\ell - h_0 \leq 1$ implies the inequality
$
	\sum_{i=1}^{\ell}(\Delta h_i)^2
	\leq
	\Delta h_\ell \cdot \sum_{i=1}^{\ell} \Delta h_i
	\leq
	\Delta h_\ell
$.
Thus, as is guaranteed by the proposition, it is fine to use any $\lambda$ between the rightmost side of the above inequality and $1$.
\end{proof}

\subsubsection{Choosing the Parameters for Our Algorithm} \label{sssc:parameters}

To finalize the analysis of Algorithm~\ref{alg:iid2}, we need to specify values for its parameters. Recall that these parameters are the initial height $h\in (0,1)$, either a probability $p\in (0,1]$ or a probability $p' \in (0, 1]$ (setting either one of these probabilities yields a value for the other one), and a non-negative  integer $\ell\leq \ln(h^{-1})/\ln(1+\nicefrac{p}{(r-p)})$. The following theorem describes and studies a set of parameters that leads to a roughly $(1 - 1/e)$-approximation offline algorithm. While this algorithm is not a necessary step for deriving our semi-streaming results, it might be of independent interest.
\begin{theorem}[Offline Algorithm]
Let $f\colon 2^\cN \rightarrow \nnR$ be a non-negative monotone submodular function and $\cM = (\cN, \cI)$ be a matroid. For a sufficiently small $\delta>0$, invoking Algorithm~\ref{alg:iid2} with $A_0=\varnothing$, $h=\delta$, $p=\min\{\delta \cdot r, \frac{1}{2}\}$ and $\ell= \lfloor \ln(\delta^{-1})/\ln(1+\nicefrac{p}{(r-p)}) \rfloor$ (the maximal possible value) yields an output set $S\in \cI$ such that 
\[\bE[f(S)] \geq\big(1-1/e- O(\delta)\big)\cdot f(OPT) \enspace.\]
The algorithm performs
$O((r^2+ r/\delta +n)\cdot \log(1/\delta) )$ value oracle queries to the multilinear extension $F$ of $f$ and $O(n \log r \log(1/\delta))$ independence oracle queries to the matroid.
\end{theorem}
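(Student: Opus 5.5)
The plan is to instantiate Proposition~\ref{thm:iid-value} with the stated parameters and bound the resulting $\lambda$ and the query counts. With $A_0=\varnothing$ we have $F(h\cdot\characteristic_{A_0})=f(\varnothing)\ge 0$, so the proposition gives $\bE[F(h_\ell\cdot\characteristic_{A_\ell})]\ge(1-\lambda)f(OPT)$ for any admissible $\lambda\ge\exp(h-h_{\ell-1})$. Since $f$ is monotone and $h_\ell\le 1$, the vector $h_\ell\cdot\characteristic_{S}$ is dominated by $\characteristic_S$, so $f(S)\ge F(h_\ell\cdot\characteristic_{S})$ for $S=A_\ell$. Hence it suffices to show that $h_{\ell-1}\ge 1-O(\delta)$, because then $\lambda=\exp(\delta-1+O(\delta))=e^{-1}(1+O(\delta))$ and the guarantee becomes $(1-1/e-O(\delta))f(OPT)$.

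For the heights, let $q=1+\frac{p}{r-p}$. By the choice of $\ell$ as the maximal integer with $\delta q^\ell\le 1$, we get $h_\ell=\delta q^\ell> q^{-1}$, and therefore $h_{\ell-1}=h_\ell/q> q^{-2}$. The needed claim is $q=1+O(\delta)$. Here $p=\min\{\delta r,1/2\}\le\delta r$ and $r-p\ge r/2$ (as $r\ge1$ and $p\le 1/2$), so $\frac{p}{r-p}\le 2\delta$. This gives $h_{\ell-1}\ge(1+2\delta)^{-2}\ge 1-4\delta$. One must also check that $p'$ is well defined: $p\le 1/2<1$, so $p'=1-(1-p)^{1/r}\in(0,1)$.

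For the query bounds, I would invoke Lemma~\ref{lem:queries}, which gives $O(\ell(np'+r))$ value queries and $O(\ell np'\log r)$ independence queries. Two estimates are needed.
\begin{itemize}
\item \textbf{Bounding $\ell$.} We have $\ell\le\ln(1/\delta)/\ln q$ and $\ln q\ge\frac{p}{r-p}\cdot\frac{1}{1+p/(r-p)}=p/r$. Hence $\ell\le\frac{r}{p}\ln(1/\delta)$.
\item \textbf{Bounding $p'$.} From $(1-p')^r=1-p$ we get $p'\le-\ln(1-p)/r\le 2p/r$ for $p\le1/2$.
\end{itemize}
Combining these, $\ell np'\le 2n\ln(1/\delta)$. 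Also $\ell r\le\frac{r^2}{p}\ln(1/\delta)$, where $\frac{r^2}{p}=\max\{r/\delta,2r^2\}$. This yields $O((r^2+r/\delta+n)\log(1/\delta))$ value queries and $O(n\log r\log(1/\delta))$ independence queries.

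The main obstacle is the bookkeeping in the $\lambda$ bound. The proposition uses $h_{\ell-1}$ rather than $h_\ell$, so I need the step ratio $q$ to be $1+O(\delta)$ uniformly over all $r$, including the regime $\delta r>1/2$ where $p=1/2$. In that case $p/(r-p)\le 1/(2r-1)$, which is still $O(\delta)$ because $r>1/(2\delta)$. The case split on which term attains the minimum in $p$ is where care is needed. The remaining degenerate case is $\ell=0$; it cannot occur for small $\delta$, since $q\le1+2\delta<1/\delta$.
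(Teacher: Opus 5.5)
Your proposal is correct and follows essentially the same route as the paper: you instantiate Proposition~\ref{thm:iid-value} with $A_0=\varnothing$, show $h_{\ell-1}\ge(1+\frac{p}{r-p})^{-2}\ge 1-O(\delta)$ using $\frac{p}{r-p}\le 2\delta$ so that $\lambda=e^{5\delta-1}$ is admissible, and bound the query counts via $\ell\le\frac{r}{p}\ln(1/\delta)$ and $p'\le 2p/r$. The differences are only cosmetic. You bound $p'$ through $-\ln(1-p)/r$ instead of the paper's chain $1-\sqrt[r]{1-2p+2p^2}\le 1-e^{-2p/r}$, and you explicitly rule out the degenerate case $\ell=0$, which the paper leaves implicit.
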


\begin{proof}
We begin by bounding the number of oracle queries used by the algorithm. Notice that
\[
	\ln(1+\nicefrac{p}{(r-p)})
	\geq
	\frac{\nicefrac{p}{(r-p)}}{1+\nicefrac{p}{(r-p)}}
	=
	\frac{p}{r}
	=
	\frac{\min\{\delta \cdot r, \frac{1}{2}\}}{r}
	=
	\min\Big\{\delta, \frac{1}{2r}\Big\}	
	\enspace,
\]
and thus, $\ell \leq \ln(\delta^{-1})/\ln(1+\nicefrac{p}{(r-p)})= O(\max\{r, 1/\delta\}\log(\delta^{-1}))$. Similarly,
\[
	p'
	=
	1 - \sqrt[r]{1 - p}
	\leq
	1 - \sqrt[r]{1 - 2p + 2p^2}
	\leq
	1 - e^{-2p/r}
	\leq
	\frac{2p}{r}
	=
	\min\Big\{2\delta, \frac{1}{r}\Big\}
	\enspace.
\]
Given these bounds, by Proposition~\ref{thm:iid-value}, the total number of value oracle queries to $F$ is 
\begin{align*}
	O(\ell \cdot (n p'+r))
	={} &
	O\bigg(\max\{r, 1/\delta\}\log(\delta^{-1}) \cdot (n \cdot \min\{1/r, \delta\} + r)\bigg)\\
	={} &
	O((n + \max\{r, 1/\delta\}\cdot r)\cdot \log(1/\delta))
	=
	O((n + r^2 + r/\delta)\cdot \log(1/\delta))
	\enspace,
\end{align*}
and the total number of independence oracle queries is
\[O(\ell n p'\log r)= O(\max\{r, 1/\delta\} \log(\delta^{-1}) \cdot n \cdot \min\{1/r, \delta\} \cdot \log r) = O(n \log r \log(\delta^{-1}))\enspace.\]

Our next goal is to bound the approximation guarantee of the algorithm. Note that
\begin{align*}
	\exp\bigg(h\cdot \bigg(1 - \bigg(1+ \frac{p}{r-p}\bigg)^{\ell - 1}\bigg)\bigg)
	\leq{} &
	\exp\bigg(\delta \cdot \bigg(1 - \bigg(1+ \frac{p}{r-p}\bigg)^{\ln(\delta^{-1})/\ln(1+\nicefrac{p}{(r-p)}) - 2}\bigg)\bigg)\\
	={} &
	\exp\bigg(\delta - \bigg(1+ \frac{p}{r-p}\bigg)^{-2}\bigg)
	\leq
	\exp\bigg(\delta - 1 + \frac{2p}{r-p}\bigg)\\
	\leq{} &
	\exp\bigg(\delta - 1 + \frac{4p}{r}\bigg)
	\leq
	\exp\bigg(5\delta - 1\bigg)
	\leq
	e^{-1} + O(\delta)
	\enspace,
\end{align*}
where the third inequality holds since $r - p \geq 1 - 1/2 = 1/2$. For a small enough $\delta$, the expression $e^{-1} + O(\delta)$ is upper bounded by $1$, and thus, can be used as $\lambda$ in Proposition~\ref{thm:iid-value}, which implies
\begin{align*}
\bE[f(S)] \geq \bE[F(h_{\ell} \cdot \characteristic_{S})] & \geq  \big(1 - (e^{-1} + O(\delta))\big) \cdot f(OPT) + (e^{-1} + O(\delta)) \cdot f(\varnothing) \\
& \geq \big(1-1/e- O(\delta)\big)\cdot f(OPT) \enspace,
\end{align*}
where the first inequality follows from the monotonicity of $f$ and last inequality follows from the non-negativity of $f$.
%
\end{proof}

Another interesting set of parameters for Algorithm~\ref{alg:iid2} aims to boost $A_0$ using samples (sets $R_i$) whose total size does not exceed, in expectation, the size $n$ of the ground set. The following lemma presents and analyzes this set of parameters. Note that the approximation ratio guaranteed by this lemma is identical to the one guaranteed by Proposition~\ref{prop:main} for a semi-streaming algorithm. In fact, the proof of Proposition~\ref{prop:main} in Section~\ref{sec:stream-matroid1} works by showing that given the set of parameters defined by Lemma~\ref{lem:niid}, Algorithm~\ref{alg:iid2} (with a small modification) can be implemented as a semi-streaming algorithm. Intuitively, this is possible because of the above mentioned bound on the (expected) total size of the samples used.

\begin{lemma}\label{lem:niid}
Let $f\colon 2^\cN \rightarrow \nnR$ be a non-negative monotone submodular function and $\cM = (\cN, \cI)$ be a matroid. For a sufficiently small $\delta>0$ and $\delta' = \delta/9$, invoking Algorithm~\ref{alg:iid2} with $h\in (0, 1/e]$, $p'=\frac{\delta'}{r}$, $\ell= \lfloor \frac{r}{\delta'}\rfloor - 1$ and a set $A \in \cI$ yields an output set $S\in \cI$ such that 
\[
\bE[F(e\cdot h \cdot \characteristic_S )]\ge (1-e^{-h(e-1)})(1 - \delta)\cdot f(OPT)+e^{-h(e-1)}\cdot F(h\cdot \characteristic_{A}) \enspace.
\]
The algorithm performs $O(n + r^2/\delta)$ value oracle queries to the multilinear extension $F$ of $f$ and $O(n \log r)$ independence oracle queries to the matroid.
\end{lemma}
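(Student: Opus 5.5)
This is a direct instantiation of Proposition~\ref{thm:iid-value} with the stated parameters, followed by a height-rescaling argument. There are three steps: check the parameter choice is legal, bound $\lambda$ and $h_\ell$, and then move from height $h_\ell$ to height $e\cdot h$.

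\textbf{Legality and the quantities $p$ and $\Delta h_i$.} With $p'=\delta'/r$ we have $p = 1-(1-\delta'/r)^r \in [\delta'(1-\delta'/2), \delta']$. Hence $\ln(1+\nicefrac{p}{(r-p)}) = -\ln(1-p/r)$ lies between $p/r$ and $p/r + O(p^2/r^2)$. I need $\ell \leq \ln(h^{-1})/\ln(1+\nicefrac{p}{(r-p)})$. Since $h\le 1/e$ we have $\ln(h^{-1})\ge 1$, so it suffices that $\ell\cdot(-\ln(1-p/r)) \le 1$. This holds because $\ell < r/\delta'$ and $-\ln(1-p/r)\le p/(r-p)\le \delta'/(r-\delta')$; the latter requires checking $(r/\delta' - 1)\cdot \delta'/(r-\delta') = 1$, which works exactly, so $h_\ell\le 1$.

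\textbf{Bounding $\lambda$ and $h_\ell$.} Using $(1+\nicefrac{p}{r-p})^{j} = (1-p/r)^{-j} \ge e^{jp/r}$, I get $h_{\ell-1} \ge h\, e^{(\ell-1)p/r}$. Since $(\ell-1)p/r \ge (r/\delta'-3)\delta'(1-\delta'/2)/r \ge 1 - O(\delta')$, this gives $h_{\ell-1}\ge h e^{1-O(\delta')}$. Therefore
\[
\exp(h - h_{\ell-1}) \le e^{-h(e-1)}\cdot e^{O(\delta')h},
\]
and Proposition~\ref{thm:iid-value} may be applied with $\lambda=\min\{1,\exp(h-h_{\ell-1})\}$. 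Because $f(OPT)\ge F(h\characteristic_A)$, the right-hand side is monotone decreasing in $\lambda$. So I may instead use $\lambda_0 = e^{-h(e-1)}$, at the cost of replacing $\lambda$ by $\lambda_0$ in the $f(OPT)$ coefficient up to an additive error $O(\delta' h)\le O(\delta')(1-\lambda_0)$. That last inequality holds because $1-e^{-h(e-1)}\ge c\,h$ for $h\le 1/e$. The upshot is $\bE[F(h_\ell\characteristic_{S})]\ge (1-\lambda_0)(1-O(\delta'))f(OPT)+\lambda_0F(h\characteristic_A)$.

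\textbf{From height $h_\ell$ to height $e\cdot h$.} I still need $F(e h\,\characteristic_S) \ge F(h_\ell \characteristic_S)$ up to an $O(\delta')$ factor. The issue is that $h_\ell = h(1-p/r)^{-\ell}$ may slightly exceed $eh$. I would handle it via concavity along the direction $\characteristic_S$: the map $t\mapsto F(t\characteristic_S)$ is concave and nonnegative with $F(0)\ge 0$, so $F(eh\characteristic_S)\ge \frac{eh}{h_\ell}F(h_\ell\characteristic_S)$ whenever $h_\ell\ge eh$. Since $h_\ell\le h e^{\ell p/(r-p)}\le h e^{1+O(\delta')}$, the loss is a $(1-O(\delta'))$ factor. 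If instead $h_\ell\le eh$, monotonicity gives the bound directly. This step costs a $(1-O(\delta'))$ factor on $F(h\characteristic_A)$ as well. I expect that to be the main obstacle, since the statement has coefficient exactly $\lambda_0$ there. The fix I'd try first is to compare against $h_{\ell-1}\le eh$ (which follows from the computation above), giving $h_\ell = h_{\ell-1}(1+O(\delta'/r))$ and hence a multiplicative loss of only $1-O(\delta'/r)$. If that is still not exact, I would absorb the loss into $f(OPT)$ using $F(h\characteristic_A)\le f(OPT)$ and $\lambda_0\delta'\le O(\delta')(1-\lambda_0)$, again via $1-\lambda_0=\Omega(h)$. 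This last reduction is delicate for tiny $h$, and would need the loss to scale with $h$, which it does since the overshoot is $O(\delta' h)$. Finally I would choose the constant $9$ so that all $O(\delta')$ losses total at most $\delta$.

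\textbf{Query bounds.} These follow from Proposition~\ref{thm:iid-value} with $\ell=O(r/\delta)$ and $np'=O(n\delta/r)$. This gives $O(\ell(np'+r)) = O(n + r^2/\delta)$ value queries and $O(\ell np'\log r)=O(n\log r)$ independence queries.
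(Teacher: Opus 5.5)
Your steps 1 and 2 and the query count follow the paper's proof: legality via $p\le\delta'$; the bound $h_{\ell-1}\ge h(e-O(\delta'))$ giving $\exp(h-h_{\ell-1})\le e^{-h(e-1)}+O(\delta' h)$; using that value as $\lambda$ in Proposition~\ref{thm:iid-value}; then dropping the surplus weight on $F(h\cdot\characteristic_A)$ by non-negativity and charging the $O(\delta' h)$ to the $f(OPT)$ term via $1-e^{-h(e-1)}=\Omega(h)$. The flaw is in step~3, where you misdiagnose the situation. The obstacle you identify does not exist.

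The inequality you checked in step~1,
\[
\ell\cdot\ln\bigl(1+\tfrac{p}{r-p}\bigr)\le\bigl(\tfrac{r}{\delta'}-1\bigr)\cdot\tfrac{\delta'}{r-\delta'}=1,
\]
says exactly that $\bigl(1+\tfrac{p}{r-p}\bigr)^{\ell}\le e$, i.e., $h_\ell\le eh$. Indeed, that is how you concluded $h_\ell\le 1$ there. So $h_\ell$ never exceeds $eh$, and $\bE[F(eh\cdot\characteristic_S)]\ge\bE[F(h_\ell\cdot\characteristic_S)]$ follows from monotonicity alone. This is precisely how the paper finishes. Your later estimate $h_\ell\le he^{1+O(\delta')}$ simply throws away this exact bound. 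With that correction, your proof is complete.

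You should also know that your fallbacks would not have worked if the overshoot were real. Rescaling by concavity costs a \emph{relative} factor $1-O(\delta')$ on the whole right-hand side; stepping down to $h_{\ell-1}$ costs a relative factor $1-p/r$. Either way, the term $\lambda_0 F(h\cdot\characteristic_A)$ is hit too. That loss does not scale with $h$, because $F(h\cdot\characteristic_A)$ need not be $O(h)\cdot f(OPT)$. This happens, e.g., if $f(\varnothing)>0$, or if $f(S)=\min\{1,|S|\}$ and $h|A|\gg 1$. Meanwhile the available slack $\delta(1-e^{-h(e-1)})f(OPT)$ is only $\Theta(\delta h)\, f(OPT)$. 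An additive height overshoot of $O(\delta' h)$ does not become an $O(\delta' h)\cdot f(OPT)$ loss.

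Finally, $\delta'=\delta/9$ is fixed by the statement, not chosen, so the $O(\delta')$ in step~2 must be tracked explicitly. Your bound $(\ell-1)p/r\ge(1-3\delta'/r)(1-\delta'/2)$ gives a loss of about $3.5e\,h\delta'\approx 9.5h\delta'$ when $r=1$. That fits within $9\delta'(1-e^{-h(e-1)})$ only if you use the concavity bound $1-e^{-h(e-1)}\ge e\bigl(1-e^{-(e-1)/e}\bigr)h\approx 1.27h$ on $(0,1/e]$, not the cruder $h(e-1)/2$.
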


\begin{proof}
We begin the proof by observing that the value specified for $\ell$ above falls within the allowed range. This is true since the fact that $p=1-(1-\frac{\delta'}{r})^r \leq \delta'$ implies that
\[
	\frac{\ln(h^{-1})}{\ln(1+\nicefrac{p}{(r-p)})}
	\geq
	\frac{1}{\ln(1+\nicefrac{\delta'}{(r-\delta')})}
	\geq
	\frac{1}{\nicefrac{\delta'}{(r-\delta')}}
	=
	\frac{r}{\delta'} - 1
	\geq
	\ell
	\enspace.
\]

Let us now bound the number of oracle queries. By Proposition~\ref{thm:iid-value}, the total number of value oracle queries to $F$ is 
\[
	O(\ell \cdot (n p'+r))
	=
	O\bigg(\frac{r}{\delta} \cdot \bigg(n \cdot \frac{\delta}{r} + r\bigg)\bigg)
	=
	O(n + r^2/\delta)
	\enspace,
\]
and the total number of independence oracle queries is
\[O(\ell n p'\log r)= O\big(\frac{r}{\delta} \cdot n \cdot \frac{\delta}{r} \cdot \log r\big) = O(n \log r)\enspace.\]

It remains to bound the approximation guarantee of the algorithm. Towards this goal, notice that
\begin{multline*}
	\bigg(1+ \frac{p}{r-p}\bigg)^{\!\ell - 1} \geq \bigg(1+ \frac{p}{r-p}\bigg)^{\! r/\delta' - 3}
    \geq \bigg(1+ \frac{\delta' - (\delta')^2/2}{r}\bigg)^{\! r/\delta' - 3}\\
		\geq e^{1 - \delta'/2} \cdot \bigg(1+ \frac{\delta' - (\delta'^2)/2}{r}\bigg)^{\!-4}
    \geq e^{1 - \delta'/2} \cdot \bigg(1+ \frac{\delta'}{2}\bigg)^{\!- 4} \geq e(1 - \delta'/2) \cdot (1 - 2 \delta') \geq e- 7\delta'
		\enspace,
\end{multline*}
where the second inequality holds since $r/\delta' \geq 3$ and $p = 1-(1-\frac{\delta'}{r})^r \geq 1-e^{-\delta'} \geq \delta' - (\delta')^2/2$. Using the previous inequality, we can now also get
\begin{align*}
	\exp\bigg(h\cdot \bigg(1 - \bigg(1+ \frac{p}{r-p}\bigg)^{\ell - 1}\bigg)\bigg)
	\leq{} &
	\exp\big(h \cdot (1 - e + 7\delta'))\big)
	\leq
	e^{-h(e - 1)} + 7h\delta'\\
	\leq{} &
	e^{-h(e - 1)} + 9\delta'(1 - e^{-h(e - 1)})
	=
	e^{-h(e - 1)} + \delta(1 - e^{-h(e - 1)})
	\enspace,
\end{align*}
where the second inequality holds for a small enough $\delta$, and the last inequality uses the fact that $1 - e^{-h(e - 1)} \geq h(e - 1)/2$. Since the expression $e^{-h(e - 1)} + \delta(1 - e^{-h(e - 1)})$ is upper bounded by $1$, the last inequality implies that this expression can be used as $\lambda$ in Proposition~\ref{thm:iid-value}. Doing so yields
\begin{align*}
\bE[F&(e \cdot h \cdot \characteristic_S)] \geq \bE[F(h_{\ell} \cdot \characteristic_{S})]\\ \geq{} &  \big(1 - e^{-h(e - 1)} - \delta(1 - e^{-h(e - 1)}))\big) \cdot f(OPT) + (e^{-h(e - 1)} + \delta(1 - e^{-h(e - 1)})) \cdot F(h \cdot \characteristic_A) \\
\geq{} & (1-e^{-h(e-1)})(1 - \delta)\cdot f(OPT)+e^{-h(e-1)}\cdot F(h\cdot \characteristic_{A}) \enspace,
\end{align*}
where the last inequality follows from the non-negativity of $f$, and the first inequality follows from the monotonicity of $f$ since
\[
	h_\ell
	=
	h \cdot \bigg(1 + \frac{p}{r - p}\bigg)^{\!\ell}
	\leq
	h \cdot \bigg(1 + \frac{\delta'}{r - \delta'}\bigg)^{\!r/\delta' - 1}
	\leq
	h \cdot e
	\enspace.
	\qedhere
\]
\end{proof}

\subsection{A Single-Pass Streaming Algorithm}\label{sec:stream-matroid1}

In this section, we prove Proposition~\ref{prop:main}. As mentioned above, we do that by showing that a close variant of Algorithm~\ref{alg:iid2} can be implemented as a semi-streaming algorithm when $p' \cdot \ell \leq 1$ and $\ell$ is not too large. The main obstacle for implementing Algorithm~\ref{alg:iid2} as a semi-streaming algorithm is that computing the sets $R_i$ requires random access to the stream. When the arrival order of the elements is uniformly random, it is natural to try to overcome this difficulty by splitting the stream into consecutive windows, and then treating the elements in window $i$ as the set $R_i$. Unfortunately, this creates correlations between the elements of the different windows, which is problematic. Agrawal et al.~\cite{agrawal2019submodular} found a way to bypass this issue by adding to $R_i$ elements that appeared previously in the algorithm's solution. The ideas of Agrawal et al.~\cite{agrawal2019submodular} were later used and improved in other works~\cite{S20,liu2021cardinality,feldman2026streaming}, and this section is essentially an application of these ideas in the context of Algorithm~\ref{alg:iid2}.

Given a non-negative integer $\ell$ and a probability $p'$ such that $\ell \cdot p' \leq 1$, consider the following two random processes for constructing sets $R_1, R_2, \dotsc, R_{\ell}$.
\begin{itemize}
    \item {\bf Process 1:} For each element $u\in\cN$, independently, assign it to at most one of the sets $R_1, R_2, \dotsc, R_{\ell}$, where $u$ is assigned to each set $R_i$ with probability $p'$.
\item {\bf Process 2:} Throw $n$ balls into $\ell$ bins in the following way. For each ball, with probability $1 - \ell p'$ discard it, and otherwise, throw it into a uniformly random bin. The decision for each ball is made independently. Let $w_i$ denote the number of balls that ended up in bin $i$. Process the elements in $\cN$ in a uniformly random order $\pi$, and set $R_i$ to be the set of elements that appear in the order $\pi$ in positions $(\sum_{j=1}^{i-1}w_j)+1, (\sum_{j=1}^{i-1}w_j)+2, \ldots, \sum_{j=1}^{i}w_j$.
\end{itemize}

We have the following observation.
\begin{observation}
The two above random processes produce the same distribution. In other words, for any $\ell$ disjoint subsets $\cN_1, \cN_2, \ldots, \cN_{\ell}$ of $\cN$, the probability $\Pr[\forall_{i \in [\ell]}\; R_i=\cN_i]$ is exactly the same under both processes. 
\end{observation}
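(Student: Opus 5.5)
We will show that both processes yield, for every fixed family of disjoint sets $\cN_1, \dotsc, \cN_\ell$ with sizes $k_i = |\cN_i|$, the probability
\[
	\prod_{i=1}^{\ell} (p')^{k_i} \cdot (1 - \ell p')^{n - K}
	\enspace,
	\qquad\text{where } K \triangleq \sum_{i=1}^{\ell} k_i \enspace.
\]

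\textbf{Process~1.} This is immediate. In Process~1 each element independently lands in $R_i$ with probability $p'$, and lands in no set with probability $1 - \ell p'$; this is a valid distribution because $\ell p' \leq 1$. The event $\forall_{i}\, R_i = \cN_i$ fixes the destination of every element: elements of $\cN_i$ go to $R_i$, and elements outside $\cup_i \cN_i$ go nowhere. Multiplying over the independent elements gives the displayed product.

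\textbf{Process~2.} Here the argument has two steps.
\begin{itemize}
\item First, the vector $(w_1, \dotsc, w_\ell)$ must equal $(k_1, \dotsc, k_\ell)$. Each ball independently goes to bin $i$ with probability $p'$ and is discarded with probability $1 - \ell p'$. The counts therefore follow a multinomial distribution, and
\[
	\Pr[\forall_i\; w_i = k_i] = \frac{n!}{(n-K)!\,\prod_i k_i!} \cdot \prod_i (p')^{k_i} \cdot (1 - \ell p')^{n-K} \enspace.
\]
\item Second, condition on these counts. The order $\pi$ is uniform and independent of the balls, and the blocks of positions of $\pi$ are determined by the $w_i$'s. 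The event then requires that the first $k_1$ positions of $\pi$ contain exactly the set $\cN_1$, the next $k_2$ positions contain exactly $\cN_2$, and so on. The positions after the first $K$ are unconstrained, except that they hold the remaining $n - K$ elements. The number of permutations with this property is $\prod_i k_i! \cdot (n-K)!$, so the conditional probability is $\prod_i k_i!\,(n-K)!/n!$.
\end{itemize}
Multiplying the two factors, the multinomial coefficient cancels exactly, and we recover the Process~1 expression.

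The only point needing care is that the event in Process~2 forces $w_i = k_i$, since $R_i$ has size exactly $w_i$. The decomposition into the two independent sources of randomness, balls and permutation, is therefore exhaustive, and there is no real obstacle beyond this bookkeeping.
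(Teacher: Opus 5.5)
Your proof is correct and follows essentially the same route as the paper. The paper notes that the size vectors have the same distribution under both processes, and that given the sizes each process is uniform over the $C = n!/((n-K)!\prod_i k_i!)$ possible families of disjoint sets; it concludes by symmetry that both probabilities equal $C^{-1}$ times the common size probability. You instead compute both probabilities in closed form as $(p')^{K}(1-\ell p')^{n-K}$, and in Process~2 your multinomial coefficient is exactly the paper's $C$.
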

\begin{proof}
Let $R'_i$ and $R''_i$ denote the set $R_i$ produced by Processes $1$ and $2$, respectively. By the construction of $w_i$,
\[
	\Pr[\forall_{i \in [\ell]}\; |R'_i|=|\cN_i|]
	=
	\Pr[\forall_{i \in [\ell]}\; |w_i|=|\cN_i|]
	=
	\Pr[\forall_{i \in [\ell]}\; |R''_i|=|\cN_i|]
	\enspace.
\]
Let us now denote by $C$ the number of ways to choose $\ell$ disjoint subsets $\cN'_1, \cN'_2, \ldots, \cN'_{\ell}$ of $\cN$ such that $|\cN'_i| = |\cN_i|$ for every $i \in [\ell]$. By symmetry, the probability $\Pr[\forall_{i \in [\ell]}\; R_i = \cN'_i]$ is the same for every choice of sets $\cN'_1, \cN'_2, \ldots, \cN'_{\ell}$ counted by $C$, including the choice of $\cN'_i = \cN_i$ for all $i \in [\ell]$. Therefore,
\[
	\Pr[\forall_{i \in [\ell]}\; R'_i=\cN_i] = C^{-1} \cdot \Pr[\forall_{i \in [\ell]}\; |R'_i|=|\cN_i|]
\]
and
\[
	\Pr[\forall_{i \in [\ell]}\;  R''_i=\cN_i] = C^{-1} \cdot \Pr[\forall_{i \in [\ell]}\; |R''_i|=|\cN_i|]
	\enspace.
\]
The observation now follows by combining all the above equalities.
\end{proof}

Using the above equivalence between Processes $1$ and $2$, we can now state our semi-streaming algorithm as Algorithm~\ref{alg:streaming2}. This algorithm differs from Algorithm~\ref{alg:iid2} in three ways. First, it draws the sets $R_1, R_2, \dotsc, R_\ell$ from the distribution created by the above processes. Second, in contrast to Algorithm~\ref{alg:iid2}, which selects $v_i$ from the set $R_i$, Algorithm~\ref{alg:streaming2} selects $v_i$ from an extended set $R_i \cup H_{i - 1}$, where $H_{i - 1}$ is the set of all elements that were added to the algorithm's solution in at least one previous iteration. Finally, when there is a tie between multiple pairs that are candidates to be $(u_i, v_i)$, Algorithm~\ref{alg:iid2} chooses one of them arbitrarily, while Algorithm~\ref{alg:streaming2} uses the arrival order for tie breaking.

\begin{algorithm}[t!]
\setcounter{mpfootnote}{1}
\renewcommand{\thefootnote}{\alph{mpfootnote}}
\DontPrintSemicolon
\caption{\textsc{Matroid \textbf{Single-pass Data Stream} Algorithm}} \label{alg:streaming2}
\KwIn{Matroid $(\cN, \cI)$ of rank $r$, non-negative monotone submodular function $f$, set $A_0\in \cI$.\\
\textbf{Parameters:} $h\in(0,1], p' \in (0, 1], p=1-(1-p')^r$ and non-negative integer $\ell \leq  \ln(h^{-1})/\ln(1+\nicefrac{p}{(r-p)})$ obeying $p' \cdot \ell \leq 1$. \\
}
\smallskip\hrule\smallskip
Let $R_1, R_2, \ldots, R_{\ell}$ be random sets drawn from the distribution produced by Processes $1$ and $2$.\\
$H_0\gets \varnothing$.\\
\For{$i = 1$ \KwTo $\ell$}
{ 
Let $h_i\gets h \cdot (1+ \frac{p}{r-p})^i$.\\
Let $(u_i,v_i) \!\gets\! \arg \max_{\substack{u_i\in A_{i-1} \cup \{\bot\}, v_i\in R_i\cup H_{i-1}\\A_{i-1}-u_i+v_i\in \cI}} \left\{\!F(h_i \cdot \characteristic_{A_{i-1}-u_i})\!+\!F(h_i \cdot \characteristic_{A_{i-1}+v_i})\!\right\}$.\footnotemark \label{line:find_best}\\
\If{$R_i \neq \varnothing$ and $F(h_i \cdot \characteristic_{A_{i-1}-u_i})+F(h_i \cdot \characteristic_{A_{i-1}+v_i})>2 \cdot F(h_i \cdot \characteristic_{A_{i-1}})$}
    {
     $A_i\gets A_{i-1}-u_i+v_i$.\\ $H_i \gets H_{i-1}+v_i$.
    } \lElse{
    $A_i \gets A_{i-1}$, $H_i \gets H_{i-1}$.}
}
\Return $A_{\ell}$.
\begin{minipage}{0.94\textwidth}
\setcounter{mpfootnote}{1}
\footnotetext{The maximum might is not be unique, in which case we use a tie breaking rule based on the arrival order $\pi$ of the elements. To implement the tie breaking rule, the algorithm needs to maintain the sets $A_i$ as ordered sets. Specifically, the order of $A_0$ is arbitrary, if $A_i = A_{i - 1}$, then $A_i$ has the same order as $A_{i - 1}$, and if $A_i = A_{i - 1} - u_i + v_i$, then the elements of $A_{i} - v_i$ keep their relative order as in $A_{i - 1}$, and $v_i$ is the last element of $A_i$. We now write $F(h_i \cdot \characteristic_{A_{i-1}-u})+F(h_i \cdot \characteristic_{A_{i-1}+v}) >_{\pi} F(h_i \cdot \characteristic_{A_{i-1}-u'})+F(h_i \cdot \characteristic_{A_{i-1}+v'})$ if the LHS is strictly larger than the RHS, or the two sides are equal and either (i) $v$ precede $v'$ in $\pi$ or (ii) $v = v'$ and $u$ precede $u'$ in $A_{i - 1}$. Then, Line~\ref{line:find_best} chooses a pair $(u_i, v_i)$ that maximizes $F(h_i \cdot \characteristic_{A_{i-1}-u_i})+F(h_i \cdot \characteristic_{A_{i-1}+v_i})$ according to the comparison operator $>_\pi$.}
\end{minipage}
\end{algorithm}

\begin{observation} \label{obs:implementation}
When $\ell = r \cdot \poly(\log n)$, Algorithm~\ref{alg:streaming2} is a semi-streaming algorithm storing at most $O(r+\ell)$ elements.
\end{observation}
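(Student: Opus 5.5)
The plan is to show that Algorithm~\ref{alg:streaming2} can be run while reading the stream once, keeping only $A_{i-1}$, $H_{i-1}$ and the elements of the window currently being read. First I would fix how the sets $R_i$ are generated. Following Process~2, before reading any element the algorithm draws the counts $w_1, w_2, \dotsc, w_\ell$: throw $n$ balls, discard each with probability $1-\ell p'$, and put each kept ball into a uniformly random bin. This uses only $O(\ell)$ counters and no elements. Since the arrival order of the stream is uniformly random, it can play the role of the order $\pi$ of Process~2. Then $R_i$ is exactly the $i$-th consecutive window of $w_i$ elements, counted after skipping the first $\sum_{j<i} w_j$ elements. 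The elements after position $\sum_{j=1}^{\ell} w_j$ are simply ignored.

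Next I would check that iteration $i$ can be carried out while window $i$ streams by. At that point the algorithm knows $A_{i-1}$ and $H_{i-1}$, and $h_i$ is a deterministic function of $i$. Line~\ref{line:find_best} asks for the pair $(u,v)$ with $v \in R_i \cup H_{i-1}$ that is maximal under the order $>_\pi$. This can be maintained online with a single running candidate $(u^\star, v^\star)$.
\begin{itemize}
\item Before window $i$ starts, initialize the candidate by scanning all $v \in H_{i-1}$ and all $u \in A_{i-1} \cup \{\bot\}$.
\item When an element $v \in R_i$ arrives, compare the pairs $(u, v)$ with $A_{i-1}-u+v \in \cI$ against the candidate, and keep whichever wins under $>_\pi$.
\end{itemize}
The tie-breaking rule needs the arrival positions of the elements. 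For $v \in H_{i-1}$ these positions are already known and can be stored with the element, and the arrival order within $R_i$ is seen directly. So ties can be broken exactly as in the footnote, and the output is the same as the offline execution of Algorithm~\ref{alg:streaming2} on the same $\pi$.

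When window $i$ ends, the algorithm tests whether $R_i \neq \varnothing$ and whether the candidate beats $2F(h_i \cdot \characteristic_{A_{i-1}})$. If so, it updates $A_i$ and $H_i$. It then discards every element of $R_i$ except possibly $v_i$, which enters $A_i$ and $H_i$.

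Finally I would bound the memory, which is the only step with real content. The set $A_i$ is independent, so $|A_i| \leq r$. Each iteration adds at most one element to $H$, so $|H_i| \leq i \leq \ell$. Apart from these, the algorithm stores only the current candidate pair and the single element currently being processed. Elements of $R_i$ are never buffered, because each is compared once and then dropped. Hence at most $O(r+\ell)$ elements are stored at any time.

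The remaining non-element memory consists of the $\ell$ counters $w_i$, one running position counter and $O(1)$ values of $F$. Each is an $O(\log n)$-bit number, so this is $O(\ell \log n)$ bits. When $\ell = r \cdot \poly(\log n)$, the element count is $O(r \cdot \poly(\log n))$, which meets the semi-streaming requirement.

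The main obstacle is the subtle point that the online computation must reproduce exactly the argmax used in the analysis, including ties. This is why the rule $>_\pi$ is defined as a strict total order on candidate pairs that can be compared using only the current element and stored positions.
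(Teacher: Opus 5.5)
Your proposal is correct and follows essentially the same route as the paper: you simulate Process~2 by drawing the window sizes $w_1, \dotsc, w_\ell$ upfront and using the random arrival order as $\pi$. As a result, only the current element, the $O(\log n)$-bit counters, and the sets $A_i$ and $H_i$ need to be stored. Your online maintenance of the argmax under the $>_\pi$ tie-breaking, and your bound $|H_i| \leq \ell$ (slightly tighter than the paper's $|H_i| \leq \ell + r$), are useful elaborations rather than a different argument.
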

\begin{proof}
Algorithm~\ref{alg:streaming2} requires only a very limited access to the sets $R_1, R_2, \dotsc, R_\ell$. Specifically, it needs to iterate over the elements of $R_1$, then iterate over the elements of $R_2$, etc. If the sets $R_1, R_2, \dotsc, R_\ell$ are constructed using Process $2$, with the random order $\pi$ chosen as the arrival order of the elements in the algorithm's data stream, then to simulate this kind of access we only need to store in memory the currently arrived element and the $\ell$ values $w_1, w_2 \ldots, w_{\ell}$, which require only $O(\log n)$ bits of storage each.

In addition to the above, Algorithm~\ref{alg:streaming2} only needs to store the sets $A_i$ and $H_i$ for a single value of $i$. The set $A_i$ is of size at most $r$ since it is feasible, and the set $H_i$ is of size at most $\ell + r$ since it is a subset of $A_0 \cup \{v_1, v_2, \dotsc, v_\ell\}$. Thus, the entire Algorithm~\ref{alg:streaming2} stores only $O(1 + \ell + r) = O(\ell + r)$ elements and requires only semi-streaming space.
\end{proof}

Our next goal is to analyze the approximation guarantee of Algorithm~\ref{alg:streaming2}. Recall that Algorithm~\ref{alg:streaming2} assigns to each set $A_i$ either the value $A_{i - 1} - u_i + v_i$ or the value $A_{i - 1}$. Let $E \subseteq [\ell]$ denote the set of indexes $i$ for which the first of these options occurred.
\begin{observation} \label{obs:H_content}
For every $i \in [\ell]$, if $i \in E$, then $v_i \not \in A_{i - 1}$. Thus,
\[
	A_i \setminus A_{i - 1}
	=
	\begin{cases}
		\{v_i\} & \text{if $i \in E$} \enspace,\\
		\varnothing & \text{if $i \not \in E$} \enspace,
	\end{cases}
\]
which implies $H_i = \{v_j \mid j \in [i] \cap E\} = \cup_{j = 1}^{i} A_i \setminus A_{i - 1}$.
\end{observation}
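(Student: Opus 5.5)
The claim that $v_i \notin A_{i-1}$ whenever $i \in E$ follows by showing that choosing $v_i \in A_{i-1}$ can never pass the strict improvement test. The rest of the observation is then bookkeeping.

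Suppose, for contradiction, that $i \in E$ and $v_i \in A_{i-1}$. Then $A_{i-1} + v_i = A_{i-1}$, so $F(h_i \cdot \characteristic_{A_{i-1}+v_i}) = F(h_i \cdot \characteristic_{A_{i-1}})$. Also $A_{i-1} - u_i \subseteq A_{i-1}$ (with $\characteristic_{\{\bot\}}$ treated as empty), so the monotonicity of $f$, and hence of $F$ along non-negative directions, gives $F(h_i \cdot \characteristic_{A_{i-1}-u_i}) \leq F(h_i \cdot \characteristic_{A_{i-1}})$. Adding the two relations, the sum is at most $2F(h_i\cdot\characteristic_{A_{i-1}})$. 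This violates the strict inequality required for $i \in E$, which is the contradiction. No tie-breaking subtlety arises here, because the test is a strict inequality and is independent of the $>_\pi$ ordering.

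Given $v_i \notin A_{i-1}$, when $i\in E$ we have $A_i = A_{i-1} - u_i + v_i$, so $A_i \setminus A_{i-1} = \{v_i\}$, since removing $u_i$ adds nothing new. When $i \notin E$, $A_i = A_{i-1}$ and the difference is empty. Finally, by induction on $i$ starting from $H_0 = \varnothing$, the update rule adds exactly $v_i$ to $H$ precisely when $i\in E$ and leaves it unchanged otherwise. Hence $H_i = \{v_j \mid j\in[i]\cap E\}$, and by the case analysis just made this equals $\cup_{j=1}^i (A_j\setminus A_{j-1})$.

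The only step with real content is the monotonicity argument ruling out $v_i \in A_{i-1}$; everything else is direct unpacking of the algorithm's update rules.
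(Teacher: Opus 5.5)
Your proof is correct and follows the paper's argument: assuming $v_i \in A_{i-1}$, the improvement test collapses to $F(h_i \cdot \characteristic_{A_{i-1}-u_i}) > F(h_i \cdot \characteristic_{A_{i-1}})$, which contradicts monotonicity, and the rest is bookkeeping from the update rules. You also index the final union correctly as $\cup_{j=1}^{i}(A_j \setminus A_{j-1})$, whereas the statement as written has $A_i \setminus A_{i-1}$ inside the union.
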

\begin{proof}
The fact that $i \in E$ implies that
\[
	F(h_i \cdot \characteristic_{A_{i-1}-u_i})+F(h_i \cdot \characteristic_{A_{i-1}+v_i})>2 \cdot F(h_i \cdot \characteristic_{A_{i-1}})
	\enspace.
\]
If $v_i \in A_{i - 1}$, then this inequality simplifies to $F(h_i \cdot \characteristic_{A_{i-1}-u_i})> F(h_i \cdot \characteristic_{A_{i-1}})$, which contradicts the monotonicity of $f$.
\end{proof}

From this point on, it will be useful to pretend that Algorithm~\ref{alg:streaming2} has an extra pre-processing step that chooses a uniformly random order $\sigma$ over the elements of $\cN$, and then modifies $\pi$ by rearranges the elements of each sample $R_i$ according to the order $\sigma$. Notice that since the elements of each set $R_i$ are already ordered uniformly at random in $\pi$, this pre-processing step does not change the output distribution of Algorithm~\ref{alg:streaming2}.

\begin{lemma}\label{lem:independent_sigma}
    For every $i\in [\ell]$, consider the event that the sets $A_0, A_1, \ldots, A_{i-1}$ take particular values $A'_0, A'_1, \dotsc, A'_{i - 1}$ and the order $\sigma$ takes a particular value $\sigma'$. If this event has a non-zero probability, then conditioned on this event, the membership of each element $u \in \cN$ in $R_i \cup H_{i - 1}$ is independent of the membership of other elements in $R_i \cup H_{i - 1}$ and has a probability at least $p'$.
\end{lemma}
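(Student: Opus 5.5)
Since $R_1, R_2, \dotsc, R_\ell$ have the distribution of Process~1, I will use that description together with the independent uniform order $\sigma$. For every element $u$, let $X_u \in \{0, 1, \dotsc, \ell\}$ be its bin, with $X_u = 0$ meaning that $u$ is not assigned. Then the pairs $(X_u, \sigma)$ determine everything, and the variables $X_u$ are mutually independent and independent of $\sigma$. After the pre-processing step, the tie-breaking order within each $R_j$ is $\sigma$. So, once $\sigma$ is fixed, the run of Algorithm~\ref{alg:streaming2} through iteration $i-1$ is a deterministic function of $(X_u)_{u \in \cN}$, or more precisely of the sets $R_1, \dotsc, R_{i-1}$. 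In particular, the sets $A_0, \dotsc, A_{i-1}$ are such a function, and by Observation~\ref{obs:H_content} so is $H_{i-1} = \cup_{j < i} (A_j \setminus A_{j-1})$.

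The key step is a locality claim in the spirit of~\cite{agrawal2019submodular}. Fix $\sigma'$ and $A'_0, \dotsc, A'_{i-1}$, and let $H' = \cup_{j<i}(A'_j \setminus A'_{j-1})$. I will show that the event $\cE = \{A_j = A'_j \;\forall j < i\}$, intersected with $\{\sigma = \sigma'\}$, equals $\bigcap_{u \notin H'} \{X_u \geq i \text{ or } X_u = 0\} \cap \bigcap_{u \in H'} \{X_u \in C_u\}$ for suitable sets $C_u$ of bin indices. In words, every element outside $H'$ must avoid the first $i-1$ bins, while each element of $H'$ must satisfy some condition on its own bin only. To prove this, I fix $j < i$ and argue that, given $A_{j-1} = A'_{j-1}$ and $H_{j-1}$, the outcome $A_j = A'_j$ holds if and only if the following conditions hold. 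First, no element of $R_j \setminus H'$ exists. Otherwise, such an element would either be selected as $v_j$, putting it into $H'$, which is a contradiction, or it would be a present non-selected candidate. The second possibility does not change the argmax, so I handle it as follows. I define the candidate set as $R_j \cup H_{j-1}$. If $R_j$ contains an element outside $H'$, then rerunning the algorithm with that element deleted gives the same choice, unless that element was the chosen one. Hence I should instead characterize the event as in~\cite{S20,feldman2026streaming}: it is determined by (i) $R_j \cap H'$ and (ii) the requirement that no element of $R_j \setminus H'$ beats the chosen pair under the $>_\pi$ comparison. Condition (ii) is a per-element condition given the $A'$'s and $\sigma'$. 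Also, ties are broken by $\sigma'$, which is fixed, so (ii) is a condition ``$X_u \neq j$'' for each element $u$ in a deterministic set $D_j$. The nonemptiness test ``$R_j \neq \varnothing$'' is the only non-product part. I will handle it by noting that if $i-1 \geq j$ and the test fails, then no swap occurs. The event ``$R_j$ empty or the best candidate is not improving'' must then be written as a union. Here the expected obstacle shows up, and I would treat it by observing that whenever $R_j = \varnothing$ the candidate set is $H_{j-1}$. Since the elements of $H_{j-1} \cap A_{j-1}$ are improving-impossible, I would argue that the outcome $A_j = A_{j-1}$ is forced under both cases by the same per-element conditions. That makes the event a product over elements.

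Granting the product structure $\Pr[\cE \mid \sigma'] = \prod_u \Pr[X_u \in C_u]$ with $C_u \ni 0$ or $C_u \supseteq \{i, \dotsc, \ell\}$, conditioning preserves independence of the $X_u$. Hence the indicators $\mathbf{1}[u \in R_i \cup H_{i-1}]$ are independent: for $u \in H'$ the indicator is identically $1$, and for $u \notin H'$ it equals $\mathbf{1}[X_u = i]$, whose events are independent across $u$. The probability bound is immediate for $u \in H'$. For $u \notin H'$, the conditional probability is $\Pr[X_u = i \mid X_u \in \{0, i, \dotsc, \ell\}] = p'/(1 - (i-1)p') \geq p'$, where I use that the conditioning set for such $u$ is exactly $\{0, i, \dotsc, \ell\}$; this uses $\ell p' \leq 1$.

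The main obstacle is the locality claim itself, namely proving that the conditioning event factorizes. The delicate points are elements of $R_j$ that are present but not chosen, and the $R_j \neq \varnothing$ test. I will first try to show that the algorithm's trajectory is unchanged by deleting any unchosen element from any $R_j$, using the fact that $H_{j-1}$ supplies all earlier winners and that ties are broken by the fixed $\sigma'$. This would reduce the condition on non-$H'$ elements to ``not in bins $1, \dotsc, i-1$''.
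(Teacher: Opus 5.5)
Your architecture matches the paper's. You fix $\sigma'$, show that the history event is an intersection of per-element constraints on the bins $X_u$ (this is Claim~\ref{clm:valid} together with the index sets $I_v$ and $j_v$), and then read off independence and the lower bound. However, the locality claim you state, and then use in your final computation, is false. An element $u\notin H'$ may lie in an earlier bin $j<i$ and simply lose there, because it is non-improving or because the actual choice $v'_j$ beats it under $>_\pi$, without affecting $A'_j$. The history therefore excludes only the bins $I_u=\{j<i : u\in D_j\}$ in which $u$ would have won, which is exactly what your own condition (ii) says. So the conditional law of $X_u$ is not supported on $\{0,i,\dots,\ell\}$. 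The correct computation is $\Pr[X_u=i\mid X_u\notin I_u]=p'/(1-|I_u|p')\geq p'$, which suffices once the product structure is in place. For the same reason your closing plan cannot succeed. In general there are runs consistent with the history in which an unchosen element sits in some earlier $R_j$, so no deletion-invariance argument can yield ``avoid bins $1,\dots,i-1$''.

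The second gap is the test $R_j\neq\varnothing$, which you correctly flag as the obstacle but do not resolve. Your argument uses only that elements of $H_{j-1}\cap A_{j-1}$ are non-improving. But $H_{j-1}$ also contains elements that were added and later swapped out, and these can become improving again as the solution and the height change. Suppose such an element is improving at iteration $j$. Then $A'_j=A'_{j-1}$ forces $R_j=\varnothing$, which is still a product event and so harmless. Re-adding it, however, forces $R_j\neq\varnothing$, which is not a product event, and the conclusion can then genuinely fail. For example, if exactly one element $c\notin H'$ does not beat the re-added element, then under the conditioning $c$ is forced into bin $j$, so $\Pr[c\in R_i\cup H_{i-1}]=0$.

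The paper's proof of Claim~\ref{clm:valid} does not handle this either. In the no-change case it infers from positive probability that every element of $H'_{j-1}$ is non-improving, and in the swap case it never checks the test. The clean repair is to drop the condition $R_i\neq\varnothing$ from Algorithm~\ref{alg:streaming2}, or replace it by $R_i\cup H_{i-1}\neq\varnothing$. This leaves \eqref{eq:increase1} and \eqref{eq:increase2}, and hence Lemma~\ref{lem:main-iid}, intact. The history event is then exactly $\bigcap_u\{X_u\in C_u\}$, with $C_u=\{j_u\}$ for $u\in H'$ (the iteration at which $u$ was first added) and $C_u=\{0,\dots,\ell\}\setminus I_u$ otherwise, and your argument goes through.
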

\begin{proof}
For every $j \in [i - 1]$ and element $v \in \cN$, let us define $u(v, j)$ as the element in the set $\arg\max_{u \in A'_{j-1} \cup \{\bot\}, A'_{j-1}-u+v\in \cI} F(h_j \cdot \characteristic_{A'_{j-1}-u})$ that appears first in $A_{i - 1}$. Notice that, by the way Algorithm~\ref{alg:streaming2} orders the sets $A_0, A_1, \dotsc, A_\ell$ (for the sake of the tie breaking rule), the orders of these sets are deterministic given what we condition on, and thus, $u(v, j)$ is a deterministic element. 
Furthermore, by Observation~\ref{obs:H_content}, the conditioning of the lemma also implies that, for every $j \in [i - 1]$, the set $H_j$ takes the deterministic value $H'_j = \cup_{h = 1}^{j - 1} (A'_h \setminus A'_{h - 1})$. A more subtle consequence of the conditioning is that for every two elements $v_1, v_2 \in \cN$, it is possible to determine the order they must have in $\pi$ if they happen to be both in $R_j \cup H_{i - 1}$, despite the fact that the membership of $v_1$ and $v_2$ in $R_j \cup H_{i - 1}$ might not be deterministic given the conditioning.

If $A'_j \setminus A'_{j - 1}$ is non-empty, we denote by $v'_j$ the single element in this set. Furthermore, in the last case, since there is a non-zero probability that $A_{j - 1} = A'_{j - 1}$ and $A_j = A'_j$, it must hold that
\begin{equation} \label{eq:good_j}
	F(h_j \cdot \characteristic_{A'_{j-1}-u(v'_j, j)}) + F(h_j \cdot \characteristic_{A'_{j-1}+v'_j})
	>
	2 \cdot F(h_j \cdot \characteristic_{A'_{j-1}})
\end{equation}
and
\begin{equation} \label{eq:compare_H}
	F(h_j \cdot \characteristic_{A'_{j-1}-u(v'_j, j)}) + F(h_j \cdot \characteristic_{A'_{j-1}+v'_j})
	>_{\pi}
	F(h_j \cdot \characteristic_{A'_{j-1}-u(v, j)}) + F(h_j \cdot \characteristic_{A'_{j-1}+v})
\end{equation}
for every element $v \in H'_{j - 1} - v'_j$.

Consider now a possible assignment $(R'_1, R'_2, \ldots, R'_{\ell})$ to the random sets $R_1, R_2, \ldots, R_{\ell}$ (i.e., the sets $R'_1, R'_2, \ldots, R'_{\ell}$ are disjoint subsets of $\cN$). We say that the assignment $(R'_1, R'_2, \ldots, R'_{\ell})$ is {\em valid} with respect to the fixed sets $A'_0, A'_1, \ldots, A'_{i-1}$ if when executing Algorithm~\ref{alg:streaming2} with $R_j = R'_j$ for every $j \in [\ell]$, it outputs $A_j = A'_j$ for every $j \in [i - 1]$.

\begin{claim} \label{clm:valid}
The assignment
$(R'_1, R'_2, \ldots, R'_{\ell})$ is valid with respect to the sets $A'_0, A'_1, \dotsc, A'_{i - 1}$ if and only if the following two conditions hold.
\begin{itemize}
    \item For every $v \in H'_{i-1}$, if $j \in [i - 1]$ is the lowest index such that $\{v\} = A'_j \setminus A'_{j - 1}$, then $v \in R'_j$.
    \item For every $j \in [i - 1]$ and element $v \in R'_j \setminus H'_{i-1}$, either
		\[
			F(h_j \cdot \characteristic_{A'_{j-1}-u(v, j)})+F(h_j \cdot \characteristic_{A'_{j-1}+v})
			\leq
			2 \cdot F(h_j \cdot \characteristic_{A'_{j-1}})
			\enspace,
		\]
		or the set $A'_j \setminus A'_{j - 1}$ is non-empty and
		\[
		F(h_j \cdot \characteristic_{A'_{j-1}-u(v'_j, j)})+F(h_i \cdot \characteristic_{A'_{j-1}+v'_j})
		>_{\pi}
		F(h_j \cdot \characteristic_{A'_{j-1}-u(v, j)})+F(h_j \cdot \characteristic_{A'_{j-1}+v})
		\enspace.
	\]
\end{itemize}
\end{claim}
\begin{proof}
Let us first prove that the two conditions are necessary. Assume that the first condition is violated for some element $v$ and index $j$. If $A_{j - 1} \neq A'_{j - 1}$, then we are done. Otherwise, because of the violation of the first condition, $v \not \in H'_{j - 1} \cup R'_j = H_{j - 1} \cup R_j$, which implies that $v \neq v_j$ because $v_j$ is an element of $R_j \cup H_{j - 1}$ by definition. Thus, $A_{j} \neq A'_{j}$ since $v \in A'_j$ by definition, but $v \not \in A_j \subseteq A_{j - 1} + v_j = A'_{j - 1} + v_j$. 

Assume now that the second condition is violated for some element $v$ and index $j$. Again, we are done if $A_{j - 1} \neq A'_{j - 1}$, and thus, we only need to consider the case that $A_{j - 1} = A'_{j - 1}$. In this case, the violation of the second condition means that $v$ is a better candidate to be $v_j$ compared to $v'_j$, and therefore, $v_j \neq v'_j$, which implies that $v'_j \not \in A_j \subseteq A_{j - 1} + v_j = A'_{j - 1} + v_j$. However, $v'_j \in A'_j$, and thus, $A_j \neq A'_j$.

Let us now prove that the two conditions suffice, i.e., that $A_j = A'_j$ for every integer $0 \leq j \leq i - 1$ when the conditions hold. We prove that by induction on $j$. For $j = 0$ this is trivial since $A_0$ is a deterministic set. Assume now that $A_{j - 1} = A'_{j - 1}$ for some $1 \leq j \leq i - 1$, and let us prove that $A_j = A'_j$. There are two cases to consider. If $A'_j = A'_{j - 1}$, then since there is a non-zero probability that $A_{j'} = A'_{j'}$ for every $j' \in [i - 1]$, $F(h_j \cdot \characteristic_{A_{j-1}-u(v, j)})+F(h_j \cdot \characteristic_{A_{j-1}+v}) \leq 2 \cdot F(h_j \cdot A_{j - 1})$ for every $v \in H'_{j - 1}$. Together with the second condition, this implies that
\[
	F(h_j \cdot \characteristic_{A_{j-1}-u_j})+F(h_j \cdot \characteristic_{A_{j-1}+v_j})
			\leq
			2 \cdot F(h_j \cdot \characteristic_{A_{j-1}})
	\enspace,
\]
and therefore, Algorithm~\ref{alg:streaming2} sets $A_j = A_{j - 1} = A'_{j - 1} = A'_j$. Consider now the case that $A'_j \neq A'_{j - 1}$. In this case, the first condition implies that $v'_j \in R'_j \cup H'_{j - 1} = R_j \cup H'_{j - 1}$ and $(H'_{i - 1} \setminus H'_{j - 1}) \cap R_j = \varnothing$ , and the second condition implies that $v'_j$ is a better candidate to be $v_j$ compared to any element of $R_j \setminus H'_{i - 1}$. Combined with the fact that, by Inequality~\eqref{eq:compare_H}, $v'_j$ is a better candidate compared to any element of $H'_{j - 1}$, we get that $v_j = v'_j$. This completes the proof since, due to Inequality~\eqref{eq:good_j}, the equality $v_j = v'_j$ implies $A_j = A_{j - 1} - u(v_j, j) + v_j = A'_{j - 1} - u(v'_j, j) + v'_j = A'_j$.
\end{proof}

Recall that the relative order according to $\pi$ of every two elements $v_1, v_2 \in R_j \cup H_{i - 1}$ is deterministic given our conditioning. This implies that for every element $v \in \cN \setminus H'_{i - 1}$, there exists a set $I_v \subseteq [i - 1]$ such that the second condition of Claim~\ref{clm:valid} is unsatisfied with respect to $v$ if and only if $v \in R'_j$ for some $j \in I_v$. In other words, the condition is satisfied with respect to $v$ when $v\in R'_j$ for some $j\in [\ell]\setminus I_v$, or $v$ does not belong to any set $R'_j$. Similarly, for every element $v \in H'_{i - 1}$, there exists a single index $j_v \in [i - 1]$ such that the first condition of Claim~\ref{clm:valid} is satisfied with respect to $v$ if and only if $v \in R_{j_v}$. For consistency with the elements of $\cN \setminus H'_{i - 1}$, let us denote $I_v \triangleq [\ell] - j_v$. Claim~\ref{clm:valid} shows that the event that $A_j = A'_j$ for every $j \in [i - 1]$ is equivalent to the event that $v \not \in R_j$ for any $v \in \cN$ and $j \in I_v$. Therefore,
\begin{multline*}
	\Pr[R_i = R'_i \mid \forall_{j \in [i - 1]}\; A_j = A'_j]
	=
	\Pr[R_i = R'_i \mid \forall_{v \in \cN, j\in I_v}\; v \not \in R_j]\\
	=
	\prod_{v \in R'_i} \Pr[v \in R_i \mid \forall_{j\in I_v}\; v \not \in R_j] \cdot \prod_{v \not \in R'_i} \Big(1 - \Pr[v \in R_i \mid \forall_{j\in I_v}\; v \not \in R_j]\Big)
	\enspace.
\end{multline*}
Notice that the value $q_v \triangleq \Pr[v \in R_i \mid \forall_{j\in I_v}\; v \not \in R_j]$ is deterministic for every $v \in \cN$, and thus, the last equality proves that, given our conditioning, $R_i$ independently contains every element $v \in \cN$ with probability $q_v$. As a consequence, we get that the membership of every element $v \in \cN$ in $R_i \cup H_{i - 1}$, under our conditioning, is also independent because $H_{i - 1}$ is a deterministic set under this conditioning. Let us now lower bound the probability that each element $v \in \cN$ belongs to $R_i \cup H_{i - 1}$ under our conditioning. If $v \in H'_{i - 1}$, then 
\[
	\Pr[v \in R_i \cup H_{i - 1} \mid \forall_{j \in [i - 1]}\; A_j = A'_j]
	=
	\Pr[v \in R_i \cup H'_{i - 1} \mid \forall_{j \in [i - 1]}\; A_j = A'_j]
	=
	1
	\enspace.
\]
Otherwise, if $v \in \cN \setminus H'_{i - 1}$, then $I_v \subseteq [i - 1]$, which implies $i \not \in I_v$, and therefore,
\begin{multline*}
	\Pr[v \in R_i \cup H_{i - 1} \mid \forall_{j \in [i - 1]}\; A_j = A'_j]
	=
	\Pr[v \in R_i \mid \forall_{j \in [i - 1]}\; A_j = A'_j]\\
	=
	\Pr[v \in R_i \mid \forall_{u \in \cN, j\in I_u}\; u \not \in R_j]
	=
	\Pr[v \in R_i \mid \forall_{j\in I_v}\; v \not \in R_j]
	=
	\frac{p'}{1 - p'|I_v|}
	\geq
	p'
	\enspace.
\end{multline*}
This completes the proof of the lemma.
\end{proof}

\begin{corollary}\label{cor:independent}
    For every $i\in [\ell]$, consider the event that the sets $A_0, A_1, \ldots, A_{i-1}$ take particular values $A'_0, A'_1, \dotsc, A'_{i - 1}$. If this event has a non-zero probability, then conditioned on this event, there exists a random subset $R'_i \subseteq R_i \cup H_{i - 1}$ such that each element $u \in \cN$ belongs to $R'_i$ with probability $p'$, independently.
\end{corollary}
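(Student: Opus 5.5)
The plan is to derive Corollary~\ref{cor:independent} from Lemma~\ref{lem:independent_sigma} in two steps: first build $R'_i$ by independent thinning while $\sigma$ is still fixed, and then remove the conditioning on $\sigma$ by averaging.

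Fix values $A'_0, \dotsc, A'_{i-1}$ of nonzero probability, and fix also an order $\sigma'$ such that the joint event (the sets equal $A'_0,\dotsc,A'_{i-1}$ and $\sigma=\sigma'$) has nonzero probability. By Lemma~\ref{lem:independent_sigma}, conditioned on this joint event, each $u\in\cN$ belongs to $R_i\cup H_{i-1}$ independently, with some probability $q_u = q_u(A'_0,\dotsc,A'_{i-1},\sigma') \geq p'$.

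\textbf{Step 1 (thinning).} Introduce fresh independent coins $c_u$, where $c_u=1$ with probability $p'/q_u$, and set
\[
R'_i \triangleq \{u \in R_i\cup H_{i-1} \mid c_u = 1\} \enspace.
\]
Under the joint conditioning, the events $\{u\in R'_i\}$ are independent across $u$, because they are products of independent indicators with independent coins. Each has probability $q_u\cdot p'/q_u = p'$. Hence, conditioned on the joint event, $R'_i$ is distributed as a product distribution with all marginals equal to $p'$. This distribution is the same for every admissible $\sigma'$.

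\textbf{Step 2 (removing the conditioning on $\sigma$).} The conditional law of $R'_i$ given the joint event does not depend on $\sigma'$, so by the law of total probability the law of $R'_i$ conditioned only on $A_j=A'_j$ for all $j<i$ is the same product distribution. Concretely,
\[
\Pr[R'_i = X \mid \forall_{j<i}\, A_j = A'_j] = \sum_{\sigma'} \Pr[\sigma=\sigma' \mid \cdots]\cdot p'^{|X|}(1-p')^{n-|X|} = p'^{|X|}(1-p')^{n-|X|} \enspace.
\]
The containment $R'_i\subseteq R_i\cup H_{i-1}$ holds deterministically by construction, which completes the argument.

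The only delicate point is that the thinning probabilities $p'/q_u$ depend on $\sigma'$. This is harmless, because the coins are drawn after $\sigma$ is realized, so they may depend on it. What matters is that the resulting conditional distribution of $R'_i$ is identical for every $\sigma'$; this is what allows us to drop the conditioning on $\sigma$ in Step~2.
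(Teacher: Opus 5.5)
Your proposal is correct and follows essentially the same route as the paper. Under the joint conditioning of Lemma~\ref{lem:independent_sigma}, you thin $R_i \cup H_{i-1}$ independently with probabilities $p'/q_u$ (which may depend on $\sigma'$). You then note that the resulting product distribution with marginals $p'$ does not depend on $\sigma'$, and remove the conditioning on $\sigma$ via the law of total probability. Your weighting by $\Pr[\sigma=\sigma' \mid \forall_{j<i}\, A_j = A'_j]$ is, if anything, slightly more careful than the paper's unnormalized $\Pr[\cE_{\sigma'}]$.
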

\begin{proof}
For every order $\sigma'$ over $\cN$, let us denote by $\cE_{\sigma'}$ the event that $\sigma = \sigma'$ and $A_j = A'_j$ for every $j \in [i - 1]$. Lemma~\ref{lem:independent_sigma} shows that if the event $\cE_{\sigma'}$ has a non-zero probability, then conditioned on this event, each element $u \in \cN$ belongs to $R_i \cup H_{i - 1}$ with an independent probability $p_{u, \sigma'} \geq p'$. Therefore, if we denote by $R'_i$ a subset of $R_i \cup H_{i - 1}$ that contains every element $u \in \cN$ with probability $p' / p_{u, \sigma'}$, independently, then conditioned on $\cE_{\sigma'}$, the set $R'_i$ independently contains every element of $\cN$ with probability $p'$.

To prove the corollary, we need to show that $R'_i$ has the same distribution also when conditioned on the weaker event that $A_j = A'_j$ for every $j \in [i - 1]$. To see that, notice that by the law of total probability, for every $S \subseteq \cN$,
\begin{align*}
	\Pr[R'_i = S \mid \forall_{j \in [i - 1]}\; A_j = A'_j]
	={} &
	\sum_{\sigma' : \Pr[\cE_{\sigma'}] > 0} \mspace{-18mu} \Pr[\cE_{\sigma'}] \cdot \Pr[R'_i = S \mid \cE_{\sigma'}]\\
	={} &
	\sum_{\sigma' : \Pr[\cE_{\sigma'}] > 0} \mspace{-18mu} \Pr[\cE_{\sigma'}] \cdot (p')^{|S|}(1 - p')^{n - |S|}
	=
	(p')^{|S|}(1 - p')^{n - |S|}
	\enspace.
	\qedhere
\end{align*}
\end{proof}

We are now ready to prove Proposition~\ref{prop:main}.

\begin{proof}[Proof of Proposition~\ref{prop:main}]
Our first goal is to show that Lemma~\ref{lem:main-iid} applies also to Algorithm~\ref{alg:streaming2}. The proof of this lemma conditions on all the randomness of the algorithm prior to iteration $i$, and assumes that after this conditioning the set $R_i$ has the following two properties.
\begin{itemize}
	\item It contains each element of $\cN$ with probability $p'$, independently.
	\item Every element of $R_i$ is considered as a possible option for $v_i$
\end{itemize}
To make the proof apply to Algorithm~\ref{alg:streaming2}, we change the conditioning to be on the values of the set $A_0, \ldots, A_{i-1}$, and we replace $R_i$ in the proof with the set $R'_i$ from Corollary~\ref{cor:independent}. One can verify that $R'_i$ indeed has the above two properties given this conditioning, and thus, after these modifications the resulting proof indeed shows that Lemma~\ref{lem:main-iid} applies also to Algorithm~\ref{alg:streaming2}.

Using Lemma~\ref{lem:main-iid}, we can now repeat the relevant part of the proof of Proposition~\ref{thm:iid-value}, and get that the approximation guarantee of this proposition applies also to Algorithm~\ref{alg:streaming2}. Thus, by setting the parameters of Algorithm~\ref{alg:streaming2} as in Lemma~\ref{lem:niid}, i.e., $p'=\frac{\delta'}{r}$ and $\ell= \lfloor \frac{r}{\delta'}\rfloor - 1$ for $\delta' = \delta/9$, we get for Algorithm~\ref{alg:streaming2} the approximation guarantee stated in Lemma~\ref{lem:niid}, which is the exact same approximation guarantee stated in Proposition~\ref{prop:main}. To complete the proof of this proposition, it only remains to notice that, according to Observation~\ref{obs:implementation}, the number of elements Algorithm~\ref{alg:streaming2} stores given the above values for the parameters is
\[
	O(r + \ell)
	=
	O(r + r/\delta)
	=
	O(r/\delta)
	\enspace.
	\qedhere
\]
\end{proof}

%% file: k-System-Hardness.tex
\section{\texorpdfstring{$p$}{p}-Systems Hardness} \label{sec:hardness}

In this section, we prove Theorem~\ref{thm:hardness}, which we repeat below for convenience. A non-encoding algorithm is an algorithm that obeys the following properties.
\begin{itemize}
	\item If an element $u$ is explicitly stored in the memory of the algorithm after the arrival of an element $v \neq u$, then $u$ must have also been explicitly stored in the memory of the algorithm before the arrival of $v$.
	\item Only elements that are explicitly stored in the memory of the algorithm when it terminates can appear in its output set. Similarly, when the algorithm queries the independence oracle on a set, this set can contain only elements that are explicitly stored in the algorithm's memory at the time of the query.
\end{itemize}

\thmHardness*

The remainder of this section is devoted to the proof of Theorem~\ref{thm:hardness}. Section~\ref{ssc:construction} describes and analyzes a $p$-system $(\cN, \cI)$ and a linear function $f$ that underlay the proof. Section~\ref{ssc:hardness}, then, leverages these $p$-system and function to prove Theorem~\ref{thm:hardness}.

\subsection{\texorpdfstring{$p$}{p}-System and Linear Function Construction} \label{ssc:construction}

In this section, we describe the $p$-system $(\cN, \cI)$ and linear function $f$ that we later use to prove Theorem~\ref{thm:hardness} in Section~\ref{ssc:hardness}. Our construction has four positive integer parameters $k$, $\ell$, $r$ and $\alpha \geq 2$ (whose values are determined later). For every integer $1 \leq i \leq \ell$, let us define two sets
\[
	A_i
	=
	\{a_{i, j} \mid 1 \leq j \leq \alpha^i \cdot p^2\}
	\qquad
	\text{and}
	\qquad
	B_i
	=
	\{b_{i, j} \mid 1 \leq j \leq \alpha^i \cdot rp^2\}
	\enspace.
\]
The ground set of the $p$-system we construct is simply $ \cN = \bigcup_{i = 1}^\ell (A_i \cup B_i)$, and a set $S$ is independent in the $p$-system (i.e., belongs to $\cI$) if for every $1 \leq i \leq \ell$ it holds that
\begin{equation} \label{eq:upper_bound_construction}
	|S \cap (A_i \cup B_i)|
	\leq
	\begin{cases}
	0 & \text{if $\exists_{i < i' \leq \ell} |S \cap B_{i'}| > \alpha^{i'}$} \enspace,\\
	\alpha^i \cdot k & \text{if $\forall_{i < i' \leq \ell} |S \cap B_{i'}| \leq \alpha^{i'}$ and $S \cap B_i \neq \varnothing$} \enspace,\\
	\alpha^i \cdot k^2 & \text{otherwise} \enspace.
	\end{cases}
\end{equation}
The linear function $f\colon 2^\cN \to \nnR$ we construct gives a weight of $\alpha^{-i}$ for every element of $A_i \cup B_i$. In other words, for every set $S \subseteq \cN$,
\[
	f(S)
	=
	\sum_{i \in [\ell]} \alpha^{-i} \cdot |S \cap (A_i \cup B_i)|
	\enspace.
\]

The following lemma shows that the above constructed $(\cN, \cI)$ is indeed a $p$-system for an appropriate choice of values for the parameters.

\begin{lemma} \label{lem:k-system}
If $p \geq k + k^2 / (\alpha - 1)$, then the pair $(\cN, \cI)$ is a $p$-system. In particular, when $\alpha = k^2 + 1$, this pair is a $p$-system when $p \geq k + 1$.
\end{lemma}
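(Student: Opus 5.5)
The plan is to verify the two defining requirements separately: first that $\cI$ is down-closed, and then the $p$-system ratio condition. Throughout, write $L_i \triangleq A_i \cup B_i$ for the $i$-th level and $b_i(S) \triangleq |S \cap B_i|$.

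\textbf{Down-closedness.} Let $S' \subseteq S \in \cI$. Every count $|S' \cap L_i|$ and $b_{i'}(S')$ is at most the corresponding count for $S$. The cap for level $i$ in~\eqref{eq:upper_bound_construction} can only grow when we pass from $S$ to $S'$. Indeed, the ``$0$'' case requires some $b_{i'} > \alpha^{i'}$, which can only stop holding. Likewise, the cap $\alpha^i k$ requires $S \cap B_i \neq \varnothing$, which can only stop holding, and then the cap becomes $\alpha^i k^2$. Hence $S' \in \cI$.

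\textbf{Reduction for the ratio condition.} Fix $X \subseteq \cN$, a base $S$ of $X$, and an arbitrary independent $T \subseteq X$. It suffices to show $|T| \le p\,|S|$, since applying this with $S$ a smallest base and $T$ a largest base gives the ratio bound.

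\textbf{Blocked levels.} Call level $i$ \emph{blocked} for $S$ if $X \cap L_i \not\subseteq S$. Maximality of $S$ then means that adding any $u \in (X \cap L_i) \setminus S$ breaks independence. I will classify how this can happen.
\begin{itemize}
\item[(a)] Level $i$ is saturated at its own cap. If the cap is $\alpha^i k$, then $S \cap B_i \ne \varnothing$ and $|S \cap L_i| = \alpha^i k$. If the cap is $\alpha^i k^2$, then $|S \cap L_i| = \alpha^i k^2$.
\item[(b)] $u \in B_i$ and $S \cap B_i = \varnothing$. Adding $u$ lowers the cap from $\alpha^i k^2$ to $\alpha^i k$, which is violated. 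Then $|S \cap L_i| \ge \alpha^i k$.
\item[(c)] $u \in B_i$ and $b_i(S) = \alpha^i$. Adding $u$ forces all lower levels to be empty, so some lower level of $S$ is nonempty. In particular $|S \cap L_i| \ge \alpha^i$.
\item[(d)] Some higher level $i' > i$ has $b_{i'}(S) > \alpha^{i'}$, so the cap of level $i$ is $0$.
\end{itemize}

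\textbf{Charging.} I will split the levels into the unblocked ones, on which $|T \cap L_i| \le |X \cap L_i| = |S \cap L_i|$, and the blocked ones. For a blocked level $i$ of type (a) or (b), $S$ holds at least $\alpha^i k$ elements there while $T$ holds at most $\alpha^i k^2$, so the ratio is at most $k$. For type (d), let $i^*$ be the largest level with $b_{i^*}(S) > \alpha^{i^*}$. Then $S$ is empty below $i^*$, but $S$ has more than $\alpha^{i^*}$ elements at level $i^*$.

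I will charge all of $T$ on the levels strictly below a given blocked level $i$ to that level. This mass is at most
\[
	\sum_{i' < i} \alpha^{i'} k^2 \;\le\; \alpha^i \cdot \frac{k^2}{\alpha - 1} \enspace.
\]
Together with $T$'s own contribution at level $i$, this bounds the charge by $\alpha^i\bigl(k \cdot (\text{\#}S\text{-units}) + k^2/(\alpha-1)\bigr)$. When $S$ has at least $\alpha^i k$ elements at level $i$, this is within $p\,|S \cap L_i|$ exactly when $p \ge k + k^2/(\alpha-1)$. That is precisely the hypothesis, so this is where the condition should come from.

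\textbf{Main obstacle.} The delicate step is the consistent charging scheme. I plan to process levels from top to bottom, with each blocked level absorbing $T$'s mass down to the next blocked level. The difficulty is in cases (c) and (d), where $S$'s level may have only about $\alpha^i$ elements rather than $\alpha^i k$. There I would use that $T$ is also constrained: if $T$ has more than $\alpha^i$ elements of $B_i$, then $T$ is empty below $i$ and its level-$i$ count is at most $\alpha^i k$; otherwise $T \cap B_i \subseteq X$ and the $S$-elements at level $i$ dominate. In case (c) I would also use that $S$ itself contains a nonempty lower level, at which the cap is at least $\alpha^{i'} k$ (or that lower level is itself blocked and can be charged). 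Finally, for the specialization I will substitute $\alpha = k^2 + 1$, which gives $k^2/(\alpha-1) = 1$, so the condition becomes $p \ge k+1$.
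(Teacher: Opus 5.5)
Your classification (a)--(d) of why an element of $X\cap L_i\setminus S$ cannot be added is complete and correct, and the geometric-series bound is the right quantitative step. The gap is the assembly you leave as the ``main obstacle'': it does not work as you propose it. A blocked level of type (d) has $S\cap L_i=\varnothing$, so it cannot absorb anything. Yet the scheme ``each blocked level absorbs $T$'s mass down to the next blocked level'' makes every such level responsible for at least itself, and such a level is blocked as soon as $X\cap L_i\neq\varnothing$.

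The missing observation is that no chain of absorbers is needed. For \emph{every} level $i\ge i^*$ one has $|T\cap L_i|\le k|S\cap L_i|$; this holds at every level if no level has $b_{i'}(S)>\alpha^{i'}$. To see it, split on $T\cap B_i$.
\begin{itemize}
\item If $T\cap B_i\neq\varnothing$, then $|T\cap L_i|\le\alpha^ik$. Either $|S\cap L_i|\ge\alpha^i$, or level $i$ is not blocked at all. In the latter case, adding any $u\in L_i$ keeps the level-$i$ count at most $\alpha^i$, below a cap that is at least $\alpha^ik$, and keeps $b_i\le\alpha^i$; hence $|T\cap L_i|\le|S\cap L_i|$.
\item If $T\cap B_i=\varnothing$, then $T\cap L_i\subseteq X\cap A_i$. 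This set lies inside $S$ unless some $A_i$-element is blocked. Adding an $A_i$-element changes no cap, so that happens only when level $i$ is saturated at a cap of at least $\alpha^ik$.
\end{itemize}
This already covers case (c), and the nonempty lower level you planned to use there is irrelevant. Now $i^*$ alone absorbs everything below it:
\[
|T|\le k|S|+\sum_{j<i^*}\alpha^jk^2\le k|S|+\frac{k^2}{\alpha-1}\,|S\cap B_{i^*}|\le\Bigl(k+\frac{k^2}{\alpha-1}\Bigr)|S| \enspace.
\]
The full hypothesis on $p$ enters exactly because $i^*$ may hold only slightly more than $\alpha^{i^*}$ elements. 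An absorber with $\alpha^ik$ elements, as in your displayed computation, would only need $p\ge k+k/(\alpha-1)$.

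With this fix your argument is a valid proof, along a different route from the paper's. The paper argues by contradiction. It fixes the highest level $i$ with $|C\cap L_i|<|T\cap L_i|/k$ for a too-small base $C$, then splits on whether $C$ has a level $i'\ge i$ with $|C\cap B_{i'}|\ge\alpha^{i'}$.
\begin{itemize}
\item If there is none, some element of $(T\setminus C)\cap L_i$ can be added to $C$.
\item If only $i'=i$, the unaddable elements lie in $B_i$, which forces $|T\cap L_i|\le\alpha^ik\le k|C\cap L_i|$.
\item If $i'>i$, an $\eps$-weighted count lets level $i'$ pay for all of $T$ below it. The ratio-$k$ bound on levels $\ge i'$ comes for free from the extremal choice of $i$.
\end{itemize}
So the paper uses maximality of the base at a single level and needs no full classification of blocking reasons. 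Your route instead derives the per-level ratio-$k$ bound directly from maximality at every level, and replaces the $\eps$-weighting by a plain additive inequality.
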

\begin{proof}
Since the right hand side of Inequality~\eqref{eq:upper_bound_construction} is a down-monotone function of $S$ (i.e., $g(S) \leq g(S')$ whenever $S \supseteq S'$), the set $\cI$ of independent sets is down-closed. Furthermore, since this right hand side is always non-negative, the empty set is guaranteed to be independent. Thus, to prove the lemma, it suffices to show that $(\cN, \cI)$ obeys the ``ratio of base sizes'' property of $p$-systems. In other words, for every subset $\cN'$ of $\cN$ and an independent set $T \subseteq \cN'$, we need to show that every base of $\cN'$ is of size at least $|T| / p$. Assume towards a contradiction that this is not the case, \ie, that there exists a base $C$ of $\cN'$ whose size is less than $|T| / p$. Since $p > k$, for such a base there must exist an integer $1 \leq i \leq \ell$ such that $|C \cap (A_i \cup B_i)| < |T \cap (A_i \cup B_i)| / p \leq |T \cap (A_i \cup B_i)| / k$. We assume from now on that $i$ is the highest integer of this kind. It is also important to note that the inequality $|C \cap (A_i \cup B_i)| < |T \cap (A_i \cup B_i)| / k$ implies that the set $\Delta_i \triangleq (T \setminus C) \cap (A_i \cup B_i)$ is not empty. However, since $C$ is a base, none of the elements of $\Delta_i$ can be added to $C$ without violating independence.

We now need to consider a few cases. The first case is the case in which there does not exist an integer $i \leq i' \leq \ell$ such that $|C \cap B_{i'}| \geq \alpha^{i'}$, and in particular, $|C \cap B_i| < \alpha^i$. Combined with the observation that $|C \cap (A_i \cup B_i)| < |T \cap (A_i \cup B_i)| / k \leq \alpha^i \cdot k$, we get that in this case any single element of $A_i \cup B_i$ can be added to $C$ without violating independence. However, this contradicts the fact that $\Delta_i$ is non-empty.

The second case is the case in which the only integer $i \leq i' \leq \ell$ such that $|C \cap B_{i'}| \geq \alpha^{i'}$ is $i' = i$, which in particular implies that $C \cap (A_i \cup B_i)$ is non-empty. In this case, the inequality $|C \cap (A_i \cup B_i)| < |T \cap (A_i \cup B_i)| / k \leq \alpha^i \cdot k$ only guarantees that any single element of $A_i$ can be added to $C$ without violating independence, but this is no longer necessarily true for elements of $B_i$ (because adding them might affect the right hand side of Inequality~\eqref{eq:upper_bound_construction} for other $i$ values). Since no element of $\Delta_i$ can be added to $C$, this implies $\Delta_i \subseteq B_i$. Recalling now that $\Delta_i$ is a non-empty subset of $T$, we get that $T \cap B_i \neq \varnothing$, and plugging this observation into Inequality~\eqref{eq:upper_bound_construction} yields that the independent set $T$ must obey $|T \cap (A_i \cup B_i)| / k \leq \alpha^i \leq |C \cap B_i| \leq |C \cap (A_i \cup B_i)|$---the second inequality follows from the definition of the case. However, this contradicts the definition of $i$ .

It remains to consider the case in which there exists an integer $i < i' \leq \ell$ such that $|C \cap B_{i'}| \geq \alpha^{i'}$; and therefore, for every $\eps \in (0, 1)$,
\begin{align*}
	|C|
	\geq{} &
	\eps \cdot |C \cap (A_{i'} \cup B_{i'})| + (1 - \eps) \sum_{i'' = i'}^\ell |C \cap (A_{i''} \cup B_{i''})|
	\geq
	\eps \cdot \alpha^{i'} + \frac{1 - \eps}{k} \cdot \sum_{i'' = i'}^\ell |T \cap (A_{i''} \cup B_{i''})|\\
	>{} &
	\frac{\eps(\alpha - 1)}{k^2} \cdot \sum_{i'' = 1}^{i' - 1} \alpha^{i''} k^2 + \frac{1 - \eps}{k} \cdot \sum_{i'' = i'}^\ell |T \cap (A_{i''} \cup B_{i''})|\\
	\geq{} &
	\frac{\min\{\eps(\alpha - 1) / k, 1 - \eps\}}{k} \cdot \sum_{i'' = 1}^\ell |T \cap (A_{i''} \cup B_{i''})|
	=
	\frac{\min\{\eps(\alpha - 1) / k, 1 - \eps\}}{k} \cdot |T|
	\enspace,
\end{align*}
where the second inequality follows from the definitions of $i$ and $i'$. Setting now $\eps = \frac{1}{(\alpha - 1)/k + 1}$, we get
\[
	|C|
	\geq
	\frac{(\alpha - 1)/k}{k[(\alpha - 1)/k + 1]} \cdot |T|
	=
	\frac{|T|}{k + k^2/(\alpha - 1)} \cdot |T|
	\geq
	\frac{|T|}{p}
	\enspace,
\]
which contradicts the definition of $C$, and thus, completes the proof of the lemma.
\end{proof}

Next, we prove that $\cI$ contains a large value independent set, but no independent set in this system is too large in size.

\begin{lemma} \label{lem:opt_value}
The maximum size $|S|$ and value $f(S)$ of an independent set $S \in \cI$ are $\Theta(k^2\alpha^{\ell})$ and $\ell \cdot k^2$, respectively.
\end{lemma}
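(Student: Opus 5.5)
To prove Lemma~\ref{lem:opt_value}, I would exhibit an explicit independent set attaining both bounds, and then bound the size and value of an arbitrary independent set level by level using Inequality~\eqref{eq:upper_bound_construction}.

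\textbf{Lower bounds.} Let $S^*$ consist of exactly $\alpha^i k^2$ elements of $A_i$ for every $i \in [\ell]$, and nothing from any $B_i$. This requires $|A_i| = \alpha^i p^2 \geq \alpha^i k^2$, which holds since $p \geq k$ in our regime (Lemma~\ref{lem:k-system}). Because $S^* \cap B_{i'} = \varnothing$ for every $i'$, the first case of Inequality~\eqref{eq:upper_bound_construction} never triggers. The second case also does not apply, so every level falls into the ``otherwise'' case with bound $\alpha^i k^2$, and $S^*$ is independent. Its value is $\sum_{i} \alpha^{-i}\cdot \alpha^i k^2 = \ell k^2$. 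Its size is $\sum_{i=1}^\ell \alpha^i k^2 = \Theta(k^2\alpha^\ell)$, since $\alpha \geq 2$ makes the geometric sum at most $2\alpha^\ell k^2$.

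\textbf{Upper bounds.} For an arbitrary $S \in \cI$, every case of Inequality~\eqref{eq:upper_bound_construction} gives $|S \cap (A_i \cup B_i)| \leq \alpha^i k^2$. The middle case gives $\alpha^i k \leq \alpha^i k^2$, and the first case gives $0$. Summing with weights $\alpha^{-i}$ yields $f(S) \leq \ell k^2$. Summing without weights yields $|S| \leq \sum_i \alpha^i k^2 \leq \frac{\alpha}{\alpha-1} k^2 \alpha^\ell = O(k^2\alpha^\ell)$.

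Combining the two parts, both maxima are attained by $S^*$. The only step needing care is verifying that $S^*$ is feasible: one must observe that avoiding all $B_i$ forces every level into the most permissive ``otherwise'' case, and that $|A_i|$ is large enough. Everything else is a geometric-series computation.
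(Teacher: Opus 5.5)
Your proof is correct and follows the paper's argument. Like the paper, you exhibit an independent set made of $A_i$-elements only, which falls into the ``otherwise'' case of Inequality~\eqref{eq:upper_bound_construction} at every level, and you note that every independent set satisfies $|S \cap (A_i \cup B_i)| \leq \alpha^i k^2$ level by level. The one difference is that the paper takes all of $\bigcup_i A_i$, implicitly using $|A_i| = \alpha^i k^2$ (as it also does in Observation~\ref{obs:ground_size}), whereas your choice of exactly $\alpha^i k^2$ elements from each $A_i$ also works with the definition as literally written ($|A_i| = \alpha^i p^2$, together with $p \geq k$).
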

\begin{proof}
Consider the set $S = \bigcup_{i = 1}^\ell A_i$. One can verify that $S$ is independent, and its size is
\[
	|S|
	=
	\sum_{i = 1}^\ell |A_i|
	=
	\sum_{i = 1}^\ell (\alpha^i \cdot k^2)
	=
	\alpha k^2 \cdot \frac{\alpha^\ell - 1}{\alpha - 1}
	=
	\Theta(1) \cdot k^2(\alpha^\ell - 1)
	=
	\Theta(k^2\alpha^\ell)
	\enspace,
\]
where the third equality holds since $\alpha \geq 2$. Additionally, the value of $S$ is
\[
	f(S)
	=
	\sum_{i = 1}^\ell |A_i| \cdot \alpha^{-i}
	=
	\sum_{i = 1}^\ell (\alpha^i \cdot k^2) \cdot \alpha^{-i}
	=
	\sum_{i = 1}^\ell k^2
	=
	\ell \cdot k^2
	\enspace.
\]

Consider now an arbitrary independent set $S' \in \cI$. For every integer $1 \leq i \leq \ell$, the size of $S' \cap (A_i \cup B_i)$ must be upper bounded by $\alpha^i \cdot k^2 = |S \cap (A_i \cup B_i)|$ due to Inequality~\eqref{eq:upper_bound_construction}. Thus, both the size and value of $S'$ cannot be larger than the size and value of $S$, respectively.
\end{proof}

In the next section, we show that using an adversarial arrival order, one can force a data stream algorithm with limited memory to produce an independent set of $(\cN, \cI)$ that has a small expected intersection with $A_i$ for every $1 \leq i \leq \ell$. The next lemma shows that such an independent set has a low value.

\begin{lemma} \label{lem:S_bound}
Let $S$ be an independent set of $(\cN, \cI)$. Then, $f(S)$ is upper bounded by $k + \ell + \sum_{i = 1}^\ell |S \cap A_i| \cdot \alpha^{-i}$.
\end{lemma}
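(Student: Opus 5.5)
Split $f(S)$ into its $A$-part and its $B$-part:
\[
	f(S) = \sum_{i=1}^\ell \alpha^{-i}|S\cap A_i| + \sum_{i=1}^\ell \alpha^{-i}|S\cap B_i| \enspace.
\]
It then suffices to show that the $B$-part is at most $k+\ell$. Divide the indices $i$ with $S\cap B_i\neq\varnothing$ into two classes. Call $i$ \emph{light} if $|S\cap B_i|\le \alpha^i$ and \emph{heavy} otherwise. Each light index contributes at most $\alpha^{-i}\cdot\alpha^i=1$ to the $B$-part. Since there are at most $\ell$ indices, light indices contribute at most $\ell$ in total.

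For heavy indices, use the first case of Inequality~\eqref{eq:upper_bound_construction}. Suppose $i'$ is heavy, so $|S\cap B_{i'}|>\alpha^{i'}$. Then every $i<i'$ must satisfy $|S\cap(A_i\cup B_i)|\le 0$. Consequently, if $i^*$ is the largest heavy index, $S$ has no elements in $A_i\cup B_i$ for any $i<i^*$. In particular, any other heavy index would be smaller than $i^*$ and would then have $S\cap B_i=\varnothing$, which is impossible. So there is at most one heavy index, namely $i^*$.

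Next bound the contribution of $i^*$. Every $i'>i^*$ is non-heavy, so $|S\cap B_{i'}|\le\alpha^{i'}$ for all such $i'$. Together with $S\cap B_{i^*}\neq\varnothing$, this means index $i^*$ falls into the second case of Inequality~\eqref{eq:upper_bound_construction}. That case gives $|S\cap(A_{i^*}\cup B_{i^*})|\le \alpha^{i^*}k$, so $i^*$ contributes at most $\alpha^{-i^*}\cdot\alpha^{i^*}k = k$ to the $B$-part.

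Adding the two classes, the $B$-part is at most $k+\ell$, which proves the lemma. No step is a real obstacle. The only point needing care is verifying that the largest heavy index lands in the middle case of the constraint rather than the first one, and that holds because all larger indices are non-heavy.
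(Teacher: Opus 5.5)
Your proof is correct and follows essentially the same route as the paper: both take the largest index $i^*$ with $|S\cap B_{i^*}|>\alpha^{i^*}$, use the first case of the constraint to empty out every level below it, bound each higher level's $B$-contribution by $1$, and bound level $i^*$'s contribution by $k$. The only cosmetic difference is that the paper invokes the blanket bound $|S\cap B_i|\le\alpha^i k$, which holds at every level because the third case forces $S\cap B_i=\varnothing$. You instead verify directly that $i^*$ falls into the second case.
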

\begin{proof}
Let $1 \leq i' \leq \ell$ be the maximum integer for which $|S \cap B_{i'}| > \alpha^{i'}$. If there is no such integer, let $i' = 1$. The independence of $S$ guarantees that $S \cap (A_i \cup B_i) = \varnothing$ for every $1 \leq i < i'$. Hence, we can upper bound $f(S)$ by
\begin{align*}
	\sum_{i = 1}^\ell |S \cap (A_i \cup B_i)| \cdot \alpha^{-i}
	={} &
	\sum_{i = i'}^{\ell} |S \cap A_i| \cdot \alpha^{-i} + \sum_{i = i' + 1}^{\ell} |S \cap B_i| \cdot \alpha^{-i} + |S \cap B_{i'}| \cdot \alpha^{-i'}\\
	\leq{} &
	\sum_{i = 1}^\ell |S \cap A_i| \cdot \alpha^{-i} + \sum_{i = i' + 1}^{\ell} \alpha^i \cdot \alpha^{-i} + (\alpha^{i'}\cdot k) \cdot \alpha^{-i'}\\
	={} &
	\sum_{i = 1}^\ell |S \cap A_i| \cdot \alpha^{-i} + (\ell - i') + k
	\leq
	\sum_{i = 1}^\ell |S \cap A_i| \cdot \alpha^{-i} + \ell + k
	\enspace,
\end{align*}
where the first inequality is based on the choice of $i'$ and the observation that Inequality~\eqref{eq:upper_bound_construction} implies that $|S \cap B_i| \leq \alpha^i \cdot k$ for every independent set $S$ and integer $1 \leq i \leq \ell$.
\end{proof}

\subsection{Hardness of the Constructed Independence System} \label{ssc:hardness}


In this section, we prove Theorem~\ref{thm:hardness} using the $p$-system $(\cN, \cI)$ and linear function $f$ defined in Section~\ref{ssc:construction}. As was mentioned before Lemma~\ref{lem:S_bound}, to prove Theorem~\ref{thm:hardness}, we need to show that a data stream algorithm with a bounded memory cannot produce an independent set of $(\cN, \cI)$ containing many elements of $\bigcup_{i = 1}^\ell A_i$. The first part of this section is devoted to proving this.

Let $ALG$ be an arbitrary deterministic non-encoding data stream algorithm for maximizing $f$ subject to the constraint $(\cN, \cI)$. We assume that the order in which the elements of $\cN$ are fed to $ALG$ is chosen as follows. First, the elements of $A_1 \cup B_1$ are fed in a uniformly random order, then the elements of $A_2 \cup B_2$ are fed in a uniformly random order, and so on, until finally, the elements of $A_\ell \cup B_\ell$ are fed to $ALG$ in a uniformly random order. Let $M$ be the maximum number of elements that $ALG$ might store in its memory given this distribution over arrival orders. Additionally, for every $1 \leq i \leq \ell$, let $\cE_i$ be the event that at no point during the execution of $ALG$ more than $\alpha^i$ elements of $A_i$ reside within $ALG$'s memory.
\begin{lemma} \label{lem:single_level_good}
For every $1 \leq i \leq \ell$, $\Pr[\cE_i] \geq 1 - k^2M/r$.
\end{lemma}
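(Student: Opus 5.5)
The plan is a coupling argument. Within level $i$, a non-encoding algorithm can tell an element of $A_i$ from one of $B_i$ only through the independence oracle, and $|B_i| = r\cdot|A_i|$. So any stored level-$i$ element is, roughly, an element of $A_i$ with probability only about $1/r$. Fix $i$, and condition on the arrival orders of all levels other than $i$. The arrival order of level $i$ is then a uniformly random order of $A_i \cup B_i$. Equivalently, first draw a uniformly random sequence of $|A_i|+|B_i|$ positions, then draw a uniformly random labelling choosing which positions carry $A_i$-elements and which carry $B_i$-elements.

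\textbf{Step 1 (fake execution).} I would define a fake execution of $ALG$ in which every level-$i$ element is treated as if it belonged to $B_i$. That is, each independence query is answered as though all level-$i$ elements in the queried set were $B_i$-elements. Since $ALG$ is deterministic and non-encoding, the fake execution (which elements it stores, which queries it makes, when it discards) is a function of the positions only and not of the labelling. In particular, the set $W$ of level-$i$ positions that are ever simultaneously stored, together with the family of queried sets, is fixed before the labelling is drawn.

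\textbf{Step 2 (divergence is rare).} Next I would show that the real and fake executions coincide unless the labelling places $A_i$-elements in a ``distinguishing'' way. Inspecting Inequality~\eqref{eq:upper_bound_construction}, the answer to a query on a set $Q$ can change when some $B_i$-elements of $Q$ are relabelled as $A_i$-elements in only two ways. First, the count $|Q\cap B_i|$ can drop from above $\alpha^i$ to at most $\alpha^i$, which affects only lower levels. Second, $Q\cap B_i$ can become empty, which moves the cap on $|Q \cap (A_i\cup B_i)|$ from $\alpha^i k$ to $\alpha^i k^2$. Either event requires that at some time the memory contains a substantial number of labelled-$A_i$ stored elements; for instance, the second requires that all of $Q\cap(A_i\cup B_i)$, which has more than $\alpha^i k$ elements, is labelled $A_i$. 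I would then argue that the first time the real execution departs from the fake one, or the first time $\cE_i$ fails, the memory at that moment (at most $M$ elements, a set fixed by the fake run) must contain an element labelled $A_i$ in a position where the fake run predicted a $B_i$-element. The stored level-$i$ positions in memory at any one time are at most $M$. Each position is labelled $A_i$ with probability $|A_i|/(|A_i|+|B_i|) \le 1/r$. I would apply a union bound over the relevant memory slots, charging at most $k^2$ ``opportunities'' per slot; these come from the cap $\alpha^i k^2$ relative to the threshold $\alpha^i$, scaled by $k$. This should give total failure probability at most $k^2 M/r$.

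\textbf{Step 3 (conclude).} On the complementary event the real run equals the fake run, and the fake run never holds more than $\alpha^i$ elements labelled $A_i$, so $\cE_i$ holds. Taking expectation over the conditioning yields $\Pr[\cE_i]\ge 1-k^2M/r$.

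The main obstacle is Step 2. A naive union bound over all times of the execution blows up, because memory is refreshed many times during level $i$. I would first try to restrict attention to a single critical moment, the first query or the first storage event after which divergence or failure of $\cE_i$ becomes unavoidable. The key point is that the content of memory at that moment is determined by the fake run. The factor $k^2$ should come from comparing the cap $\alpha^ik^2$ with the threshold $\alpha^i$ used in $\cE_i$.
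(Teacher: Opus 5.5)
\paragraph{Gap in Step~2.} The gap is in Step~2, and it cannot be closed. You list two ways a relabelling can change an oracle answer. Only the second needs many $A_i$-elements in memory; the first needs just one.

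Take a query $Q$ made of one element $x$ of some level $i'<i$ together with exactly $\alpha^i+1$ level-$i$ elements.
\begin{itemize}
\item If all $\alpha^i+1$ of them lie in $B_i$, then $|Q\cap B_i|>\alpha^i$. The cap of level $i'$ in \eqref{eq:upper_bound_construction} is then $0$, so $Q\notin\cI$.
\item If exactly one of them lies in $A_i$, then $|Q\cap B_i|=\alpha^i$. For $k\ge 2$ (so that $\alpha^i+1\le\alpha^i k$), this gives $Q\in\cI$.
\end{itemize}
So your claim that ``either event requires \ldots\ a substantial number of labelled-$A_i$ stored elements'' is false. The real run can leave the fake run while memory holds a single $A_i$-element.

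The paper's own proof rests on the same incorrect assertion. It claims the true oracle agrees with the one that treats $A_i$-elements as $B_i$-elements until memory holds $\alpha^i k+1$ elements of $A_i$, which overlooks the effect of $|Q\cap B_i|$ on the caps of lower levels. Apart from this, your outline follows the paper's: an execution determined by positions only, and a factor $k^2=|A_i|/\alpha^i$. The paper gets that factor by a union bound over the arrivals of the $A_i$-elements, each charged at most $M/(\alpha^i r)$ via Markov's inequality. Your per-slot charging is not yet an argument, and the paper's version is sound only for $i=1$, where no lower level exists.

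\paragraph{The statement is false for $i\ge 2$, $k\ge 2$.} The leak is exploitable. Consider the following deterministic non-encoding algorithm:
\begin{itemize}
\item keep the first element $x$ of level $i-1$;
\item keep the first $\alpha^i$ level-$i$ elements as a set $G$;
\item for every later level-$i$ element $u$, query $\{x\}\cup G\cup\{u\}$ and retain $u$ if and only if the answer is ``independent'', stopping after $\alpha^i+1$ retentions.
\end{itemize}
It uses $M\le 2\alpha^i+3$.

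With probability at least $1-\alpha^i/(r+1)$ we have $G\subseteq B_i$. In that case the query is an exact membership test for $A_i$. Moreover, all $\alpha^i k^2>\alpha^i$ elements of $A_i$ arrive after $G$, so the algorithm ends up holding $\alpha^i+1$ of them at once. Hence $\Pr[\cE_i]\le\alpha^i/(r+1)$.

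The lemma, however, promises $\Pr[\cE_i]\ge 1-k^2(2\alpha^i+3)/r$. This is a contradiction once $r$ is a large enough multiple of $k^2\alpha^i$, and $r$ is a free parameter the paper takes arbitrarily large.

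Running this probe at every level $i\ge 2$, retaining everything that passes, recovers $\bigcup_{i\ge 2}A_i$ with high probability. That set has value $(\ell-1)k^2$ and needs only $O(k^2\alpha^\ell)$ memory. So it is the construction, not merely this lemma or your proof of it, that needs repair.
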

\begin{proof}
Let $H_i$ be an event fixing an arbitrary order for the elements of $\bigcup_{i' = 1}^{i - 1} (A_{i'} \cup B_{i'})$. Below, we prove the lemma conditioned on the event $H_i$. Note that this suffices to prove the lemma by the law of total probability. Furthermore, for simplicity of notation, we omit the conditioning on $H_i$ in this proof. In other words, all the expectations, probabilities and values in this proof are implicitly conditions on $H_i$. 

For every integer $1 \leq j \leq |A_i \cup B_i|$, let us denote by $u_j$ the $j$-th element of $A_i \cup B_i$ to arrive and by $M_{i, j}$ the set of elements explicitly stored in the memory of $ALG$ immediately after the arrival of $u_j$. Assume now that we modify the independence oracle of $(\cN, \cI)$ in such a way that it treats elements of $A_i$ like they are elements of $B_i$. Given this modified oracle, since all the elements of $A_i \cup B_i$ are now indistinguishable from the point of view of $ALG$, the algorithm can choose the set of elements that end up in $M_{i, j}$ only based on the positions of these elements in the arrival order. In other words, there exists a set $H_{i, j}$ containing $|M_{i, j}|$ indexes out of $1, 2, \dotsc, |A_i \cup B_i|$ (note that $|M_{i, j}|$ is deterministic in this setting) such that $M_{i, j} = \{u_k \mid k \in H_{i, j}\}$.

Let us consider now the behavior of $ALG$ with the original independence oracle. Since the elements of $\bigcup_{i' = i + 1}^\ell (A_{i'} \cup B_{i'})$ arrive only after the arrival of the elements of $A_i \cup B_i$, the original independence oracle behaves during the arrival of the elements of $A_i \cup B_i$ just like the modified independence oracle as long as the memory of $ALG$ does not explicitly store at least $\alpha^i k + 1 \geq \alpha^i + 1$ elements of $A_i$. Thus, as long as this event does not happen up until the arrival of $u_j$ (including), the set $M_{i, j}$ is still given by $\{u_k \mid k \in H_{i, j}\}$. Consequently, since $ALG$ is a non-encoding algorithm, if its memory ever explicitly stores at least $\alpha^i + 1$ elements of $A_i$, then the first time in which this happens must be when some element $u_j \in A_i$ arrives, and the set $\{u_k \mid k \in H_{i, j - 1}\}$ already contains $\alpha^i$ elements of $A_i$ and is of size $|H_{i, j - 1}| = |M_{i, j - 1}| \leq M$. To upper bound the probability that such a thing happens, let us denote the elements of $A_i$ by $a_1, a_2, \dotsc, a_{|A_i|}$ in an arbitrary (but fixed) order; and for every $1 \leq h \leq |A_i|$, let $\cE_{i, h}$ be the event that the set $H_{i, j - 1}$ is of size at most $M$ and $|\{u_k \mid k \in H_{i, j - 1}\} \cap A_i| = \alpha^i$, where $j$ is the value obeying $u_j = a_h$. Given the above discussion, $\cE_i$ happens whenever none of the events $\cE_{i, h}$ happens, and thus,
\begin{equation} \label{eq:event_sum}
	\Pr[\cE_i]
	\geq
	1 - \Pr\left[\bigcup_{h = 1}^{|A_i|} \cE_{i, h}\right]
	\geq
	1 - \sum_{h = 1}^{|A_i|} \Pr[\cE_{i, h}]
	\enspace,
\end{equation}
where the second inequality follows from union bound.

Fix now some integers $1 \leq h \leq |A_i|$ and $1 \leq j \leq |A_i \cup B_i|$, and let us upper bound the probability $\Pr[\cE_{i, h} \mid a_h = u_j]$. If $|H_{i, j - 1}| \leq M$, then
\begin{multline*}
	\Pr[\cE_{i, h} \mid a_h = u_j]
	=
	\Pr[|A_i \cap \{u_k \mid k \in H_{i, j - 1}\}| = \alpha^i \mid a_h = u_j]\\
	\leq
	\frac{\bE[|A_i \cap \{u_k \mid k \in H_{i, j - 1}\}| \mid a_h = u_j]}{\alpha^i}
	=
	\frac{\frac{|A_i| - 1}{|A_i \cup B_i| - 1} \cdot |H_{i, j - 1}|}{\alpha^i}
	\leq
	\frac{(|A_i| - 1) \cdot M}{\alpha^i(|A_i \cup B_i| - 1)}
	\enspace,
\end{multline*}
where the first inequality follows from Markov's inequality, and the second equality holds since, given $a_h = u_j$, every element of $A_i$ other than $a_h$ has a chance of $|H_{i, j - 1}| / [|A_i \cup B_i| - 1]$ to appear in one of the positions of $H_{i, j - 1}$. If $|H_{i, j - 1}| > M$, then we still get $\Pr[\cE_{i, h} \mid a_h = u_j] = 0 \leq \frac{(|A_i| - 1) \cdot M}{\alpha^i(|A_i \cup B_i| - 1)}$ by the definition of $\cE_{i, h}$. Thus, by the law of total expectation,
\begin{align*}
	\Pr[\cE_{i, h}]
	={} &
	\sum_{j = 1}^{|A_i \cup B_i|} \mspace{-9mu} \Pr[a_h = u_j] \cdot \Pr[\cE_{i, h} \mid a_h = u_j]
	=
	\frac{\sum_{j = 1}^{|A_i \cup B_i|} \Pr[\cE_{i, h} \mid a_h = u_j]}{|A_i \cup B_i|}\\
	\leq{} &
	\frac{\sum_{j = 1}^{|A_i \cup B_i|} \frac{(|A_i| - 1) \cdot M}{\alpha^i(|A_i \cup B_i| - 1)}}{|A_i \cup B_i|}
	=
	\frac{(|A_i| - 1) \cdot M}{\alpha^i \cdot (|A_i| + |B_i| - 1)}
	=
	\frac{(|A_i| - 1) \cdot M}{\alpha^i \cdot [(r + 1)|A_i| - 1]}
	\leq
	\frac{M}{\alpha^i r}
	\enspace.
\end{align*}
The lemma now follows by plugging the last bound into Inequality~\eqref{eq:event_sum} and recalling that $|A_i| = \alpha^i k^2$.
\end{proof}

Using the last lemma, we can now bound the expected value of the independent set produced by $ALG$.
\begin{lemma}
Let $S$ denote the independent set outputted by $ALG$. Then, $\bE[f(S)] \leq k + 2\ell + k^4M\ell^2 / r$.
\end{lemma}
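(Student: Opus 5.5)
We start from Lemma~\ref{lem:S_bound}, which gives deterministically $f(S) \leq k + \ell + \sum_{i=1}^\ell |S \cap A_i| \cdot \alpha^{-i}$. Taking expectations, it suffices to show that for every $1 \leq i \leq \ell$,
\[
	\bE\big[|S \cap A_i|\big] \cdot \alpha^{-i} \leq 1 + \frac{k^4 M \ell}{r}
	\enspace,
\]
since summing over the $\ell$ values of $i$ then gives $k + 2\ell + k^4M\ell^2/r$. (We may even obtain the slightly stronger $k^4M\ell/r$ total, which is fine.)

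To bound $\bE[|S \cap A_i|]$, we split according to the event $\cE_i$ of Lemma~\ref{lem:single_level_good}. Since $ALG$ is non-encoding, every element of its output set must be explicitly stored in memory when it terminates. Moreover, an element of $A_i$ that is in memory at termination was in memory continuously since its arrival, so in particular it was in memory at the moment the last element of $A_i \cup B_i$ arrived. Hence, if $\cE_i$ happens, then $|S \cap A_i| \leq \alpha^i$, which contributes at most $1$ after multiplying by $\alpha^{-i}$. If $\cE_i$ fails, we use the crude deterministic bound $|S \cap A_i| \leq |S \cap (A_i \cup B_i)| \leq \alpha^i k^2$, which follows from Inequality~\eqref{eq:upper_bound_construction}. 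Lemma~\ref{lem:single_level_good} gives $\Pr[\bar{\cE}_i] \leq k^2 M / r$, so
\[
	\alpha^{-i}\bE\big[|S \cap A_i|\big] \leq 1 + k^2 \cdot \frac{k^2M}{r} = 1 + \frac{k^4M}{r}
	\enspace,
\]
which is already at most $1 + k^4M\ell/r$.

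The only step requiring care is the justification that, under $\cE_i$, at most $\alpha^i$ elements of $A_i$ survive in $S$. This relies on the first non-encoding property: an element absent from memory can never reappear. Consequently, the set of $A_i$ elements in memory at termination is a subset of those stored at any earlier time after their arrival, in particular at the time the last element of $A_i \cup B_i$ arrives. Because $\cE_i$ bounds this count at every point during the execution, the claim follows.

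It remains to check that the probability space is the intended one. $ALG$ is deterministic, and the randomness comes only from the block-wise random arrival order, exactly as in Lemma~\ref{lem:single_level_good}. Linearity of expectation over $i$ then concludes the proof.
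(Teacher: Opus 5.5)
Your proof is correct, and it reaches a slightly stronger bound than the one stated, by a different decomposition from the paper's.

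The paper works with the single intersection event $\cE = \bigcap_{i=1}^{\ell} \cE_i$. A union bound over the $\ell$ levels gives $\Pr[\bar{\cE}] \leq k^2M\ell/r$, and on $\bar{\cE}$ the paper falls back on the global bound $f(S) \leq \ell k^2$ from Lemma~\ref{lem:opt_value}. It therefore pays a factor of $\ell$ twice, which is where the $k^4M\ell^2/r$ term comes from.

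You instead apply linearity of expectation to Lemma~\ref{lem:S_bound} level by level. For level $i$ you condition only on $\cE_i$, and on $\bar{\cE}_i$ you use the local bound $|S \cap A_i| \leq \alpha^i k^2$ from Inequality~\eqref{eq:upper_bound_construction}. This avoids the union bound entirely and yields $k + 2\ell + k^4M\ell/r$. Since $\ell = k = p-1$ in the proof of Theorem~\ref{thm:hardness}, your version would strengthen the memory lower bound there by a factor of $p$, from $\Omega(n/p^{2p+5})$ to $\Omega(n/p^{2p+4})$. The paper's argument is simply coarser; it is not wrong.

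One small remark: your extra argument that elements of $A_i$ cannot re-enter memory is not needed. The event $\cE_i$ already controls the memory at every point of the execution, including termination. The second non-encoding property then places $S$ inside the final memory, which gives $|S \cap A_i| \leq \alpha^i$ directly.
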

\begin{proof}
Let $\cE$ be the intersection of the events $\{\cE_i \mid 1 \leq i \leq \ell\}$. By the union bound and Lemma~\ref{lem:single_level_good}, the event $\cE$ happens with probability at least $1 - k^2M\ell / r$. Given the event $\cE$, we are guaranteed that $ALG$'s memory never contains more than $\alpha^i$ elements of $A_i$ for any $1 \leq i \leq \ell$, and thus, the set $S$ it produces also cannot contain more than this many elements of $A_i$ (since $ALG$ is a non-encoding algorithm). Thus, Lemma~\ref{lem:S_bound} guarantees that given $\cE$, we have
\[
	f(S)
	\leq
	k + \ell + \sum_{i = 1}^\ell |S \cap A_i| \cdot \alpha^{-i}
	\leq
	k + \ell + \sum_{i = 1}^\ell \alpha^i \cdot \alpha^{-i}
	=
	k + 2\ell
	\enspace.
\]
Recalling that Lemma~\ref{lem:opt_value} shows that no independent set (such as $S$) can have a value larger than $\ell \cdot k^2$, we also get
\begin{align*}
	\bE[f(S)]
	={} &
	\Pr[\cE] \cdot \bE[f(S) \mid \cE] + \Pr[\bar{\cE}] \cdot \bE[f(S) \mid \bar{\cE}]
	\leq
	\bE[f(S) \mid \cE] + \Pr[\bar{\cE}] \cdot \bE[f(S) \mid \bar{\cE}]\\
	\leq{} &
	(k + 2\ell) + (k^2M\ell / r) \cdot (\ell \cdot k^2)
	=
	k + 2\ell + k^4M\ell^2 / r
	\enspace.
	\qedhere
\end{align*}
\end{proof}

Since $ALG$ is an arbitrary deterministic non-encoding data stream algorithm, and the distribution of the arrival order of the elements of $\cN$ is independent of the identity of $ALG$, we get the following corollary by Yao's principle. 
\begin{corollary} \label{cor:hardness_raw}
For every (possibly randomized) non-encoding data stream algorithm $ALG$ using a memory of at most $M$ elements, there exists an order over the elements of $\cN$ such that feeding these elements to $ALG$ in this order makes $ALG$ produce a set of value at most $k + 2\ell + k^4M\ell^2 / r$ in expectation. 
\end{corollary}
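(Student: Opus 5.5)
The statement to prove is Corollary~\ref{cor:hardness_raw}, and I would derive it from the preceding lemma by the minimax (Yao) principle.

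\textbf{Setup.} Fix the parameters $k,\ell,r,\alpha$ and the instance $(\cN,\cI)$, $f$ of Section~\ref{ssc:construction}. Let $\cD$ be the distribution over arrival orders described before Lemma~\ref{lem:single_level_good}: the blocks $A_1\cup B_1,\dots,A_\ell\cup B_\ell$ arrive in this order, each internally in a uniformly random order. This distribution does not depend on the algorithm.

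\textbf{Step 1: reduce to deterministic algorithms.} A randomized non-encoding algorithm whose memory never exceeds $M$ elements is a distribution over deterministic algorithms $ALG_\rho$, indexed by its random string $\rho$. Each $ALG_\rho$ is also non-encoding, since the non-encoding properties hold for every execution. Each $ALG_\rho$ also stores at most $M$ elements on every input. In particular this holds on every order in the support of $\cD$, so the quantity $M$ in Lemma~\ref{lem:single_level_good} is at most $M$ for $ALG_\rho$. The preceding lemma then gives
\[
\bE_{\pi\sim\cD}[f(ALG_\rho(\pi))]\leq k+2\ell+k^4M\ell^2/r
\]
for every fixed $\rho$. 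The right hand side is increasing in $M$, so using the global bound $M$ is safe.

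\textbf{Step 2: average and extract a fixed order.} Taking expectation over $\rho$ and exchanging the order of expectations (Fubini; all quantities are finite and nonnegative) gives
\[
\bE_{\pi\sim\cD}\big[\bE_\rho[f(ALG_\rho(\pi))]\big]\leq k+2\ell+k^4M\ell^2/r .
\]
Since an average over $\pi$ is at most this bound, some order $\pi^*$ in the support of $\cD$ satisfies $\bE_\rho[f(ALG(\pi^*))]\leq k+2\ell+k^4M\ell^2/r$. This $\pi^*$ is the order claimed in the corollary.

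\textbf{Main obstacle.} The only subtle point is the independence of $\cD$ from the algorithm's coin tosses. The order must be chosen before seeing $\rho$, which is exactly why we conclude only a bound on the expected value rather than a bound with high probability. Because $\cD$ is fixed in advance, the exchange of expectations in Step~2 is legitimate. I would also check that the memory bound $M$ is applied uniformly over all random strings, which is what makes the per-$\rho$ lemma applicable.
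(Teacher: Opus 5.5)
Your proposal is correct and follows the paper's argument. The paper obtains the corollary in one line from the preceding lemma via Yao's principle, relying on the fact that the distribution over arrival orders is fixed independently of the algorithm. Your Steps~1 and~2 unpack exactly that principle: you fix the random string, apply the lemma with the uniform memory bound $M$, average over random strings, and extract a single order.
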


The last component that we need before getting to the proof of Theorem~\ref{thm:hardness} itself is the following observation, which relates the size $n$ of the ground set $\cN$ to the values of the parameters of our constructed independence system.
\begin{observation} \label{obs:ground_size}
It holds that $n = \Theta(k^2\alpha^\ell r)$.
\end{observation}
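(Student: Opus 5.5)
The plan is to compute $n$ directly from the construction in Section~\ref{ssc:construction}. The ground set is the disjoint union $\cN = \bigcup_{i=1}^\ell (A_i \cup B_i)$, so $n = \sum_{i=1}^\ell (|A_i| + |B_i|)$. The sizes to use are $|A_i| = \alpha^i k^2$ and $|B_i| = r\cdot|A_i| = \alpha^i k^2 r$. These are the values the rest of the section relies on; see the proof of Lemma~\ref{lem:opt_value} and the last step of the proof of Lemma~\ref{lem:single_level_good}. The displayed definition of $A_i$ and $B_i$ writes $p^2$ where the analysis uses $k^2$, and I would read it as $k^2$ so that the observation is consistent with those lemmas.

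With these sizes,
\[
	n = \sum_{i=1}^\ell (r+1)\,\alpha^i k^2 = (r+1)k^2 \cdot \frac{\alpha(\alpha^\ell - 1)}{\alpha - 1}\enspace.
\]
The geometric sum is handled exactly as in the proof of Lemma~\ref{lem:opt_value}. Since $\alpha \geq 2$, the factor $\frac{\alpha}{\alpha-1}$ lies in $(1,2]$. Since $\alpha^\ell \geq 2$, the factor $\alpha^\ell - 1$ lies in $[\alpha^\ell/2, \alpha^\ell)$. Finally, since $r \geq 1$, we have $r \leq r+1 \leq 2r$. Multiplying these three two-sided bounds gives $k^2\alpha^\ell r/2 \leq n \leq 4k^2\alpha^\ell r$, which is $n = \Theta(k^2\alpha^\ell r)$.

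I do not expect any real obstacle here. The only point needing care is the $p^2$ versus $k^2$ discrepancy in the definition of the sets, which should be resolved in favor of $k^2$ for consistency with the earlier lemmas.
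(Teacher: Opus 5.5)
Your proposal is correct and is essentially the paper's proof: it sums $|A_i|+|B_i| = (r+1)\alpha^i k^2$ over $i$, evaluates the geometric series, and uses $\alpha \geq 2$ to absorb $\alpha/(\alpha-1)$ and $\alpha^\ell - 1$ into constants, with your version only making the constants explicit ($k^2\alpha^\ell r/2 \leq n \leq 4k^2\alpha^\ell r$). Your reading of the set sizes as $\alpha^i k^2$ and $\alpha^i r k^2$ (rather than the $p^2$ in the displayed definition) is exactly what the paper's own computation uses.
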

\begin{proof}
By the definition of the ground set $\cN$,
\begin{align*}
	n
	={} &
	|\cN|
	=
	\sum_{i = 1}^\ell (|A_i| + |B_i|)
	=
	\sum_{i = 1}^\ell \alpha^i k^2(1 + r)\\
	={} &
	\alpha k^2(1 + r) \cdot \frac{\alpha^\ell - 1}{\alpha - 1}
	=
	\Theta(1) \cdot k^2(1+r)(\alpha^\ell - 1)
	=
	\Theta(k^2\alpha^\ell r)
	\enspace,
\end{align*}
where the penultimate equality holds since $\alpha \geq 2$.
\end{proof}

We are now ready to prove Theorem~\ref{thm:hardness}.
\begin{proof}[Proof of Theorem~\ref{thm:hardness}]
Since the guarantee of the theorem is asymptotic, we may assume $p \geq 2$. Thus, 
in this proof, we may set the values of the parameters $k$, $\alpha$ and $\ell$ to $p - 1$, $k^2 + 1$ and $k$, respectively. Recall that, by Lemma~\ref{lem:k-system}, these choices of values for $k$ and $\alpha$ guarantee that $(\cN, \cI)$ is a $p$-system.

To prove the first part of the theorem, we need to show that every non-encoding data stream algorithm $ALG$ with a memory of $O(n / p^{2p + 5}) = O(n / p^{2k + 7})$ achieves an approximation ratio no better than $\Omega(p^2)$ when the elements of $(\cN, \cI)$ are fed to it in some algorithm specific order.
By Corollary~\ref{cor:hardness_raw}, there is such an algorithm specific arrival order for which $ALG$ outputs a set whose expected value is only $k + 2\ell + k^4M\ell^2 / r = k + 2\ell + k^4 \ell^2 \cdot O(n / p^{2k + 7}) / r$. Since Lemma~\ref{lem:opt_value} shows that the maximum value of an independent set of $(\cN, \cI)$ is $\ell \cdot k^2$, the approximation ratio of $ALG$ for this arrival order is no better than
\begin{align*}
	\frac{\ell \cdot k^2}{k + 2\ell + k^4 \ell^2 \cdot O(n / p^{2k + 7}) / r}
	={} &
	\frac{\ell \cdot k^2}{k + 2\ell + k^4 \ell^2 \cdot O(k^2\alpha^\ell r / p^{2k + 7}) / r}\\
	={} &
	\frac{k^3}{3k + k^6 \cdot O(k^2p^{2k} r / p^{2k + 7}) / r}
	=
	\frac{k^2}{3 + O(1)}
	=
	\Omega(k^2)
	=
	\Omega(p^2)
	\enspace,
\end{align*}
where the first equality follows from Observation~\ref{obs:ground_size}, and the second equality follows from our choice of value for $\ell$ and the observation that $\alpha = k^2 + 1 \leq (k + 1)^2 = p^2$.

To prove the second part of the theorem, we need to show that the $p$-system $(\cN, \cI)$ has the extra properties described in this part, namely, that (i) the largest independent set in it is of size $O(p^{2p})$, and (ii) the size $n$ of $\cN$ is larger than the any given function of $p$. Property~(i) holds since Lemma~\ref{lem:opt_value} shows that the maximum size of an independent set of $(\cN, \cI)$ is $O(k^2\alpha^\ell) = O(k^2p^{2k}) = O(p^{2k + 2}) = O(p^{2p})$. To get Property~(ii), we observe that, by Observation~\ref{obs:ground_size}, $n$ can be made arbitrarily large by choosing the parameter $r$ to be large enough. Since there is no constraint on how large the parameter $r$ can be compared to the other parameters, we can always choose it to be large enough so that $n$ is larger than any given function of $p$.
\end{proof}

